  \providecommand\BibTeX{{%
    \normalfont B\kern-0.5em{\scshape i\kern-0.25em b}\kern-0.8em\TeX}}}
\definecolor{red}  {rgb}{0.9,0.0,0.0}
\definecolor{green}{rgb}{0.0,0.7,0.0}
\definecolor{blue} {rgb}{0.0,0.0,0.9}
\DeclareMathOperator*{\argmax}{arg\,max}
\newcommand{\sinnamon}{\textsc{Sinnamon}}
\newcommand{\linscan}{\textsc{LinScan}}
\newcommand{\roaringlinscan}{\textsc{LinScan-Roaring}}
\newcommand{\sinnamonplus}{\sinnamon$^+$}
\begin{document}

\title{An Approximate Algorithm for Maximum Inner Product Search over Streaming Sparse Vectors}

\author{Sebastian Bruch}
\affiliation{%
  \institution{Pinecone}
  \city{New York}
  \state{NY}
  \country{USA}
}
\email{sbruch@acm.org}
\orcid{0000-0002-2469-8242}

\author{Franco Maria Nardini}
\affiliation{%
  \institution{ISTI-CNR}
  \city{Pisa}
  \country{Italy}
}
\email{francomaria.nardini@isti.cnr.it}
\orcid{0000-0003-3183-334X}

\author{Amir Ingber}
\affiliation{%
  \institution{Pinecone}
  \city{Tel Aviv}
  \country{Israel}
}
\email{ingber@pinecone.io}
\orcid{0000-0001-6639-8240}

\author{Edo Liberty}
\affiliation{%
  \institution{Pinecone}
  \city{New York}
  \state{NY}
  \country{USA}
}
\email{edo@pinecone.io}
\orcid{0000-0003-3132-2785}

\begin{CCSXML}
<ccs2012>
   <concept>
       <concept_id>10002951.10003317.10003338</concept_id>
       <concept_desc>Information systems~Retrieval models and ranking</concept_desc>
       <concept_significance>500</concept_significance>
       </concept>
 </ccs2012>
\end{CCSXML}

\ccsdesc[500]{Information systems~Retrieval models and ranking}

\keywords{Approximate Algorithms, Maximum Inner Product Search, Sparse Vectors}

\begin{abstract}
Maximum Inner Product Search or top-$k$ retrieval on sparse vectors is well-understood in information retrieval, with a number of mature algorithms that solve it exactly. However, all existing algorithms are tailored to text and frequency-based similarity measures. To achieve optimal memory footprint and query latency, they rely on the near stationarity of documents and on laws governing natural languages. We consider, instead, a setup in which collections are streaming---necessitating dynamic indexing---and where indexing and retrieval must work with arbitrarily distributed real-valued vectors. As we show, existing algorithms are no longer competitive in this setup, even against na\"ive solutions. We investigate this gap and present a novel \emph{approximate} solution, called \sinnamon{}, that can efficiently retrieve the top-$k$ results for sparse \emph{real valued} vectors drawn from arbitrary distributions. Notably, \sinnamon{} offers levers to trade-off memory consumption, latency, and accuracy, making the algorithm suitable for constrained applications and systems. We give theoretical results on the error introduced by the approximate nature of the algorithm, and present an empirical evaluation of its performance on two hardware platforms and synthetic and real-valued datasets. We conclude by laying out concrete directions for future research on this general top-$k$ retrieval problem over sparse vectors.
\end{abstract}

\maketitle              

\section{Introduction}
\label{section:introduction}

Many applications of information retrieval, as the name of the discipline suggests, reduce to or involve the fundamental and familiar question of \emph{retrieval}. In its most general form, it aims to solve the following problem:
\begin{equation}
    \argmax^{(k)}_{x \in \mathcal{X}} \; f(q, x),
    \label{equation:retrieval-problem}
\end{equation}
to find, from a collection $\mathcal{X}$, a subset of $k$ objects
that are the most relevant to a query object $q \in \mathcal{Q}$ according to a similarity function $f: \mathcal{Q} \times \mathcal{X} \rightarrow \mathbb{R}$. In many instances, this manifests as the Maximum Inner Product Search (MIPS) problem where $\mathcal{X}, \mathcal{Q} \subset \mathbb{R}^n$ and $f(\cdot, \cdot)$ is the inner product of its arguments:
\begin{equation}
    \argmax^{(k)}_{x \in \mathcal{X}} \; \langle q \; , \; x \rangle.
    \label{equation:mips}
\end{equation}

As a prominent example, consider a multi-stage ranking system~\cite{asadi2013phd,asadi2013efficiency,yin2016ranking} in the context of text retrieval. 
The cascade of ranking functions often begins with lexical or semantic similarity search which can be formalized using Equation~(\ref{equation:mips}). 

When similarity is based on Term Frequency-Inverse Document Frequency (TF-IDF), for example, $\mathcal{X}$ is made up of high-dimensional vectors, one per document.
Each document vector contains non-zero entries that correspond to terms and their frequencies in that document. 
Here, there is a one to one mapping between dimensions in sparse document vectors and terms in the vocabulary.
Each non-zero entry of a query vector $q$ records the corresponding term's inverse document frequency. BM25~\cite{bm25, bm25original} and
many other popular lexical similarity measures can similarly be expressed in the form above. 

When similarity is based on the semantic closeness of pieces of text, then vectors can be produced by embedding models (e.g.,~\cite{nogueira2020monot5,nogueira2020passage,formal2022splade,reimers-2019-sentence-bert}). 
This formulation trivially extends to joint lexical-semantic search~\cite{wang2021bert,chen2022ecir,Kuzi2020LeveragingSA,bruch2022fusion} too.

This deceptively simple problem is difficult to solve efficiently and effectively in practice. When the coordinates of each vector are almost surely non-zero---a case we refer to as \emph{dense} vectors---then there are volumes of algorithms such as graph-based methods~\cite{malkov2016hnsw,Johnson2021faiss,freshdiskann,NEURIPS2019_0fd7e4f4}, product quantization~\cite{pq2011,pcpq,scann}, and random projections~\cite{fjlt,edo-fastjk-soda,edo-fastjl-acm} that may be used to quickly find an approximate solution to Equation~(\ref{equation:mips}). 
But when vectors are \emph{sparse} in a high-dimensional space (i.e., have thousands to millions of dimensions) with very few non-zero entries, then no general efficient solution exists: Because of the near-orthogonality of most vectors in a sparse high-dimensional regime, algorithms for MIPS do not port over successfully.

It is only by imposing additional constraints on the vectors that the literature approaches this problem at scale and offers solutions that meet certain memory, time, and accuracy constraints. 
Algorithms that rely on sketching cover only binary or categorical-valued vectors~\cite{binsketch,cat_binsketch}. 
Inverted index-based algorithms that are more commonly used in information retrieval, such as WAND and its descendants~\cite{broder2003wand,ding2011bmwand,topk_bmindexes,mallia2019faster-blockmaxwand,mallia2017blockmaxwand_variableBlocks} and JASS~\cite{jass}, as well as signature-based algorithms~\cite{bitfunnel,asadi2013bloom,countminsketch} all make a number of crucial assumptions: that vectors are non-negative and integer-valued; that their non-zero entries follow a Zipfian distribution; that the share of the contribution of entries to the final score is non-uniform (i.e., some entries contribute more heavily to the final score than others); and that query vectors have very few non-zero entries.

Many of these constraints have historically held given the nature of text data and keyword queries in search engines. 
But when the sparse vectors are the output of embedding models~\cite{sparterm,formal2021splade,formal2022splade,zhuang2022reneuir,dai2020sigir,coil,mallia2021learning,zamani2018cikm}, many of these assumptions need not hold. 
For example, query vectors produced by the SPLADE model~\cite{formal2021splade} have, on average, about $43$ non-zero, real entries on the MS MARCO Passage v1 dataset~\cite{nguyen2016msmarco}---far too many for algorithms such as WAND to operate efficiently~\cite{lassance2022sigir} and, without discretization into integers, incompatible with existing algorithms. 
While there are efforts to make such model-generated representations more sparse by way of regularization or pooling~\cite{yang2021sparsifying,lassance2022sigir}, the underlying Sparse MIPS problem (SMIPS) for unconstrained real vectors remains mostly unexplored.

We investigate that handicap in this work because we believe SMIPS to be of increasing importance as evidenced by the examples above.
Efficiently solving the SMIPS problem enables further innovation in text retrieval and other related areas.  
In our search for an algorithm, we pay particular attention to the more difficult \emph{online} SMIPS problem, where we assume no knowledge of the streaming collection $\mathcal{X}$ and require that the algorithm supports online insertions and deletions. 
We introduce this particular challenge to support real-world use-cases where collections change rapidly, as well as emerging research on Retrieval-Enhanced Machine Learning~\cite{zamani2022reml} where a learning algorithm interacts with a retrieval system during the training process, thereby needing to search for, insert, and delete objects in and from a dynamic collection.

Furthermore, we explore the online SMIPS problem in the context of a vector database depicted in Figure~\ref{figure:architecture}. 
In particular, we assume that the system has (and is required to have) an efficient storage system that contains all active vectors. 
In this setup, the (exact or approximate) top-$k$ retrieval engine may access the vector storage during query execution.

\begin{figure}[t]
\begin{center}
\centerline{
\includegraphics[height=1.1in]{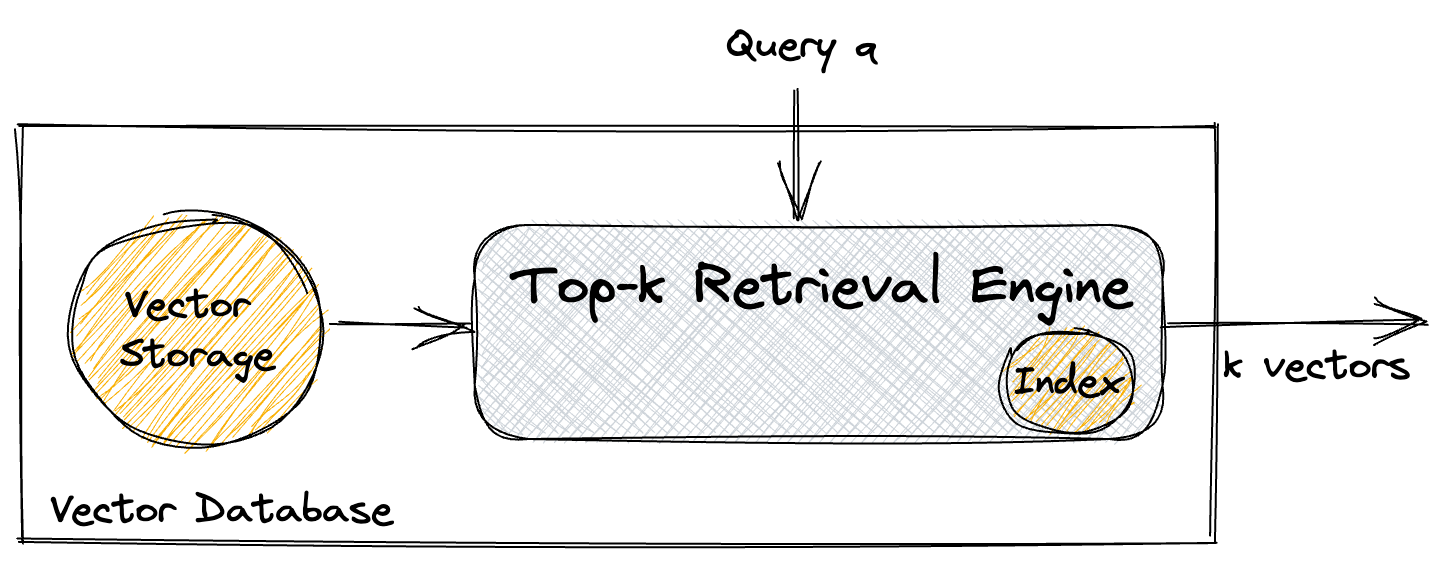}}
\caption{A vector database system consisting of a storage system and an exact or approximate top-$k$ retrieval engine that solves the MIPS/SMIPS problem of Equation~(\ref{equation:mips}).}
\label{figure:architecture}
\end{center}
\end{figure}

Given the setup above, we establish a baseline by revisiting a na\"ive, \emph{exact} algorithm, which we call \linscan{}, to approximate Equation~(\ref{equation:mips}).
Here, $q, x \in \mathbb{R}^n$ and their number of non-zero entries is much fewer than $n$.
That is, $\psi_q, \psi_d \ll n$ where $\psi_q$ and  $\psi_d$ denote the number of non-zero entries in a query vector and the document vector respectively. 
\linscan{} simply stores pairs of vector identifiers and coordinate values in an inverted index that is optionally compressed.
During retrieval, it traverses the index one coordinate at a time to accumulate the inner product scores. 
As we show in this work, \linscan{} proves surprisingly competitive because it takes advantage of instruction-level parallelism and efficient cache utilization available on modern CPUs.

We then build on \linscan{} and propose an online, \emph{approximate} algorithm called \sinnamon{}.
It approximately solves Equation~(\ref{equation:mips}) for sparse vectors, with the implication that 
some of the candidates in the top-$k$ set may be there erroneously. 
As we will show, this tolerance for error in the top-$k$ set makes it possible to tailor the approximate retrieval stage to meet a given set of time, space, and accuracy constraints.

\sinnamon{} makes use of two data structures. 
One, which is more familiar to the reader, is a lean, dynamic inverted index. 
In \sinnamon{}, this index is simply a mapping from a coordinate to the identifier of vectors in $\mathcal{X}$ that have a non-zero value in that coordinate. 
In other words, we maintain inverted lists that contain just vector identifiers. 
This structure allows us to quickly verify if the $j$th coordinate of a vector $x$ is non-zero (i.e., $\mathbbm{1}_{x[j] \neq 0}$) and obtain the set of vectors whose $j$th coordinate is non-zero: $\{ i \;|\; \mathbbm{1}_{x_i[j] \neq 0}\}$.

Coupled with the inverted index is a novel probabilistic sketch data structure.\footnote{We use ``sketch'' to describe a compressed data structure that approximates a high-dimensional vector, and ``to sketch'' to describe the act of compressing a vector into a sketch.} 
A high-dimensional sparse vector $x \in \mathbb{R}^n$ is sketched as $\tilde{x} \in \mathbb{R}^{2m}$ (${2m} \ll n$) using a lossy transformation $e(\cdot): \mathbb{R}^n \rightarrow \mathbb{R}^{2m}$. 
Together with the inverted index, this sketch offers an inverse transformation $e^{-1}(\cdot)$ such that for an arbitrary query vector $q$, we have that $\langle q \; , \; e^{-1}(\tilde{x}) \rangle \geq \langle q \; , \; x \rangle$ and the difference can be tightened by the parameters of the algorithm. 
Another crucial property of our data structure is that, much like the machinery of a Counting Bloom filter~\cite{counting-bloom-filter}, obtaining the value of the $j$th coordinate of $e^{-1}(\tilde{x})$ can be done efficiently and with access to the same coordinates of the sketch regardless of the input, $\tilde{x}$.

As a document vector\footnote{We refer to vectors that are expected to be indexed as ``document vectors'' or simply ``documents,'' and call the input to the retrieval algorithm as the ``query vector,'' ``query point,'' or simply ``query.''} $x_i$ is inserted into $\mathcal{X}$, we record the identifier of its non-zero coordinates in the inverted index and subsequently insert its sketch $\tilde{x}_i$ into the $i$th column of a sketch matrix $\tilde{\mathcal{X}} \in \mathbb{R}^{2m \times |\mathcal{X}|}$.
When we receive a query $q$, we use a coordinate-at-a-time algorithm that efficiently computes the inner product scores by accessing only a single or a fixed group of rows in $\tilde{\mathcal{X}}$ per coordinate. When deleting $x_i$, we simply remove its identifier $i$ from the inverted index and mark the $i$th column in $\tilde{\mathcal{X}}$ as vacant.

In addition to a theoretical analysis of our data structure, we extensively evaluate \linscan{} and \sinnamon{} on benchmark retrieval datasets and demonstrate their many interesting properties empirically. 
We show that due to predictable and regular memory access patterns, both algorithms are fast on modern CPUs. 
We discuss further how we may control the memory usage of \sinnamon{} by adjusting the sketch dimensionality $2m$, 
and tune a knob within the sketching algorithm to control its approximation error and retrieval accuracy.
Moreover, due to their coordinate-at-a-time query processing logic, \linscan{} and \sinnamon{} can be trivially turned into \emph{anytime} algorithms, terminating retrieval once an allotted time budget is exhausted. 
Finally, as we will demonstrate in this work, it is straightforward to parallelize computations in \linscan{} and \sinnamon{}. 
These properties make these methods the first SMIPS algorithms for real vectors that allow one to explore the Pareto frontier of effectiveness and time- and space-efficiency, to trivially scale indexes vertically through parallelization, and to tailor it to the needs of resource-constrained environments and applications.

We begin this work with a review of the relevant literature on this topic in Section~\ref{section:related-work}. 
We then describe the \linscan{} and \sinnamon{} algorithms in Sections~\ref{section:linscan} and~\ref{section:algorithm} respectively. 
That presentation is followed by a detailed error analysis of the data structure and retrieval algorithm in Section~\ref{section:analysis}, and a comprehensive empirical evaluation on two hardware platforms and a variety of sparse vector collections in Section~\ref{section:evaluation}. 
We conclude this work with a discussion in Section~\ref{section:discussion}.

\section{Related Work}
\label{section:related-work}

The information retrieval literature offers numerous algorithms to solve a constrained variant of the SMIPS problem that is specifically tailored to text retrieval and keyword search. The research on that topic has advanced the field considerably over the past few decades, making retrieval one of the most efficient components in a modern search engine. We do not review this vast literature here and refer the reader to existing surveys~\cite{zobel2006inverted_files,tonellotto2018survey} for details. Instead, we briefly review key algorithms and explain what makes them less suitable to operate in the setup we consider in this work.

Among the many algorithms in existence, WAND~\cite{broder2003wand} and its intellectual descendants and incremental optimizations~\cite{ding2011bmwand,topk_bmindexes,mallia2019faster-blockmaxwand,mallia2017blockmaxwand_variableBlocks} have become the \emph{de facto} top-$k$ retrieval solution. The core logic in WAND and other related algorithms centers around a document-at-a-time traversal of the inverted index. By maintaining an upper-bound on the partial score contribution of each coordinate to the final inner product, we can quickly tell if a document may possibly end up in the top $k$ set: if it appears in enough inverted lists whose collective score upper-bound exceeds the current threshold, then it is a candidate to be fully evaluated; otherwise, it has no prospect of ever making it to the top-$k$ set and can therefore be safely rejected without further computation.
 
The excellent performance of this logic rests on a number of important assumptions, however. Like all other existing algorithms, it is designed primarily for non-negative vectors. Due to its irregular memory access pattern, the algorithm operates better when the query has only a few non-zero coordinates. But perhaps its key assumption is the fact that word frequencies in natural languages often follow a Zipfian distribution. Given the role that word frequencies play in relevance measures such as BM25~\cite{bm25original}, the Zipfian shape implies that some words (i.e., coordinates) are inherently far more important than others. That, in turn, boosts or attenuates the contribution of the coordinate to the final inner product score, making the distribution of upper-bounds over coordinates quite skewed. Such skewness contributes heavily to the success of WAND and other dynamic pruning algorithms~\cite{tonellotto2018survey}.

While non-negativity and high query sparsity can be relaxed, the algorithm duly redesigned, and its implementation optimized for a more general regime, the reliance on Zipfian data is less forgiving. When the distribution of non-zero coordinates deviates from the Zipfian curve, the coordinate upper-bounds become more uniform, leading to less effective pruning of the inverted lists, and therefore a less efficient top-$k$ retrieval. That, among other problems~\cite{crane2017wsdm}, renders this particular idea of pruning less suitable for a general purpose top-$k$ retrieval for sparse vectors where coordinates take on a non-zero value (nearly) uniformly at random.

Other competing index traversal techniques process a query in a coordinate-at-a-time or score-at-a-time manner. Both of these approaches rely on sorting inverted lists by term frequencies or their ``impact scores'' (i.e., precomputed partial scores). The machinery within these algorithms, however, has the added disadvantage that it relies on the stationarity of the dataset to compute impact scores or sort postings, making it undesirable for streaming collections that require fast updates to the index~\cite{tonellotto2018survey}.

In contrast to the multitude of data structures and algorithms for stationary datasets, the literature on retrieval in streaming collections is rather slim and limited to a few works~\cite{asadi2012bloom,asadi2013phd,asadi2013bloom}. Notably, Asadi and Lin~\cite{asadi2012bloom,asadi2013bloom} used Bloom filters~\cite{bloom-filter} to speed up postings list intersection in conjunctive and disjunctive queries at the expense of accuracy and memory footprint. These approximate methods proved instrumental in creating an end-to-end algorithm for retrieval and ranking of streaming documents~\cite{asadi2013phd}. While these works are related to the question we investigate in this work, the proposed methods are not directly applicable: We are not interested in set membership tests for which Bloom filters are a natural choice, but rather in approximating real-valued vectors in such a way that leads to arbitrarily accurate inner product with a query vector.

Another relevant topic is the use of signatures for retrieval and inner product approximation~\cite{bitfunnel,binsketch,cat_binsketch}. Pratap et al. propose a simple algorithm~\cite{binsketch} to sketch sparse \emph{binary} vectors in such a way that the inner product of sketches approximates the inner product of original vectors. The core idea is to randomly map coordinates in the original space to coordinates in the sketch. When two or more entries collide, the sketch records the OR of the colliding values. A later work extends this idea to categorical-valued vectors~\cite{cat_binsketch}. Nonetheless, it is not obvious how the proposed sketching mechanisms may be extended to real-valued vectors.

Deviating from the standard inverted index solution to top-$k$ retrieval is the work of Goodwin et al.~\cite{bitfunnel}. As part of what is referred to as the BitFunnel indexing machinery, the authors propose to record and store a bit signature for every document vector in the index using Bloom filters. These signatures are scanned during retrieval to deduce if a document contains the terms of a conjunctive query. While it is encouraging that a signature-based replacement to inverted indexes appears not only viable but very much practical, the query logic BitFunnel supports is limited to ANDs and does not generalize to the setup we are considering in this work. Despite that, we note that the ``bit-sliced signatures'' in BitFunnel inspired the particular transposed layout of the sketch matrix in \sinnamon{}.

For completeness, we also briefly note the literature on sparse-sparse matrix multiplication~\cite{smith2015splatt,li2018hicoo,8327050,fowers2014high,9065579,9251978} and sparse matrix-sparse vector multiplication~\cite{Azad2017AWP}. The main challenge in operations concerning sparse matrices is that the computation involved is often highly memory-bound. As such, much of this literature focuses on developing sparse storage formats with hardware- and cache-aware designs that lead to a more effective utilization of memory bandwidth. We believe, however, that the research on compact storage and memory-efficient structures is orthogonal to the topic of our work and offers solutions that could lead to improvements across all algorithms considered in this work.

\section{\linscan{}: An Exact SMIPS Baseline}
\label{section:linscan}

Let us begin with a na\"ive and exact baseline. 
This algorithm, which we call \linscan{}, constructs an inverted index that maps a coordinate (e.g., $j \in [n]$) to a list of ``postings''. 
Each posting is a pair consisting of the identifier $i$ of a document vector $x_i \in \mathbb{R}^n$ and a value $x_i[j]$.
This adds $2 \times \psi_d$ values to the index for every document. 
In our implementation, we use two parallel arrays to implement an inverted list (also known as non-interleaved inverted lists), one that stores vector identifiers and another that holds values. 
For completeness, we show this indexing logic in Algorithm~\ref{algorithm:linscan:indexing}.

\begin{algorithm}[!t]
\SetAlgoLined
{\bf Input: }{Sparse document vector $x_i \in \mathbb{R}^n$.}\\
{\bf Requirements: }{Inverted index $\mathcal{I}$.}\\

\begin{algorithmic}[1]
    \STATE $nz(x_i) \leftarrow \{ j \;|\; x_i[j] \neq 0 \}$
    \STATE $\mathcal{I}[j].\textsc{Insert}((i, x_i[j]))\;\; \forall j \in nz(x_i)$
 \end{algorithmic}
 \caption{Indexing in \linscan{}}
\label{algorithm:linscan:indexing}
\end{algorithm}

In its most basic variant, we store the inverted index without using any form of compression: That is, the document identifiers are stored as $32$-bit integers and values as $32$-bit floats.
This allows us to quantify the latency of the logic within the algorithm itself and remove other factors related to compression. 
It also enables the algorithm to take advantage of instruction-level parallelism and efficient caching that come for free (using default compiler optimization techniques) with a coordinate-at-a-time retrieval strategy. 
To make the algorithm more practical, we also consider a variant where the list of vector identifiers in each inverted list is compressed using the Roaring~\cite{roaring} dynamic data structure and the values are stored using the \textit{bfloat16} standard ($16$-bit floating points). 
The loss of precision due to the conversion from $32$-bit floats to $16$-bit values is negligible in practice. 
We denote this variant by \roaringlinscan{}.

\begin{algorithm}[t]
\SetAlgoLined
{\bf Input: }{Sparse query vector $q \in \mathbb{R}^n$; Number of vectors to return, $k$.}\\
{\bf Requirements: }{Inverted index $\mathcal{I}$.}\\
\KwResult{Exact solution set of Equation~(\ref{equation:mips})}

\begin{algorithmic}[1]
    \STATE $nz(q) \leftarrow \{ j \;|\; q[j] \neq 0 \}$ \label{algorithm:linscan:retrieval:scoring-begins}
    \STATE $\textit{scores}[i] \leftarrow 0 \;\forall i$
    \FORALL{$j \in nz(q)$} \label{algorithm:linscan:retrieval:scoring}
        \FORALL{$(i, x_i[j]) \in \mathcal{I}[j]$}
            \STATE $\textit{scores}[i] \leftarrow \textit{scores}[i] + x_i[j] \times q[j]$
        \ENDFOR
    \ENDFOR \label{algorithm:linscan:retrieval:scoring-ends}
    \RETURN $\textsc{FindLargest}(\mathit{scores}, k)$ \label{algorithm:linscan:retrieval:find-largest}
 \end{algorithmic}
 \caption{Retrieval in \linscan{}}
\label{algorithm:linscan:retrieval}
\end{algorithm}

\begin{algorithm}[t]
\SetAlgoLined
{\bf Input: }{An array of floating point values, \emph{scores}; $k$}\\
\KwResult{Index of the largest $k$ values.}

\begin{algorithmic}[1]
    \STATE heap $\leftarrow$ \textsc{BinaryHeap}
    \STATE $\theta \leftarrow -\infty$
    \FORALL{$i \in [\textsc{Len}(\textit{scores})]$}
        \IF{$\textit{scores}[i] > \theta$}
            \STATE heap.\textsc{Insert}$((\textit{key}=i, \textit{value}=\textit{scores}[i]))$
        \IF{heap.$\textsc{Size}() > k$}
            \STATE $\theta \leftarrow \text{heap.}\textsc{Pop}().\textit{value}$
        \ENDIF
        \ENDIF
    \ENDFOR
    \RETURN \emph{key}s of items in heap
\end{algorithmic}
\caption{A sample implementation of \textsc{FindLargest}}
\label{algorithm:find-largest}
\end{algorithm}

During retrieval, \linscan{} follows a simple two-step procedure shown in Algorithm~\ref{algorithm:linscan:retrieval}. 
In the \emph{scoring} step (lines~\ref{algorithm:linscan:retrieval:scoring-begins} through~\ref{algorithm:linscan:retrieval:scoring-ends}), it traverses the inverted index one coordinate at a time for every non-zero coordinate in the query vector and accumulates partial scores for all documents. 
At the end of this step, the algorithm will have computed the exact inner product scores for every vector in the collection---document vectors that are not visited in the scoring loop on line~\ref{algorithm:linscan:retrieval:scoring} will have a score of $0$. In the \emph{ranking} step (line~\ref{algorithm:linscan:retrieval:find-largest}), it finds the top $k$ vectors with the largest inner product scores; in our implementation of \textsc{FindLargest}, we use a heap to efficiently identify the top $k$ vectors as shown in Algorithm~\ref{algorithm:find-largest}.

An interesting property of \linscan{}'s retrieval algorithm is that it is trivial to execute its logic in parallel in a dynamic manner. 
While most existing algorithms would require some form of sharding of the index by document ids (i.e., keeping separate index structures for different ranges of document ids), \linscan{} can execute retrieval with as many threads as are available using the very same monolithic data structure. 
It is the combination of the coordinate-at-a-time nature of \linscan{} and the layout of its data structure that lend the algorithm to such a dynamically adjustable level of concurrency. By breaking up an inverted list into contiguous segments on the fly, we can accumulate partial scores for each segment concurrently. Similarly, it is just as trivial to execute $\textsc{FindLargest}(\cdot)$ in parallel. 
We consider this parallel variant of the algorithm in this work and refer to it as $\linscan{}^\parallel$.

\begin{algorithm}[t]
\SetAlgoLined
{\bf Input: }{Sparse query vector $q \in \mathbb{R}^n$; Time budget, $T < \infty$; Number of vectors to re-rank, $k^\prime$; Number of vectors to return, $k$.}\\
{\bf Requirements: }{Inverted index $\mathcal{I}$; Vector storage $\mathcal{S}$.}\\
\KwResult{Approximate solution set of Equation~(\ref{equation:mips})}

\begin{algorithmic}[1]
    \STATE $nz(q) \leftarrow \{ j \;|\; q[j] \neq 0 \}$
    \STATE \textbf{Sort} coordinates in $nz(q)$ in descending order by the absolute value of the vector entries ($|q[j]|$)
    \STATE $\textit{scores}[i] \leftarrow 0 \;\forall i$
    \FORALL{$j \in \textbf{sorted } nz(q)$}
        \FORALL{$(i, x_i[j]) \in \mathcal{I}[j]$}
        \STATE $\textit{scores}[i] \leftarrow \textit{scores}[i] + x_i[j] \times q[j]$
        \ENDFOR
        \STATE \textbf{break} if time budget $T$ exhausted
    \ENDFOR
    \STATE $V \leftarrow \textsc{FindLargest}(\mathit{scores}, k^\prime)$
    \FOR {$v \in V$}
    \item $s_{v} \leftarrow \langle q,\; \mathcal{S}.\textsc{Fetch}(v) \rangle$
    \ENDFOR
    \RETURN $\textsc{FindLargest}(\{ s_v \;|\; v \in V \}, k)$
 \end{algorithmic}
 \caption{Anytime retrieval in \linscan{}}
\label{algorithm:linscan:retrieval:anytime}
\end{algorithm}

Finally, let us introduce an \emph{anytime} but necessarily \emph{approximate} version of \linscan{} by way of a simple modification to the retrieval logic. In this variant, we visit the non-zero query coordinates in order from the coordinate with the largest absolute value to the smallest one. As soon as a given time-budget $T$ is exhausted, we terminate the scoring step of the retrieval; setting $T=\infty$ to give the algorithm an unlimited time budget reduces the logic to the vanilla exact \linscan{}. But because the scores may no longer represent the exact inner products, we find top $k^\prime$ document vectors according to these possibly inexact scores for some $k^\prime \geq k$, and subsequently fetch those vectors from storage to compute their exact scores and finally return the top $k$ elements. This procedure is shown in Algorithm~\ref{algorithm:linscan:retrieval:anytime}.

\subsection{Deletions}
We have so far described the indexing and retrieval algorithms in \linscan{}. In this section, we briefly touch on deletion strategies. We preface this discussion with the note that insertion, deletion, and retrieval procedures are not entirely independent: A particular deletion algorithm may pair better with a particular set of insertion and retrieval algorithms. While we are careful to incorporate this fact into making a choice between different deletion strategies, we acknowledge that more can and should be done to optimize joint insertion-deletion-retrieval efficiency. But as much of the optimization involves heavy engineering (e.g., applying deletions in batch, running a separate background process to reclaim deleted space, etc.), we do not dwell on this point in this work and leave an empirical exploration of this detail to future work.

Throughout this work, when we delete a document vector $x_i$ from a \linscan{} index, we invoke a process that is best described as ``full deletion.'' This strategy simply wipes all postings associated with $x_i$ from the inverted index and frees up the space they occupied. That involves removing a posting from an inverted list, noting that the posting may reside anywhere within the list.

An obvious advantage of this protocol is that it does not produce any waste in memory in the form of zombie postings---space allocated in memory that lingers on after the document has been deleted. Moreover, it is compatible with the insertion and retrieval logic described earlier as it maintains the contiguity of the inverted lists and the alignment between the identifier and value arrays.

An obvious disadvantage of this approach, on the other hand, is that fully removing a posting from an inverted list often leads to a reorganization of the underlying array of data in memory, which is itself a potentially expensive procedure. But we believe that the benefits of the full deletion approach outweigh its pitfalls, especially considering the ramifications of alternative algorithms for the insertion and retrieval procedures.

Consider, for instance, a different method which simply designates a posting as ``deleted'' using a special value without immediately reclaiming its space. Perhaps the space is reclaimed periodically by a background process or recycled when a new posting is inserted into the inverted list. Regardless, it is clear that the insertion and retrieval procedures have to be modified so as to safely handle postings that are designated for deletion. In retrieval, for example, this involves conditioning on the content of each posting, leading to branches in the execution.

Given the discussion above, we believe our choice of full deletion is appropriate for \linscan{} and helps to reduce the overall complexity in insertion and retrieval.

\section{\sinnamon{}: An Approximate SMIPS Algorithm}
\label{section:algorithm}

This section describes the algorithmic details of \sinnamon{}. We begin with a detailed account of the efficient indexing procedure, where we create a sketch matrix and an inverted index. Like \linscan{}, we then present an anytime, coordinate-at-a-time retrieval algorithm that is amenable to parallelism. We subsequently explain how \sinnamon{} supports deletions. In each subsection, we also discuss how this algorithmic framework can be extended in the future and state the research questions that it in turn inspires.

\begin{algorithm}[!t]
\SetAlgoLined
{\bf Input: }{Sparse document vector $x_i \in \mathbb{R}^n$.}\\
{\bf Requirements: }{Inverted index $\mathcal{I}$; Sketch matrix $\tilde{\mathcal{X}} = [\mathcal{U};\; \mathcal{L}] \in \mathbb{R}^{2m \times |\mathcal{X}|}$; $h$ independent random mappings $\pi_o: [n] \rightarrow [m]$.}\\

\begin{algorithmic}[1]
    \STATE $nz(x_i) \leftarrow \{ j \;|\; x_i[j] \neq 0 \}$
    \STATE $\mathcal{I}[j].\textsc{Insert}(i)\;\; \forall j \in nz(x_i)$
    \STATE Let $u, l \in \mathbb{R}^m $
    \FORALL{$k \in [m]$}
        \STATE $u[k] \leftarrow \max_{\{ j \in nz(x_i) \;|\; \exists \;o\; \mathit{s.t.}\; \pi_o(j) = k \}} x_i[j]$ \label{algorithm:indexing:upper-bound-sketch}
        \STATE $l[k] \leftarrow \min_{\{ j \in nz(x_i) \;|\; \exists \;o\; \mathit{s.t.}\; \pi_o(j) = k \}} x_i[j]$
    \ENDFOR
    \STATE $\mathcal{U}^T[i] \leftarrow u$
    \STATE $\mathcal{L}^T[i] \leftarrow l$
 \end{algorithmic}
 \caption{Indexing in \sinnamon{}$_{(2m, \; h)}$}
\label{algorithm:indexing}
\end{algorithm}

\subsection{Indexing}
\label{section:algorithm:indexing}
When a new document vector $x_i$ arrives into the collection $\mathcal{X}$, \sinnamon{} executes an efficient two-step algorithm to index it and make it available for retrieval. The first stage is the familiar procedure of inserting the vector identifier $i$ into an inverted index $\mathcal{I}$. $\mathcal{I}$ is a mapping from coordinates to a list of vectors in which that coordinate is non-zero: $\mathcal{I}[j] \triangleq \{ i \;|\; x_i[j] \neq 0 \}$. When processing $x_i$, for every non-zero coordinate in the set $nz(x_i) \triangleq \{j \;|\; x_i[j] \neq 0 \}$, we insert $i$ into $\mathcal{I}[j]$.

The second, novel step involves populating the column $i$ in the \emph{sketch} matrix $\tilde{\mathcal{X}} \in \mathbb{R}^{2m \times |\mathcal{X}|}$, which has $2m$ rows (the sketch size) and $|\mathcal{X}|$ columns (the collection size). For notational convenience and to simplify prose, we regard $\tilde{\mathcal{X}}$ as a block matrix $\tilde{\mathcal{X}} = [\mathcal{U}_{m \times |\mathcal{X}|};\; \mathcal{L}_{m \times |\mathcal{X}|}]$ with the top half of the matrix denoted by $\mathcal{U}$ and the bottom half by $\mathcal{L}$.

Intuitively, what the sketch of a vector in \sinnamon{} captures is an upper-bound and a lower-bound on the entries of $x_i$ in such a way that its inner product with any query vector can be approximated arbitrarily accurately. 
This sketching step, in effect, can be thought of as a lossy compression of a sparse vector such that the error incurred from losing the original values does not severely degrade the solution set of Equation~(\ref{equation:mips}). We will revisit the effect of this approximation on the final inner product later in this work.

Algorithm~\ref{algorithm:indexing} presents this indexing procedure. The algorithm makes use of $h$ independent random mappings $\pi_o: [n] \rightarrow [m]$, where each $\pi_o(\cdot)$ projects coordinates in the original space to an integer in $[m]$. In the notation of the algorithm, we construct an upper-bound vector $u \in \mathbb{R}^m$ and a lower-bound vector $l \in \mathbb{R}^m$, and insert $u$ and $l$ into the $i$th column of $\mathcal{U}$ and $\mathcal{L}$ respectively. In words, the $k$th coordinate of $u$, $u[k]$ ($l[k]$) records the largest (smallest) value from the set of all entries in $x_i$ that map into $k$ according to at least one $\pi_o$. Figure~\subref*{figure:example:indexing} illustrates the algorithm for an example vector using a single mapping $\pi$.

\begin{figure}[t]
\begin{center}
\centerline{
\subfloat[Indexing]{
\includegraphics[width=0.5\linewidth]{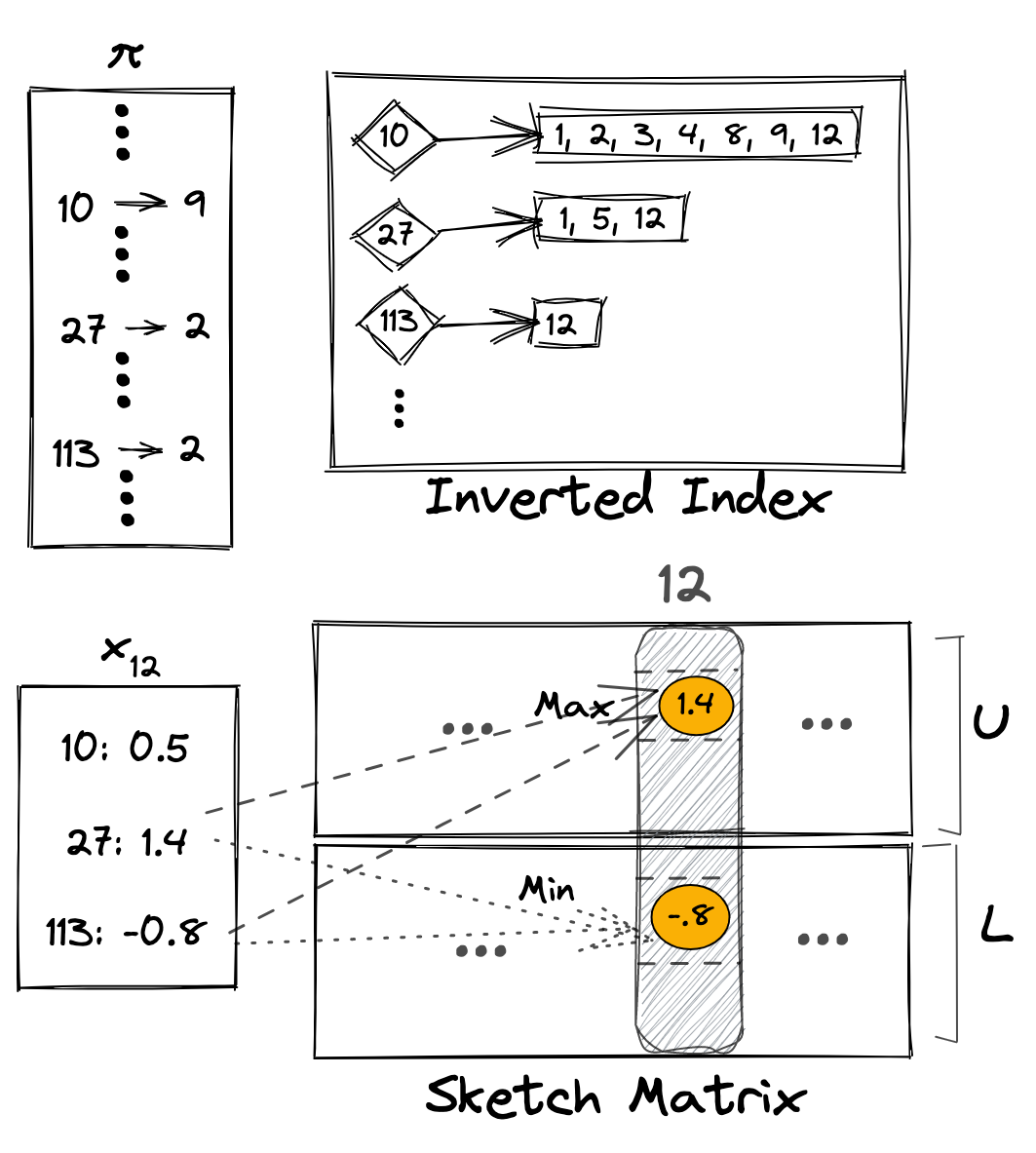}\label{figure:example:indexing}}
\subfloat[Scoring]{
\includegraphics[width=0.5\linewidth]{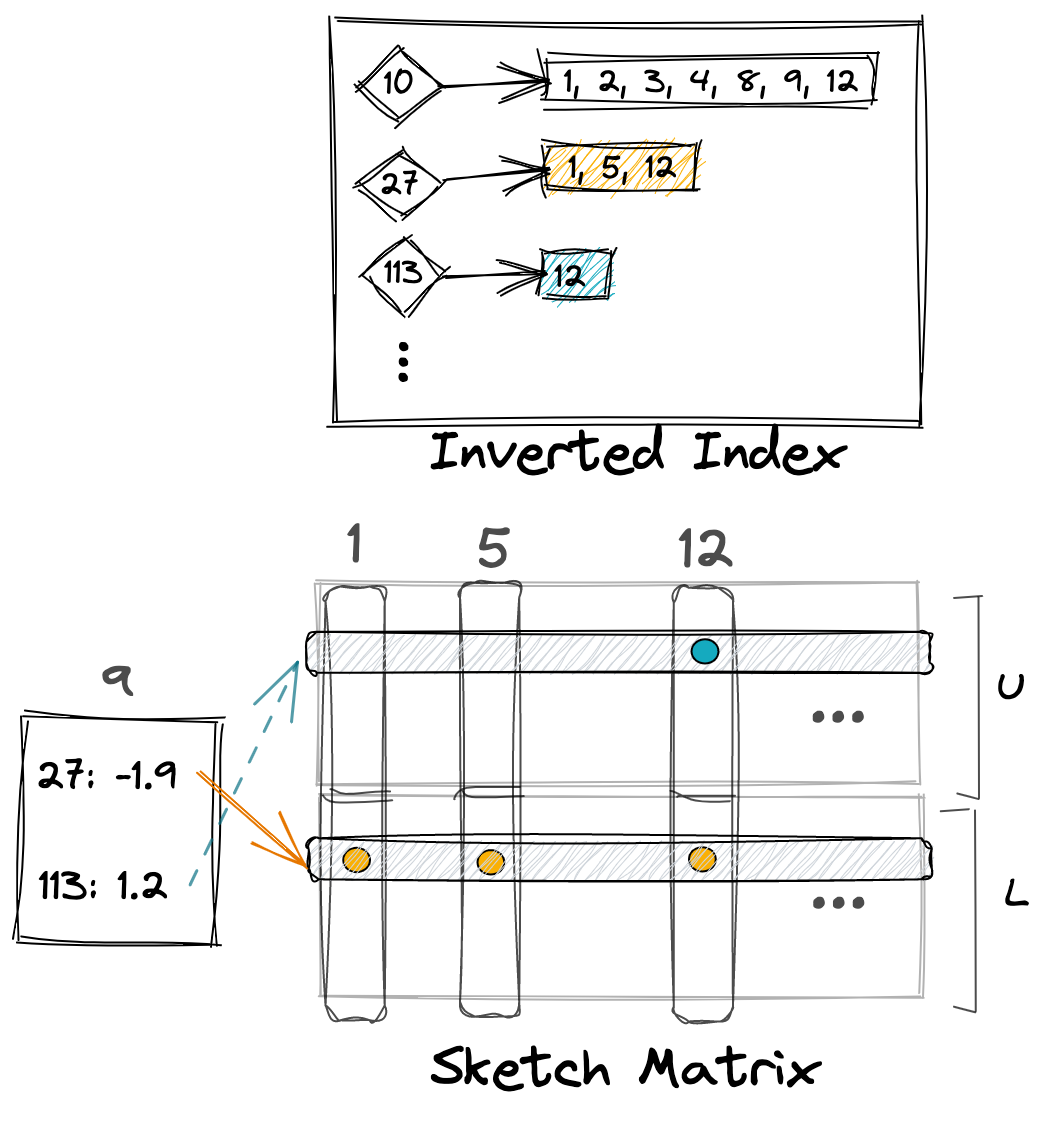}\label{figure:example:scoring}}
}
\caption{Example of (a) indexing and (b) score computation in \sinnamon{}. When inserting the vector $x_{12} \in \mathbb{R}^n$ consisting of $3$ non-zero coordinates $\{ 10, 27, 113\}$, we first populate the inverted index and then insert a sketch of $x$ into the $12$th column of the sketch matrix. The top half of the matrix, $\mathcal{U}$, records the upper-bounds and the bottom half,$\mathcal{L}$, the lower-bounds, with the help of a single random mapping $\pi$ from $n$ to $m$. When computing the approximate inner product of a query vector $q$ with the documents in the collection, we look up the inverted list for one coordinate and traverse its corresponding row in the sketch matrix to accumulate partial scores in a coordinate-at-a-time algorithm.}
\label{figure:example}
\end{center}
\end{figure}

It is instructive to contrast the index structure in \sinnamon{} with the one in \linscan{}. 
The main material difference between the two indexes is that \sinnamon{} allocates a constant amount of memory to store the sketch of each document, whereas \textsc{LinScan} stores all non-zero values of a document within the index. As a result, the amount of memory required to store values with \linscan{} grows linearly in $\psi_d$.

Finally, let us consider the time complexity of the insertion procedure in \sinnamon{}. Let us assume that raw vectors are represented in a ``sparse'' format: rather than a vector being implemented as an array of size $n$ with most entries $0$, we assume a vector is a mapping from a coordinate to a value. Assuming further that inserting a value into an inverted list using \textsc{Insert}$(\cdot)$ can be carried out in constant time, the overall time complexity of Algorithm~\ref{algorithm:indexing} is then $\mathcal{O}(\psi_d h)$ for a vector $x$ with $\psi_d$ non-zero entries. The algorithm stores $\psi_d$ integers in the inverted index and $2m$ real values per vector.

As an important note we add that, when operating in the non-negative orthant where vectors are in $\mathbb{R}_{+}^{n}$, we do not record the lower-bounds (i.e., the sketch $\mathcal{L}$) and only maintain a sketch matrix $\tilde{\mathcal{X}} = \mathcal{U} \in \mathbb{R}^{m \times |\mathcal{X}|}$. We call this variant of the algorithm \sinnamonplus{}.

\subsubsection{Notes on Implementation}
In practice, the inverted index can be compressed using any of the existing bitmap or list compression codes such as Roaring~\cite{roaring}, Simple16~\cite{simple16}, PForDelta~\cite{pfordelta} or others~\cite{pibiri2020compression,wang2017compression}. We use the Roaring codec throughout this work to compress inverted lists because it achieves a reasonable compression ratio and, at the same time, supports very fast insertion and deletion operations.

As with \roaringlinscan{}, we store sketches using \emph{bfloat16}. Finally, we note that the matrix $\tilde{\mathcal{X}}$ is in practice implemented as an array of $2m$ rows, with each row growing as needed to accommodate the sketch of the $i$th vector. This particular data layout is more efficient during retrieval because \sinnamon{} needs access to the same row in order to compute partial scores for documents within a single inverted list. As such, by representing a row as a contiguous region of memory, we improve the overall memory access pattern and make the data structure more cache-friendly.

\subsubsection{Extensions and Future Considerations}
One of the desired properties in our design is that the inverted index $\mathcal{I}$ must offer compression as well as efficient insertion and deletion operations. This is required because the efficiency of the inverted index directly affects the overall efficiency of the indexing, update, and retrieval algorithms in \sinnamon{}. For this study, we settled on Roaring~\cite{roaring} bitmaps as it suits the needs of the algorithm. However, we note that studying inverted indexes with the outlined properties is an orthogonal area of research with many existing studies on the required operations.

Having stated that, existing inverted indexes are deterministic and exact. 
\sinnamon{}, on the other hand, is an approximate algorithm which trades size and speed for error tolerance. 
\sinnamon{} also offers levers, as we will argue, to compensate for the incurred error. 
This enables us to explore \emph{approximate} inverted indexes where each inverted list may be a \emph{superset} rather than an exact set and where multiple entries may share an inverted list. In other words, the research question to investigate is whether there exists an approximate inverted index $\tilde{\mathcal{I}}$ such that $\mathcal{I}[j] \subset \tilde{\mathcal{I}}[j] \;\forall\; j$ and where $|\tilde{\mathcal{I}}| \ll |\mathcal{I}|$ (with $|\cdot|$ denoting the overall size of the index) with a quantifiable effect on the final inner product with arbitrary query vectors. We wish to investigate this question in a future study.

Another research question in this context is the representation of the values. While \sinnamon{} offers a fixed number of dimensions in the sketch, how those entries are represented affects the overall memory usage. We believe that a number of quantization methods can be used to reduce the capacity requirement while maintaining an approximately accurate inner product. This is another area we wish to explore in the future.

\subsection{Retrieval}
\label{section:algorithm:retrieval}

Assuming we have an inverted index $\mathcal{I}$ and a sketch matrix $\tilde{\mathcal{X}}$ as described in Section~\ref{section:algorithm:indexing}, as well as $h$ random mappings $\pi_o(\cdot)_{1 \leq o \leq h}$, we now discuss the question of retrieval: Given a query vector $q \in \mathbb{R}^n$, find the top-$k$ closest vectors in the collection.

\subsubsection{Scoring}
Similar to \linscan{}, \sinnamon{} approaches retrieval in two steps. In the first and most critical step, \sinnamon{} computes an upper-bound on the inner product of the query vector and every vector in the collection. It does so by traversing the inverted list of every non-zero coordinate in the query, one coordinate at a time, and computing and accumulating partial scores (i.e., the product of query value at that coordinate and the document value as encoded in the sketch). Importantly, \sinnamon{} visits non-zero query coordinates in order from the coordinate with the largest absolute value to the smallest one to facilitate an anytime variant. As soon as a given time-budget $T$ is exhausted, it terminates the \emph{scoring} phase. At this point, all partial scores are upper-bounds on the exact inner product of the processed coordinates. This means, for example, when $T=\infty$ (i.e., when time is unlimited) the computed score of a document vector is an upper-bound on its inner product with the query.

\begin{algorithm}[t]
\SetAlgoLined
{\bf Input: }{Sparse query vector $q \in \mathbb{R}^n$; Time budget, $T$.}\\
{\bf Requirements: }{Inverted index $\mathcal{I}$; Sketch matrix $\tilde{\mathcal{X}} = [\mathcal{U};\; \mathcal{L}] \in \mathbb{R}^{2m \times |\mathcal{X}|}$; $h$ independent random mappings $\pi_o: [n] \rightarrow [m]$.}\\
\KwResult{Approximate scores for all $x_i \in \mathcal{X}$}

\begin{algorithmic}[1]
    \STATE $nz(q) \leftarrow \{ j \;|\; q[j] \neq 0 \}$
    \STATE \textbf{Sort} coordinates in $nz(q)$ in descending order by the absolute value of the vector entries ($|q[j]|$)
    \STATE $\textit{scores}[i] \leftarrow 0 \;\forall i$
    \FORALL{$j \in \textbf{sorted } nz(q)$}
        \STATE $K \leftarrow \{ \pi_o(j) \; 1 \leq o \leq h \}$ \label{algorithm:retrieval:rows}
        \FORALL{$i \in \mathcal{I}[j]$} \label{algorithm:retrieval:inverted_list_traversal}
            \IF {$q[j] > 0$} \label{algorithm:retrieval:branching}
                \STATE $s \leftarrow q[j] \times \min_{k \in K} \mathcal{U}[k][i]$ \label{algorithm:retrieval:least_upperbound}
            \ELSE
                \STATE $s \leftarrow q[j] \times \max_{k \in K} \mathcal{L}[k][i]$ \label{algorithm:retrieval:greatest_lowerbound}
            \ENDIF
            \STATE $\textit{scores}[i] \leftarrow \textit{scores}[i] + s$
        \ENDFOR
        \STATE \textbf{break} if time budget $T$ exhausted
    \ENDFOR
    \RETURN $\textit{scores}$
 \end{algorithmic}
 \caption{Scoring in \sinnamon{}$_{(2m, \; h)}$}
\label{algorithm:retrieval}
\end{algorithm}

Algorithm~\ref{algorithm:retrieval} presents the scoring procedure in \sinnamon{}. Intuitively, when the sign of a query entry at coordinate $j$ is positive, we find the least upper-bound on the value of $x_i[j]$ for a document $x_i \in \mathcal{I}[j]$ (line~\ref{algorithm:retrieval:least_upperbound} in Algorithm~\ref{algorithm:retrieval}). When $q[j] < 0$, we find the greatest lower-bound on $x_i[j]$ (line~\ref{algorithm:retrieval:greatest_lowerbound}). In this way, \sinnamon{} guarantees that the partial score is always an upper-bound on the actual partial score. This is illustrated for an example query vector in Figure~\subref*{figure:example:scoring}.

We note that, the expected time complexity of the scoring algorithm is $\mathcal{O}(\psi_q \log \psi_q + \psi_q h |\mathcal{X}| \psi_d/n)$, typically dominated by the second term, where the term $|\mathcal{X}| \psi_d/n$ represents the expected number of vectors that have a non-zero value in a particular coordinate (with the assumption that non-zero coordinates are uniformly distributed).

\subsubsection{Ranking}
At the end of the scoring stage, \sinnamon{} gives us approximate scores for every document in the collection. In the second stage, which we refer to as ``ranking,'' we must find the top-$k$ vectors that make up the (approximate) solution to Equation~(\ref{equation:mips}).

\begin{algorithm}[t]
\SetAlgoLined
{\bf Input: }{Sparse query vector $q \in \mathbb{R}^n$; Number of vectors to re-rank, $k^\prime$; Number of vectors to return, $k$.}\\
{\bf Requirements: }{Vector storage $\mathcal{S}$; Scores from Algorithm~\ref{algorithm:retrieval}, $\mathit{scores}$.}\\
\KwResult{Approximate solution set of Equation~(\ref{equation:mips})}

\begin{algorithmic}[1]
    \STATE $V \leftarrow \textsc{FindLargest}(\mathit{scores}, k^\prime)$
    \FOR {$v \in V$}
    \item $s_{v} \leftarrow \langle q,\; \mathcal{S}.\textsc{Fetch}(v) \rangle$
    \ENDFOR
    \RETURN $\textsc{FindLargest}(\{ s_v \;|\; v \in V \}, k)$
 \end{algorithmic}
 \caption{Ranking in \sinnamon{}$_{(2m, \; h)}$}
\label{algorithm:retrieval:ranking}
\end{algorithm}

To do that, we first find the $k^\prime \geq k$ vectors with the largest approximate score using a heap with time complexity $\mathcal{O}(|\mathcal{X}| \log k^\prime)$. The reason the initial pass selects a set that has more than $k$ vectors is that by doing so we compensate for the approximation error of the scoring algorithm. We later review the relationship between $k$ and $k^\prime$ empirically. We subsequently execute Algorithm~\ref{algorithm:retrieval:ranking} to compute the exact inner product between the query and the set of $k^\prime$ vectors, and eventually return the top-$k$ subset according to the exact scores.

Much like \linscan{}, \sinnamon{}'s retrieval algorithm (i.e., Algorithms~\ref{algorithm:retrieval} and~\ref{algorithm:retrieval:ranking}) is trivially amenable to dynamic parallelism. By breaking up an inverted list into non-overlapping segments in line~\ref{algorithm:retrieval:inverted_list_traversal} of Algorithm~\ref{algorithm:retrieval}, we can accumulate partial scores concurrently without the need to break up or shard the sketch matrix. In the parallel version of the algorithm, we also execute $\textsc{FindLargest}(\cdot)$ and the exact computation of inner products in Algorithm~\ref{algorithm:retrieval:ranking} using multiple threads. We refer to the parallel variant of the algorithm as $\sinnamon{}^\parallel$.

We note that, while a mono-CPU variant of \sinnamon{} offers a consistent analysis of the trade-offs within the algorithm and sheds light onto its behavior in comparison with other algorithms, we believe that the ease by which \sinnamon{} (and \linscan{}) can be run concurrently on a per-query basis renders the algorithm suitable for production systems that operate on large (dynamic) collections. In particular, because the index structure remains monolithic within each machine, it need not be rebuilt or re-assembled when porting the index to another machine with a different configuration.

\subsubsection{Notes on Implementation}
As is clear from line~\ref{algorithm:retrieval:rows} in Algorithm~\ref{algorithm:retrieval}, for a query coordinate $j$, we need only probe a fixed set of $h$ rows in the sketch matrix. When $h=1$, as a typical example, this implies that we need only visit a single row which is stored as a contiguous array in memory. Due to this property and the predictability of the memory access pattern, it is often possible to cache a few upcoming memory locations in advance of the computation so as to speed up scoring. We observe the effect of the cache-friendliness of the sketch matrix in practice by enabling default compiler optimizations. We also note that, it is possible to further optimize the implementation through explicit instruction-level parallelism where the compiler fails to do so itself, though we do not explicitly use this technique in this work.

In our implementation, we further optimize the efficiency of the algorithm by re-arranging the logic so as to remove the branching on line~\ref{algorithm:retrieval:branching} in Algorithm~\ref{algorithm:retrieval}. This is possible if the programming language offers ``function pointers,'' to point to the $\min$ and $\max$ operators depending on the sign of the query entry.

\subsubsection{Extensions and Future Considerations}
A requirement that is often faced in practice is for a retrieval algorithm to support constrained search. For example, a user may only ask for the top-$k$ set of songs whose genre matches a set of desired genres. One way to formalize this is to require that the retrieval algorithm enforce arbitrarily many binary constraints on the solution space. In other words, the vectors in the solution set of Equation~(\ref{equation:mips}) must pass an arbitrary set of functions $\mathcal{G} = \{ g_i: \mathcal{X} \rightarrow \{0, 1\} \}$. This transforms the problem to the following constrained retrieval problem:
\begin{align}
\begin{split}
    \argmax^{(k)}_{x \in \mathcal{X}} \; &\langle q \; , \; x \rangle, \\
    \textit{s.t.} \;\; \wedge_{g \in \mathcal{G}} \; &g(x) = 1,
    \label{equation:sinnamon-problem}
\end{split}
\end{align}
where $\wedge$ denotes the \textsc{And} operator.

\sinnamon{} naturally supports this mode of search because, by default, it computes the scores of all documents in the collection in its scoring stage---the same is true of \linscan{}. It is therefore possible to enforce arbitrary constraints by masking out those columns in $\tilde{\mathcal{X}}$ that do not satisfy the given conditions. We defer an examination of this setup to future studies.

\subsection{Deletions}

When deleting a vector $x_i$ from \linscan{}'s index, we committed to a ``full deletion,'' wiping all postings associated with $x_i$ from inverted lists. That strategy, we argued, fits \linscan{} well as it simplifies its insertion and retrieval logic. \sinnamon{}, in contrast, provides us with an alternative deletion mechanism.

When deleting $x_i$ from the collection, we simply remove all instances of $i$ from inverted lists in $\mathcal{I}$, much like in \linscan{}. However, we do not clean up the sketch of $x_i$ from the sketch matrix $\tilde{\mathcal{X}}$. Instead, we add $i$ to the set of available document identifiers so that the next vector that is inserted into the index may reuse $i$ as its identifier and recycle column $i$ in $\tilde{\mathcal{X}}$ to store its sketch.

This protocol is efficient because it only involves the removal of an integer from compressed inverted lists---an instruction that is often very fast to execute. Contrast this with \linscan{} where the two parallel arrays within an inverted list must remain aligned at all times. Because the values array must be cut at the same spot as the array that holds document identifiers, we have the overhead of having to find the index of the posting that holds $i$, then proceed to delete the $i$th entry in the values array.

We do note that, for applications that receive far more delete requests than new insertions, \sinnamon{}'s deletion logic may prove suboptimal. This is because, by virtue of not freeing up a column upon deletion, the sketch matrix could grow to be as large as the maximal number of vectors that exist in the dataset at the same time. For such applications, a different deletion strategy may be required that may involve defragmenting the matrix and reclaiming the underlying space. In practice, however, we find this particular scenario to be a mere hypothetical; in reality, deletions are dwarfed by insertion and update requests, where \sinnamon{}'s default deletion strategy leads to minimal to no waste.

\section{Analysis}
\label{section:analysis}

Recall that \sinnamon{} uses a sketch of size $2m$ to record upper- and lower-bounds on the values of active\footnote{In the rest of this work we refer to coordinates in a sparse vector as either zero or non-zero. In this section, to make the exposition more accurate, we adopt a more formal terminology and say a coordinate is \emph{inactive} when it is not present in the sparse vector, and \emph{active} when it is. Note that, the value of an active coordinate is \emph{almost surely} non-zero; that leaves room for the unlikely event that it may draw the value $0$ from its value distribution.} coordinates in a vector. Consider now Line~\ref{algorithm:indexing:upper-bound-sketch} of Algorithm~\ref{algorithm:indexing} where, given a vector $x \in \mathbb{R}^n$ and $h$ independent random mappings $\pi_o: [n] \rightarrow [m]$ ($1 \leq o \leq h$), we construct the upper-bound sketch $u \in \mathbb{R}^m$ where the $k$th dimension is assigned the following value:
\begin{equation}
    u[k] \leftarrow \max_{\{ i \in nz(x) \;|\; \exists \;o\; \mathit{s.t.}\; \pi_o(i) = k \}} x[i].
\end{equation}
The lower-bound sketch is filled in a symmetric manner, in the sense that the algorithmic procedure is the same but the operator changes from $\max(\cdot)$ to $\min(\cdot)$.

When a query coordinate is positive, to reconstruct the value $x[i]$ of a document vector, we take the least upper-bound from $u$, as captured on Line~\ref{algorithm:retrieval:least_upperbound} of Algorithm~\ref{algorithm:retrieval}, restated below for convenience:
\begin{equation}
    \tilde{x}[i] \leftarrow \min_{k \in \{ \pi_o(i) \; 1 \leq o \leq h\}} u[k].
\end{equation}
When the query coordinate is negative, on the other hand, it is the greatest lower-bound that is returned instead.

Given the above, it is easy to see that the following proposition is always true:

\begin{theorem}
    The score returned by Algorithm~\ref{algorithm:retrieval} of \sinnamon{}$_{T=\infty}$ is an upper-bound on the inner product of query and document vectors.
\end{theorem}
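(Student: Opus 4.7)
The plan is to prove the statement document by document: fixing any $x_i \in \mathcal{X}$, I will show that the final value of $\textit{scores}[i]$ produced by Algorithm~\ref{algorithm:retrieval} is at least $\langle q, x_i \rangle$, which immediately yields the theorem.

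First I would observe that because $T = \infty$, the outer loop of Algorithm~\ref{algorithm:retrieval} visits every $j \in nz(q)$, and for each such $j$ the inner loop visits every $i \in \mathcal{I}[j]$. Since by construction $\mathcal{I}[j] = \{i \;|\; x_i[j] \neq 0\}$, document $x_i$ contributes a term to $\textit{scores}[i]$ exactly once for each $j \in nz(q) \cap nz(x_i)$. Coordinates outside $nz(q) \cap nz(x_i)$ contribute zero to both the exact inner product and to the accumulator, so it suffices to compare the per-coordinate contributions on $nz(q) \cap nz(x_i)$.

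Next I would establish the per-coordinate inequality. Fix $j \in nz(q) \cap nz(x_i)$ and let $K = \{\pi_o(j) : 1 \leq o \leq h\}$. For every $o$, the index $j$ lies in the set $\{j' \in nz(x_i) \;|\; \pi_o(j') = \pi_o(j)\}$, so Line~\ref{algorithm:indexing:upper-bound-sketch} of Algorithm~\ref{algorithm:indexing} guarantees $x_i[j] \leq \mathcal{U}[\pi_o(j)][i]$. Taking the minimum over $o$ gives $x_i[j] \leq \min_{k \in K} \mathcal{U}[k][i]$. A symmetric argument for the lower-bound sketch yields $x_i[j] \geq \max_{k \in K} \mathcal{L}[k][i]$. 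If $q[j] > 0$, multiplying the first inequality by $q[j]$ shows that the quantity $s$ set on Line~\ref{algorithm:retrieval:least_upperbound} satisfies $s \geq q[j] \cdot x_i[j]$. If $q[j] < 0$, multiplying the second inequality by $q[j]$ flips the sign and gives the same bound for Line~\ref{algorithm:retrieval:greatest_lowerbound}. Summing these per-coordinate inequalities over $j \in nz(q) \cap nz(x_i)$ produces $\textit{scores}[i] \geq \langle q, x_i \rangle$.

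I do not expect a real obstacle here; the argument is essentially bookkeeping, and it relies only on the elementary fact that $\max$ (respectively $\min$) over a superset dominates (is dominated by) any element of that set, together with the elementary fact that multiplying by a negative scalar reverses inequalities. The only subtle point worth stating carefully is the matching between the random mappings used at indexing time and those used at query time, and the case split on the sign of $q[j]$ that selects between the $\mathcal{U}$ and $\mathcal{L}$ halves of the sketch; no distributional assumptions on $x_i$, $q$, or the $\pi_o$ are required.
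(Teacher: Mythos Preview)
Your proposal is correct and follows essentially the same reasoning the paper relies on: the paper does not give a formal proof of this theorem but simply asserts that it is ``easy to see'' from the construction of the upper- and lower-bound sketches and the sign-dependent selection in Algorithm~\ref{algorithm:retrieval}. Your write-up is a faithful, more detailed elaboration of exactly that argument---the per-coordinate bound from the $\max$/$\min$ sketch definition, the sign-of-$q[j]$ case split, and the summation over $nz(q)\cap nz(x_i)$---so there is no methodological divergence to discuss.
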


The fact above implies that \sinnamon{}'s approximation error is always non-negative. But what can we say about the probability that such an error occurs when approximating a document value? How large is the overestimation error of a single value? How does that error affect the final score of a query-document pair? These are some of the questions we examine in the remainder of this section.

\subsection{Notation and Probabilistic Model}
Denote by $X \in \mathbb{R}^n$ a random sparse vector that is constructed as follows:
The $i$th coordinate of $X$ is \emph{inactive} with probability $1 - p_i$.
Otherwise, it is \emph{active} and its value is a random variable, $X_i$,
drawn from some distribution with probability density function (PDF)
$\phi$ and cumulative distribution function (CDF) $\Phi$.
We assume throughout this work that $X_i$s are independent.

We have that the active coordinates of $X$
are encoded into the upper-bound and lower-bound sketches of \sinnamon{}, $\mathcal{U}$ and $\mathcal{L}$.
Denote by $U_k$ and $L_k$ random variables that represent the $k$th coordinate of
$\mathcal{U}$ and $\mathcal{L}$, where $1 \leq k \leq m$.

For every active variable $X_i$, we can obtain an upper-bound on its value from $\mathcal{U}$ and a lower-bound from $\mathcal{L}$. We denote these decoded upper- and lower-bounds by $\overline{X}_i$ and $\underline{X}_i$, respectively. Clearly, $\underline{X}_i \leq X_i \leq \overline{X}_i$ always holds. When our statement is agnostic to whether a decoded value is an upper-bound or a lower-bound on $X_i$, we simply write $\tilde{X}_i$ for the decoded value.

\subsection{Sketching Error}
In this section, we focus on the approximation error of a single active document coordinate:
What is the probability and expected magnitude of error when recovering the value of a single
active coordinate from a document sketch? More formally, we are interested in finding
$\mathbb{P}[\tilde{X}_i \neq X_i]$ and the distribution of error in the form of its
CDF: $\mathbb{P}[|\tilde{X}_i - X_i | < \delta]$, when $X_i$ is active.

We state the following result on the probability of error of the upper-bound sketch.

\begin{theorem}[Probability of Error of the Upper-Bound Sketch] 
    For large values of $m$ and an active $X_i$,
    \begin{equation}
        \mathbb{P}\big[ \overline{X}_i > X_i \big] \approx \int {\big[ 1 - e^{-\frac{h}{m} (1 - \Phi(\alpha)) \sum_{j \neq i} p_j } \big]^h \phi(\alpha) d\alpha},
        \label{equation:analysis:prob-of-error}
    \end{equation}
    where $\phi(\cdot)$ and $\Phi(\cdot)$ are the PDF and CDF of $X_i$s.
\end{theorem}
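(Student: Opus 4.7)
The plan is to condition on the value $X_i = \alpha$ of the active coordinate and separately analyze (a) the probability that a single bucket in the upper-bound sketch ends up holding a value strictly larger than $\alpha$, and (b) the fact that the $h$ buckets probed when decoding coordinate $i$ behave approximately as independent samples for large $m$. Putting (a) and (b) together and integrating against $\phi$ will produce the stated expression.

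\paragraph{Step 1: Reduction to a single bucket.} By definition, $\overline{X}_i = \min_{o \in [h]} U_{\pi_o(i)}$, so conditional on $X_i = \alpha$ we have
\begin{equation*}
\mathbb{P}\bigl[\overline{X}_i > \alpha \,\big|\, X_i = \alpha\bigr] = \mathbb{P}\bigl[\,U_{\pi_o(i)} > \alpha \text{ for every } o \in [h]\,\big|\, X_i = \alpha\bigr].
\end{equation*}
The first approximation is to argue that for large $m$ the $h$ indices $\pi_1(i), \dots, \pi_h(i)$ are almost surely distinct (collisions happen with probability $O(h^2/m)$), and that over the randomness of the mappings $\pi_{o'}(j)$ for $j \neq i$ the events $\{U_{\pi_o(i)} > \alpha\}$ decouple. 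Thus the conditional probability factorizes approximately as $\mathbb{P}[U_k > \alpha]^h$ for any fixed bucket $k$.

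\paragraph{Step 2: Single-bucket tail probability.} Fix a bucket $k$. By the indexing rule on Line~\ref{algorithm:indexing:upper-bound-sketch}, $U_k > \alpha$ if and only if some coordinate $j \neq i$ is active, takes a value $X_j > \alpha$, and is mapped to $k$ by at least one of the $h$ hash functions. For each $j \neq i$, let $q_j(\alpha)$ denote the probability of this triple event; by independence of the $X_j$s, the mappings, and the activity indicator,
\begin{equation*}
q_j(\alpha) = p_j\,(1 - \Phi(\alpha))\,\bigl[\,1 - (1 - 1/m)^h\,\bigr] \;\approx\; p_j\,(1 - \Phi(\alpha))\,\frac{h}{m},
\end{equation*}
where the last approximation uses $h \ll m$. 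Treating the contributions of distinct $j$ as independent (a second approximation, valid up to lower-order terms when the $q_j$ are small),
\begin{equation*}
\mathbb{P}[U_k \leq \alpha] \approx \prod_{j \neq i} \bigl(1 - q_j(\alpha)\bigr) \approx \exp\!\Bigl(\!-\tfrac{h}{m}(1 - \Phi(\alpha)) \sum_{j \neq i} p_j\Bigr),
\end{equation*}
by the standard $1 - x \approx e^{-x}$ approximation for small $x$.

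\paragraph{Step 3: Assemble and integrate.} Combining Steps 1 and 2,
\begin{equation*}
\mathbb{P}\bigl[\overline{X}_i > X_i \,\big|\, X_i = \alpha\bigr] \;\approx\; \Bigl[\,1 - \exp\!\bigl(\!-\tfrac{h}{m}(1-\Phi(\alpha))\textstyle\sum_{j\neq i} p_j\bigr)\,\Bigr]^h.
\end{equation*}
Integrating against the density $\phi$ of $X_i$ conditioned on activity (using the law of total probability) yields the claimed formula. The main obstacle is Step 1 / Step 2: rigorously justifying the two independence approximations. Neither is exact because different $j$s share the same $h$ hash functions, and the decoding buckets $\pi_o(i)$ share the same pool of other-coordinate mappings; however, for $m$ large relative to $h$ and $\sum_j p_j$, the dependence contributes only lower-order corrections, which is exactly the content of the ``$\approx$'' in the statement.
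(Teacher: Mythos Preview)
Your proposal is correct and follows essentially the same route as the paper: condition on $X_i=\alpha$, estimate the single-bucket tail probability, raise to the $h$th power by approximate independence of the probed buckets, and integrate against $\phi$. The only cosmetic difference is the order of marginalization---the paper first conditions on the full vector $X$ (so the count $N=\sum_{j\neq i}\mathbbm{1}_{X_j\text{ active}}\mathbbm{1}_{X_j>\alpha}$ is deterministic), obtains $(1-1/m)^{hN}\approx e^{-hN/m}$, and then marginalizes over the $X_j$s via a first-order Taylor step, whereas you marginalize per coordinate up front to form $q_j(\alpha)$ and take the product---but the resulting approximation is identical.
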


Extending this result to the lower-bound sketch involves replacing $1 - \Phi(\alpha)$ with $\Phi(\alpha)$.
When the distribution defined by $\phi$ is symmetric, the probabilities of error
too are symmetric for the upper-bound and lower-bound sketches.

\begin{proof}
Consider the random value $X_i$. Suppose $k = \pi_o(i)$ for some $1 \leq o \leq h$. We thus need to probe $U_k$ as part of producing $\overline{X}_i$. The event that $U_k > X_i$ happens only when there exists another active coordinate $X_j$ such that $X_j > X_i$ and $\pi_o(i)=\pi_o(j)=k$. Consider the probability $\mathbb{P}[U_k > X_i]$.

To derive that probability, it is easier to think in terms of complementary events: $U_k = X_i$ if every other active coordinate whose value is larger than $X_i$ maps to a sketch coordinate except $k$. Clearly the probability that any arbitrary $X_j$ maps to a sketch coordinate other than $U_k$ is simply $1 - 1/m$. Therefore, given a vector $X$, the probability that no active coordinate $X_j$ larger than $X_i$ maps to $U_k$ is:
\begin{equation}
    \mathbb{P}\big[ \underbrace{\nexists j \neq i \;s.t.\; \pi_o(j) = \pi_o(i) = k \;\textit{for some}\; 1 \leq o \leq h}_{\text{Event A}}  \;|\; X \big] = 1 - (1 - \frac{1}{m})^{h\sum_{j \neq i} \mathbbm{1}_{X_j \text{ active}} \mathbbm{1}_{X_j > X_i}}.
\end{equation}
Because $m$ is large by assumption, we can approximate $e^{-1} \approx (1 - 1/m)^m$ and rewrite the expression above as follows:
\begin{equation}
    \mathbb{P}\big[\text{Event A}  \;|\; X \big] \approx 1 - e^{-\frac{h}{m}\sum_{j \neq i} \mathbbm{1}_{X_j \text{ active}} \mathbbm{1}_{X_j > X_i}}.
\end{equation}
Finally, we marginalized the expression above over $X_j$s for $j \neq i$ to remove the dependence on all but the $i$th coordinate of $X$.
To simplify the expression, however, we take the expectation over the first-order Taylor expansion of the right hand side
around $0$. This results in the following approximation:
\begin{equation}
    \mathbb{P}\big[\text{Event A}  \;|\; X_i \big] \approx 1 - e^{-\frac{h}{m} (1 - \Phi(\alpha)) \sum_{j \neq i} p_j}.
\end{equation}

For $\overline{X}_i$ to be larger than $X_i$, event $A$ must take place for all $h$ sketch coordinates corresponding to $i$.
That probability, by the independence of random mappings, is:
\begin{equation}
    \mathbb{P}\big[ \overline{X}_i > X_i \;|\; X_i \big] \approx \big[ 1 - e^{-\frac{h}{m} (1 - \Phi(\alpha)) \sum_{j \neq i} p_j } \big]^h.
\end{equation}
In deriving the expression above, we conditioned the event on the value of $X_i$.
Taking the marginal probability leads us to the following expression for the event that $\overline{X}_i > X_i$ for any $i$:
\begin{equation}
    \mathbb{P}\big[ \overline{X}_i > X_i \big] \approx \int {\big[ 1 - e^{-\frac{h}{m} (1 - \Phi(\alpha)) \sum_{j \neq i} p_j} \big]^h d\mathbb{P}(\alpha)} \approx \int {\big[ 1 - e^{-\frac{h}{m} (1 - \Phi(\alpha)) \sum_{j \neq i} p_j} \big]^h \phi(\alpha) d\alpha}.
\end{equation}
That concludes our proof.
\end{proof}

Equation~(\ref{equation:analysis:prob-of-error}) is unwieldy in an analytical sense, but can be computed numerically.
Nonetheless, it offers insights into the behavior of the upper-bound sketch. Our first observation is that the sketching mechanism presented here is more suitable for distributions where larger values occur with a smaller probability such as sub-Gaussian variables. In such cases, the larger the value is the smaller its chance of being overestimated by the upper-bound sketch. Regardless of the underlying distribution, empirically, the largest value in a vector is always recovered \emph{exactly}.

The second insight is that there is a sweet spot for $h$ given a particular value of $m$: using more random mappings helps lower the probability of error until the sketch starts to saturate, at which point the error rate increases. This particular property is similar to the behavior of a Bloom filter.

In addition to these general observations, it is often possible to derive a closed form expression for special distributions and obtain further insights. For example, when active $X_i$s are drawn from a zero-mean, unit-variance Gaussian distribution, the probability of error can be expressed as in the following corollary.

\begin{corollary}
    Suppose the probability that a coordinate is active, $p_i$, is equal to $p$ for all coordinates of vector $X \in \mathbb{R}^n$.
    When active $X_i$s are drawn from $Gaussian(\mu=0, \sigma=1)$, the probability of error is:
    \begin{equation}
    \mathbb{P}\big[ \overline{X}_i > X_i \big] \approx 1 + \sum_{k=1}^{h} {h \choose k} (-1)^k \frac{m}{kh(n-1)p} \left(1 - e^{-\frac{kh(n-1)p}{m}} \right).
    \label{equation:analysis:prob-error-gaussian}
\end{equation}
\end{corollary}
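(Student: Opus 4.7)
The plan is to specialize the general expression from the previous theorem to the stated setting and then evaluate the resulting integral in closed form. First, I would substitute $p_i = p$ for all $i$ into Equation~(\ref{equation:analysis:prob-of-error}), which turns $\sum_{j \neq i} p_j$ into $(n-1)p$, so that the integrand becomes
\begin{equation*}
\bigl[1 - e^{-c\,(1 - \Phi(\alpha))}\bigr]^h \phi(\alpha),
\qquad c \;\triangleq\; \frac{h(n-1)p}{m}.
\end{equation*}
The only place the Gaussian assumption will enter is the shape of $\phi$ and $\Phi$; as the argument below will reveal, the proof actually goes through for \emph{any} continuous value distribution, which is worth flagging in a remark.

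Next, I would expand the $h$-th power via the binomial theorem,
\begin{equation*}
\bigl[1 - e^{-c(1-\Phi(\alpha))}\bigr]^h
= \sum_{k=0}^{h} \binom{h}{k}(-1)^k e^{-kc(1-\Phi(\alpha))},
\end{equation*}
and interchange the finite sum with the integral. The $k=0$ term contributes $\int \phi(\alpha)\,d\alpha = 1$, yielding the leading $1$ outside the sum. For each $k \geq 1$, I would apply the substitution $u = 1 - \Phi(\alpha)$, so that $du = -\phi(\alpha)\,d\alpha$ and the limits flip from $\alpha \in (-\infty,\infty)$ to $u \in [0,1]$. This turns the remaining integral into the elementary one-dimensional integral
\begin{equation*}
\int_{-\infty}^{\infty} e^{-kc(1-\Phi(\alpha))}\phi(\alpha)\,d\alpha
\;=\; \int_{0}^{1} e^{-kc u}\,du
\;=\; \frac{1 - e^{-kc}}{kc}.
\end{equation*}

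Finally, I would reassemble the pieces: substituting this evaluation back and restoring $c = h(n-1)p/m$ gives the factor $\frac{m}{kh(n-1)p}(1 - e^{-kh(n-1)p/m})$ inside the sum, matching Equation~(\ref{equation:analysis:prob-error-gaussian}) exactly. There is no serious obstacle here once one spots the $u = 1 - \Phi(\alpha)$ substitution; the only mild subtlety is recognizing that the calculation is distribution-free (the Gaussian enters only through the decision to evaluate this formula numerically using the standard normal CDF), so no Gaussian-specific identities, moment generating functions, or tail bounds are required.
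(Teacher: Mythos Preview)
Your proof is correct and in fact cleaner than the paper's. Both arguments start from Equation~(\ref{equation:analysis:prob-of-error}) with $\sum_{j\neq i}p_j = (n-1)p$ and expand the $h$-th power via the binomial theorem, but they diverge in how each summand is integrated. The paper first treats the case $h=1$ by a Gaussian-specific computation: it splits the integral at $\alpha=0$, writes $1-\Phi(\alpha) = \tfrac{1}{2}\mp\lambda(\alpha)$ with $\lambda(\alpha)=\int_0^\alpha \phi(t)\,dt$, and antidifferentiates $e^{\beta\lambda(\alpha)}\phi(\alpha)$ explicitly; the general $h$ is then obtained by reducing each binomial term to that same calculation with $\beta$ replaced by $kh(n-1)p/m$. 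Your substitution $u=1-\Phi(\alpha)$ collapses each term directly to $\int_0^1 e^{-kcu}\,du$ in one step, avoids the split at zero, and---as you rightly flag---shows the identity is distribution-free: it holds for any continuous value distribution, not just the standard normal. The paper's route buys nothing extra; yours is both shorter and strictly more general, and the remark you propose about distribution-independence would be a genuine addition.
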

We provide a complete proof in Appendix~\ref{appendix:proof:prob-error-gaussian}. By expanding the above expression, it becomes clear, for example, that when vectors are Gaussian and $m$ is less than half the average number of active coordinates, using more than one random mapping leads to an increase in the probability of error.

\begin{table*}[t]
\caption{Probability of error as expressed in Equation~(\ref{equation:analysis:prob-of-error}) for sample distributions. In all cases, $\psi_d = n p_i = np = 120$ and the distribution over the support is as indicated in the first column. We select the support arbitrarily for the purposes of this demonstration. For the Zeta distributions, we quantize the interval $[-1, 1]$ into $2^{10}$ discrete values and define the distribution over the discrete support.}
\label{table:analysis:prob-of-error:discrete}
\begin{center}
\begin{sc}
\begin{tabular}{c|ccc|ccc|ccc}
& \multicolumn{3}{c}{$m=\psi_d/2$} & \multicolumn{3}{c}{$m=\psi_d$} & \multicolumn{3}{c}{$m=2 \psi_d$} \\
\toprule
$\phi$ & $h=1$ & $h=2$ & $h=3$ & $h=1$ & $h=2$ & $h=3$ & $h=1$ & $h=2$ & $h=3$ \\
\midrule
$\text{Uniform}_{[-1, 1]}$ & 0.57 & 0.63 & 0.69 & 0.37 & 0.38 & 0.43 & 0.21 & 0.17 & 0.17 \\
$\text{Zeta}(s=2.5)$ & 0.34 & 0.33 & 0.38 & 0.19 & 0.12 & 0.11 & 0.10 & 0.04 & 0.02 \\
$\text{Zeta}(s=4.0)$ & 0.13 & 0.03 & 0.04 & 0.07 & 0.02 & 0.008 & 0.03 & 0.005 & 0.001 \\
$\text{Gaussian}(\mu=0, \sigma=1)$ & 0.57 & 0.63 & 0.69 & 0.37 & 0.38 & 0.43 & 0.21 & 0.17 & 0.17 \\
$\text{Gaussian}(\mu=0, \sigma=0.1)$ & 0.57 & 0.63 & 0.69 & 0.37 & 0.38 & 0.43 & 0.21 & 0.17 & 0.17 \\
\bottomrule
\end{tabular}
\end{sc}
\end{center}
\end{table*}

Let us attempt to develop an empirical understanding of the error and verify the observations above. 
We consider special distributions for which we can either solve Equation~(\ref{equation:analysis:prob-of-error}) exactly or approximate it numerically. Table~\ref{table:analysis:prob-of-error:discrete} shows this probability for five different distributions. We observe that when $m < np$, for most distributions, using more than one random mapping leads to an increase in the likelihood of error. This is because a larger $h$ leads to the sketch saturating quickly. When $m$ is larger, however, using more mappings often translates to better accuracy up to a point.

What is strange at first glance, however, is that the probability that a value is overestimated is approximately the same for uniform and Gaussian distributions, which appears to contradict our statement that \sinnamon{}'s sketching is more suitable for sub-Gaussian rather than uniform variables. But to understand the differences between these distributions, we must contextualize the probability of error with the distribution of error and understand its concentration around zero.

\begin{table*}[t]
\caption{Expected value of error for sample distributions. The setup is the same as in Table~\ref{table:analysis:prob-of-error:discrete}.}
\label{table:analysis:expected_error:discrete}
\begin{center}
\begin{sc}
\begin{tabular}{c|ccc|ccc|ccc}
& \multicolumn{3}{c}{$m=\psi_d/2$} & \multicolumn{3}{c}{$m=\psi_d$} & \multicolumn{3}{c}{$m=2 \psi_d$} \\
\toprule
$\phi$ & $h=1$ & $h=2$ & $h=3$ & $h=1$ & $h=2$ & $h=3$ & $h=1$ & $h=2$ & $h=3$ \\
\midrule
$\text{Uniform}$ & 0.43 & 0.46 & 0.52 & 0.26 & 0.24 & 0.27 & 0.15 & 0.09 & 0.09 \\
$\text{Zeta}(s=2.5)$ & 0.001 & 5$e$-4 & 5$e$-4 & 7$e$-4 & 2$e$-4 & 2$e$-4 & 4$e$-4 & 6$e$-5 & 3$e$-5 \\
$\text{Gaussian}(\mu=0, \sigma=1)$ & 0.40 & 0.43 & 0.48 & 0.24 & 0.22 & 0.25 & 0.14 & 0.09 & 0.08 \\
$\text{Gaussian}(\mu=0, \sigma=0.1)$ & 0.07 & 0.07 & 0.08 & 0.05 & 0.04 & 0.04 & 0.02 & 0.01 & 0.01 \\
\bottomrule
\end{tabular}
\end{sc}
\end{center}
\end{table*}

To that end, we state the following result for the upper-bound sketch. We denote by $\overline{Z}_i$ the decoding error $\overline{X}_i - X_i$.

\begin{theorem}[CDF of Error in the Upper-Bound Sketch]
    For an active $X_i$ whose values are drawn from a distribution with PDF and CDF $\phi$ and $\Phi$,
    the probability that $\overline{Z}_i = \overline{X}_i - X_i$ is less than $\delta$ is:
    \begin{equation}
        \mathbb{P}[\overline{Z}_i \leq \delta; \; X_i \text{ is active}] \approx 1 - \int \big[ 1 - e^{-\frac{h}{m} (1 - \Phi(\alpha + \delta)) \sum_{j \neq i} p_j} \big]^h \phi(\alpha) d \alpha.
        \label{equation:analysis:dist-error}
    \end{equation}
    \label{theorem:cdf-upperbound-sketch}
\end{theorem}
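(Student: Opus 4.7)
The plan is to reduce this statement to the probability-of-error theorem already proved, by observing that the only thing which really matters in that argument was the threshold against which the active coordinates $X_j$ ($j\neq i$) are compared. There, the threshold was $X_i$; here it becomes $X_i+\delta$. Concretely, I would first write
\[
\mathbb{P}[\overline{Z}_i \leq \delta ;\, X_i \text{ active}] \;=\; 1 - \mathbb{P}\bigl[\overline{X}_i - X_i > \delta ;\, X_i \text{ active}\bigr],
\]
so the task becomes bounding the right-hand tail $\mathbb{P}[\overline{X}_i > X_i + \delta]$.

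Next I would unpack $\overline{X}_i = \min_{o} U_{\pi_o(i)}$: the event $\overline{X}_i > X_i+\delta$ holds iff, for \emph{every} $o\in\{1,\dots,h\}$, the cell $U_{\pi_o(i)}$ already contains a value exceeding $X_i+\delta$. For a fixed $o$ with $k=\pi_o(i)$, this happens iff some active $X_j$ with $j\neq i$ is routed into cell $k$ by at least one of its $h$ mappings and satisfies $X_j > X_i+\delta$. Conditioning on $X_i = \alpha$ and following the same bookkeeping as in the proof of the probability-of-error theorem, the probability that a particular $j\neq i$ provides such a witness is $p_j\bigl(1-\Phi(\alpha+\delta)\bigr)\bigl(1-(1-1/m)^h\bigr)$, which for large $m$ collapses to $p_j(1-\Phi(\alpha+\delta))\,h/m$. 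Taking the product over $j\neq i$ and exponentiating yields
\[
\mathbb{P}\bigl[U_k > X_i+\delta \,\big|\, X_i = \alpha\bigr]
\;\approx\; 1 - \exp\!\Bigl(-\tfrac{h}{m}\,(1-\Phi(\alpha+\delta))\sum_{j\neq i} p_j\Bigr).
\]

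Then I would invoke the same (approximate) independence across the $h$ mappings used in the earlier proof to raise this quantity to the $h$-th power, obtaining
\[
\mathbb{P}\bigl[\overline{X}_i > X_i+\delta \,\big|\, X_i=\alpha\bigr] \;\approx\; \bigl[1 - e^{-\tfrac{h}{m}(1-\Phi(\alpha+\delta))\sum_{j\neq i} p_j}\bigr]^h,
\]
and finally marginalize against the density $\phi(\alpha)$ of the active value $X_i$, and subtract the result from $1$ to recover exactly the stated expression. Note that setting $\delta=0$ reproduces Equation~(\ref{equation:analysis:prob-of-error}), which is a useful sanity check.

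The only real subtlety, as in the previous theorem, is the independence-style approximation used when treating the $h$ cells $U_{\pi_1(i)},\dots,U_{\pi_h(i)}$ as independent: distinct $\pi_o(i)$ can coincide, and the collisions with other coordinates $j$ are not strictly independent across $o$. I would argue that the collision probability of each pair is $O(1/m)$ and the corrections from shared cells are $O(h^2/m^2)$, negligible under the regime $m \gg h$ already assumed for Theorem (prob of error); everything else is routine algebra plus the same first-order Taylor argument used earlier. No new machinery is needed.
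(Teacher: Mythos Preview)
Your proposal is correct and follows essentially the same route as the paper: condition on $X_i=\alpha$, compute the probability that all $h$ sketch cells associated with $i$ receive a colliding value larger than $\alpha+\delta$ (using the same $(1-1/m)^{\cdot}\approx e^{-\cdot/m}$ approximation and independence across mappings as in the earlier probability-of-error theorem), then marginalize over $\alpha$ and take the complement. The paper's proof is simply a compressed version of exactly this argument; your explicit reduction to the $\delta=0$ case and discussion of the independence approximation are welcome but not strictly needed.
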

\begin{proof}
We begin by quantifying the conditional probability $\mathbb{P}[\overline{Z}_i \leq \delta \;|\; X_i]$. Conceptually, the event $\overline{Z}_i \leq \delta$ for a given $X_i$ happens when all values that collide with $X_i$ are less than or equal to $X_i + \delta$. This event can be characterized as the complement of the event that all $h$ sketch coordinates that contain $X_i$ collide with values greater than $X_i + \delta$. Using this complementary event, we can write the conditional probability as follows:
\begin{equation}
    \mathbb{P}[\overline{Z}_i \leq \delta \;|\; X_i] = 1 - \big[ 1 - (1 - \frac{1}{m})^{h (1 - \Phi(\alpha + \delta)) \sum_{j \neq i} p_j} \big]^h \approx 1 - \big[ 1 - e^{-\frac{h}{m} (1 - \Phi(\alpha + \delta)) \sum_{j \neq i} p_j} \big]^h.
\end{equation}
By computing the marginal distribution over the support, we arrive at the following expression for the CDF of $\overline{Z}_i$:
\begin{equation}
    \mathbb{P}[\overline{Z}_i \leq \delta; \; X_i \text{ is active}] \approx 1 - \int \big[ 1 - e^{-\frac{h}{m} (1 - \Phi(\alpha + \delta)) \sum_{j \neq i} p_j} \big]^h d \mathbb{P}(\alpha),
\end{equation}
which completes the proof.
\end{proof}

Given the CDF of $\overline{Z}_i$ and the fact that $\overline{Z}_i \geq 0$, it follows that its expected value conditioned on $X_i$ being active is:

\begin{lemma}[Expected Value of Error in the Upper-Bound Sketch]
\begin{equation}
    \mathbb{E}[\overline{Z}_i; \; X_i \text{ is active}] = \int_{0}^{\infty} \mathbb{P}[\overline{Z}_i \geq \delta] d\delta \approx \int_{0}^{\infty} \int \big[ 1 - e^{-\frac{h}{m} (1 - \Phi(\alpha + \delta)) \sum_{j \neq i} p_j} \big]^h \phi(\alpha)\; d\alpha \; d\delta.
\end{equation}
\end{lemma}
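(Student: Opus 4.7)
The plan is to derive the expected error by combining two facts that are already in hand. The first is that the decoding error $\overline{Z}_i = \overline{X}_i - X_i$ is almost surely non-negative: the upper-bound decoder on Line~\ref{algorithm:retrieval:least_upperbound} takes a minimum over entries of $\mathcal{U}$ that are each, by the construction on Line~\ref{algorithm:indexing:upper-bound-sketch}, at least as large as $X_i$, so $\overline{X}_i \geq X_i$ holds deterministically. The second ingredient is Theorem~\ref{theorem:cdf-upperbound-sketch}, which already supplies an approximate tail law for $\overline{Z}_i$ jointly with the event that $X_i$ is active; no new probabilistic modeling is needed.

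First I would invoke the layer-cake (tail-integral) identity for a non-negative random variable, namely $\mathbb{E}[Z] = \int_0^\infty \mathbb{P}[Z \geq \delta]\, d\delta$ whenever $Z \geq 0$. Applied to $\overline{Z}_i \, \mathbbm{1}\{X_i \text{ is active}\}$, which remains non-negative by the observation above, this immediately produces the first equality in the statement,
\[
\mathbb{E}[\overline{Z}_i;\, X_i \text{ is active}] = \int_0^\infty \mathbb{P}[\overline{Z}_i \geq \delta;\, X_i \text{ is active}]\, d\delta.
\]

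Next I would complement the CDF in Theorem~\ref{theorem:cdf-upperbound-sketch} to extract the tail probability appearing in the integrand: subtracting that theorem's right-hand side from $1$ leaves precisely $\int \big[ 1 - e^{-\frac{h}{m}(1-\Phi(\alpha+\delta))\sum_{j\neq i} p_j}\big]^h \phi(\alpha)\, d\alpha$. Substituting this back into the layer-cake integral and interchanging the order of the two integrations, which is justified by Tonelli's theorem since the integrand is non-negative, yields the claimed double integral and closes the argument.

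The main obstacle, to the extent that there is one, is not analytical but notational: making sure the semicolon-style joint-with-activity notation is treated consistently across the two results, and that the exchange of integration order is formally warranted. Because Theorem~\ref{theorem:cdf-upperbound-sketch} already absorbs the marginalization over active coordinates into its integration against $\phi$, no additional normalization factors enter, and the remainder of the derivation reduces to the mechanical substitution described above.
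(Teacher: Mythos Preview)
Your proposal is correct and mirrors exactly what the paper does: it states the lemma as an immediate consequence of the non-negativity of $\overline{Z}_i$ together with the tail formula $\mathbb{E}[Z]=\int_0^\infty \mathbb{P}[Z\geq\delta]\,d\delta$, then plugs in the complement of Theorem~\ref{theorem:cdf-upperbound-sketch}. Your write-up is in fact more explicit than the paper's one-line ``it follows,'' and the invocation of Tonelli is harmless but unnecessary since the final double integral already appears in the order $\int_0^\infty\!\int\cdots\,d\alpha\,d\delta$ with no interchange required.
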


We present the CDF of $\overline{Z}_i$ in Figure~\ref{figure:analysis:cdf-error} for uniform and Gaussian-distributed vectors, and report the expected value for other distributions in Table~\ref{table:analysis:expected_error:discrete}. Examining Tables~\ref{table:analysis:prob-of-error:discrete} and~\ref{table:analysis:expected_error:discrete} together shows that, while some distributions can result in a similar probability of error given the same sketch configuration, they differ greatly in terms of the expected magnitude of error.

We also find it interesting to study Theorem~\ref{theorem:cdf-upperbound-sketch} for special distributions. As we show in Appendix~\ref{appendix:proof:dist-error-gaussian}, for example, when vectors are drawn from a zero-mean Gaussian distribution with standard deviation $\sigma$, the CDF of the variable $\overline{Z}_i$ can be derived as follows.

\begin{corollary}
    Suppose the probability that a coordinate is active, $p_i$, is equal to $p$ for all coordinates of vector $X \in \mathbb{R}^n$.
    When active $X_i$s are drawn from $Gaussian(\mu=0, \sigma)$, the CDF of error is:
    \begin{equation}
        \mathbb{P}[\overline{Z}_i \leq \delta; \; X_i \text{ is active}] \approx 1 - \big[ 1 - e^{-\frac{h (n - 1) p}{m} (1 - \Phi^\prime(\delta)) } \big]^h,
        \label{equation:analysis:dist-error-gaussian}
    \end{equation}
    where $\Phi^\prime(\cdot)$ is the CDF of a zero-mean Gaussian with standard deviation $\sigma\sqrt{2}$.
    \label{corollary:analysis:dist-error-gaussian}
\end{corollary}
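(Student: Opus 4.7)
The plan is to reduce the general CDF formula of Theorem~\ref{theorem:cdf-upperbound-sketch} to the Gaussian-specific expression by pulling the $\alpha$-integration \emph{inside} the exponent via a Taylor-linearize-then-re-exponentiate trick, exploiting the fact that the convolution of two zero-mean Gaussians is again Gaussian with doubled variance. Concretely, I will start from
\begin{equation*}
    \mathbb{P}[\overline{Z}_i \leq \delta; \; X_i \text{ active}] \approx 1 - \int \bigl[ 1 - e^{-\frac{h(n-1)p}{m}(1-\Phi(\alpha+\delta))} \bigr]^h \phi(\alpha)\, d\alpha,
\end{equation*}
after substituting $p_j = p$ for all $j \neq i$ (so that $\sum_{j\neq i} p_j = (n-1)p$) and taking $\phi, \Phi$ to be the PDF and CDF of $N(0,\sigma)$.

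Next I would expand the $h$-th power by the binomial theorem, exchanging the sum with the integral, to obtain
\begin{equation*}
    \int \bigl[1 - e^{-c(1-\Phi(\alpha+\delta))}\bigr]^h \phi(\alpha)\, d\alpha = \sum_{k=0}^{h} \binom{h}{k}(-1)^k \int e^{-ck(1-\Phi(\alpha+\delta))} \phi(\alpha)\, d\alpha,
\end{equation*}
with $c = h(n-1)p/m$. The key analytic step is to evaluate each of these $\alpha$-integrals. I would apply the same first-order approximation $e^{-x} \approx 1-x$ that was used to derive Theorem~\ref{theorem:cdf-upperbound-sketch} (justified by the regime $c \ll 1$), reducing each integral to $1 - ck \int (1-\Phi(\alpha+\delta))\phi(\alpha)\, d\alpha$.

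The crucial Gaussian identity is the next step: if $X,Y$ are i.i.d.\ draws from $N(0,\sigma)$, then $\int (1-\Phi(\alpha+\delta))\phi(\alpha)\, d\alpha = \mathbb{P}[Y > X+\delta] = \mathbb{P}[Y-X > \delta] = 1 - \Phi'(\delta)$, where $\Phi'$ is the CDF of $N(0,\sigma\sqrt{2})$ since $Y-X \sim N(0,2\sigma^2)$. Plugging this in gives $\int e^{-ck(1-\Phi(\alpha+\delta))}\phi(\alpha)\,d\alpha \approx 1 - ck(1-\Phi'(\delta)) \approx e^{-ck(1-\Phi'(\delta))}$, where the second step reverses the linearization. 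Re-summing the binomial series then yields $[1 - e^{-c(1-\Phi'(\delta))}]^h$, and subtracting this from $1$ recovers Equation~(\ref{equation:analysis:dist-error-gaussian}).

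The main obstacle is bookkeeping around the two uses of $1-x \approx e^{-x}$: the approximation must be applied consistently with the regime assumed by Theorem~\ref{theorem:cdf-upperbound-sketch} (i.e., $h(n-1)p/m$ small relative to $1/h$, so that all $k \leq h$ terms remain in the valid linearization range), and it must be reversed at the right moment so that the binomial sum collapses back into a clean $h$-th power. Once that is handled, the Gaussian convolution identity $Y - X \sim N(0,\sigma\sqrt{2})$ does all the real work, and the rest of the derivation is purely algebraic rearrangement of the binomial expansion.
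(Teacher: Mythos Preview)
Your proposal is correct and hinges on the same key identity as the paper: if $X,Y$ are i.i.d.\ $N(0,\sigma)$ then $\int (1-\Phi(\alpha+\delta))\phi(\alpha)\,d\alpha = \mathbb{P}[Y-X>\delta] = 1-\Phi'(\delta)$ with $\Phi'$ the CDF of $N(0,\sigma\sqrt{2})$. The route, however, differs in how this identity is deployed.

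The paper does not binomial-expand at all. It simply argues that, since the difference of two Gaussian coordinates is itself $N(0,\sigma\sqrt{2})$, one may \emph{replace} the $\alpha$-dependent quantity $1-\Phi(\alpha+\delta)$ by the constant $1-\Phi'(\delta)$ directly inside the exponent. After that substitution the integrand no longer depends on $\alpha$, so the outer integral against $d\mathbb{P}(\alpha)$ is trivially $1$ and the result drops out in one line. Your binomial-expand/linearize/integrate/re-exponentiate procedure is exactly what justifies that replacement to first order: linearizing $e^{-ck(\cdot)}$, integrating, and re-exponentiating is equivalent to pushing the expectation through the exponential, which is the paper's shortcut. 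So your derivation is a more explicit, mechanically verifiable version of the same approximation; the paper's is terser but relies on the reader accepting the ``replace-by-marginal'' step as a heuristic. Both buy the same final expression under the same small-$c$ regime, and neither requires anything beyond the convolution fact $Y-X\sim N(0,2\sigma^2)$.
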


This expression enables us to find a particular sketch configuration given a desired bound on the probability of error. It is straightforward, for instance, to show the following result.

\begin{lemma}
    Suppose the probability that a coordinate is active, $p_i$, is equal to $p$ for all coordinates of vector $X \in \mathbb{R}^n$.
    Suppose active $X_i \sim Gaussian(0, \sigma)$. Given a choice of $0 < \delta, \epsilon < 1$ and the number of random mappings $h$, we have that $\mathbb{P}[\overline{Z}_i > \delta] < \epsilon$ when:
    \begin{equation}
        m > - \frac{h (n - 1)p (1 - \Phi^\prime(\delta))}{\log (1 - \epsilon^{1/h})}.
    \end{equation}
    \label{lemma:analysis:m-and-h}
\end{lemma}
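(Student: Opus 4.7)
The plan is to start directly from Corollary~\ref{corollary:analysis:dist-error-gaussian} and invert the tail probability bound for $\overline{Z}_i$ to solve for $m$. Concretely, since
\[
\mathbb{P}[\overline{Z}_i > \delta] \;\approx\; \bigl[\, 1 - e^{-\frac{h(n-1)p}{m}(1 - \Phi^\prime(\delta))}\,\bigr]^h,
\]
I would impose the requirement that this quantity is strictly less than $\epsilon$, and then peel off the algebraic layers one at a time: first take the $h$-th root (valid since both sides are non-negative and $x \mapsto x^{1/h}$ is monotone), then isolate the exponential, and finally take the natural logarithm.

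After taking the $h$-th root the condition becomes $1 - e^{-(h(n-1)p/m)(1 - \Phi^\prime(\delta))} < \epsilon^{1/h}$, equivalently $e^{-(h(n-1)p/m)(1 - \Phi^\prime(\delta))} > 1 - \epsilon^{1/h}$. Here I would pause to note the sign bookkeeping that makes the final bound come out as stated: since $0 < \epsilon < 1$ we have $0 < \epsilon^{1/h} < 1$, so $1 - \epsilon^{1/h} \in (0,1)$ and $\log(1 - \epsilon^{1/h}) < 0$, which is exactly why the minus sign appears in the denominator of the target inequality and keeps $m$ positive.

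Applying $\log$ to both sides of $e^{-(h(n-1)p/m)(1 - \Phi^\prime(\delta))} > 1 - \epsilon^{1/h}$ and multiplying by $-1$ (flipping the inequality) gives
\[
\frac{h(n-1)p(1 - \Phi^\prime(\delta))}{m} \;<\; -\log\bigl(1 - \epsilon^{1/h}\bigr).
\]
Because the right-hand side is strictly positive by the sign remark above, and because $1 - \Phi^\prime(\delta) > 0$ for any finite $\delta$, I can safely divide through and rearrange to obtain the claimed bound
\[
m \;>\; -\,\frac{h(n-1)p(1 - \Phi^\prime(\delta))}{\log(1 - \epsilon^{1/h})}.
\]

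I do not expect any real obstacle here; the derivation is a direct inversion of Corollary~\ref{corollary:analysis:dist-error-gaussian}. The only subtlety worth flagging in the write-up is monotonicity and sign tracking through the $\log$, ensuring that the inequality points the correct way at each step so that the final condition is an honest sufficient condition (not merely necessary) and that $m$ is bounded below by a positive quantity. If desired, one could also remark that the bound is tight in the approximation regime of Corollary~\ref{corollary:analysis:dist-error-gaussian}, i.e.\ up to the Taylor and large-$m$ approximations already inherited from Theorem~\ref{theorem:cdf-upperbound-sketch}.
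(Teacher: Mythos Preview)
Your proposal is correct and matches the paper's approach: the paper does not give an explicit proof, noting only that the result is ``straightforward'' to show from Corollary~\ref{corollary:analysis:dist-error-gaussian}, and your algebraic inversion of the tail probability is precisely that straightforward derivation.
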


To get a sense of what this expression entails, we have plotted $m/(n - 1)p$ as a function of $h$ given particular configurations of $\sigma$, $\delta$, and $\epsilon$ in Figure~\ref{figure:analysis:m-and-h}. As one may expect, when we utilize more random mappings to form the sketch, we require a smaller sketch size to maintain the same guarantee on the concentration of error. That is true only up to a certain point: using more than three mappings, for example, translates to an increase in $m$ to keep the magnitude of error within a given bound.

\begin{figure}[t]
\begin{center}
\centerline{
\subfloat[$\sigma=0.1$]{
\includegraphics[width=0.37\linewidth]{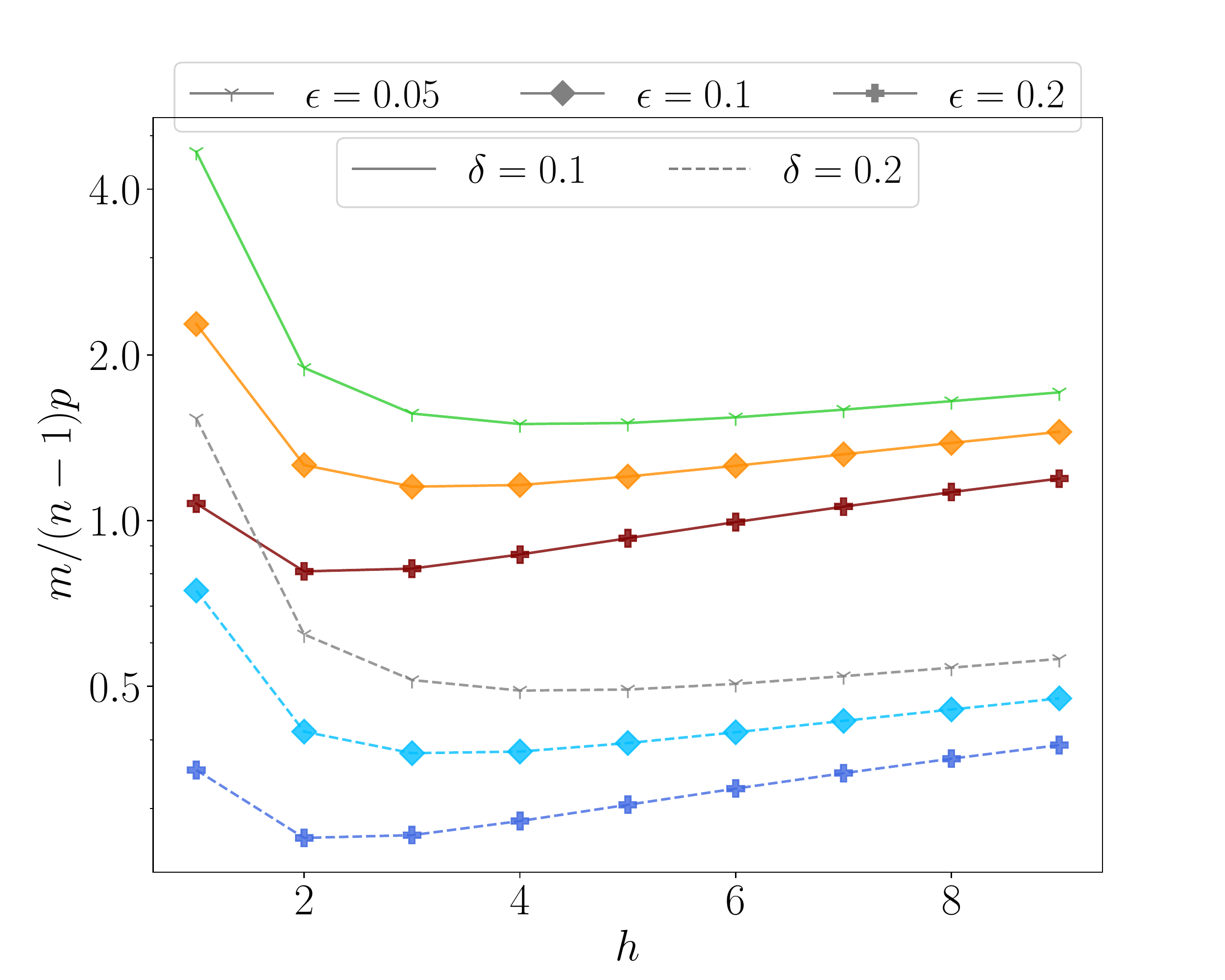}}
\subfloat[$\sigma=1$]{
\includegraphics[width=0.37\linewidth]{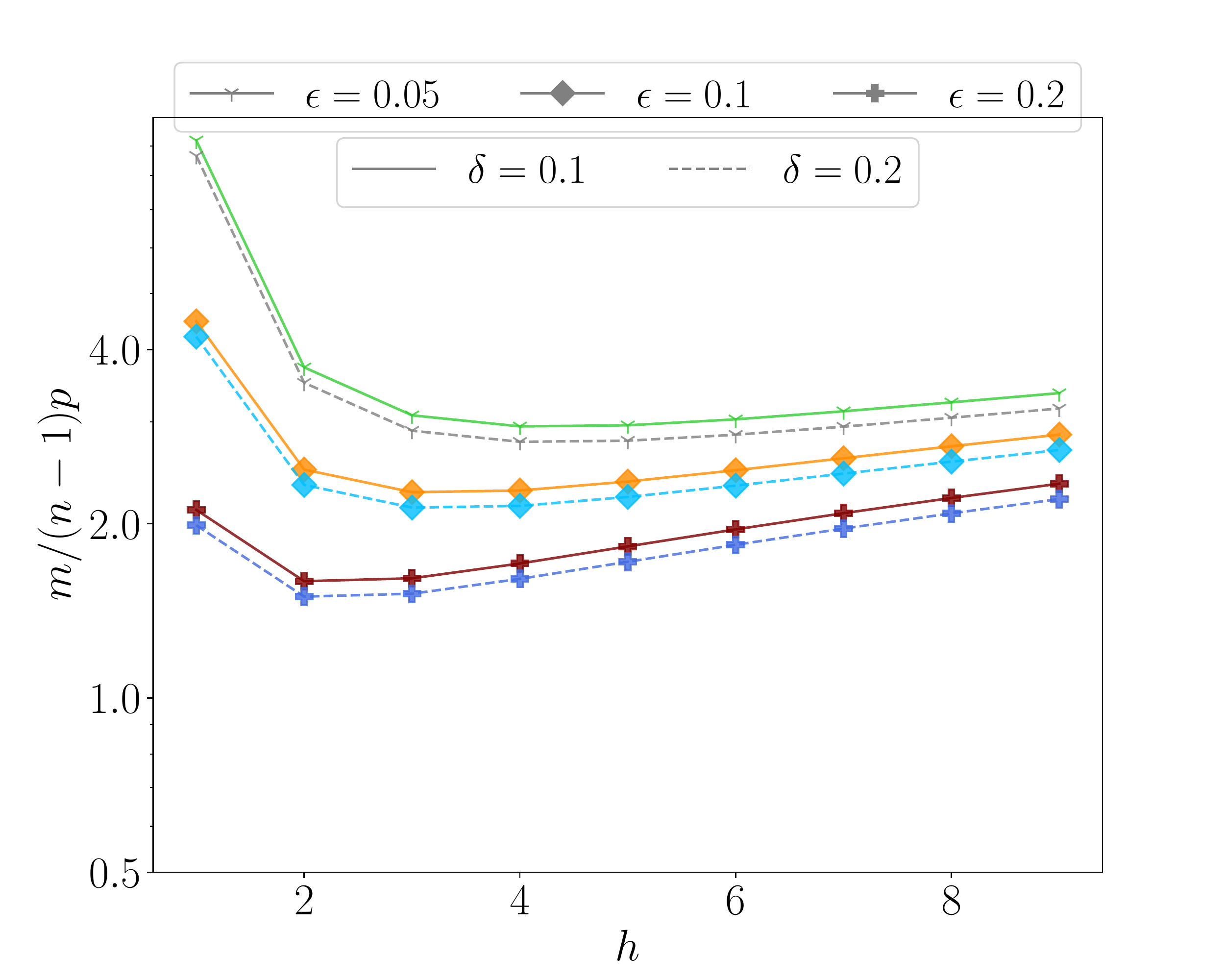}}
}
\caption{Visualization of Lemma~\ref{lemma:analysis:m-and-h} in terms of $m/(n-1)p$ as a function of $h$ for different values of $\sigma$, $\delta$, and $\epsilon$.}
\label{figure:analysis:m-and-h}
\end{center}
\end{figure}

\begin{figure}[t]
\begin{center}
\centerline{
\subfloat[Uniform]{
\includegraphics[width=0.37\linewidth]{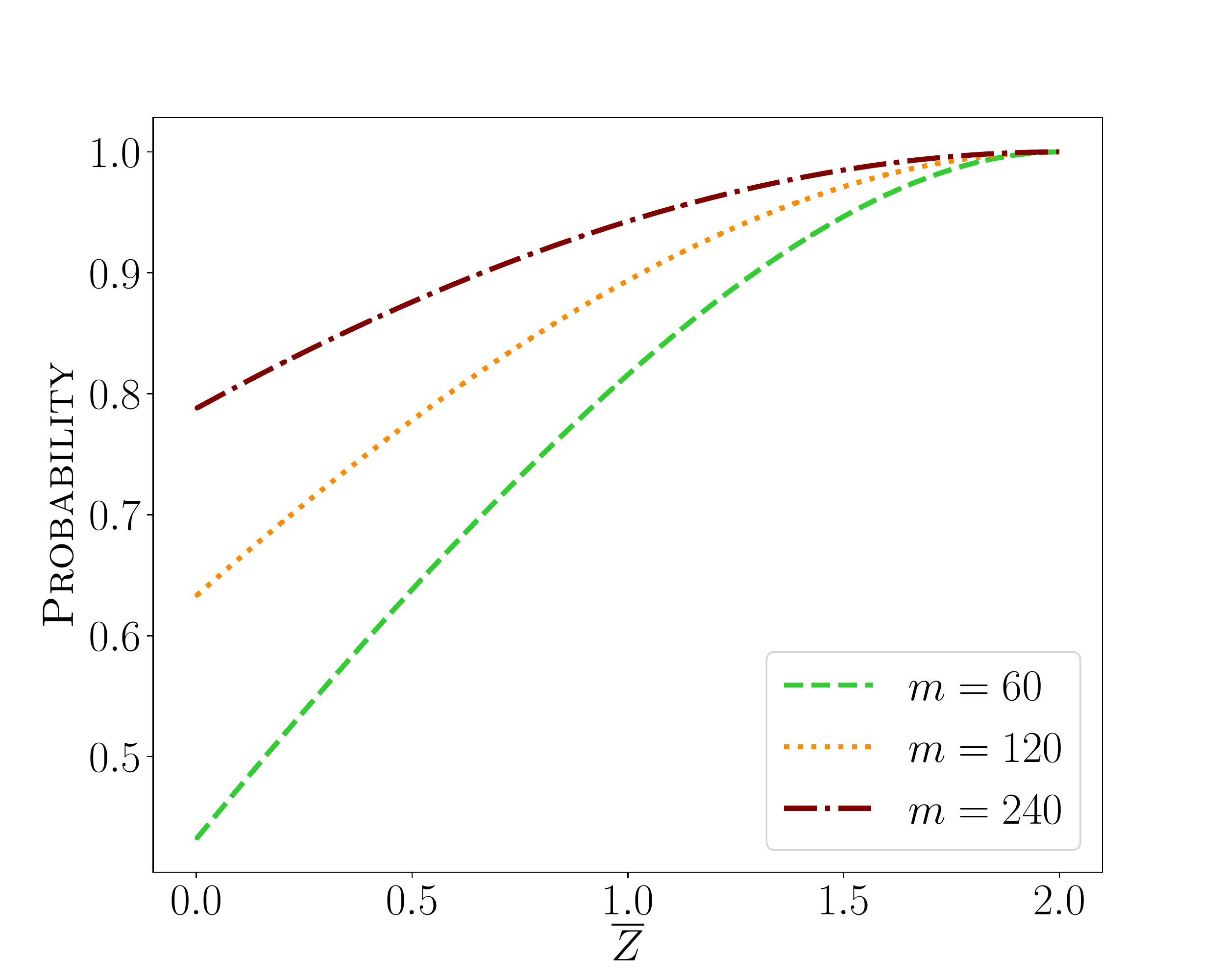}}
\subfloat[Gaussian$(\mu=0, \sigma=0.1)$]{
\includegraphics[width=0.37\linewidth]{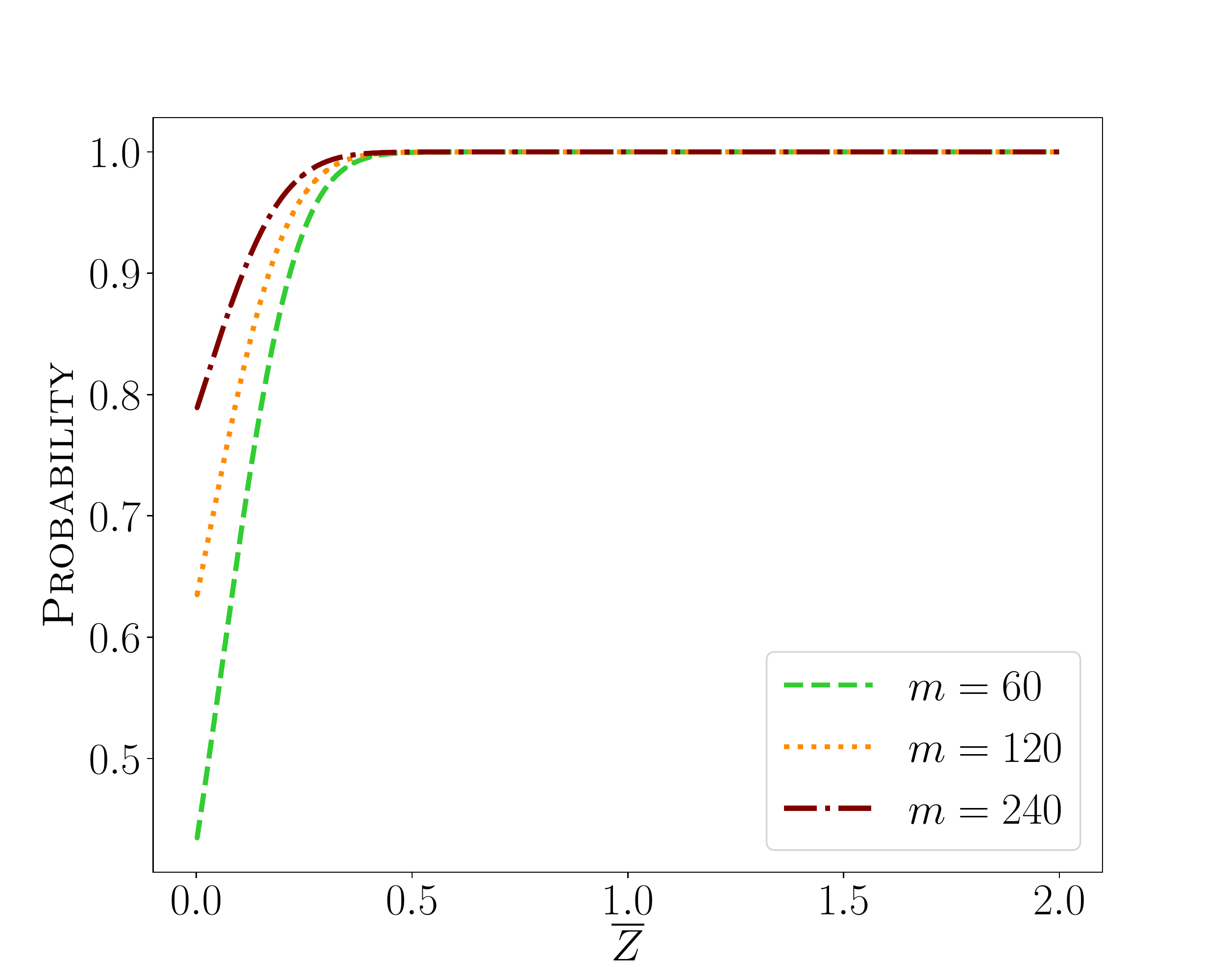}}
}
\caption{Cumulative distribution of the overestimation error, $\overline{Z}$, of the upper-bound sketch for vectors that draw values from the uniform distribution over $[-1, 1]$ and the Gaussian distribution. These curves represent different settings of $m$ with $h=1$ and $np_i = np =120$.}
\label{figure:analysis:cdf-error}
\end{center}
\end{figure}

Before we move on, let us consider the general form of $\overline{Z}_i$ without the assumption that $X_i$ is active.
Denote by $\mu_i$ the expected value of $\overline{Z}_i$ conditioned on $X_i$ being active: $\mu_i = \mathbb{E}[\overline{Z}_i; X_i \text{ is active}]$.
Similarly denote by $\sigma_i^2$ its variance when $X_i$ is active.
Given that $X_i$ is active with probability $p_i$ and inactive with probability $1 - p_i$,
it is easy to show that $\mathbb{E}[\overline{Z}_i] = p_i \mu_i$
and that its variance $\text{Var}(\overline{Z}_i) = p_i \sigma_i^2 + p_i (1 - p_i) \mu_i^2$.

\subsection{Inner Product Error}
In the previous section, we quantified the probability that a value decoded from the upper-bound sketch overestimates the original value of a randomly chosen coordinate. We also characterized the distribution of the overestimation error for a single coordinate and derived expressions for special distributions. In this section, we extend our analysis from a single coordinate to the inner product of a decoded vector with a fixed vector $q$.

We state the following result:

\begin{theorem}
    Suppose that $q \in \mathbb{R}^n$ is a sparse vector with $nz(q) = \{ i \;|\; q[i] \neq 0 \}$ denoting its set of active coordinates. Suppose in a random sparse vector $X \in \mathbb{R}^n$, a coordinate $X_i$ is active with probability $p_i$ and, when active, draws its value from some well-behaved distribution (i.e., expectation, variance, and third moment exist). Let $\tilde{X}$ be the reconstruction of $X$ by \sinnamon{}: when $X_i$ is active $\tilde{X}_i=\overline{X}_i$ if $q[i] > 0$ and $\tilde{X}_i = \underline{X}_i$ if $q[i] < 0$, otherwise $\tilde{X}_i$ is $0$. If $Z_i = \tilde{X}_i - X_i$, then the random variable $Z$ defined as follows:
    \begin{equation}
        Z \triangleq \frac{1}{\sqrt{\sum_{i \in nz(q)} \text{Var}[Z_i] q[i]^2}} \big( \langle q,\, \tilde{X} - X \rangle - \sum_{i \in nz(q)} p_i \mathbb{E}[Z_i] q[i] \big),
    \end{equation}
    approximately tends to a standard normal distribution as $\psi_q = |nz(q)|$ grows.
    \label{theorem:inner-product-error}
\end{theorem}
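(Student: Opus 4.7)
The plan is to apply a Central Limit Theorem to the sum $\langle q,\, \tilde{X} - X \rangle = \sum_{i \in nz(q)} q[i] Z_i$, after centering. First I would set $Y_i = q[i] Z_i - q[i] p_i \mu_i$ for each $i \in nz(q)$, where $\mu_i = \mathbb{E}[Z_i \mid X_i \text{ is active}]$; since $Z_i = 0$ whenever $X_i$ is inactive, the full mean of $q[i] Z_i$ is exactly $q[i] p_i \mu_i$, so $\mathbb{E}[Y_i] = 0$ and the centering matches the term subtracted in the theorem statement (reading $\mathbb{E}[Z_i]$ there as $\mu_i$). Since $Y_i$ differs from $q[i] Z_i$ only by a constant, $\text{Var}(Y_i) = q[i]^2 \text{Var}[Z_i]$, and the random variable $Z$ in the statement is exactly $\sum_i Y_i$ divided by $\sqrt{\sum_i \text{Var}(Y_i)}$, provided the $Y_i$'s are approximately uncorrelated. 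The task thus reduces to establishing asymptotic normality of this standardized sum as $\psi_q = |nz(q)|$ grows.

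Second, I would tackle the dependence structure, since the $Y_i$'s are not strictly independent. Conditional on the random mappings $\pi_1, \ldots, \pi_h$, the decoded value $\tilde{X}_i$ depends only on the set of other active coordinates whose image under some $\pi_o$ coincides with the image of $i$; hence $Y_i$ and $Y_j$ are independent whenever their collision sets are disjoint. A birthday-style union bound, of the same flavor used in the proofs of Equations~(\ref{equation:analysis:prob-of-error}) and~(\ref{equation:analysis:dist-error}), gives that the expected number of colliding pairs in $nz(q)$ is at most $O(\psi_q^2 h^2 / m)$, so when $m$ is large compared to $\psi_q h^2$ the family $\{Y_i\}$ is sparsely dependent. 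This weak-dependence regime is exactly what makes the convergence hold \emph{approximately} and is the setting in which a Lindeberg-type CLT for sums with a sparse dependency graph is expected to apply.

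Third, I would verify the Lyapunov condition. The assumption that $\phi$ has a finite third moment transfers to $Z_i$ because $|Z_i| \leq |\overline{X}_i| + |X_i|$ (or $|\underline{X}_i| + |X_i|$ when $q[i] < 0$), and each of $\overline{X}_i$, $\underline{X}_i$ is a maximum or minimum over a finite collection of independent draws whose third moments are controlled by that of $\phi$. It then suffices to check
\begin{equation}
\frac{\sum_{i \in nz(q)} \mathbb{E}\bigl[|Y_i|^3\bigr]}{\bigl(\sum_{i \in nz(q)} \text{Var}(Y_i)\bigr)^{3/2}} \;\longrightarrow\; 0 \quad \text{as } \psi_q \to \infty,
\end{equation}
which is immediate provided no single $|q[i]|$ dominates the rest (the ratio is then of order $\psi_q^{-1/2}$). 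The Lyapunov CLT, in its version for sums with sparse dependency, then yields $Z \to \mathcal{N}(0,1)$ in distribution.

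The hard part will be making the weak-dependence step fully rigorous: the $Y_i$'s are neither classically independent nor covered by standard $\alpha$-mixing conditions, so a careful coupling or conditioning argument is needed to bound the correlations induced both by direct collisions among indices in $nz(q)$ and by coordinates outside $nz(q)$ whose activity simultaneously influences several $\tilde{X}_i$'s. A workable route is to condition jointly on the mappings $\{\pi_o\}$ and on the activity indicators of all $X_j$: on the high-probability event that pairwise collisions are few, the summands decompose into nearly disjoint blocks of active coordinates and a direct application of Lyapunov's theorem closes the argument, while the residual dependence contributes a vanishing correction as $m$ grows --- which is precisely the sense in which the convergence is ``approximate.''
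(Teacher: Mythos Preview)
Your proposal is correct and follows essentially the same route as the paper: expand $\langle q,\tilde X - X\rangle$ as $\sum_{i\in nz(q)} q[i]Z_i$, center and standardize, and invoke a third-moment CLT for independent non-identically distributed summands while acknowledging that the $Z_i$'s are only \emph{weakly} dependent. The paper's own proof is in fact terser than yours: it appeals to Berry--Esseen under the working assumption of independence (``If we operated on the assumption that $Z_i$s are independent---in reality, they are weakly dependent\ldots''), without the collision-count analysis or the explicit Lyapunov-ratio check that you outline; your treatment of the dependence via a sparse dependency graph and the no-dominant-coordinate caveat on $q$ are refinements beyond what the paper provides.
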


\begin{proof}
    
Let us expand the inner product between $q$ and $\tilde{X} - X$ as follows:
\begin{equation}
    \langle q, \tilde{X} - X \rangle = \sum_{i \in nz(q)} q[i] (\tilde{X}_i - X_i) = \sum_{i \in nz(q)} q[i] Z_i.
\end{equation}
We have already studied the variables $\overline{Z}_i = \overline{X}_i - X_i$ in the previous section and
noted that the analysis for $\underline{Z}_i$ is symmetrical. Therefore, we already know the properties of
$Z_i$, as it is a well-defined random variable that is $\overline{Z}_i$ when $q[i] > 0$ and $\underline{Z}_i$ otherwise.

As we only care about the order induced by scores, we are permitted to translate and scale the sum above by a constant as follows to arrive at $Z$:
\begin{align}
    \langle q, \tilde{X} - X \rangle &\overset{rank}{=} \sum_{i \in nz(q)} \underbrace{q[i] Z_i}_{Z^{\prime}_i} - \underbrace{q[i] \mathbb{E}[Z_i]}_{\mathbb{E}[Z_i^{\prime}]} = \sum_{i \in nz(q)} Z_i^{\prime} - \mathbb{E}[Z_i^{\prime}] \\
    &\overset{rank}{=} \frac{1}{\sqrt{\sum_{i \in nz(q)} \text{Var}[Z_i] q[i]^2}} \sum_{i \in nz(q)} Z_i - \mathbb{E}[Z_i] \\
    &= Z.
\end{align}
$Z$ is thus the sum of centered random variables $Z^{\prime}_i$. Because we assumed that the distribution of $X_i$ is well-behaved,
we can conclude that $\text{Var}[Z_i] > 0$ and that $\mathbb{E}[|Z_i|^3] < \infty$.
If we operated on the assumption that $Z_i$s are independent---in reality, they are weakly
dependent---albeit not identically distributed, we can appeal to the Berry-Esseen theorem to complete our proof.
\end{proof}

We note that, the conditions required by the result above are trivially satisfied when the random vectors $X_i$s are drawn from a distribution with bounded support.

\begin{figure}[t]
\begin{center}
\centerline{
\subfloat[Uniform]{
\includegraphics[width=0.37\linewidth]{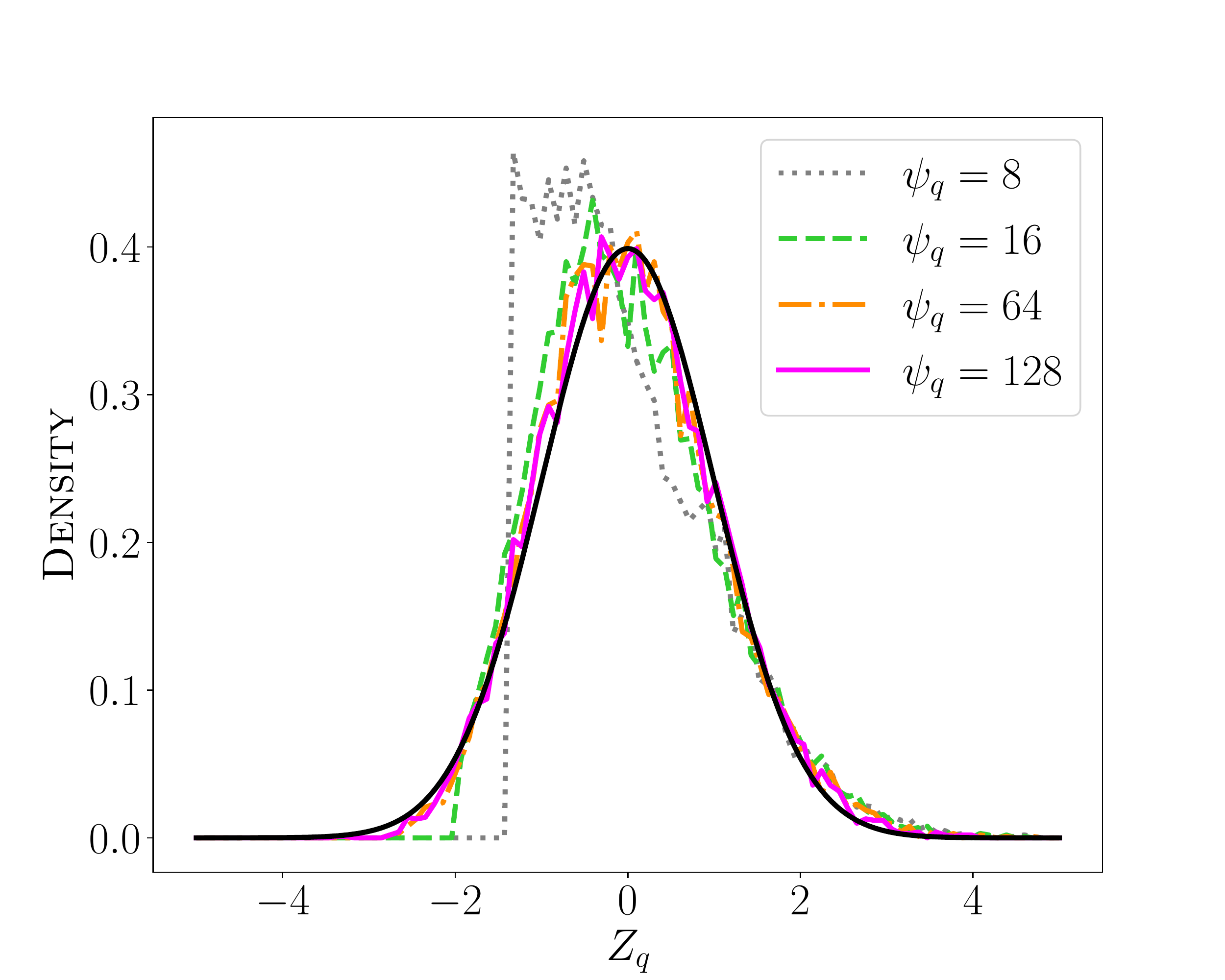}}
\subfloat[Gaussian$(\mu=0, \sigma=0.1)$]{
\includegraphics[width=0.37\linewidth]{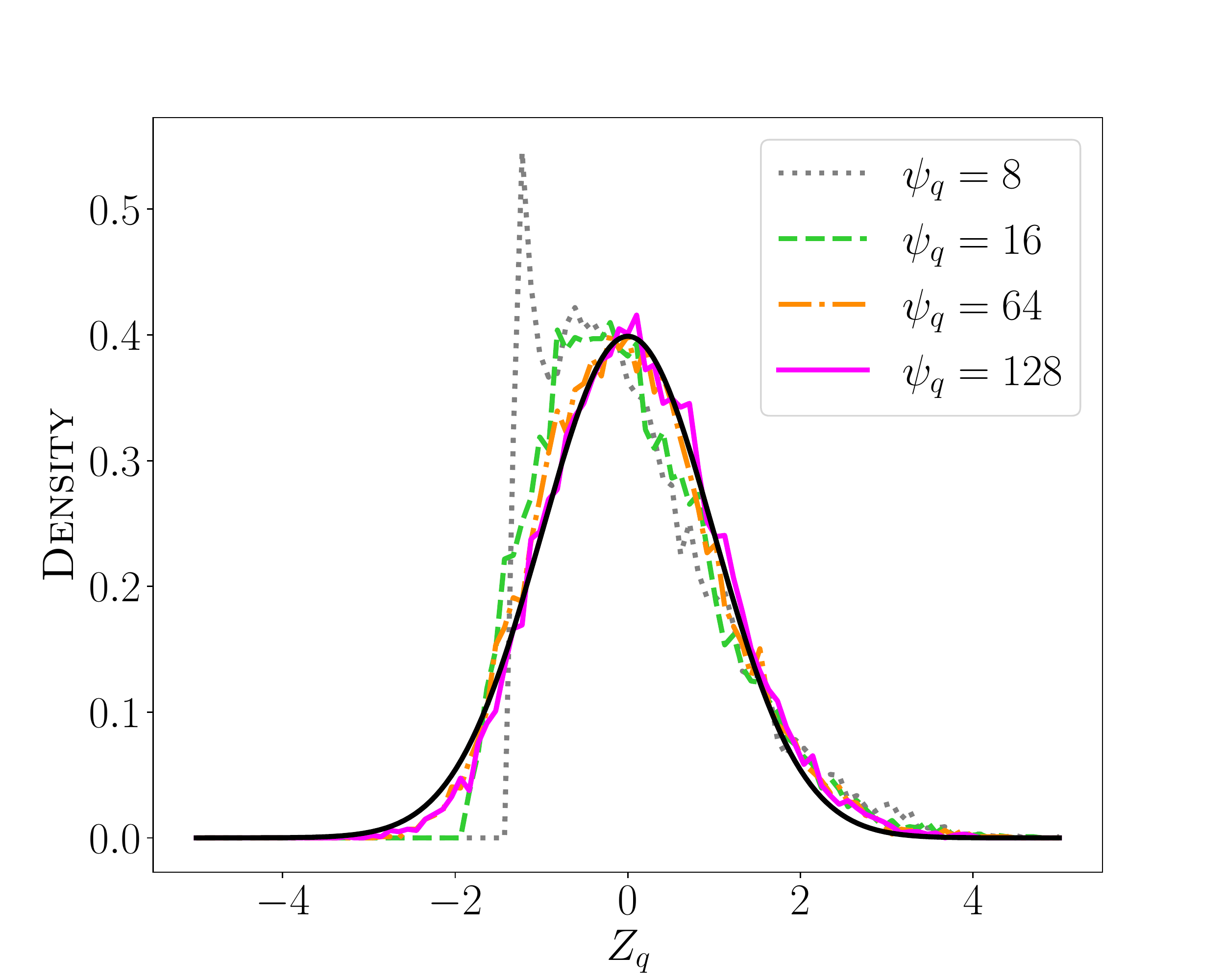}}
}
\caption{Distribution of the transformed inner product error $Z$ for vectors that draw values from the uniform distribution over $[-1, 1]$ and the Gaussian distribution. These curves represent different settings of the number of non-zero coordinates in the query ($\psi_q$) for $m=60$ with $h=1$ and $\psi_d = np_i = 120$. We also plotted the standard normal distribution for reference.}
\label{figure:analysis:inner-product-error}
\end{center}
\end{figure}

We verify the result above by simulating the following experiment. We draw a query with a given number of non-zero coordinates ($\psi_q$) from the standard normal distribution. We then draw a vector of errors $Z_i$ from a distribution defined by its CDF per Equation~(\ref{equation:analysis:dist-error}), and compute the transformed inner product error $Z$, and repeat this procedure $10,000$ times. We then plot the distribution of $Z$ in Figure~\ref{figure:analysis:inner-product-error} for two families of random vectors, one where we assume vectors are drawn from a Uniform distribution over $[-1, 1]$ and another from the Gaussian distribution with standard deviation $\sigma=0.1$. As the figures illustrate, $Z$ tends to a standard normal distribution even when the query has just a handful of active coordinates.

We conclude this section with a remark on what Theorem~\ref{theorem:inner-product-error} enables us to do. Let us assume that we know the distribution of the random variables $X_i$, or that we can estimate $p_i$, $\mathbb{E}[Z_i]$ and $\text{Var}[Z_i]$ from data. Armed with these statistics, we have all the ingredients to form $Z$ for a given query $q$ and any document vector. As $Z$ tends to a standard normal distribution, we can thus compute confidence bounds on the accuracy of the approximate inner product score returned by \sinnamon{}. This information can in theory be used to dynamically adjust $k^\prime$ in Algorithm~\ref{algorithm:retrieval:ranking} on a per-query basis. We leave an exploration of this particular result to future work.

\section{Evaluation}
\label{section:evaluation}

This section presents our empirical evaluation of \sinnamon{} and its properties on real-world data. We begin with a description of our empirical setup. We then verify the theoretical results of Section~\ref{section:analysis} on real datasets. That is followed by a discussion of the retrieval performance of \sinnamon{} where we examine the trade-offs the algorithm offers to configure memory, time, and accuracy. We finally turn to a review of insertions and deletions in \sinnamon{} and showcase its stable, online behavior.

\subsection{Setup}

\subsubsection{Sparse Vector Datasets}
We conduct our experiments on MS MARCO Passage Retrieval v1~\cite{nguyen2016msmarco}. This question-answering dataset is a collection of 8.8 million short passages in English with about $56$ terms ($39$ unique terms) per passage on average. We use the smaller ``dev'' set of queries for retrieval, consisting of $6,980$ questions with an average of about $5.8$ unique terms per query.

We demonstrate the utility of the algorithms with several different methods of encoding text into sparse vectors. In particular, we process MS MARCO passages and queries using BM25~\cite{bm25original,bm25}, SPLADE~\cite{formal2022splade}, efficient SPLADE~\cite{lassance2022sigir}, and uniCOIL~\cite{unicoil}.

\begin{table*}[t]
\caption{The number of non-zero entries in documents ($\psi_d$) and queries ($\psi_q$) for various vector datasets.}
\label{table:evaluation:datasets}
\begin{center}
\begin{sc}
\begin{tabular}{c|cccc}
& BM25 & SPLADE & Efficient Splade & uniCOIL \\
\toprule
\midrule
$\psi_d$ & 39 & 119 & 181 & 68 \\
$\psi_q$ & 5.8 & 43 & 5.9 & 6 \\
\bottomrule
\end{tabular}
\end{sc}
\end{center}
\end{table*}

It is a well-known fact that \textbf{BM25} can be translated into an inner product of two vectors, where each non-zero entry in the query vector has the IDF of the corresponding term in the vocabulary, and the document vectors encode BM25's term importance weight. This requires that the average document length and the hyperparameters $k_1$ and $b$ be fixed, but that is a reasonable assumption for the purposes of our experiments. We set $k_1=0.82$ and $b=0.68$ by tuning using a grid search. We drop the $(k_1+1)$ factor from the numerator of BM25's term importance weight so that document entries are bounded to $[0, 1]$; this is a rank-preserving change. We pre-process the text of the collection and queries using the default word tokenizer and Snowball stemmer of the open-source \textit{NLTK}~\footnote{Available at \url{https://github.com/nltk/nltk}} library. Note that, we include BM25 simply to provide a reference point and emphasize that \linscan{} and \sinnamon{} are designed as general-purpose solutions suitable for real-valued sparse vectors of which BM25 is a special case.

\begin{figure}[t]
\begin{center}
\centerline{
\subfloat[Values]{
\includegraphics[width=0.37\linewidth]{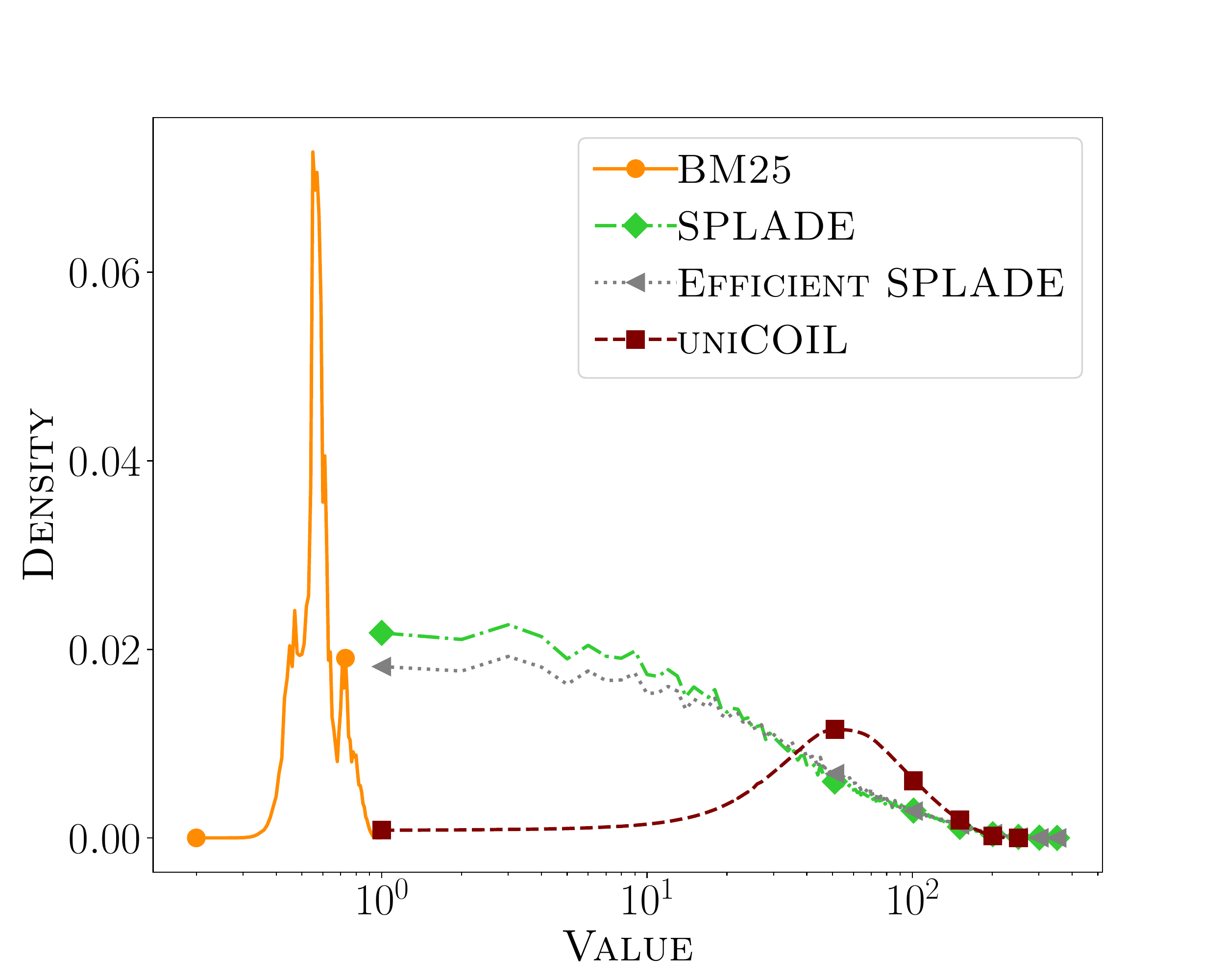}\label{figure:sparse-vector-datasets:distributions:value}}
\subfloat[Non-zero Coordinates]{
\includegraphics[width=0.37\linewidth]{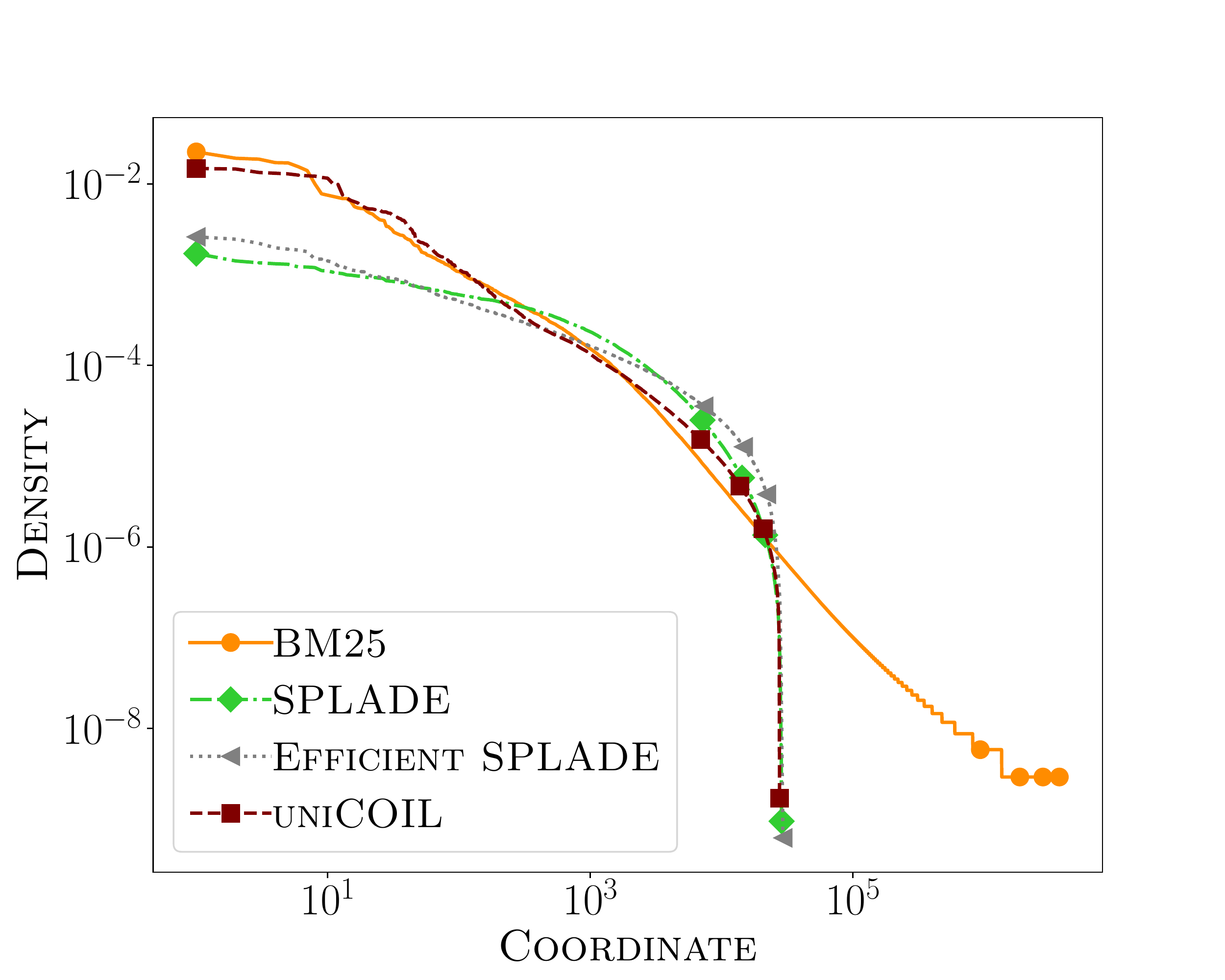}\label{figure:sparse-vector-datasets:distributions:nnz}}
}
\caption{Distributions of values and non-zero coordinates in various vector datasets. For each dataset, the figure on the left shows the likelihood of a non-zero coordinate taking on a particular value. The figure on the right shows the likelihood of a particular coordinate being non-zero.}
\label{figure:sparse-vector-datasets:distributions}
\end{center}
\end{figure}

As the second model, we use \textbf{SPLADE\footnote{Pre-trained checkpoint from HuggingFace available at \url{https://huggingface.co/naver/splade-cocondenser-ensembledistil}}}~\cite{formal2022splade}, a deep learning model that produces sparse representations for a given piece of text, where each non-zero entry is the importance weight of a term in the BERT~\cite{devlin2019bert} WordPiece~\cite{wordpiece} vocabulary comprising of $30,000$ terms. When encoded with this version of SPLADE, the MS MARCO passage vectors contain an average of $119$ non-zero entries and the query vectors $43$ non-zero entries.

We include SPLADE because it enables us to test the behavior of retrieval algorithms on query vectors with a large number of non-zero entries. However, we also create another vector dataset from MS MARCO using a more efficient variant of SPLADE, called \textbf{Efficient SPLADE}\footnote{Pre-trained checkpoints for document and query encoders were obtained from \url{https://huggingface.co/naver/efficient-splade-V-large-doc} and \url{https://huggingface.co/naver/efficient-splade-V-large-query}, respectively}~\cite{lassance2022sigir}. This model produces queries that have far fewer non-zero entries than the original SPLADE model but documents that have a larger number of non-zero entries. More concretely, the mean $\psi_d$ of document vectors is $181$ and that of query vectors is about $5.9$. Due to the larger size of the document collection, this dataset helps us examine the memory footprint of \sinnamon{} in a relatively more extreme scenario.

Similar to SPLADE, \textbf{uniCOIL}~\cite{unicoil} produces impact scores for terms in the vocabulary, resulting in sparse representations for text documents. We use the checkpoint provided by the authors\footnote{Available at \url{https://github.com/castorini/pyserini/blob/master/docs/experiments-unicoil.md}.} to obtain vectors for the MS MARCO collection. This results in document vectors that have on average $68$ non-zero entries, and query vectors with $\psi_q \approx 6$.

We would be remiss if we did not note that all vector datasets produced from the MS MARCO dataset are non-negative. This is a limitation of BM25 and existing embedding models that generate sparse vectors for text. However, the results presented in this section are generalizable to real vectors. To support that claim, we discuss this topic further and present evidence on synthetic data at the end of this section. We further hope that our algorithmic contribution inspires sparse embedding models that can leverage the whole real line including negative weights---an area that is hitherto unexplored due to a lack of efficient SMIPS algorithms.

As a way to compare and contrast the various vector datasets above, we examine some of their pertinent characteristics in Table~\ref{table:evaluation:datasets} and Figure~\ref{figure:sparse-vector-datasets:distributions}. In Figure~\subref*{figure:sparse-vector-datasets:distributions:value}, we plot a histogram of the coordinate values, showing the likelihood that a non-zero coordinate takes on a particular value. One notable difference between the datasets is that SPLADE and its other variant, Efficient SPLADE, appear to have a very different value distribution than BM25 and uniCOIL: in the former smaller values are more likely to occur.

The datasets are different in one other way: the likelihood of a coordinate being non-zero. We plot this distribution in Figure~\subref*{figure:sparse-vector-datasets:distributions:nnz} (in log-log scale, with smoothing to reduce noise in our visualization). We notice that the distributions for (Efficient) SPLADE have a heavier tail than the shape of BM25 and uniCOIL distributions.

\subsubsection{Evaluation Metrics}
There are four metrics that will be our focus in the remainder of this work. First is the \textbf{index size} measured in GB. We rely on the programming language---which, in this work, is Rust\footnote{More information about the language is available at \url{https://www.rust-lang.org/}}---to calculate the amount of space held by a data structure and estimate the memory footprint of the overall index. In \sinnamon{}, this measurement includes the size of the inverted index as well as the sketch matrix.

We also report \textbf{latency} in milliseconds. When reporting the latency of retrieval, this figure includes the time elapsed from the moment a query vector is presented to the algorithm to the moment the algorithm returns the requested top $k$ document vectors. In \sinnamon{}, for example, this includes the scoring time of Algorithm~\ref{algorithm:retrieval} as well as the ranking time of Algorithm~\ref{algorithm:retrieval:ranking}. We note that, because this work is a study of retrieval of generic vectors, we do not report the latency incurred to vectorize a given piece of text.

As another metric of interest, we report the \textbf{accuracy} of approximate algorithms in terms of their recall with respect to exact retrieval. By measuring the recall of an approximate set with respect to an exact top-$k$ set, we can study the impact of the different levers in the algorithm on its overall accuracy as a retrieval engine.

Finally, we also evaluate the algorithms according to task-specific metrics. Because the task in MS MARCO is to rank passages according to a given query, we measure Normalized Discounted Cumulative Gain (\textbf{NDCG})~\cite{jarvelin2000ir} at a deep rank cutoff ($1000$) and Mean Reciprocal Rank (\textbf{MRR}) at rank cutoff $10$. In this way, we examine the impact of \sinnamon{}'s levers on the quality of its solution from the perspective of the end task.

\subsubsection{Hardware}
We conduct experiments on two different commercially available platforms. One is an Intel Xeon (Ice Lake) processor with a clock rate of $2.60$GHz with $8$ CPU cores and $64$GB of main memory. Another is an Apple M1 Pro processor with the same core count ($8$) and main memory capacity ($64$GB). Not only do these processors have different characteristics and, as such, shed light on \sinnamon{}'s behavior in the context of different architectures, but they also represent two different use-cases. The first of these platforms represents a typical server in a production environment---in fact, we rented this machine from the Google Cloud Platform---while the second represents a vast number of end-user devices such as laptops, tablets, and phones. We believe given that \sinnamon{} can be tailored to different memory and latency configurations, it is important to understand its performance both in a production setting as well as on edge devices.

\subsubsection{Algorithms}
We begin, of course, with \textbf{\linscan{}} and \textbf{\sinnamon{}}. We implement the two algorithms and their variants in Rust. This implementation includes support for $\sinnamon{}^+$, $\sinnamon{}^\parallel$, \roaringlinscan{}, $\linscan{}^\parallel$, and their anytime variants. We note that, because the vectors produced by existing models are all non-negative, we only experiment with $\sinnamon{}^+$ and its parameter $m$ (sketch size) on the MS MARCO vector datasets.

To facilitate a more clear discussion of \sinnamon{} and manage the size of the experimental configurations, we remove one variable from the equation in a subset of our experiments. In particular, as we note later, we fix \sinnamon{}'s parameter $h$ to $1$ and study $h>1$ only in a limited number of experiments. We believe, however, that in totality, our experiments along with theoretical results sufficiently explain the role that $h$ plays in the algorithm. Despite that, we do encourage practitioners to explore the trade-offs \sinnamon{} offers for their own use case and application, including tuning $h$ to reach a desired balance between latency, accuracy, and space.

As we noted earlier, to the best of our knowledge, \sinnamon{} is the first approximate algorithm for MIPS over general sparse vectors. In the absence of other general purpose systems designed for sparse vectors, we resort to modifying and implementing in Rust existing algorithms from the top-$k$ retrieval literature so they may operate on real-valued vectors. In particular, we take the popular \textbf{\textsc{Wand}}~\cite{broder2003wand} algorithm as a representative example. In our examination of \textsc{Wand}, we emphasize the logic of the algorithm itself---its document-at-a-time process with a pruning mechanism that skips over the less-promising documents. To that end, our implementation of \textsc{Wand} makes use of an uncompressed inverted index (where a posting is a pair consisting of a document identifier and a $32$-bit floating point value). In this way, we isolate the latency of the algorithm itself without factoring in costs incurred by decompression or other operations unrelated to the retrieval logic itself. As such, we advise against a comparison of the index size in \textsc{Wand} with that in \sinnamon{}.

\subsection{Analysis}

\begin{figure}[t]
\begin{center}
\centerline{
\subfloat[Cumulative distribution of sketching error for $h=1$ (left) and $h=2$ (right)]{
\includegraphics[width=0.37\linewidth]{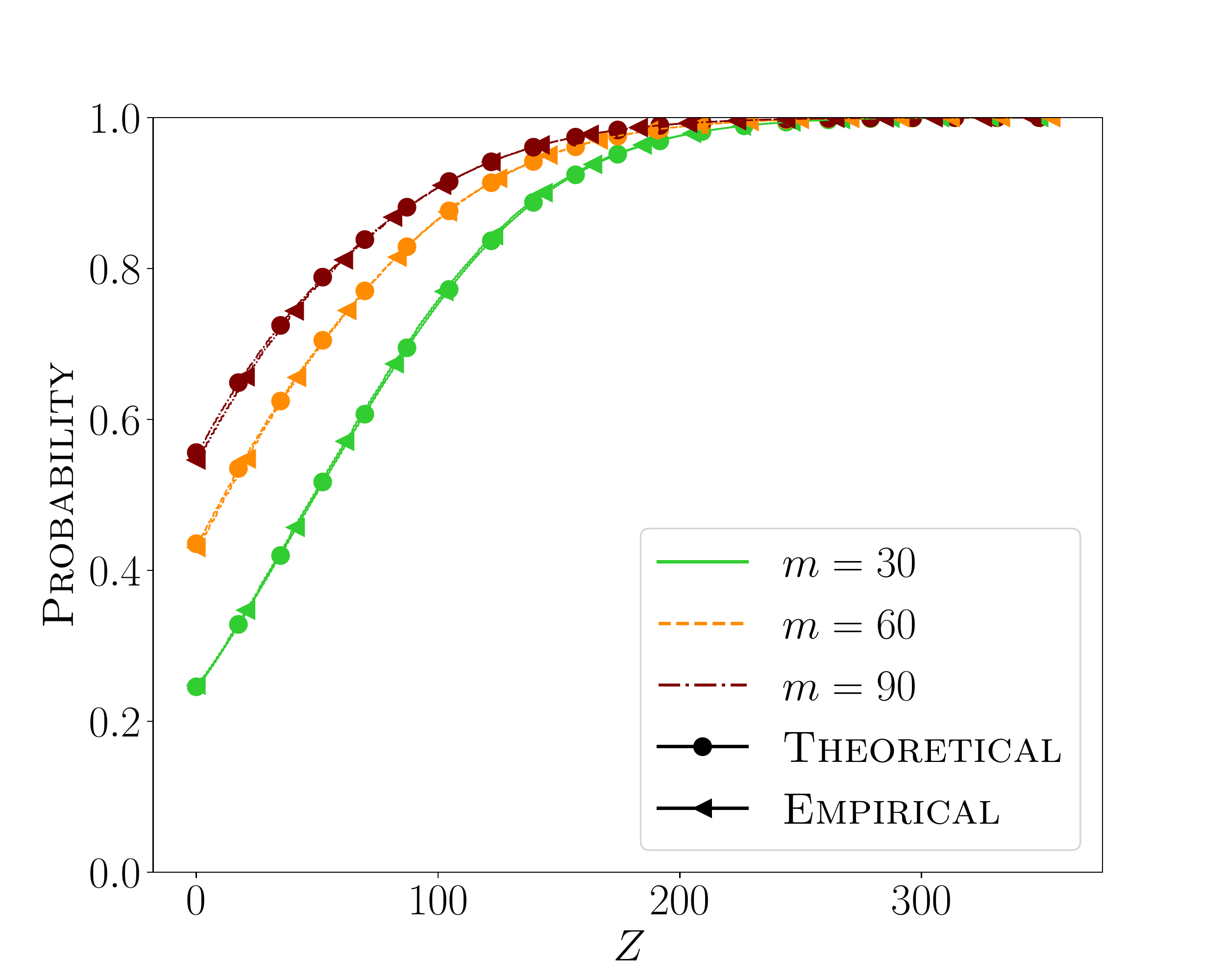}
\includegraphics[width=0.37\linewidth]{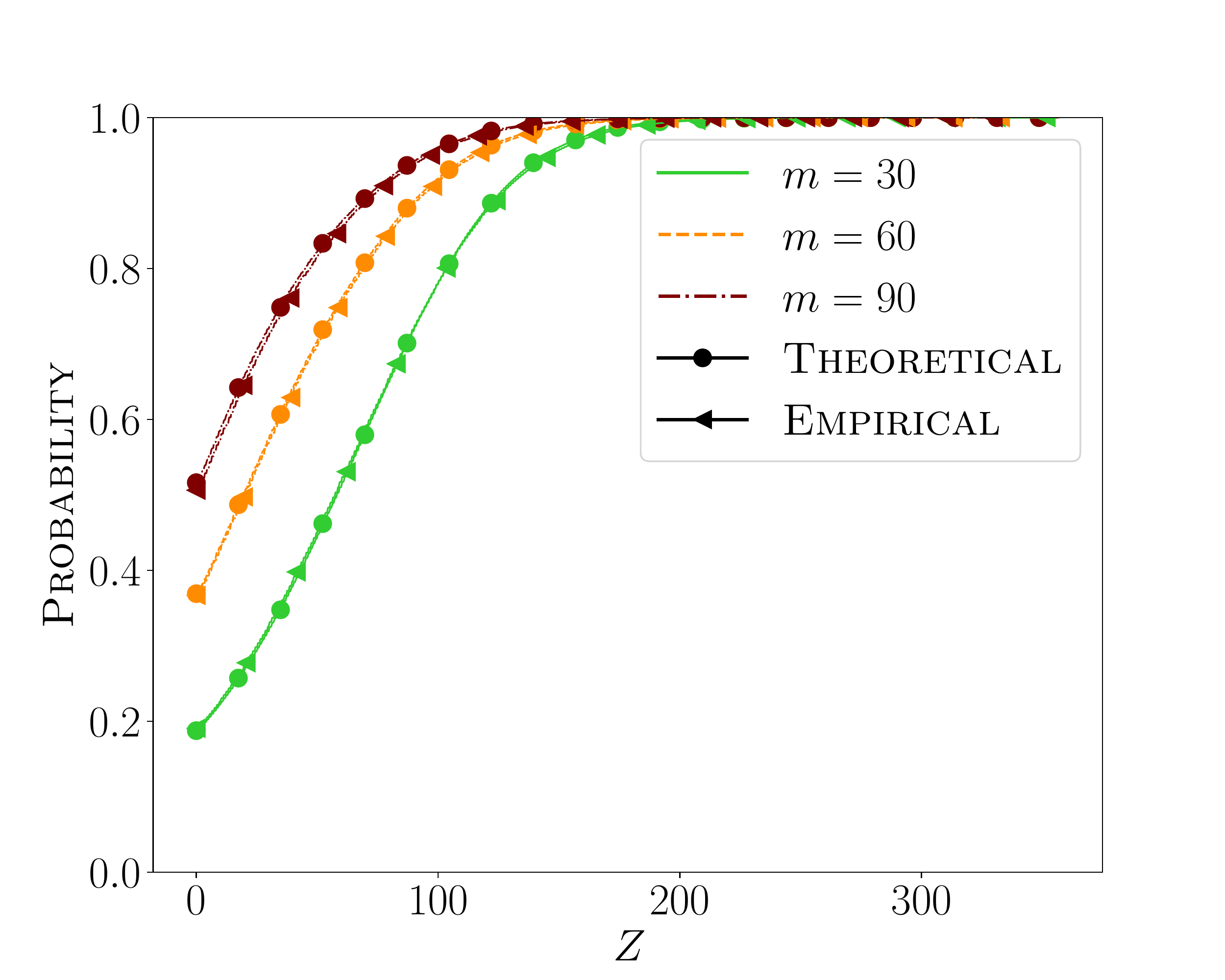}
\label{figure:evaluation:analysis:splade:cdf}}
}
\centerline{
\subfloat[Translated and scaled inner product error for $h=1$ (left) and $h=2$ (right)]{
\includegraphics[width=0.37\linewidth]{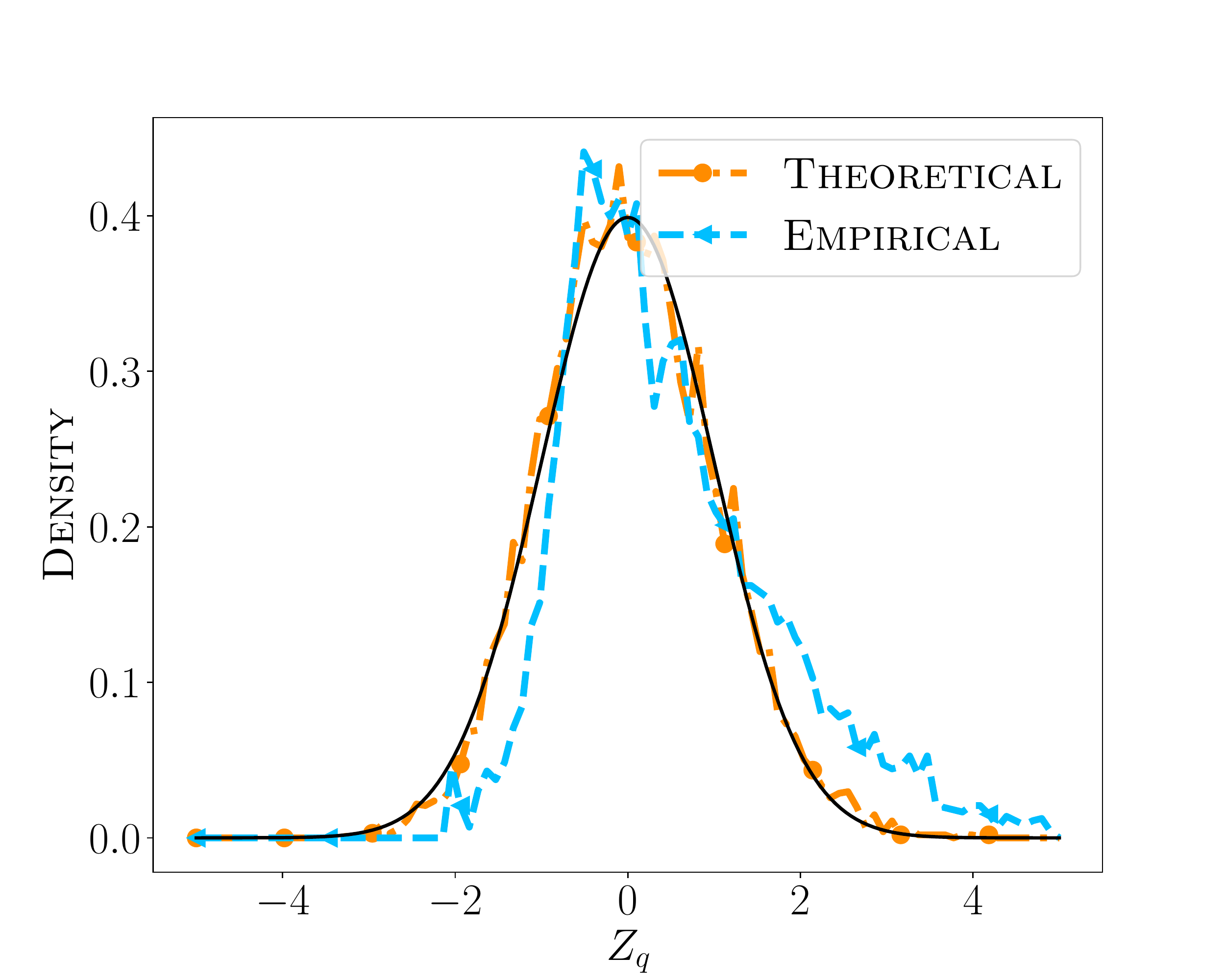}
\includegraphics[width=0.37\linewidth]{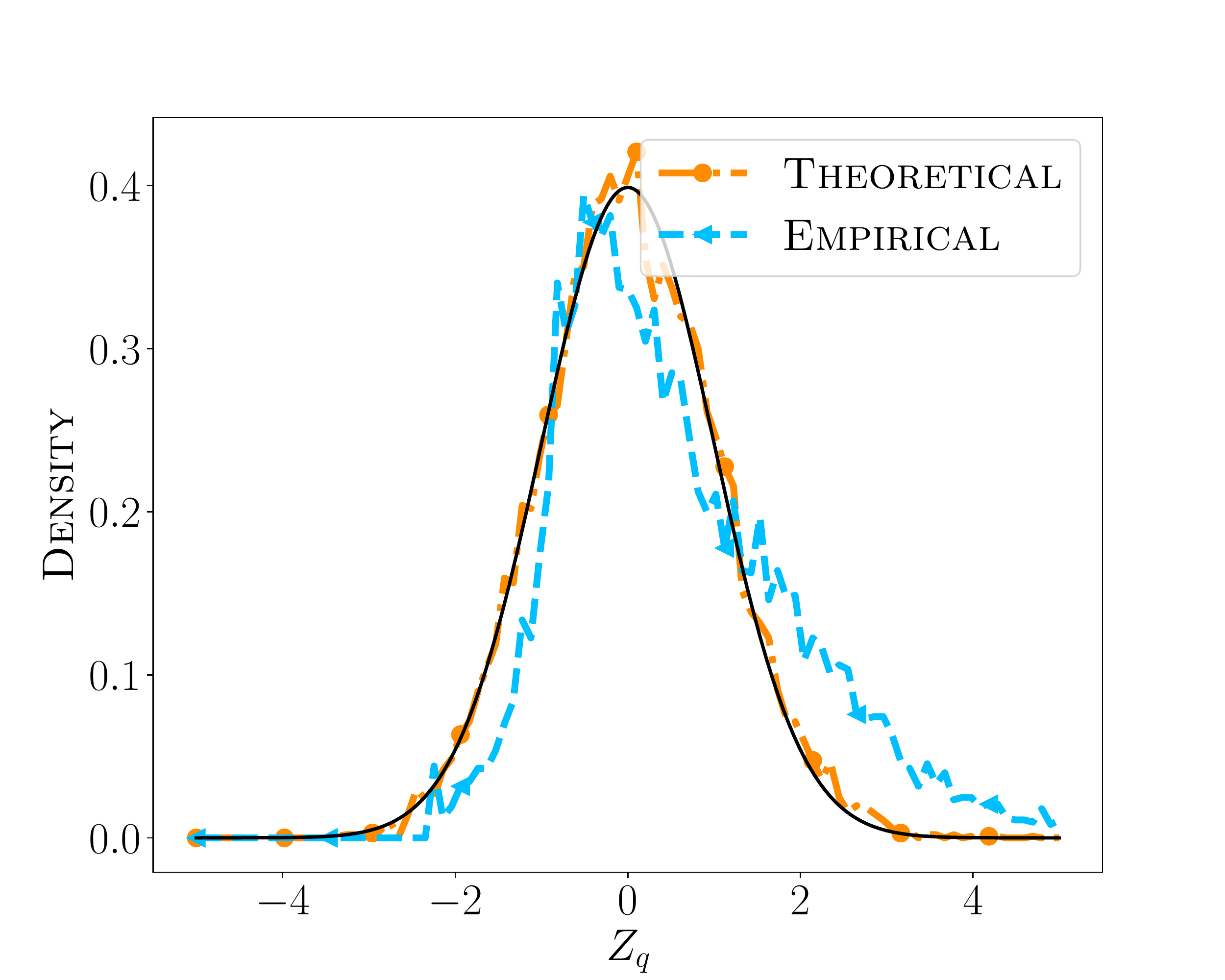}
\label{figure:evaluation:analysis:splade:zq}}
}

\caption{Sketching and inner product error distributions for the SPLADE dataset predicted by theory and inferred empirically. In (a), the theoretical and empirical CDF of the sketching error are almost indistinguishable---theoretical and empirical curves for $m=30$ overlap, so do curves for other values of $m$. In (b), the solid dark curve plots a standard normal distribution for reference.}
\label{figure:evaluation:analysis:splade}
\end{center}
\end{figure}

In this section, we examine the theoretical results of Section~\ref{section:analysis} on the vector datasets generated from MS MARCO. In particular, we infer from the data the empirical probability that coordinate $i$ is non-zero ($p_i$) and the empirical distribution of non-zero values (i.e., the random variables $X_i$).

We then use these statistics to theoretically predict the distribution of sketching error using Theorem~\ref{theorem:cdf-upperbound-sketch} and the inner product error from Theorem~\ref{theorem:inner-product-error}. We refer to statistics predicted by the theorems as ``theoretical'' errors in the remainder of this section and in accompanying figures.

In addition to theoretical predictions, we compute the distribution of sketching and inner product errors directly from the data, which we refer to as ``empirical'' errors. To infer the sketching error, for example, we first sketch a document vector, and then decode the value of its non-zero coordinates from the sketch. We then compute the error between the actual value and the decoded value. By repeating this process for every document in the collection, we can empirically construct the CDF of sketching error.

We find the distribution of the random variable $Z$ from Theorem~\ref{theorem:inner-product-error} by taking a query-document pair, computing their inner product using an exact algorithm as well as \sinnamon{}, and recording the difference between the two values. We then use the query $q$, the non-zero probability of document coordinates $p_i$, and statistics from $X_i$ to compute and plot $Z$.

Now that we have the theoretical and empirical predictions of the distribution of error, we compare the two to verify that the theory holds in practice for arbitrary data distributions. We show this comparison in Figure~\ref{figure:evaluation:analysis:splade} for the SPLADE dataset. We included a similar comparison for the remaining datasets in Appendix~\ref{appendix:analysis-empirical}.

As a general observation, the predictions from theory are accurate. We observe, for example, that $Z$ takes on a Gaussian shape. We further observe that the predicted CDF of sketching error reflects the empirical error. It is also worth noting that, increasing the number of random mappings from $1$ to $2$ results in an increase in the probability of sketching error but a decrease in the expected value of error---the sketching error concentrates closer to $0$. For example, when $m=90$, the probability of sketching error changes from $0.45$ to $0.48$, but the expected value of error improves from $28$ to $21$.

\subsection{Retrieval}

We begin with an examination of index size, accuracy, and latency as a function of the sketch size $m$ and time budget $T$ in \sinnamon{} on the different vector collections described previously. In conjunction with the mono-CPU results, we present the parallel version of the algorithm denoted by $\sinnamon{}^\parallel$ run on $8$ threads. In all the experiments in this section, we set $k$ (in top $k$ retrieval) to $1,000$.

\subsubsection{Latency, Memory, and Retrieval Accuracy}
Our objective is to illustrate the Pareto frontier that \sinnamon{} explores. However, because there are three objectives involved, instead of rendering a three-dimensional image, we show in one plot a pictorial presentation of the trade-off between latency and index size, and couple it with another plot that depicts the interplay between latency and accuracy (i.e., recall with respect to exact retrieval). In each figure, we distinguish between different configurations of $m$ using shapes and colors, and between the different time budgets $T$ for a given $m$ by labeling the points in the figure. We also show results from baseline algorithms for reference, including \linscan{} and its compressed, parallel, and anytime flavors.

Let us now turn to Figure~\ref{figure:evaluation:msmarco-passage-v1-icelake} where we plot the trade-offs between latency, memory, and accuracy on the Intel platform. As we observe similar trends on the M1 platform, we do not include those figures here and refer the reader to Appendix~\ref{appendix:retrieval-m1} for results; we note briefly, however, that all algorithms run substantially faster on M1 due to architectural differences---as memory is mounted on the processing chip in M1, we observe a higher memory throughput, leading to a significant speed-up. Finally, we note that to generate these figures we run \sinnamon{} with $k^\prime=5,000$ and study the impact of $k^\prime$ on latency and accuracy later in this section. All runs of the anytime version of \linscan{} too use $k^\prime=5,000$.

Turning our attention to \textsc{Wand} first---and ignoring its memory footprint as discussed previously---we note its excellent performance on the collection of BM25 vectors. This is, after all, as expected. Because term frequencies---the main ingredient of BM25 encoding---follows a Zipfian distribution; because queries are short; and because the importance of query terms is non-uniform, \textsc{Wand}'s pruning mechanism is able to narrow the search space substantially, leading to a very low latency.

As we move to other collections, however, we lose some of the properties that make \textsc{Wand} fast. For example, on the Efficient SPLADE collection, queries have few non-zero terms and we observe that \textsc{Wand} demonstrates a reasonable latency. But queries are an order of magnitude longer in the SPLADE collection, leading to a dramatic increase in the latency of the algorithm.

Besides the evidence above, we also believe \textsc{Wand} is not suitable for the general setting because its core pruning idea is designed specifically for Zipfian-distributed values. If each coordinate instead has more or less the same likelihood of being non-zero in any given vector or when non-zero entries have a Gaussian distribution, then \textsc{Wand} fails to prune documents effectively and its logic of finding a pivot suddenly becomes the dominant term in its computational complexity.

Interestingly, \linscan{}, which traverses the postings list exhaustively but in a coordinate-at-a-time manner, is often much faster than \textsc{Wand}. As we intuited before, this is due to better cache utilization and instruction-level parallelism that takes advantage of wide CPU registers, both made possible by the algorithm's predictable, sequential scan of arrays in memory. In effect, \linscan{} represents a lower-bound of sorts on \sinnamon{}'s mono-CPU performance, because they share the same index traversal logic but where \sinnamon{}'s retrieval logic involves heavier computation.

Things change when the inverted index in \linscan{} is compressed, turning the algorithm to \roaringlinscan{}. As a general observation, latency tends to increase substantially. This rise can be attributed to the cost of decompression, which inevitably makes the data structure and index traversal logic less friendly to the CPU's caches and registers. Nonetheless, assuming vector values cannot be quantized and must at least occupy $16$ bits, the index size in \roaringlinscan{} represents a ceiling of sorts for \sinnamon{}; that is, if a configuration of \sinnamon{} leads to a higher memory usage than \roaringlinscan{}, then there is no real advantage to utilizing \sinnamon{} in that setup.

Now consider the curves for \sinnamon{}. Naturally, by reducing the scoring time budget $T$, we observe a decrease in overall latency---which includes ranking with unlimited time budget. We also observe, as one would expect, a decrease in retrieval accuracy. This effect is milder in the parallel version of \sinnamon{}. We observe the same trend as we tighten the scoring time budget in the anytime version of \roaringlinscan{}.

In our experiments, we set $m$ to be roughly $25\%$, $50\%$, and $75\%$ of the average $\psi_d$ of the document vector collection (see Table~\ref{table:evaluation:datasets}); this, for example, translates to $10$, $20$, and $30$ for the BM25-encoded dataset. Again, as anticipated, by reducing the sketch size $m$, \sinnamon{} allocates less and less memory to store document sketches. The gap between the different configurations of $m$ and \roaringlinscan{} is not so large where document vectors have few non-zero entries (e.g., BM25) but it widens on collections with a larger $\psi_d$.

\begin{figure}[ht]
\begin{center}
\centerline{
\subfloat[BM25]{
\includegraphics[width=0.45\linewidth]{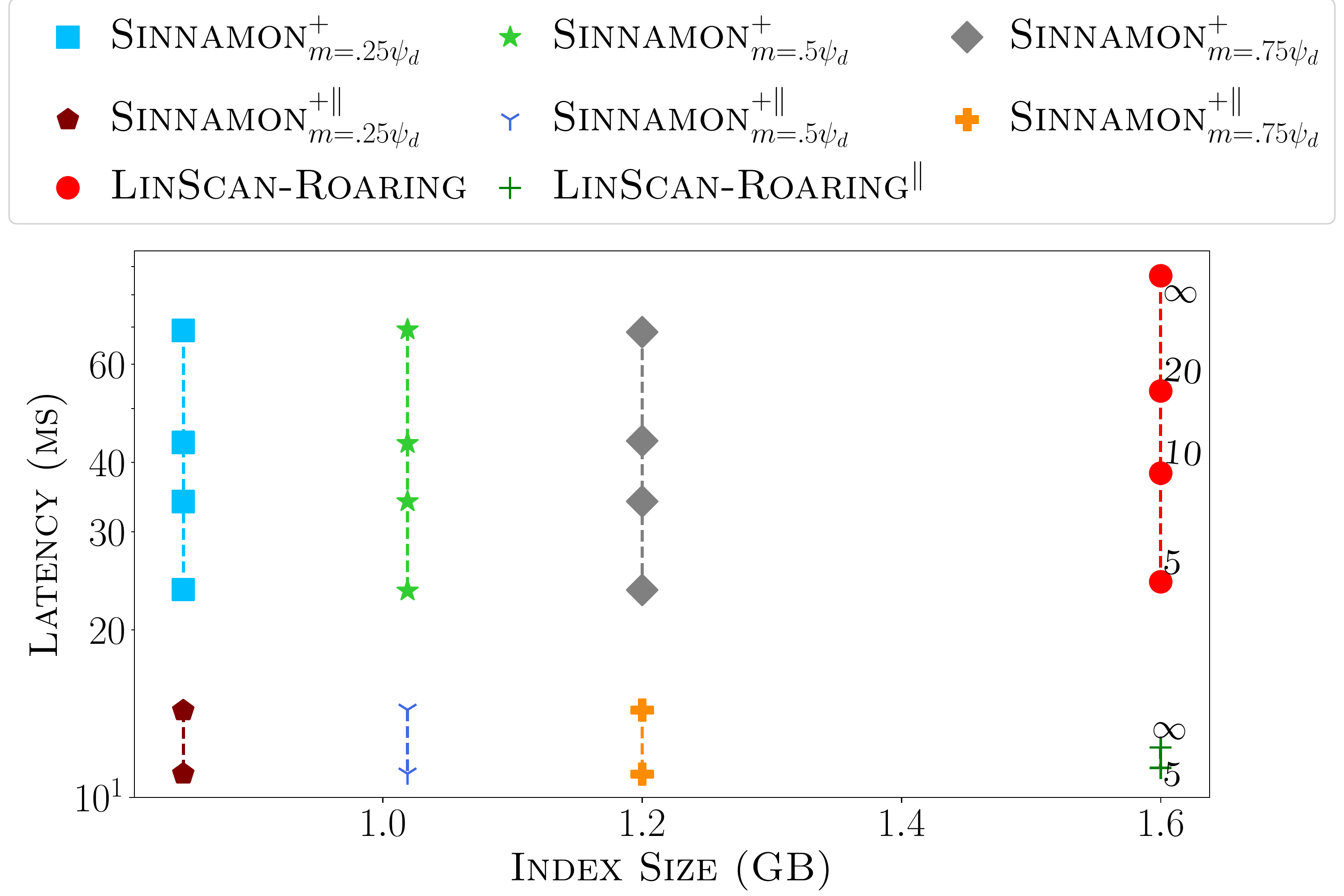}
\includegraphics[width=0.45\linewidth]{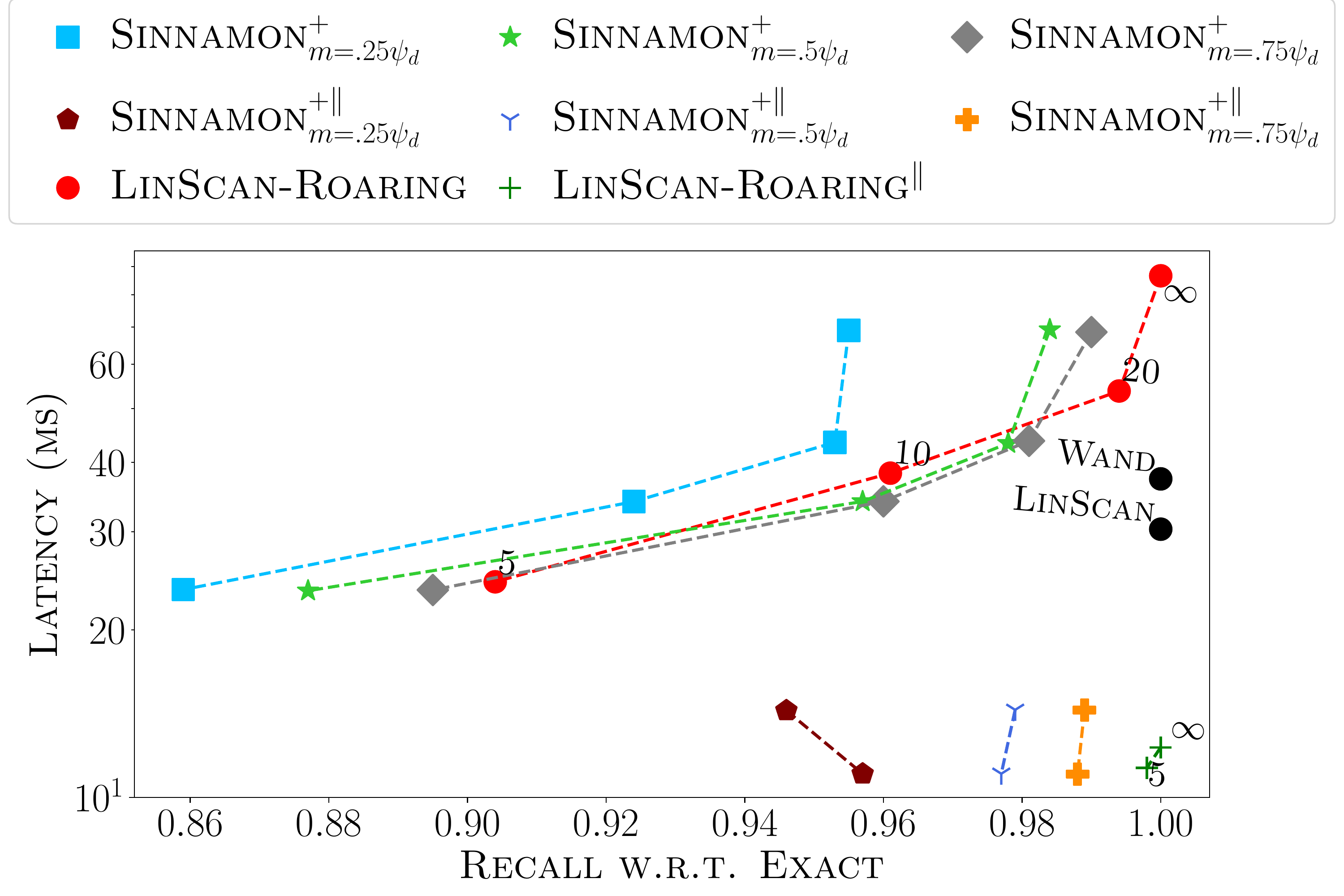}}
}

\centerline{
\subfloat[SPLADE]{
\includegraphics[trim={0 0 0 5.2cm},clip,width=0.45\linewidth]{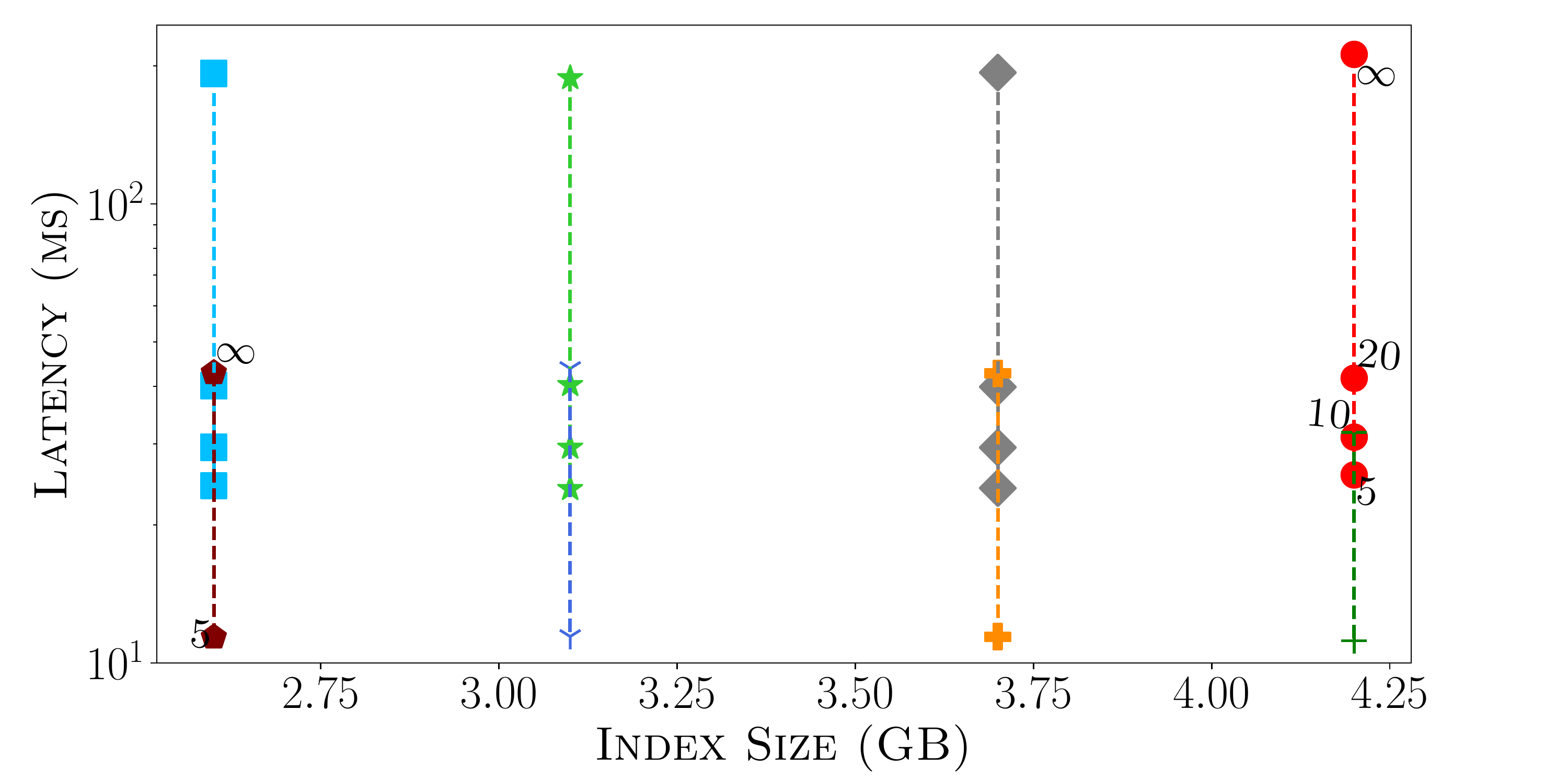}
\includegraphics[trim={0 0 0 5.2cm},clip,width=0.45\linewidth]{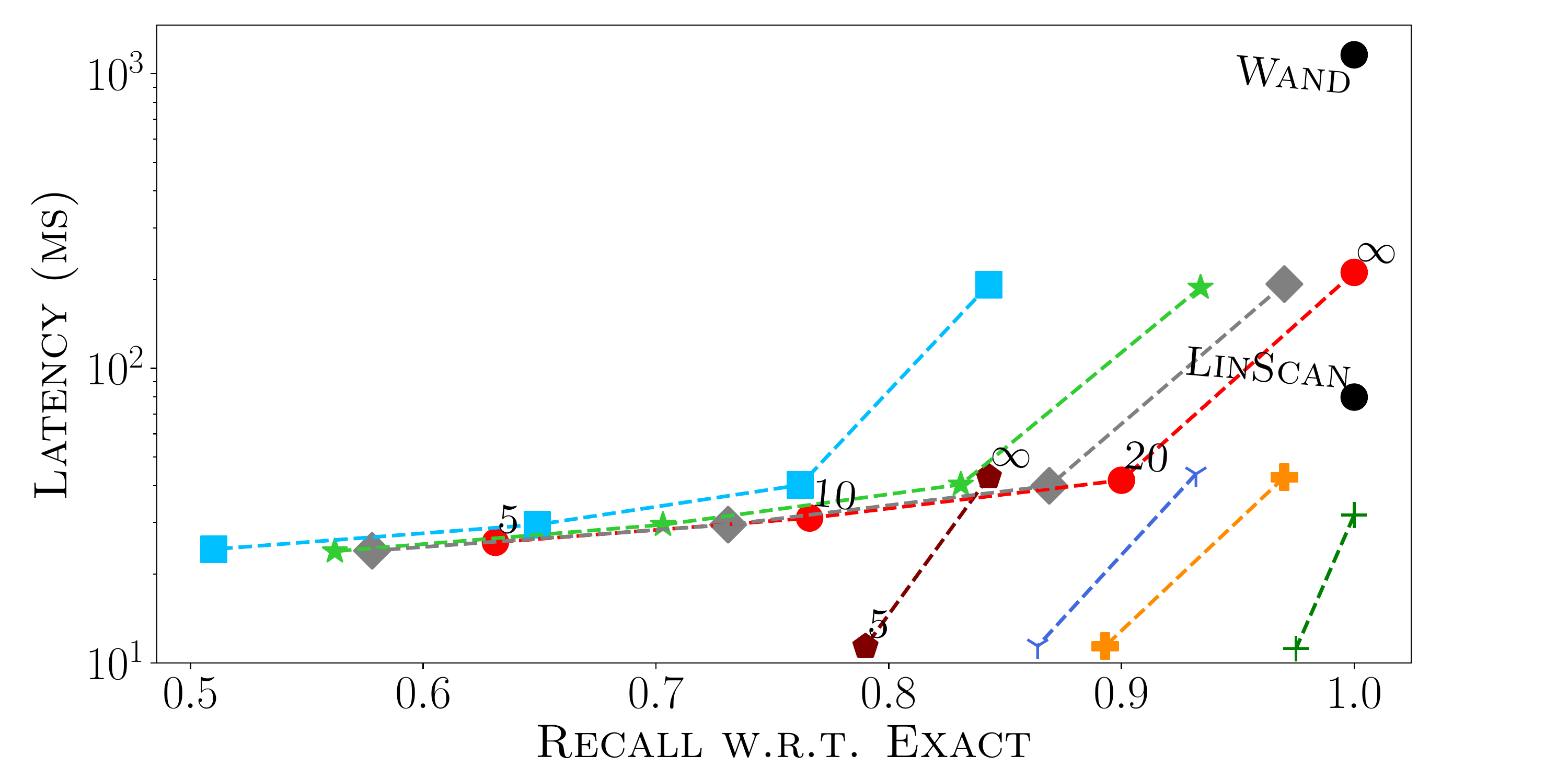}}
}

\centerline{
\subfloat[Efficient SPLADE]{
\includegraphics[trim={0 0 0 5.2cm},clip,width=0.45\linewidth]{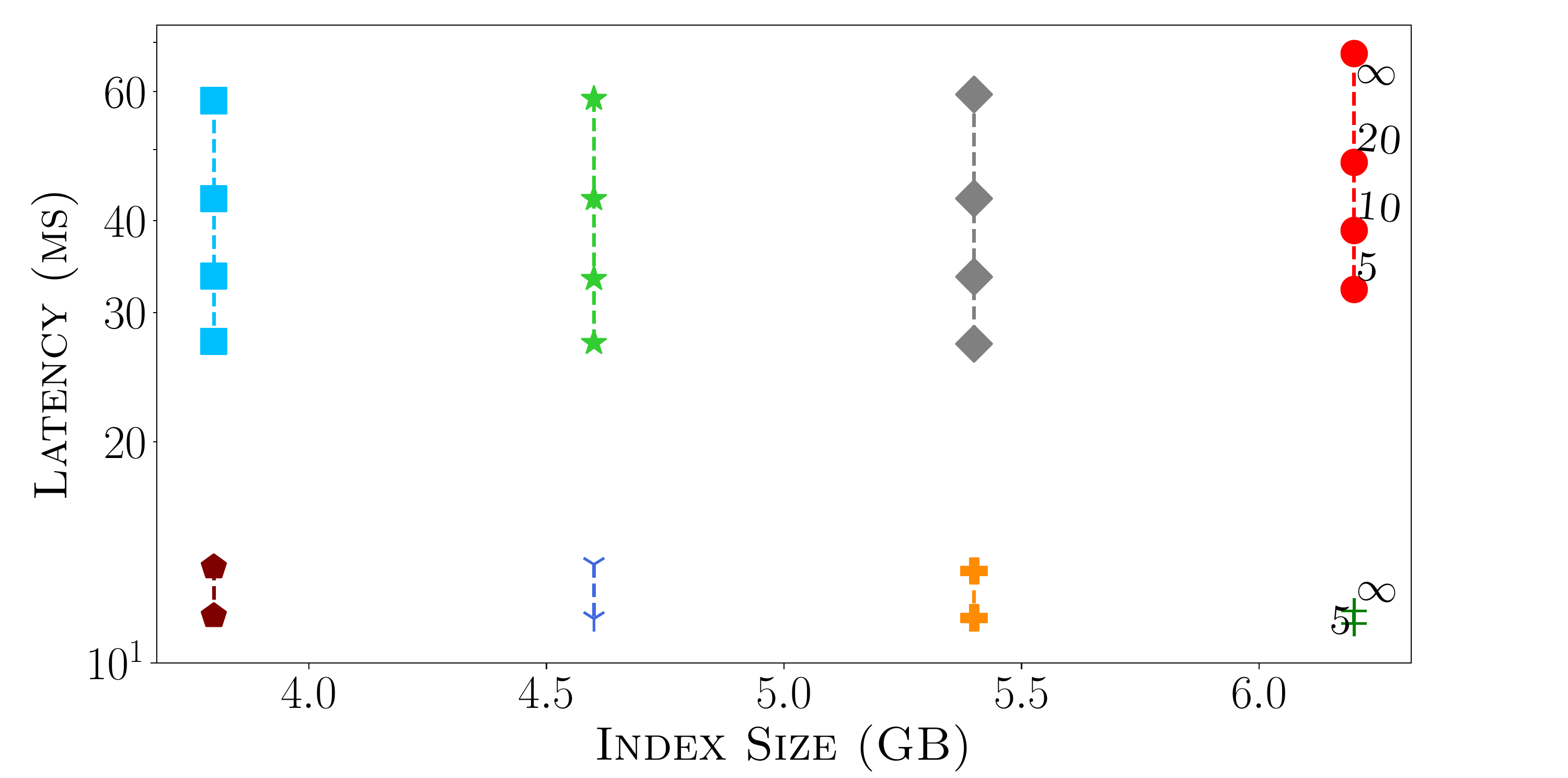}
\includegraphics[trim={0 0 0 5.2cm},clip,width=0.45\linewidth]{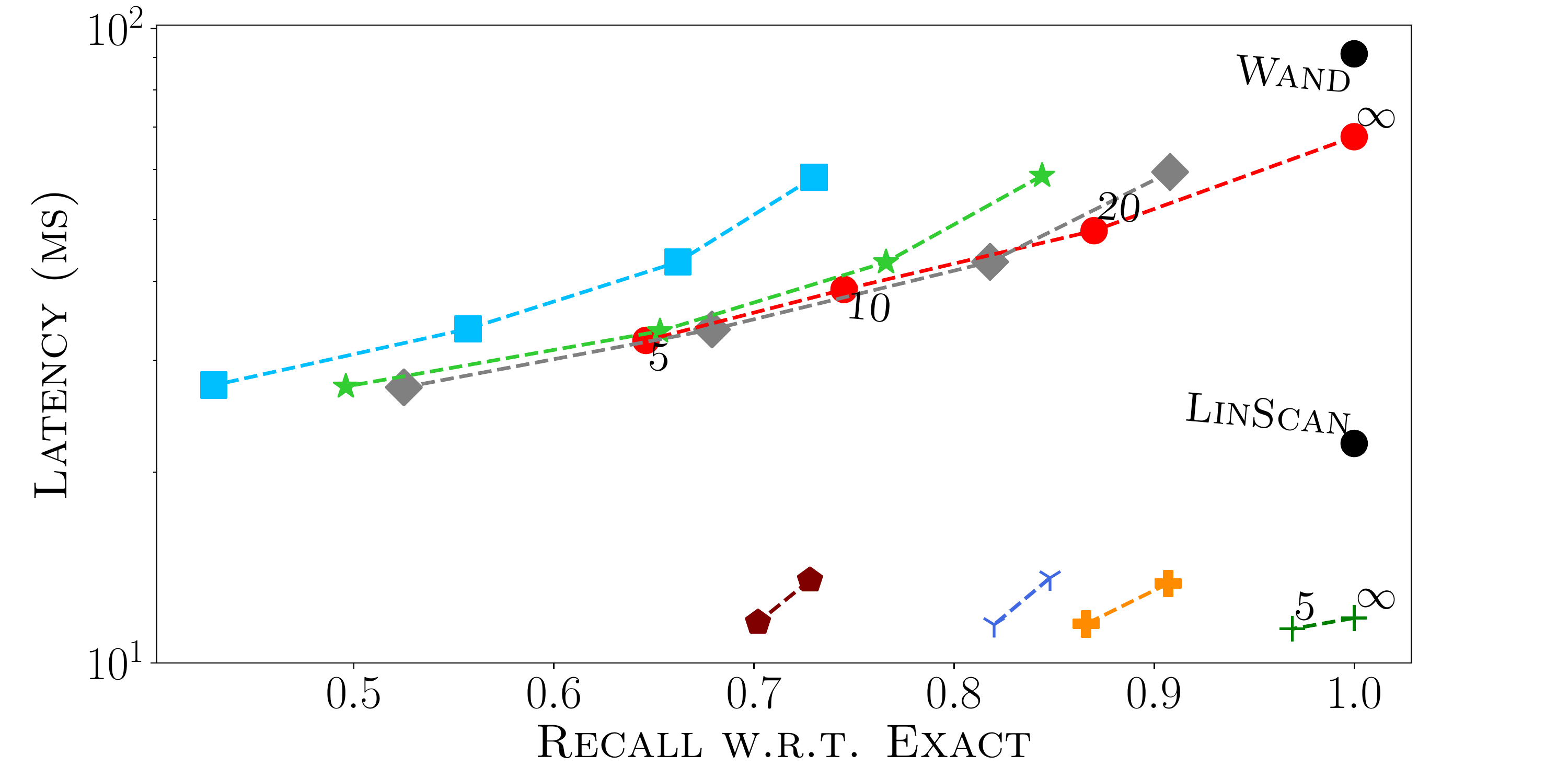}}
}

\centerline{
\subfloat[uniCOIL]{
\includegraphics[trim={0 0 0 5.2cm},clip,width=0.45\linewidth]{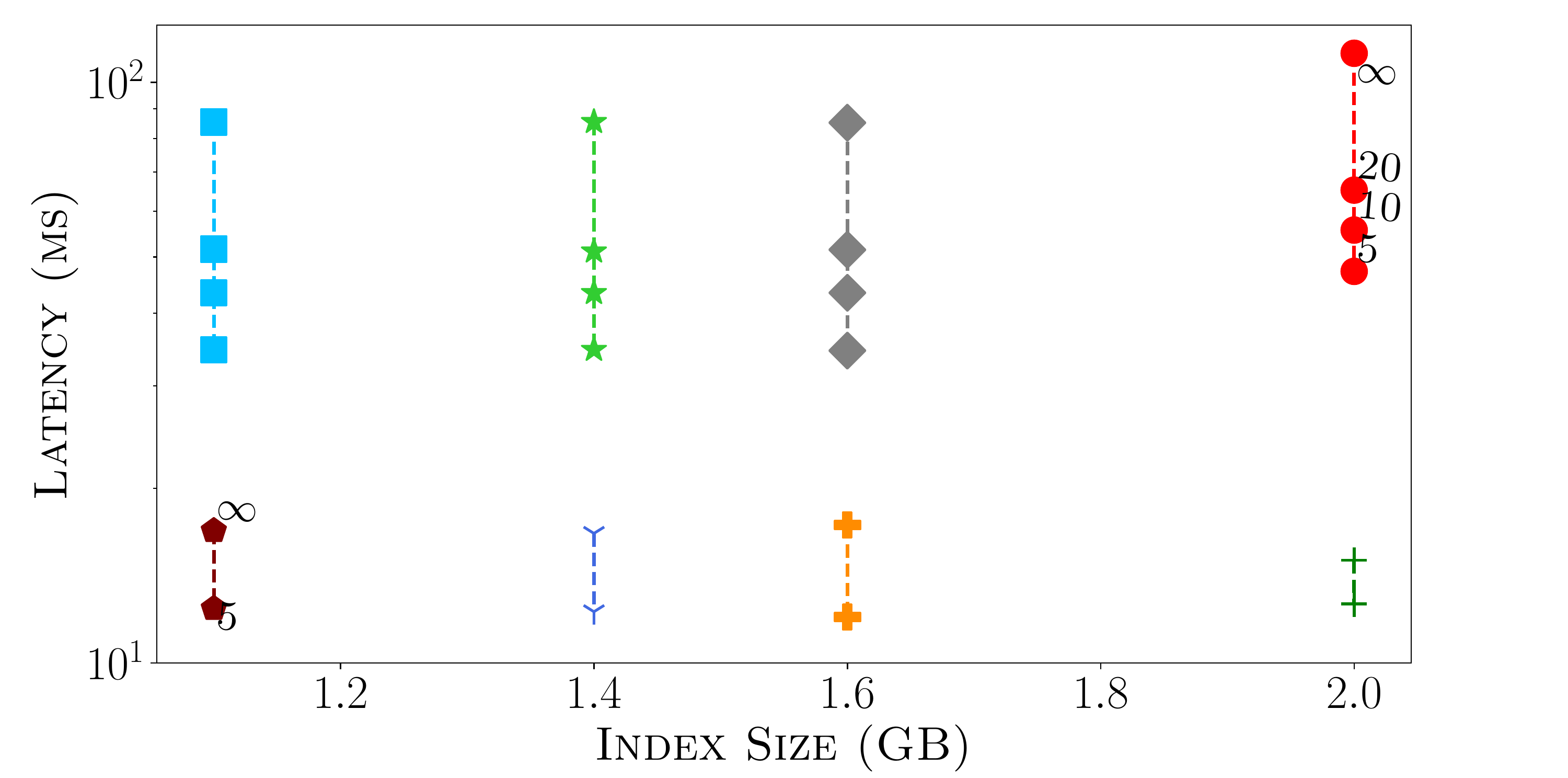}
\includegraphics[trim={0 0 0 5.2cm},clip,width=0.45\linewidth]{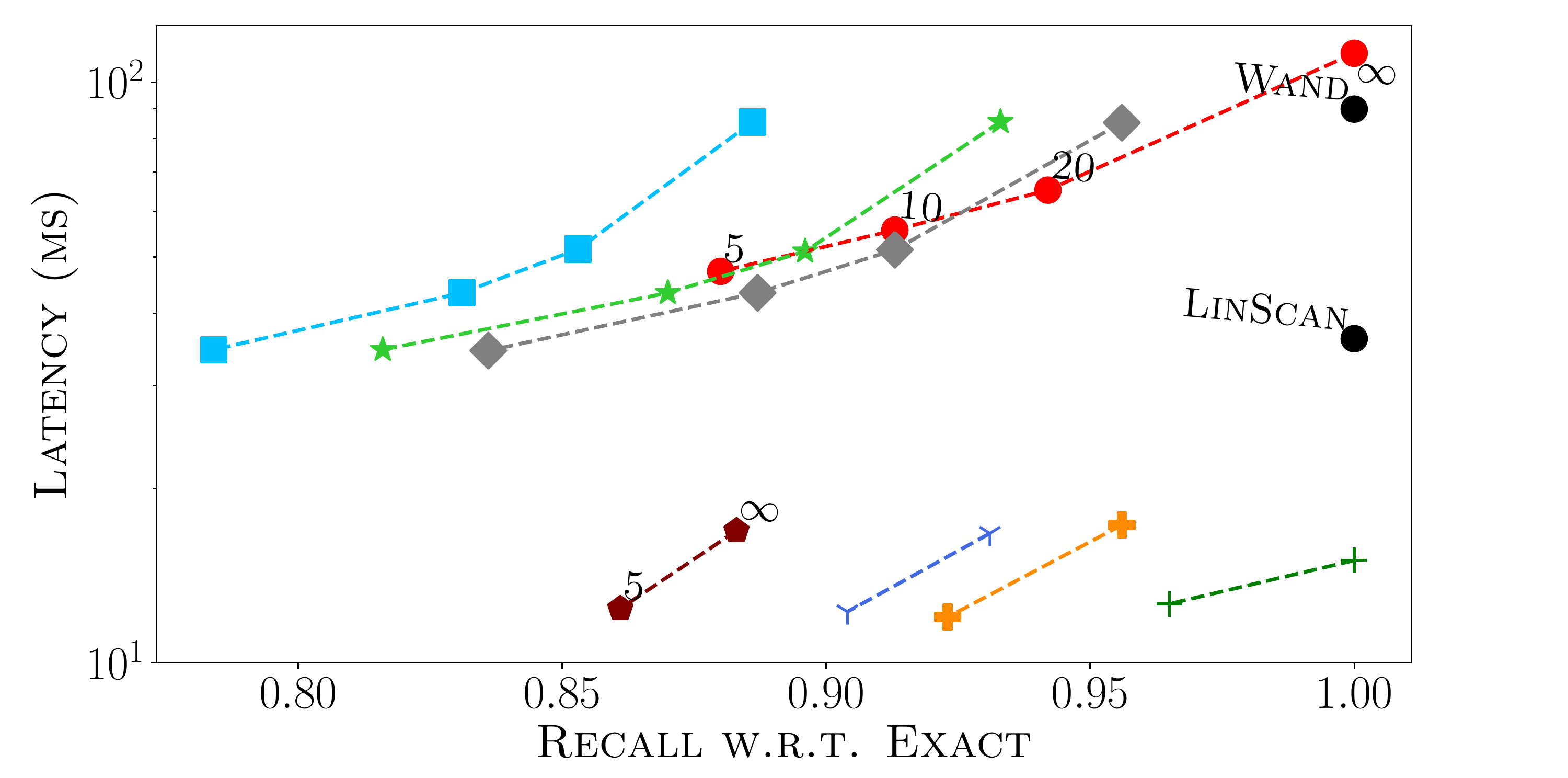}}
}

\caption{Trade-offs on the Intel processor between latency and memory (left column), and latency and accuracy (right column) for various vector collections (rows). Shapes (and colors) distinguish between different configurations of \sinnamon{}, and points on a line represent different time budgets $T$ (in milliseconds).}
\label{figure:evaluation:msmarco-passage-v1-icelake}
\end{center}
\end{figure}

\FloatBarrier

Moving on to the parallel algorithms $\roaringlinscan{}^\parallel$ and $\sinnamon{}^\parallel$, we observe generally strong performance in terms of latency---we will study speed-up in the number of threads later in this section. This enormous gain in latency that is achievable by straightforward parallelization while keeping the index monolithic is, as we argued before, one of the stronger properties of the two algorithms. While the curves for \sinnamon{} may suggest small differences in retrieval accuracy between the mono-CPU and parallel versions, we note that the changes are due to the inherent randomness of the algorithm but that they are not statistically significant according to a paired two-tailed $t$-test with $p$-value $<0.01$.

\subsubsection{Effect on End Metrics}
We have so far established that a smaller $m$ in \sinnamon{} and a reduced scoring time budget $T$ in \sinnamon{} and \roaringlinscan{} lead to lower retrieval accuracy, but how does the loss in retrieval accuracy translate to a loss in the task-specific accuracy? For MS MARCO, we measure accuracy as NDCG$@1000$ and MRR$@10$ and present the results on the Intel platform in Figure~\ref{figure:evaluation:msmarco-passage-v1-icelake-end-metric}, with equivalent figures for M1 in Appendix~\ref{appendix:retrieval-end-metric-m1}.

The pattern that emerges is that, with the exception of BM25, the end metrics are often affected by a reduction in $m$ and $T$. Interestingly the drop in \sinnamon{}'s performance from $T=\infty$ to $T=20$ is not statistically significant on the SPLADE collection according to a paired two-tailed $t$-test, though nonetheless we do observe a reduction in quality as a general trend. This effect is less severe on the M1 chip as the scoring phase runs faster to begin with, and imposing a tighter limit on the time budget does not degrade quality as substantially as it does on the Intel chip. We again note the impressive performance of $\sinnamon{}^\parallel$ across collections.

\begin{figure}[!ht]
\begin{center}
\centerline{
\subfloat[BM25]{
\includegraphics[width=0.45\linewidth]{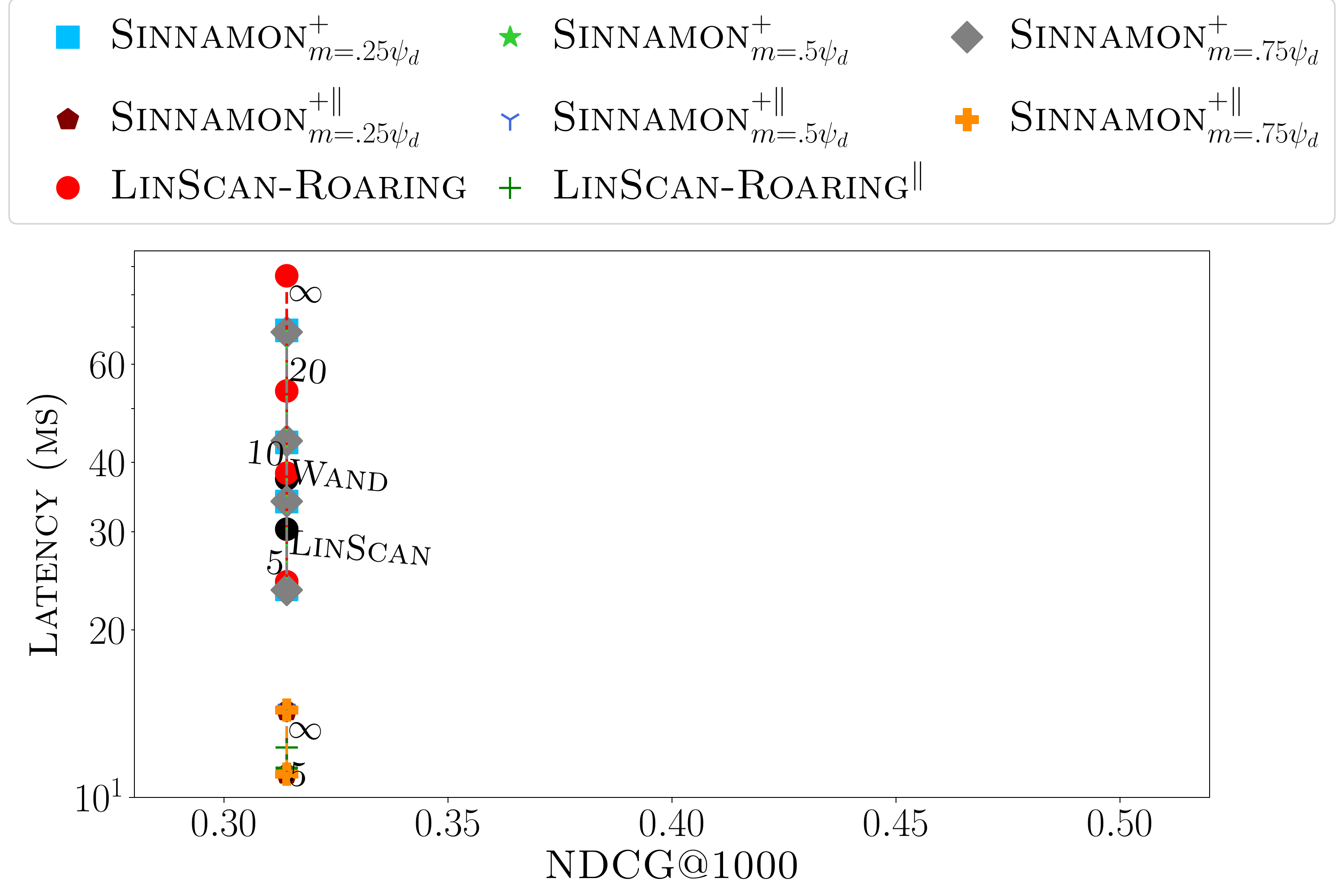}
\includegraphics[width=0.45\linewidth]{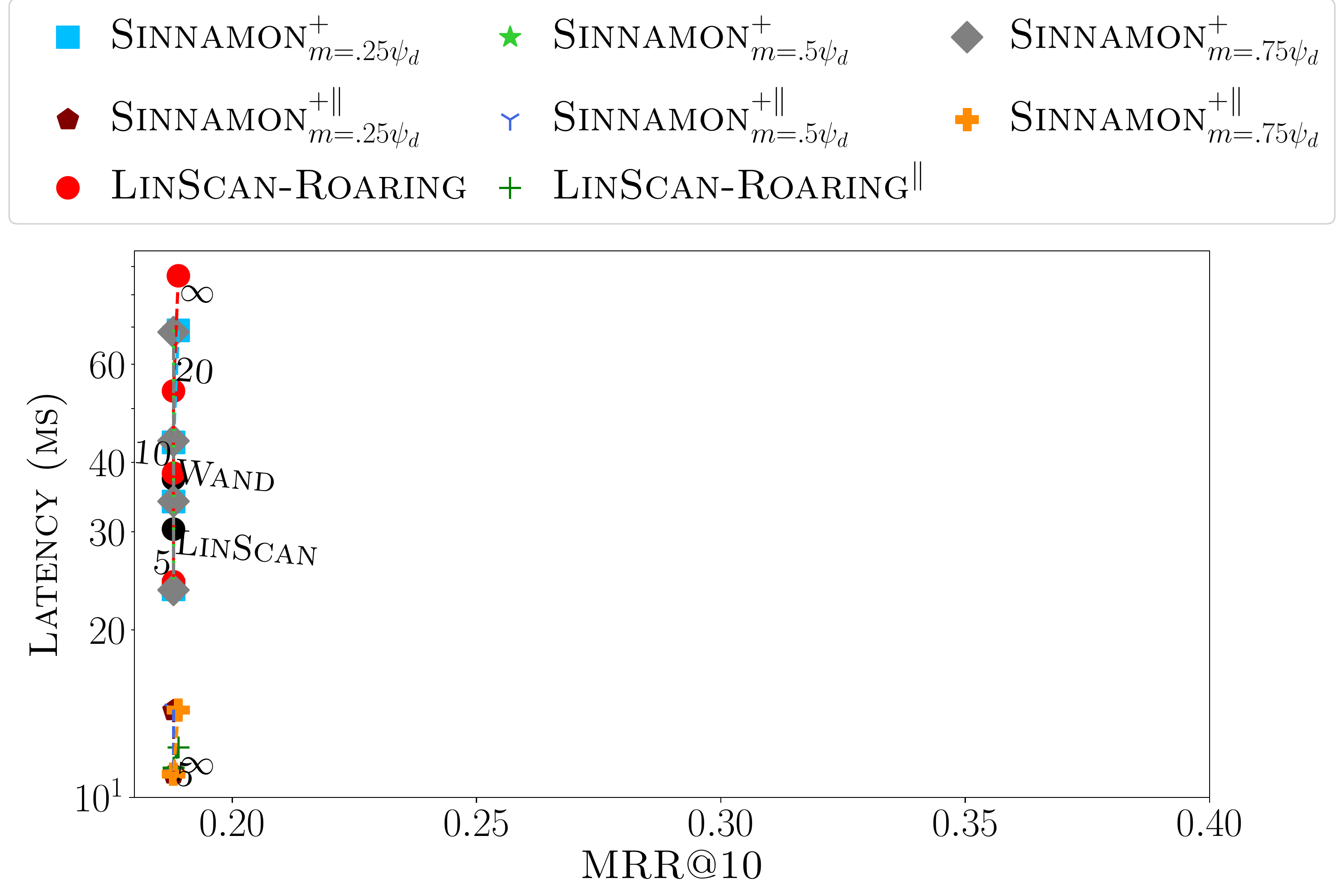}}
}

\centerline{
\subfloat[SPLADE]{
\includegraphics[trim={0 0 0 5.2cm},clip,width=0.45\linewidth]{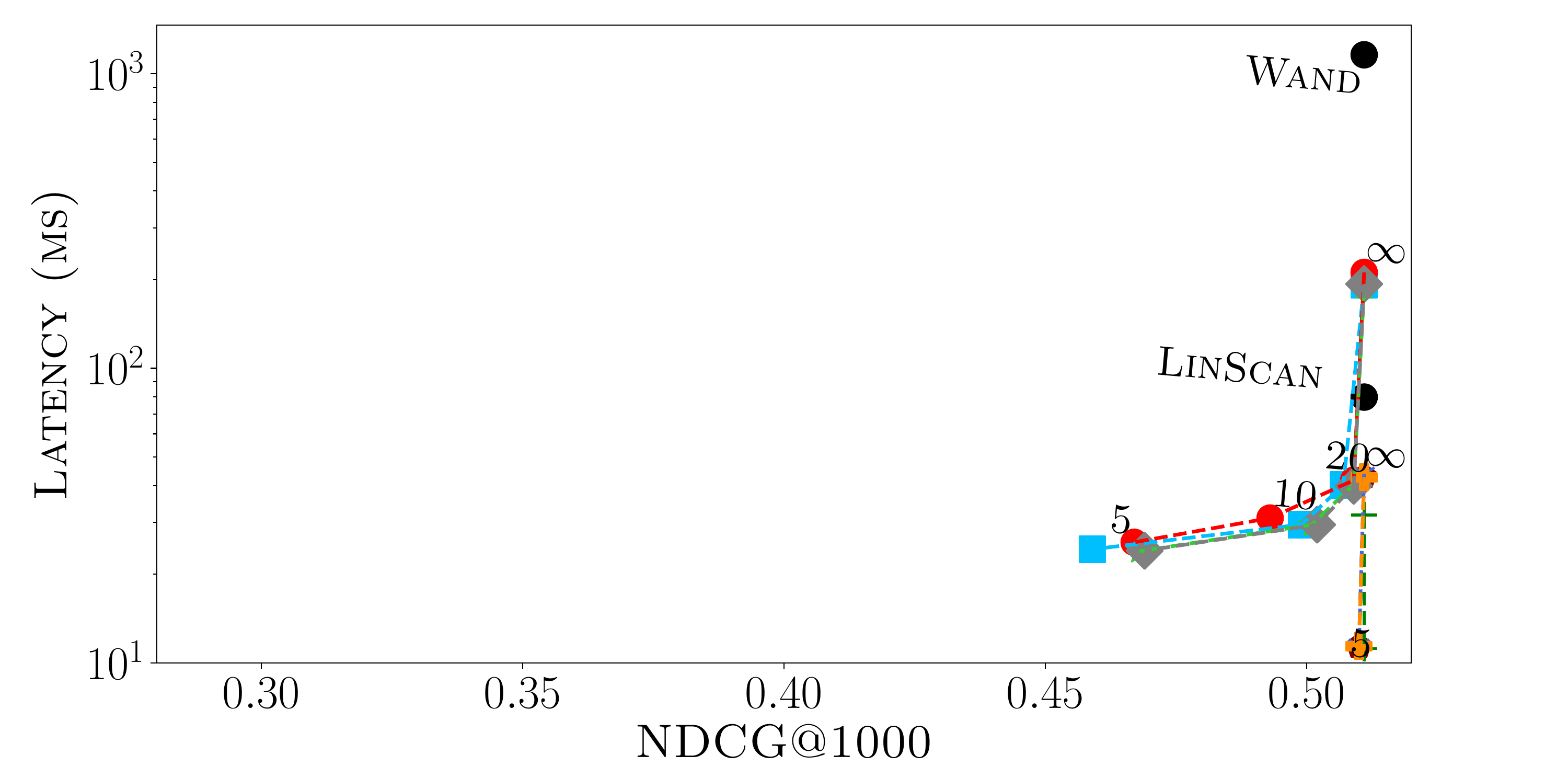}
\includegraphics[trim={0 0 0 5.2cm},clip,width=0.45\linewidth]{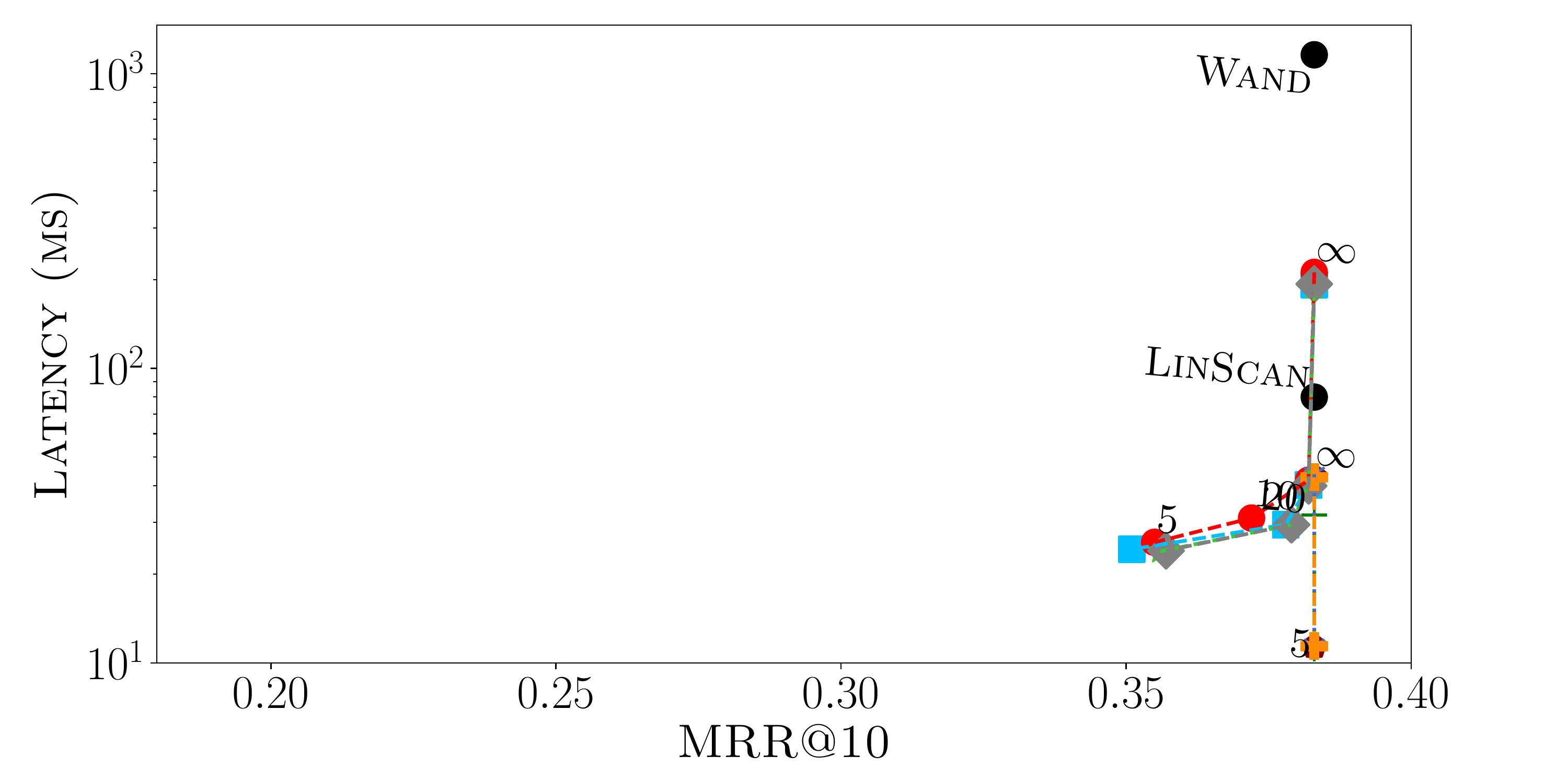}}
}

\phantomcaption
\end{center}
\end{figure}
\begin{figure}[!ht]
\begin{center}
\ContinuedFloat


\centerline{
\subfloat[Efficient SPLADE]{
\includegraphics[trim={0 0 0 5.2cm},clip,width=0.45\linewidth]{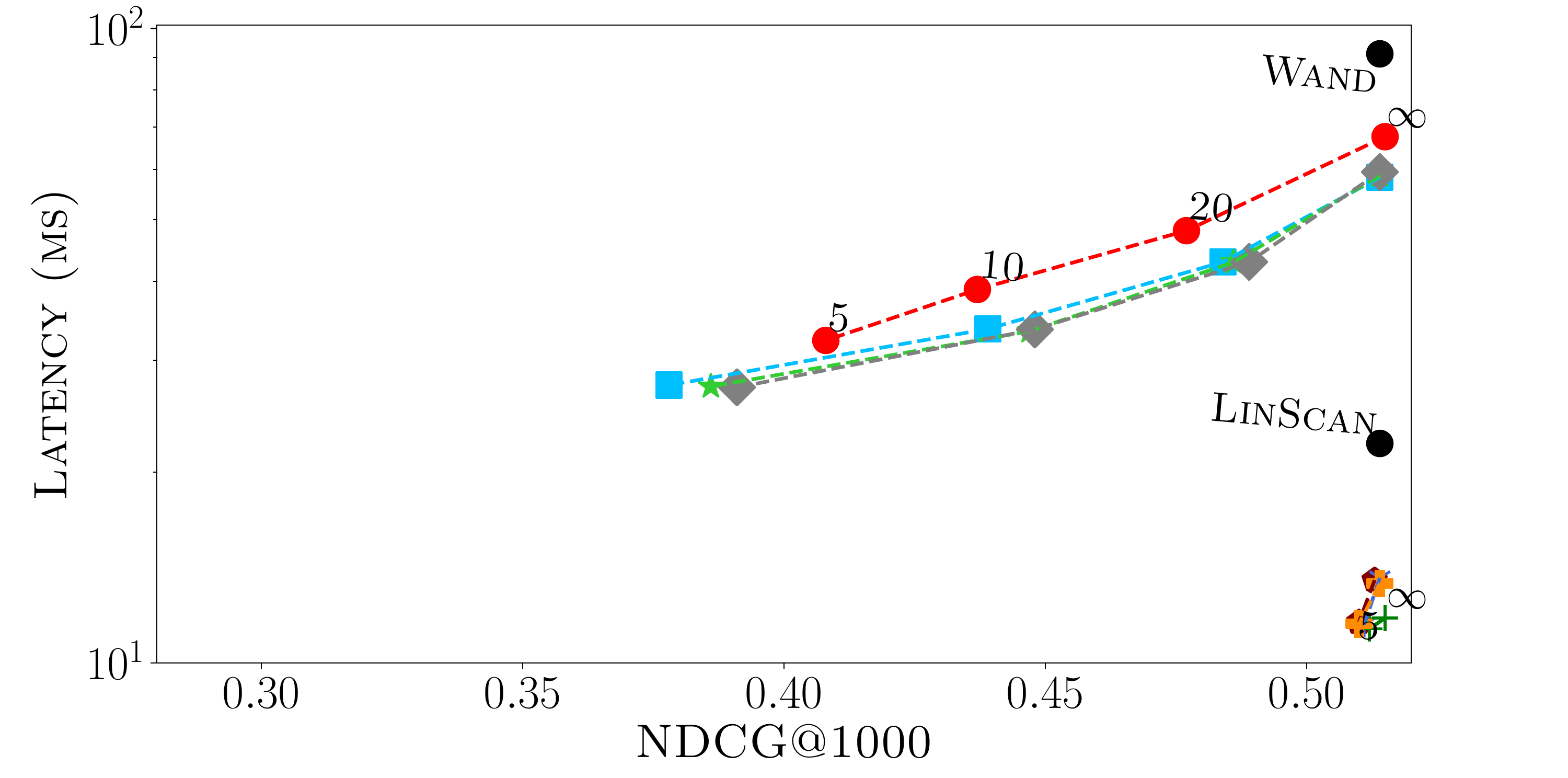}
\includegraphics[trim={0 0 0 5.2cm},clip,width=0.45\linewidth]{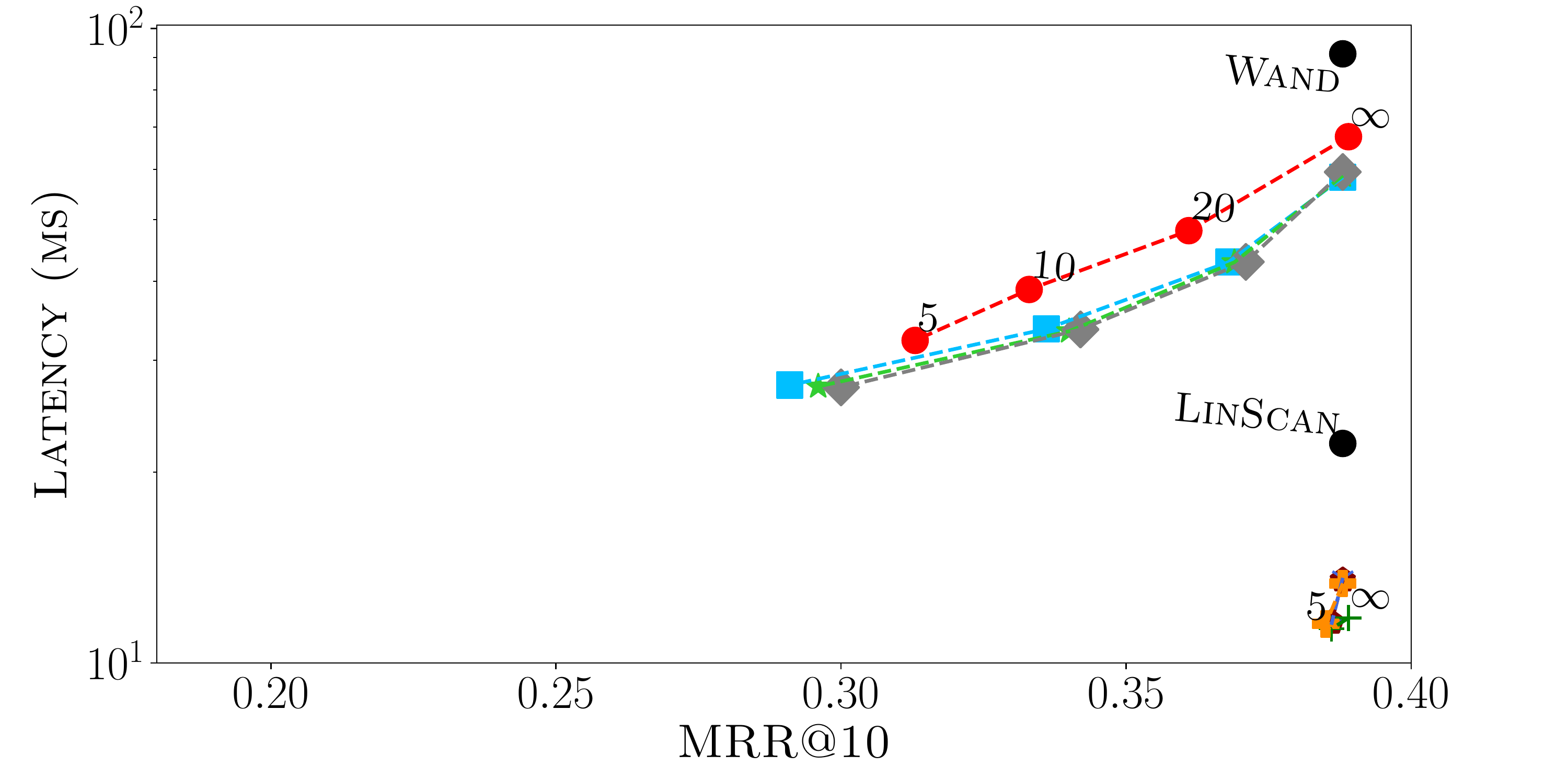}}
}


\centerline{
\subfloat[uniCOIL]{
\includegraphics[trim={0 0 0 5.2cm},clip,width=0.45\linewidth]{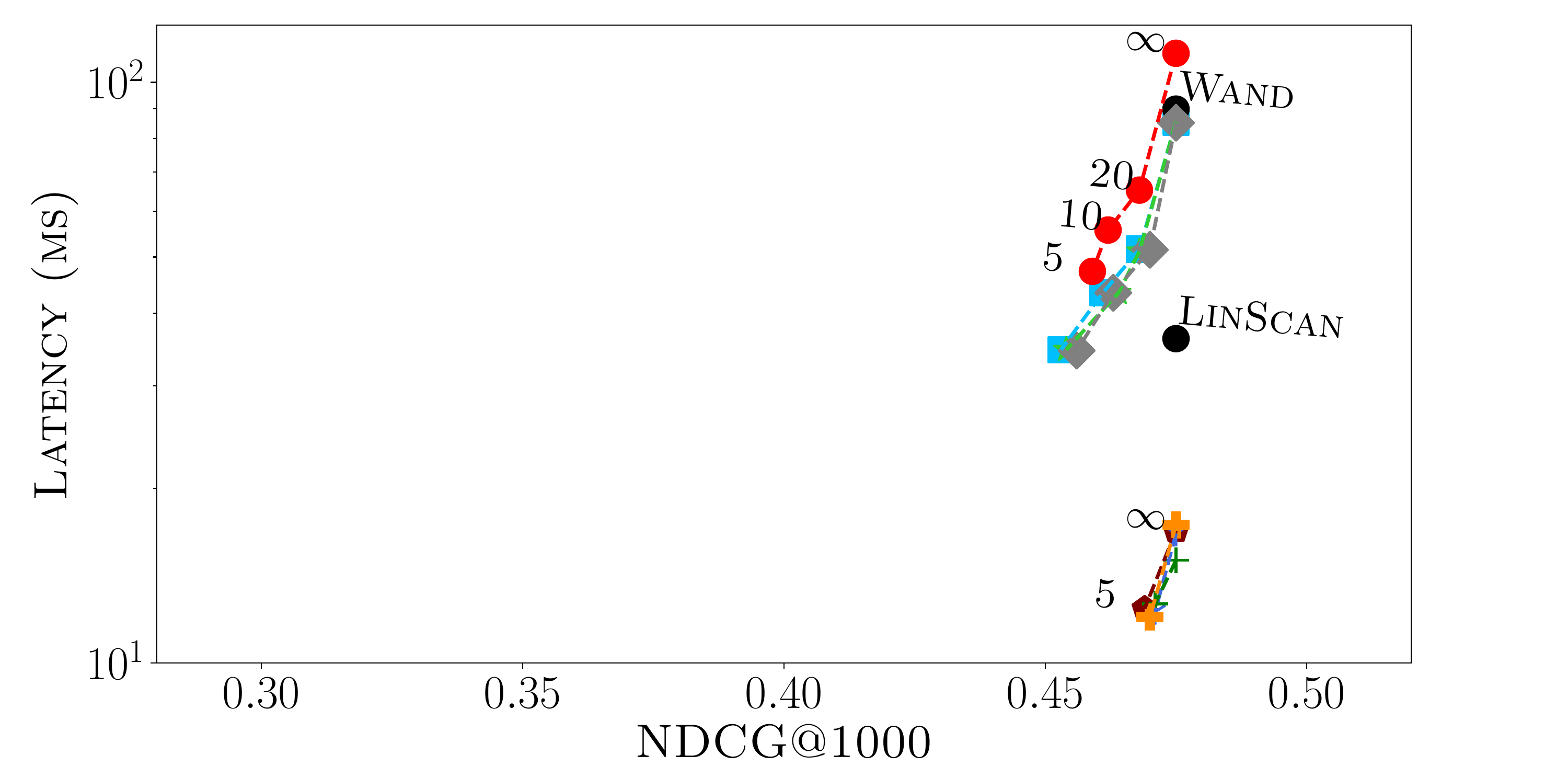}
\includegraphics[trim={0 0 0 5.2cm},clip,width=0.45\linewidth]{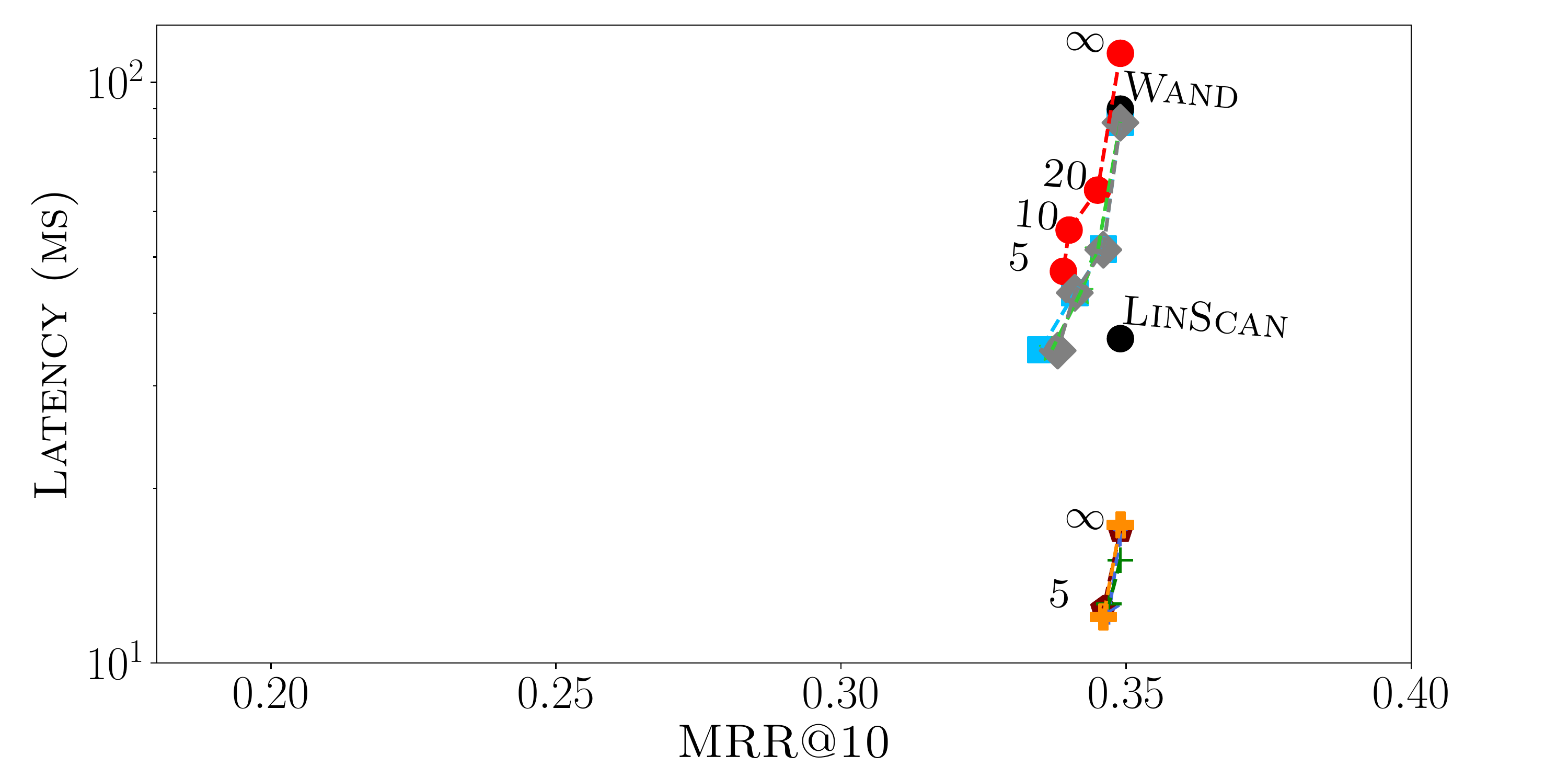}}
}

\caption{Trade-offs on the Intel processor between latency and NDCG$@1000$ (left column), and latency and MRR$@10$ (right column) for various vector collections (rows). As before, shapes (and colors) distinguish between different configurations of \sinnamon{}, and points on a line represent different time budgets $T$ (in milliseconds).}
\label{figure:evaluation:msmarco-passage-v1-icelake-end-metric}
\end{center}
\end{figure}
\FloatBarrier

\begin{figure}[th]
\begin{center}
\centerline{
\subfloat[$m \approx 25\% \psi_d$]{
\includegraphics[width=0.5\linewidth]{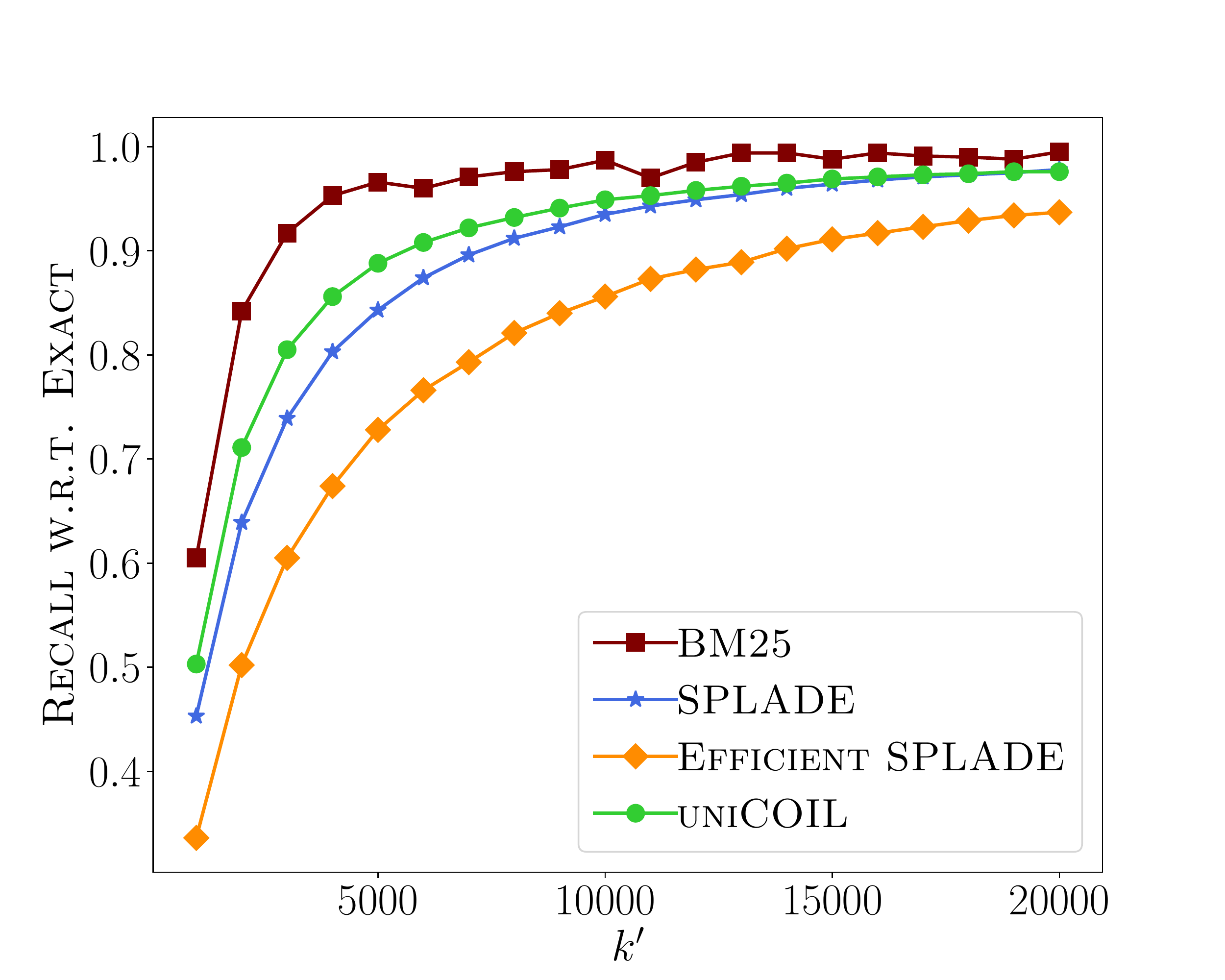}}
\subfloat[$m \approx 75\% \psi_d$]{
\includegraphics[width=0.5\linewidth]{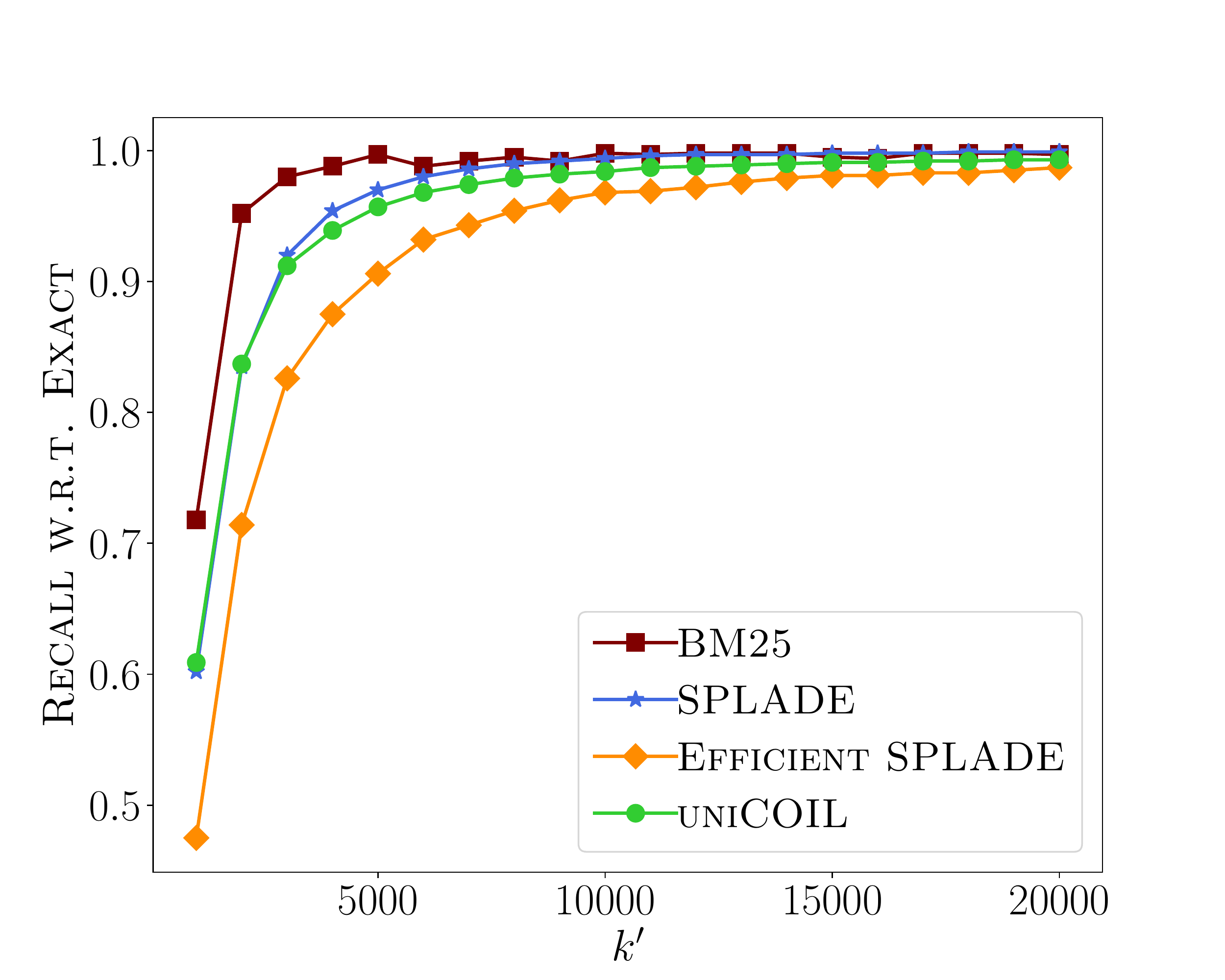}
}}

\caption{The effect of retrieving $k^\prime$ vectors in Algorithm~\ref{algorithm:retrieval:ranking} on retrieval accuracy in terms of recall with respect to exact retrieval. In the experiments leading to the above figure, we use a value of $m$ that is roughly $25\%$ and $75\%$ of the average $\psi_d$ in the datasets ($10$/$30$ for BM25, $30$/$90$ for SPLADE, $45$/$135$ for Efficient SPLADE, and $15$/$45$ for uniCOIL).}
\label{figure:evaluation:kprime-sweep}
\end{center}
\end{figure}

\subsubsection{Effect of $k^\prime$}
The experiments so far used a fixed value for the intermediate number of candidates $k^\prime$. We ask now if and to what degree changing this hyperparameter affects the trade-offs and shapes the interplay between the three factors. To study this effect, we choose a single value of $m$ for each dataset and measure retrieval accuracy as $k^\prime$ grows from $1,000$ to $20,000$. We expect retrieval accuracy to increase as more candidates are re-ranked by Algorithm~\ref{algorithm:retrieval:ranking}.

As we show in Figure~\ref{figure:evaluation:kprime-sweep}, this is indeed the effect we observe on our vector collections. In this figure, we retrieve $k^\prime$ vectors for two values of $m$ that are $25\%$ and $75\%$ of the $\psi_d$ of a dataset. As $k^\prime$ increases, so does retrieval accuracy. We must note, however, that a larger $k^\prime$ adversely affects overall latency as more documents must be re-ranked using exact scores, which results in a larger number of fetches from the raw vector storage. However, this effect can be amortized over multiple processors in the $\sinnamon{}^\parallel$ variant.

\begin{figure}[th!]
\begin{center}
\centerline{
\subfloat[BM25]{
\includegraphics[width=0.45\linewidth]{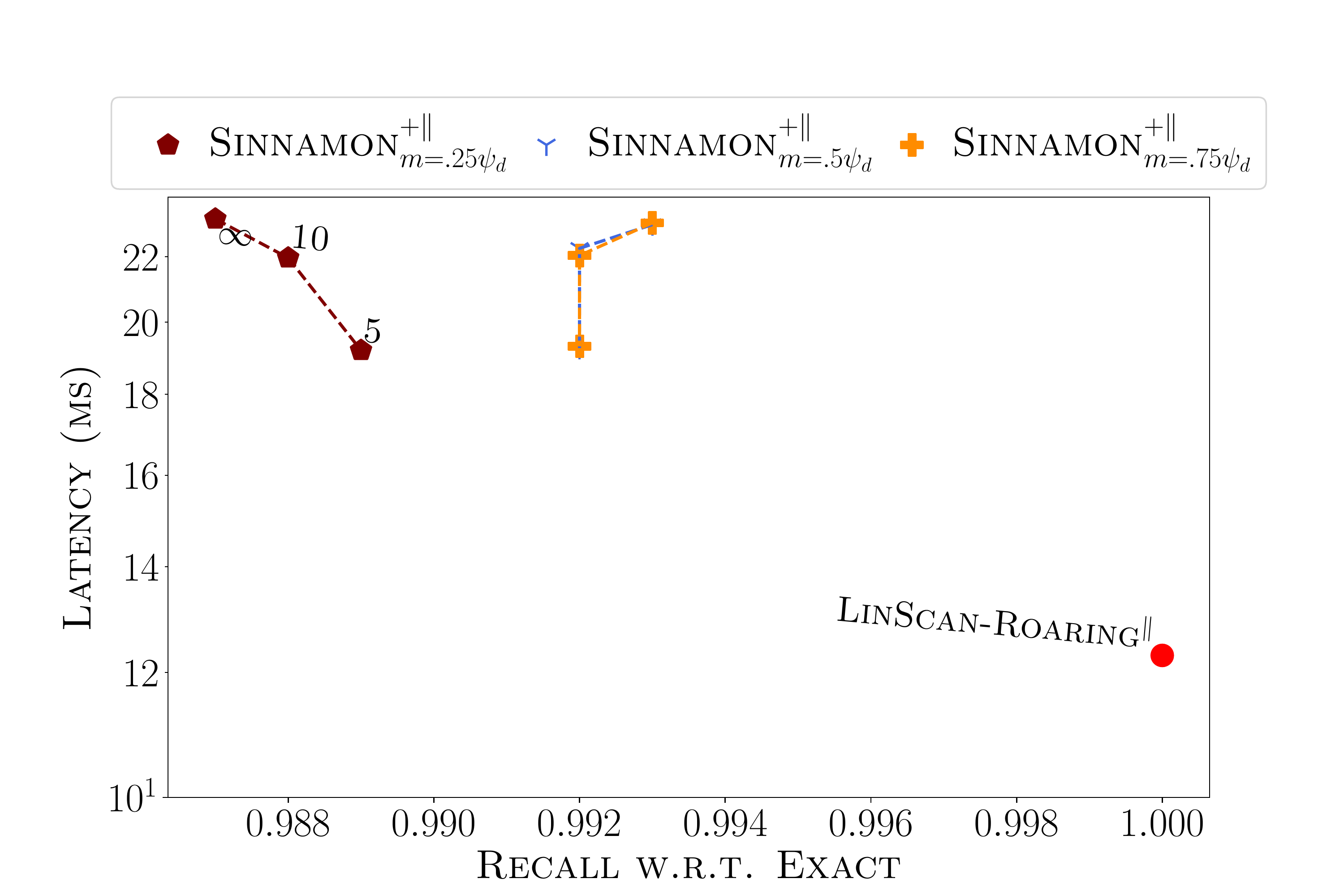}}
\subfloat[SPLADE]{
\includegraphics[width=0.45\linewidth]{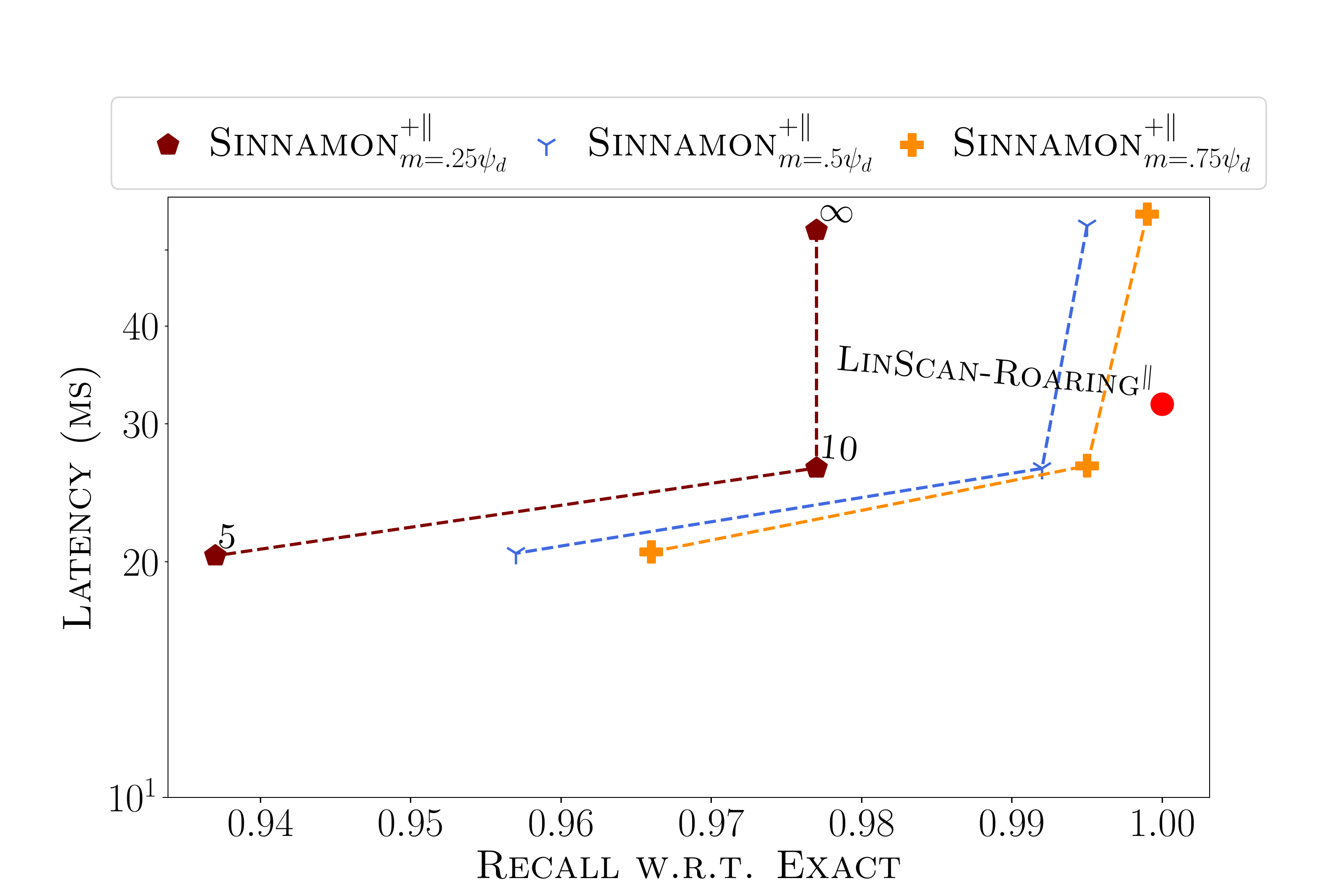}}
}

\centerline{
\subfloat[Efficient SPLADE]{
\includegraphics[width=0.45\linewidth]{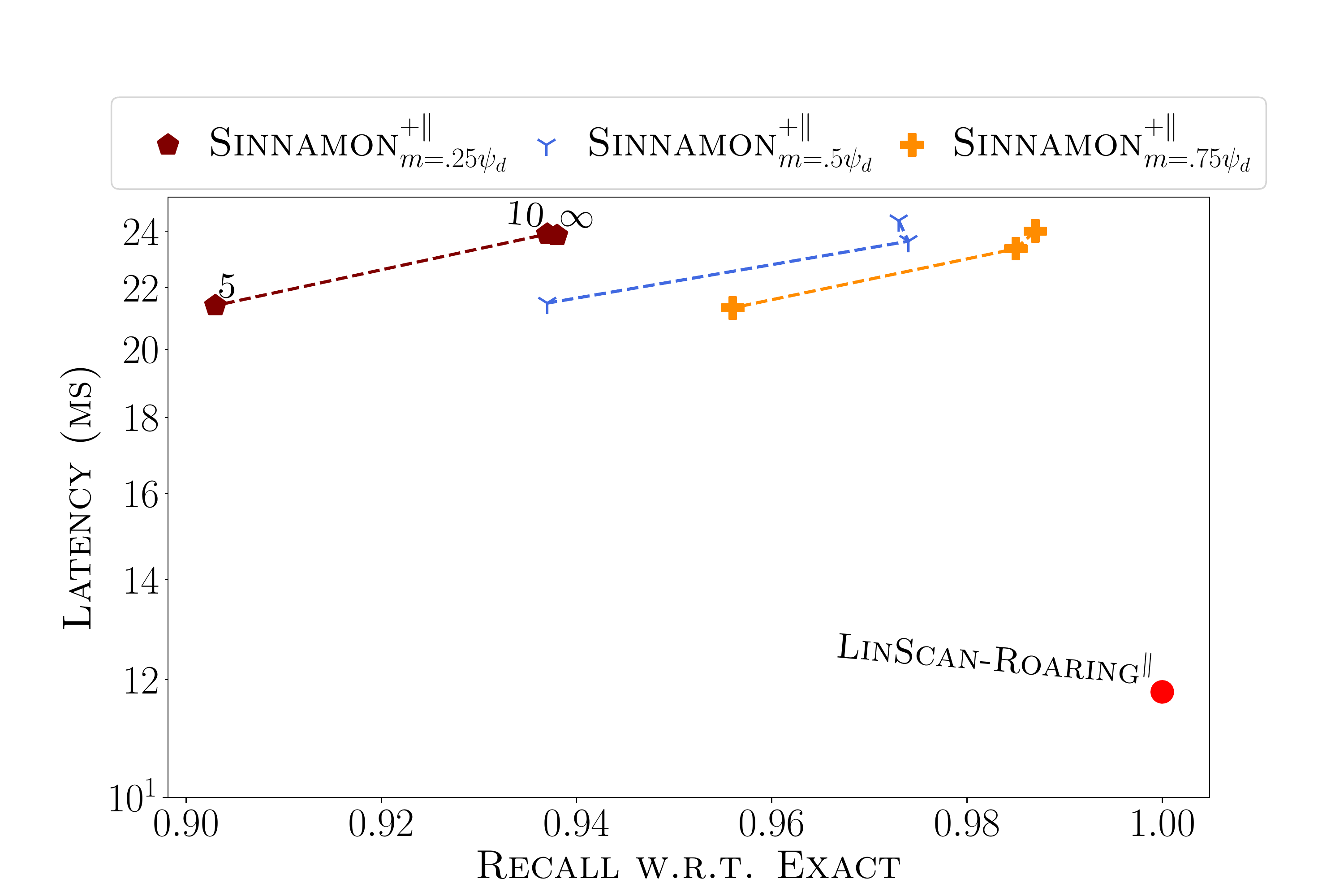}}
\subfloat[uniCOIL]{
\includegraphics[width=0.45\linewidth]{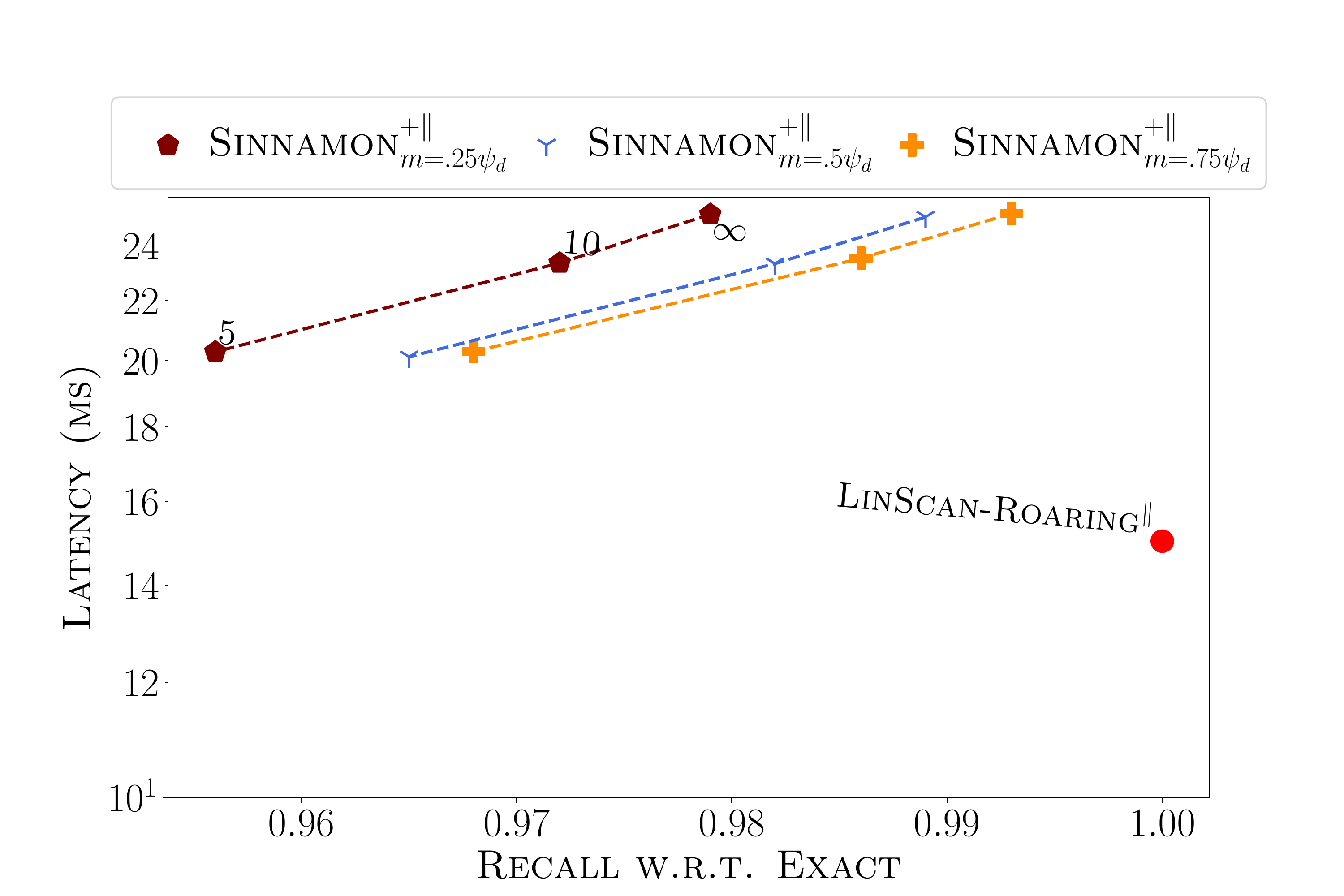}}
}

\caption{Trade-offs on the Intel processor between latency and retrieval accuracy for various vector collections by retrieving $k^\prime=20,000$. Shapes (and colors) distinguish between different configurations of \sinnamon{}, and points on a line represent different time budgets $T$ (in milliseconds).}
\label{figure:evaluation:msmarco-passage-v1-icelake-20k}
\end{center}
\end{figure}

Having made the observations above, we repeat the experiments presented earlier in this section for $k^\prime=20,000$ and visualize the trade-offs between latency and retrieval accuracy once more---index size remains the same as in Figure~\ref{figure:evaluation:msmarco-passage-v1-icelake}, so we leave it out. This time, we limit the experiments to $\sinnamon{}^\parallel$ as it is a more practical choice than a mono-CPU variant, following the note above. These results are presented in Figure~\ref{figure:evaluation:msmarco-passage-v1-icelake-20k} for the Intel processor with M1 results presented in Appendix~\ref{appendix:retrieval-m1-20k}.

While retrieval accuracy appears to improve across the board, including the anytime variants, with an increase in $k^\prime$, \sinnamon{}'s latency remains the same or degrades comparative to the $k^\prime=5,000$ setup before. This increase in cost is primarily driven by the exact score computation as expected. We observe that \sinnamon{}'s anytime variants perform faster than $\roaringlinscan{}^\parallel$ and with a high accuracy on the SPLADE collection. This is an interesting result if one notes that the differentiating factor between SPLADE and the other collections is SPLADE's relatively larger $\psi_q$, hinting that \sinnamon{} may indeed be a competitive algorithm when query vectors have a large number of non-zero entries. Finally, we note that the increase in quality as a result of using a stricter time budget in the BM25 plot is statistically insignificant and can be attributed to noise.

\subsection{Insertions and Deletions}
We have argued that \sinnamon{} is an online algorithm in the sense that it is capable of indexing new documents and deleting existing documents with a low-enough latency that the algorithm can function robustly in a streaming setting, where vectors change rapidly. In this section, we put that property to the test and examine the response time of the algorithm to updates to its index.

\begin{figure}[th!]
\begin{center}
\centerline{
\subfloat[Insertions]{
\includegraphics[width=0.45\linewidth]{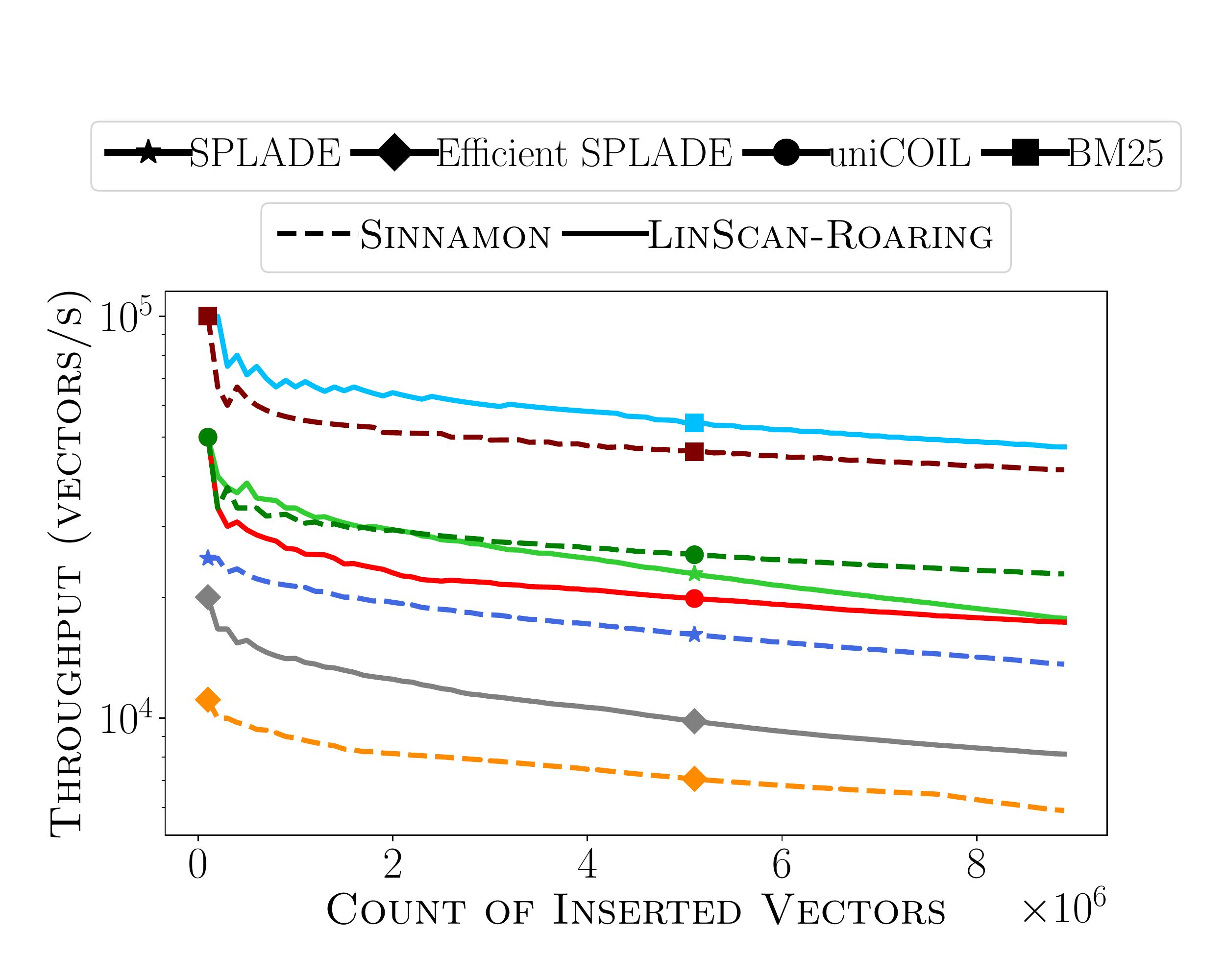}\label{figure:evaluation:msmarco-passage-v1-icelake-benchmark:insertions}}
\subfloat[Deletions]{
\includegraphics[width=0.45\linewidth]{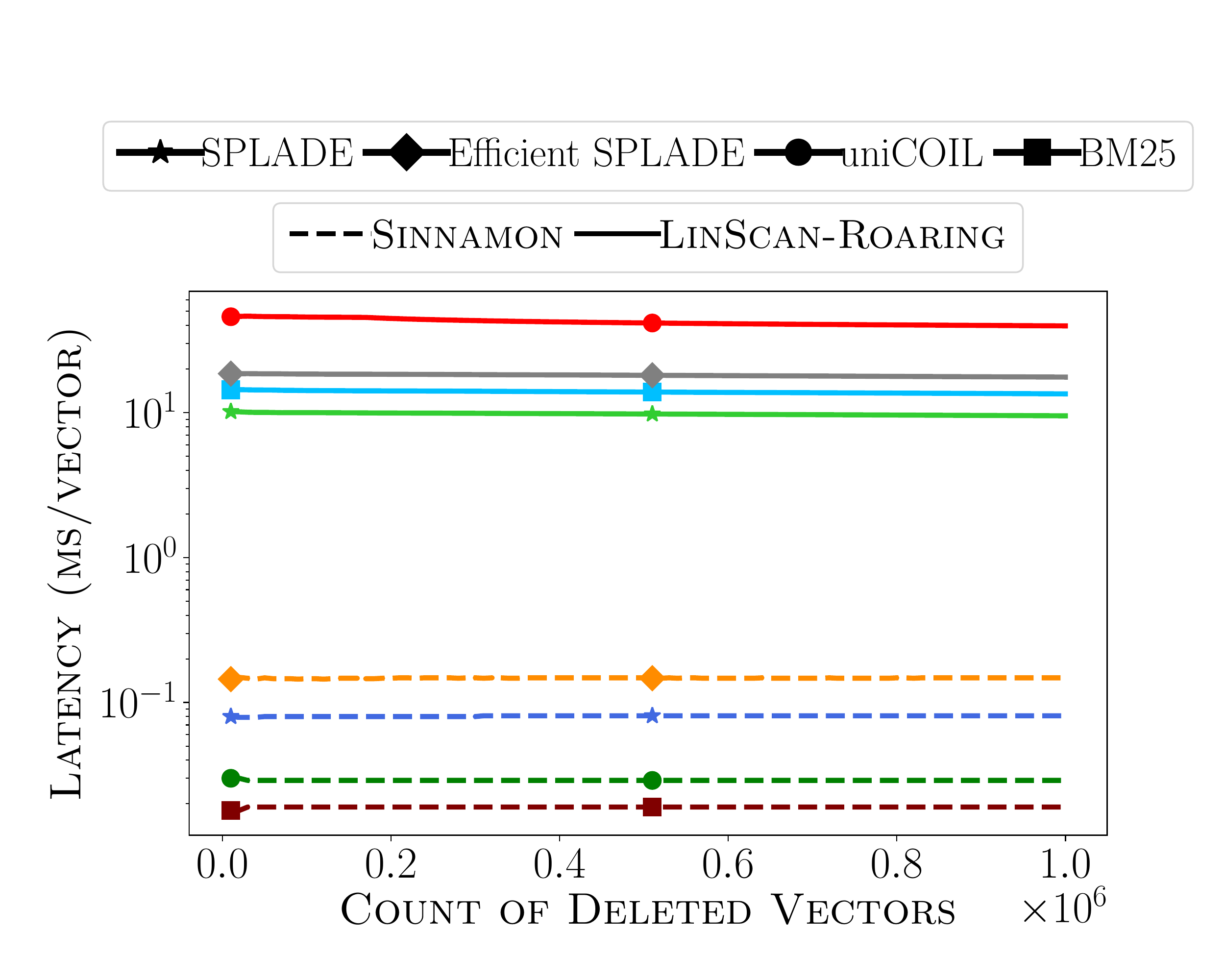}\label{figure:evaluation:msmarco-passage-v1-icelake-benchmark:deletions}}
}
\caption{Indexing throughput (vectors per second) and deletion latency (milliseconds per vector) as a function of the size of the index on the Intel processor.}
\label{figure:evaluation:msmarco-passage-v1-icelake-benchmark}
\end{center}
\end{figure}

To study this, we design the following experiment. We shuffle the vectors in a collection randomly and insert them into the index sequentially using a single thread. At the end of every insertion, we measure and record the elapsed time. We repeat this experiment $10$ times and measure mean throughput throughout the life of the index---that is, from when it is empty to the moment all vectors have been indexed. The results from these experiments are shown in Figure~\subref*{figure:evaluation:msmarco-passage-v1-icelake-benchmark:insertions} for the Intel processor with M1 results illustrated in Appendix~\ref{appendix:benchmark-m1}.

In this figure, we also include \roaringlinscan{} as a strong baseline: Because \roaringlinscan{} simply adds one posting to an inverted list for every non-zero coordinate in a document, it represents an insertion skyline. We observe that, as the index grows the insertion throughput decreases. We also note that \sinnamon{}'s indexing procedure is often slower than insertions in \roaringlinscan{} on every vector collection, but that its throughput---is nonetheless large enough to make the algorithm suitable for indexing large volumes of data online in most applications.

In addition to indexing, we also measure deletion latency in Figure~\subref*{figure:evaluation:msmarco-passage-v1-icelake-benchmark:deletions} for various collections on the Intel processor. To produce this figure, we start with an index containing all vectors in a collection and subsequently delete $1$ million randomly-selected vectors from the index. We measure the latency of each deletion in milliseconds and report the means of $10$ trials in this figure. Both algorithms show a stable latency across datasets, with a negligible speed-up as more vectors are deleted. But we observe that \sinnamon{} deletes vectors with a response time that is about an order of magnitude faster than that in \roaringlinscan{}.

\subsection{Real Vectors}
\label{section:evaluation:synthetic}
As we stated in the beginning of this section, all datasets used in our experiments so far consist of \emph{non-negative} sparse vectors. While we claim theoretically that \sinnamon{} handles real-valued vectors just as well as non-negative vectors, we have not yet demonstrated evidence to support that claim in practice. In this section, we do just that. But because, to the best of our knowledge, no benchmark dataset or embedding model exists that produces real-valued sparse vectors, we resort to generating synthetic data for the experiments in this section.

To that end, we fix the average number of non-zero coordinates $\psi_d = \psi_q = \psi$, and the dimensionality of the space $n$ as hyperparameters. To generate a document or query vector, we decide if a coordinate is non-zero with probability $\psi / n$. If it is, we draw its value from the standard normal distribution. We repeat this process to generate five million vectors to form the document set, and $1,000$ vectors to form the query set. We then run a limited set of experiments with a modified variant of WAND that can handle real values, \roaringlinscan{}, and \sinnamon{} to retrieve the top $k=1,000$ documents, where $k^\prime= 20,000$ in \sinnamon{}. We experiment with two configurations: a) $\psi=100$ and $n=10,000$, a dataset we call $G_{100}$; and b) $\psi=200$ and $n=32,000$, a dataset we call $G_{200}$.

Table~\ref{table:evaluation:real-vectors} summarizes the results of our experiments. We observe trends that are consistent with the findings in the preceding sections. Notably, WAND scales poorly when $\psi_q$ is large. \roaringlinscan{} achieves much better latency with a reasonable memory footprint. Finally, \sinnamon{} with $2m=75\% \psi_d$ reduces memory usage, with anytime variants achieving better latency at the cost of accuracy. Overall, these results confirm that \sinnamon{} on real-valued vector datasets offers similar trade-offs as $\sinnamon{}^+$ does on non-negative vectors.

\begin{table*}[t]
\caption{Comparison of different retrieval algorithms on a dataset of $5$ million vectors with $\psi$ non-zero coordinates, drawn from the standard normal distribution. In $G_{100}$, $\psi_q = \psi_d = \psi=100$ and dimensionality $n=10,000$. In $G_{200}$, $\psi=200$ and dimensionality $n=32,000$.}
\label{table:evaluation:real-vectors}
\begin{center}
\begin{sc}
\begin{tabular}{cc|ccc}
& Algorithm & Index Size (GB) & Query Lateny (ms) & Recall w.r.t Exact \\
\toprule
\midrule
\parbox[t]{2mm}{\multirow{4}{*}{\rotatebox[origin=c]{90}{$G_{100}$}}}
& WAND & 5.2 & 2,437 & 1 \\
& \roaringlinscan{}{} & 2.0 & 151 & 1 \\
& \sinnamon{}$_{2m=74,T=40}$ & 1.7 & 87 & 0.97 \\
& \sinnamon{}$_{2m=74,T=20}$ & 1.7 & 66 & 0.89 \\
\midrule
\parbox[t]{2mm}{\multirow{4}{*}{\rotatebox[origin=c]{90}{$G_{200}$}}}
& WAND & 8.4 & 4,738 & 1 \\
& \roaringlinscan{}{} & 4.0 & 182 & 1 \\
& \sinnamon{}$_{2m=150,T=40}$ & 3.5 & 107 & 0.92 \\
& \sinnamon{}$_{2m=150,T=20}$ & 3.5 & 86 & 0.83 \\
\bottomrule
\end{tabular}
\end{sc}
\end{center}
\end{table*}

\subsection{Effect of Parallelism}
We close this section by touching on the more secondary question of how our implementation of \sinnamon{}$^\parallel$ scales with respect to the number of threads. In particular, we measure the speed-up for a given number of CPU cores on the Intel platform, where we define speed-up as the ratio between the latency of a single-threaded \sinnamon{} to its parallel execution. For completeness, we run these experiments on the synthetic vector datasets containing real-valued vectors.

We show the query latency and speed-up figures in Table~\ref{table:evaluation:speed-up}. From these limited experiments, it appears that the speed-up is sub-linear in the number of CPU cores. We attribute the rather large waste to our language of choice: We implemented the parallel version in Rust where memory de-referencing and sharing between processes introduces run-time overhead to the parallel program. In particular, because it proved challenging to bypass the memory safety mechanisms inherent in Rust, we are only able to chunk up the entire sketch matrix, rather than chunking up just the inverted list. As a result, the load-balancing between threads is highly non-uniform, leading to sub-optimal execution time. We are confident that, with better engineering, this gap can be closed and a near-linear growth can be attained.

\begin{table*}[t]
\caption{Query latency (in milliseconds) with speed-up in parentheses (i.e., ratio between single-threaded latency \sinnamon{}$_{T=\infty}$ with \sinnamon{}$_{T=\infty}^\parallel$) on the Intel platform with the indicated number of cores. Experiments are conducted on the synthetic datasets $G_{100}$ and $G_{200}$ of Table~\ref{table:evaluation:real-vectors}.}
\label{table:evaluation:speed-up}
\begin{center}
\begin{sc}
\begin{tabular}{c|cccc}
Dataset & 1 & 2 & 4 & 8 \\
\toprule
\midrule
$G_{100}$ & 174 & 95 (1.83$\times$) & 59 (2.95$\times$) & 40 (4.35$\times$) \\
$G_{200}$ & 253 & 139 (1.82$\times$) & 86 (2.94$\times$) & 54 (4.68$\times$) \\
\bottomrule
\end{tabular}
\end{sc}
\end{center}
\end{table*}

\section{Discussion and Conclusion}
\label{section:discussion}

We presented a new take on one of the oldest problems in information retrieval by generalizing its domain: How do we efficiently and effectively find $k$ nearest neighbors with respect to inner product over sparse \emph{real} vectors in a high-dimensional space, all in a \emph{streaming} system. As we noted in our introduction to the work, this problem is well-understood when we impose strict constraints on the structure of the space such as non-negativity of vectors, Zipfian distribution of values, higher sparsity rate in query vectors, and near-stationarity of the vector collection. But when these assumptions are violated, we showed that existing methods from the information retrieval literature suffer a substantial performance degradation.

We went back to the drawing board and started with a na\"ive algorithm that proved surprisingly robust and powerful. \linscan{}, as we call it, creates an inverted index populated with pairs of document identifiers and coordinate values. Retrieval is trivial by way of a coordinate-at-a-time algorithm.

\linscan{} is simple to implement and provides an exact solution to our problem. But what if one does not need an exact solution and can do with an approximate top-$k$ set? It turns out that by accepting a degree of error in the solution, we can design an algorithm that can be flexible in the space and time it requires to solve the problem. We proposed \sinnamon{} as an attempt to realize that idea and explained how it inherently offers levers to trade off memory for accuracy for latency.

Our prose hinted at a few research directions we are keen to explore going forward. One is the question of approximate inverted indexes: Just as vector values can be approximated with a compact sketch with some quantifiable error, so can, we hypothesize, the inverted lists. We wish to study if noisy inverted lists can reduce the overall memory footprint possibly at the expense of accuracy.

Staying with memory usage, we also deem it worthy of a thorough empirical investigation to understand the effect of other forms of compression on the inverted lists as well as the sketch matrix, especially in the case of an offline algorithm where it is safe to assume that the vector dataset remains almost stationary. Does knowing the data distribution \emph{a priori} help us quantize vectors or document sketches to make the overall structure more compact? Can we design a sketch with a smaller error probability if we relaxed the streaming requirement?

As an example of what is possible, we take the synthetic datasets of Section~\ref{section:evaluation:synthetic} and assume that the dataset is stationary. We then use a different compression scheme to encode inverted lists in an offline manner. On the $G_{100}$ dataset, switching from Roaring to PForDelta~\cite{pfordelta} reduces the overall size of the index from $1.7$ GB to $1.3$ GB---a reduction of approximately $23\%$. On the $G_{200}$ dataset, we observe a similar reduction in index size.

We believe another effort that may lead to a lower memory consumption is by taking advantage of $h$, the number of random mappings in \sinnamon{}---a hyperparameter that we left largely unexplored in our empirical evaluation of the algorithm. In particular, we showed that by using more random mappings, we may see an increase in the probability of error, but where the probability mass shifts closer to $0$. This suggests that, in applications or environments where consuming less memory is preferred over achieving the lowest latency, it makes more sense to use $2$ or more random mappings. What number of random mappings is appropriate can be determined by the theoretical results presented in this work for a particular data distribution.

Memory and latency aside, we believe our theoretical analysis of the sketch and the retrieval algorithm raise a few interesting questions that we wish to investigate. For example, we developed a good understanding of the magnitude of error and its impact on the final retrieval quality. Can we now use this knowledge to decide what value of $k^\prime$ can give us a particular retrieval accuracy? Can that prediction be done on a per-query basis? Our initial thoughts are that that is indeed possible, but materializing it and quantifying its impact on latency needs further experiments.

We can even take a step beyond sparse vectors and, in what may be unexpected, generalize the top-$k$ retrieval problem to, simply, vectors. More specifically, vectors that may have a ``dense'' part and a ``sparse'' part. When is it sufficient, theoretically and practically, to break up such vectors into distinct collections of their dense part and their sparse part, and solve the two problems separately before re-ranking the top candidates from the two solution sets? When does it make sense to \emph{jointly}, with a single index, solve the top-$k$ retrieval problem over such hybrid vectors? And how? Can these results over hybrid vectors help us design better matrix multiplication algorithms for sparse, or dense-sparse matrices? These and similar questions will be the focus of our future research.


\bibliographystyle{ACM-Reference-Format}
\bibliography{main}

\appendix
\section{Proof of Equation~(\ref{equation:analysis:prob-error-gaussian})}
\label{appendix:proof:prob-error-gaussian}

We begin by proving the special case where $h=1$.

\begin{lemma}
Suppose a sparse random vector $X \in \mathbb{R}^n$ is encoded into an $m$-dimensional sketch with \sinnamon{} using Algorithm~\ref{algorithm:indexing} with a single random mapping ($h=1$). Suppose the probability that a coordinate is active, $p_i$, is equal to $p$ for all coordinates. Suppose further that the value of active coordinates $X_i$s are drawn from $\text{Gaussian}(\mu=0, \sigma=1)$. The probability that the upper-bound sketch overestimates the value of $X_i$ upon decoding as $\overline{X}_i$ is the following quantity:
\begin{equation*}
    \mathbb{P}\big[ \overline{X}_i > X_i \big] \approx 1 - \frac{m}{(n - 1)p} (1 - e^{-\frac{(n - 1)p}{m}} ).
\end{equation*}
\label{appendix:lemma:h1}
\end{lemma}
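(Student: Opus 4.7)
The plan is to specialize the general probability-of-error formula from Equation~(\ref{equation:analysis:prob-of-error}) to the case $h=1$ and $p_i = p$, and then evaluate the resulting one-dimensional integral via a change of variables. First I would substitute $h=1$ and $\sum_{j\neq i} p_j = (n-1)p$ into Equation~(\ref{equation:analysis:prob-of-error}) to obtain
\begin{equation*}
    \mathbb{P}\big[\overline{X}_i > X_i\big] \approx \int \Big[1 - e^{-\frac{(n-1)p}{m}(1 - \Phi(\alpha))}\Big] \phi(\alpha)\, d\alpha
    = 1 - \int e^{-\frac{(n-1)p}{m}(1 - \Phi(\alpha))} \phi(\alpha)\, d\alpha,
\end{equation*}
where the last equality uses the fact that $\phi$ integrates to $1$ on the support.

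The key step is the substitution $u = 1 - \Phi(\alpha)$, so that $du = -\phi(\alpha)\, d\alpha$. Under this change of variables, the limits $\alpha \to -\infty$ and $\alpha \to +\infty$ map to $u = 1$ and $u = 0$, respectively, converting the integral into the elementary form
\begin{equation*}
    \int_{-\infty}^{\infty} e^{-\frac{(n-1)p}{m}(1 - \Phi(\alpha))} \phi(\alpha)\, d\alpha
    \;=\; \int_0^1 e^{-\frac{(n-1)p}{m} u}\, du
    \;=\; \frac{m}{(n-1)p}\Big(1 - e^{-\frac{(n-1)p}{m}}\Big).
\end{equation*}
Plugging this back into the previous display yields exactly the claimed expression.

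There is no real obstacle here, since the substitution collapses the integral in a distribution-free way. In fact, I would note as a remark that the $h=1$ formula does not depend on the Gaussian assumption at all: any continuous distribution whose CDF $\Phi$ ranges over $[0,1]$ gives the same closed form, because only the measure-preserving nature of $u = 1 - \Phi(\alpha)$ is used. The Gaussian assumption only becomes essential when extending to $h > 1$, where one must handle the $h$-th power via a binomial expansion and the resulting integrals involve $\Phi(\alpha)^k$ terms that do depend on the specific shape of $\phi$; this Gaussian-specific step is what would then be needed to obtain the full Equation~(\ref{equation:analysis:prob-error-gaussian}) from Lemma~\ref{appendix:lemma:h1}.
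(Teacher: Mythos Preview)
Your proof is correct and takes a cleaner route than the paper's. The paper splits the integral at $\alpha = 0$, introduces the auxiliary function $\lambda(\alpha) = \int_0^\alpha \phi(t)\,dt$, rewrites $1-\Phi(\alpha) = \tfrac{1}{2}\mp\lambda(\alpha)$ on each half-line, and then recognizes $\tfrac{1}{\beta}e^{-\beta/2}e^{\beta\lambda(\alpha)}$ as an antiderivative before recombining the two pieces. Your single substitution $u = 1 - \Phi(\alpha)$ accomplishes the same thing in one stroke and, as you note, makes the distribution-independence manifest; this is exactly why the Uniform and Gaussian rows of Table~\ref{table:analysis:prob-of-error:discrete} coincide.

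One correction to your closing remark, though: the $h>1$ case is \emph{also} distribution-free by the very same substitution. After the binomial expansion the integrands are of the form $e^{-\frac{kh(n-1)p}{m}(1-\Phi(\alpha))}\phi(\alpha)$, not $\Phi(\alpha)^k\phi(\alpha)$, so the change of variables $u = 1-\Phi(\alpha)$ again reduces each term to $\int_0^1 e^{-cu}\,du$ regardless of the underlying $\phi$. The paper's proof of the general corollary repeats its split-at-zero computation, but your approach yields the full Equation~(\ref{equation:analysis:prob-error-gaussian}) just as directly and shows it holds for any continuous value distribution, not only the Gaussian.
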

\begin{proof}
    The result from Equation~(\ref{equation:analysis:prob-of-error}) tells us that:
    \begin{equation*}
        \mathbb{P}\big[ \overline{X}_i > X_i \big] \approx \int {\big[ 1 - e^{-\frac{h}{m} (1 - \Phi(\alpha)) (n - 1)p} \big]^h d\mathbb{P}(\alpha)} \overset{h=1}{=} \int {\big[ 1 - e^{-\frac{(1 - \Phi(\alpha)) (n - 1) p}{m}} \big] d\mathbb{P}(\alpha)}.
    \end{equation*}

    Given that $X_i$s are drawn from a Gaussian distribution, and using the approximation above, we can rewrite the probability of error as:
    \begin{equation*}
        \mathbb{P}\big[ \overline{X}_i > X_i \big] \approx \frac{1}{\sqrt{2\pi}} \int_{-\infty}^{\infty} {\big[ 1 - e^{-\frac{(n - 1)p}{m} (1 - \Phi(\alpha))} \big] e^{-\frac{\alpha^2}{2}} d\alpha}.
    \end{equation*}
    We now break up the right hand side into the following three sums, replacing $(n - 1)p/m$ with $\beta$ for brevity:
    \begin{align}
        \mathbb{P}\big[ \overline{X}_i > X_i \big] &\approx
        \int_{-\infty}^{\infty} \frac{1}{\sqrt{2\pi}} e^{-\frac{\alpha^2}{2}} d\alpha \label{equation:appendix:proof:prob-error-gaussian:one} \\
        & - \int_{-\infty}^{0} \frac{1}{\sqrt{2\pi}} e^{-\beta (1 - \Phi(\alpha))} e^{-\frac{\alpha^2}{2}} d\alpha \label{equation:appendix:proof:prob-error-gaussian:negative} \\
        & - \int_{0}^{\infty} \frac{1}{\sqrt{2\pi}} e^{-\beta (1 - \Phi(\alpha))} e^{-\frac{\alpha^2}{2}} d\alpha \label{equation:appendix:proof:prob-error-gaussian:positive}.
    \end{align}
    The sum in~(\ref{equation:appendix:proof:prob-error-gaussian:one}) is equal to the quantity $1$. Let us turn to~(\ref{equation:appendix:proof:prob-error-gaussian:positive}) first. We have that:
    \begin{equation}
        1 - \Phi(\alpha) \overset{\alpha > 0}{=} \frac{1}{2} - \underbrace{\int_{0}^{\alpha} \frac{1}{\sqrt{2\pi}} e^{-\frac{t^2}{2}}}_{\lambda(\alpha)} dt.
    \end{equation}
    As a result, we can write:
    \begin{align}
        \int_{0}^{\infty} \frac{1}{\sqrt{2\pi}} e^{-\beta (1 - \Phi(\alpha))} e^{-\frac{\alpha^2}{2}} d\alpha &= \int_{0}^{\infty} \frac{1}{\sqrt{2\pi}} e^{-\beta (\frac{1}{2} - \lambda(\alpha))} e^{-\frac{\alpha^2}{2}} d\alpha \nonumber \\
        &= e^{-\frac{\beta}{2}} \int_{0}^{\infty} \frac{1}{\sqrt{2\pi}} e^{\beta \lambda(\alpha)} e^{-\frac{\alpha^2}{2}} d\alpha \nonumber \\
        &= \frac{1}{\beta} e^{-\frac{\beta}{2}} e^{\beta\lambda(\alpha)} \big|_{0}^{\infty} = \frac{1}{\beta} e^{-\frac{\beta}{2}} \big( e^\frac{\beta}{2} - 1 \big) \nonumber \\
        &= \frac{1}{\beta} \big( 1 - e^{-\frac{\beta}{2}} \big)
    \end{align}
    By similar reasoning, and noting that:
    \begin{equation}
        1 - \Phi(\alpha) \overset{\alpha < 0}{=} \frac{1}{2} + \underbrace{\int_{\alpha}^{0} \frac{1}{\sqrt{2\pi}} e^{-\frac{t^2}{2}}}_{-\lambda(\alpha)} dt,
    \end{equation}
    we arrive at:
    \begin{equation}
        \int_{-\infty}^{0} \frac{1}{\sqrt{2\pi}} e^{-\beta (1 - \Phi(\alpha))} e^{-\frac{\alpha^2}{2}} d\alpha = \frac{1}{\beta} e^{-\frac{\beta}{2}} \big( 1 - e^{-\frac{\beta}{2}} \big)
    \end{equation}

    Plugging the results above into Equations~(\ref{equation:appendix:proof:prob-error-gaussian:one}),~(\ref{equation:appendix:proof:prob-error-gaussian:negative}), and~(\ref{equation:appendix:proof:prob-error-gaussian:positive}) results in:
    \begin{align*}
        \mathbb{P}\big[ \overline{X}_i > X_i \big] &\approx 1 - \frac{1}{\beta} \big( 1 - e^{-\frac{\beta}{2}} \big) - \frac{1}{\beta} e^{-\frac{\beta}{2}} \big( 1 - e^{-\frac{\beta}{2}} \big) \\
        &= 1 - \frac{1}{\beta} \big( 1 - e^{-\frac{\beta}{2}} \big) \big( 1 + e^{-\frac{\beta}{2}} \big) \\
        &= 1 - \frac{m}{(n-1)p} \big( 1 - e^{-\frac{(n - 1)p}{m}} \big),
    \end{align*}
    which completes the proof.
\end{proof}

Given the result above, the solution for the general case of $h > 0$ is straightforward to obtain. We show this result in the lemma below.

\begin{lemma}
Under similar conditions as Lemma~\ref{appendix:lemma:h1} but assuming $h \geq 1$, the probability that the upper-bound sketch overestimates an active coordinate $X_i$ upon decoding it as $\overline{X}_i$ is the following quantity:
\begin{equation*}
    \mathbb{P}\big[ \overline{X}_i > X_i \big] \approx 1 + \sum_{k=1}^{h} {h \choose k} (-1)^k \frac{m}{kh(n - 1)p} (1 - e^{-\frac{kh(n - 1)p}{m}} ).
\end{equation*}
\end{lemma}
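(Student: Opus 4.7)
My plan is to start from the approximation in Equation~(\ref{equation:analysis:prob-of-error}) specialized to the uniform activity probability $p_i = p$, expand the outer $h$-th power via the binomial theorem, and reduce each summand to an elementary integral using the probability integral transform. This avoids the case split into positive/negative $\alpha$ that was used for the $h=1$ special case in Lemma~\ref{appendix:lemma:h1}, and treats the general $h$ case in one stroke.

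Concretely, I would first abbreviate $\gamma \triangleq h(n-1)p/m$ so that
\begin{equation*}
\mathbb{P}[\overline{X}_i > X_i] \;\approx\; \int_{-\infty}^{\infty} \bigl[\,1 - e^{-\gamma(1-\Phi(\alpha))}\,\bigr]^h \,\phi(\alpha)\,d\alpha.
\end{equation*}
Then, applying the binomial theorem to the bracketed factor and interchanging the finite sum with the integral, this becomes
\begin{equation*}
\mathbb{P}[\overline{X}_i > X_i] \;\approx\; \sum_{k=0}^{h} \binom{h}{k}(-1)^k \int_{-\infty}^{\infty} e^{-k\gamma(1-\Phi(\alpha))}\,\phi(\alpha)\,d\alpha.
\end{equation*}
The $k=0$ term contributes $1$. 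For each $k \geq 1$, I would apply the change of variables $u = \Phi(\alpha)$, $du = \phi(\alpha)\,d\alpha$, which maps $(-\infty,\infty)$ bijectively to $(0,1)$ since $\Phi$ is the CDF of a continuous Gaussian. The integral collapses to
\begin{equation*}
\int_0^1 e^{-k\gamma(1-u)}\,du \;=\; e^{-k\gamma}\cdot\frac{e^{k\gamma}-1}{k\gamma} \;=\; \frac{1-e^{-k\gamma}}{k\gamma}.
\end{equation*}

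Substituting $\gamma = h(n-1)p/m$ back in yields $\tfrac{m}{kh(n-1)p}\bigl(1-e^{-kh(n-1)p/m}\bigr)$ for each $k\geq 1$, which combined with the $k=0$ contribution of $1$ gives precisely the claimed formula. There is no real obstacle here: the only thing to check is that the integral interchange is legitimate, which is immediate because the sum is finite and each integrand is uniformly bounded. I would also remark in passing that the proof never used the specific Gaussian form of $\phi$ beyond the fact that $\Phi$ is continuous with range $(0,1)$; the expression thus holds verbatim for any continuous symmetric value distribution, which both explains why Lemma~\ref{appendix:lemma:h1} specialises correctly at $h=1$ and provides a cross-check against Equation~(\ref{equation:analysis:prob-of-error}) evaluated numerically in Table~\ref{table:analysis:prob-of-error:discrete}.
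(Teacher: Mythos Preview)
Your proof is correct and shares the same opening move as the paper---a binomial expansion of $[1-e^{-\gamma(1-\Phi(\alpha))}]^h$ followed by term-by-term integration---but you evaluate the resulting integrals differently. The paper, at the step ``Following the proof of the previous lemma,'' implicitly reuses the machinery of Lemma~\ref{appendix:lemma:h1}: splitting the integration domain at $\alpha=0$, introducing the auxiliary function $\lambda(\alpha)=\int_0^\alpha \phi(t)\,dt$, and computing the two halves separately before recombining them. Your probability-integral-transform substitution $u=\Phi(\alpha)$ collapses each integral in one line and, as you observe, never uses the Gaussian form of $\phi$ beyond continuity of $\Phi$. This is both shorter and strictly more general than the paper's route; it also makes transparent why the same closed form appears in Table~\ref{table:analysis:prob-of-error:discrete} for the uniform and Gaussian rows. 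The paper's approach buys nothing additional here---the case split was natural for the $h=1$ lemma where the integrand was not yet expanded, but once the binomial expansion is in hand your substitution is the cleaner choice.
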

\begin{proof}

    Using the binomial theorem, we have that:
    \begin{equation*}
        \mathbb{P}\big[ \overline{X}_i > X_i \big] \approx \int {\big[ 1 - e^{-\frac{h}{m} (1 - \Phi(\alpha)) (n - 1) p } \big]^h d\mathbb{P}(\alpha)} = \sum_{k=0}^{h} { {h \choose k} \int \big( -e^{-\frac{h}{m} (1 - \Phi(\alpha)) (n - 1) p } \big)^{k} d\mathbb{P}(\alpha)}.
    \end{equation*}
    We rewrite the expression above for Gaussian variables to arrive at:
    \begin{equation*}
        \mathbb{P}\big[ \overline{X}_i > X_i \big] \approx \frac{1}{\sqrt{2\pi}} \sum_{k=0}^{h} {h \choose k} \int_{-\infty}^{\infty} {\big(- e^{-\frac{h(n - 1)p}{m} (1 - \Phi(\alpha))} \big)^{k} e^{-\frac{\alpha^2}{2}} d\alpha}.
    \end{equation*}
    Following the proof of the previous lemma, we can expand the right hand side as follows:
    \begin{align*}
        \mathbb{P}\big[ \overline{X}_i > X_i \big] &\approx 1 + \frac{1}{\sqrt{2\pi}} \sum_{k=1}^{h} {h \choose k} (-1)^k \int_{-\infty}^{\infty} {e^{-\frac{kh(n - 1)p}{m} (1 - \Phi(\alpha))} e^{-\frac{\alpha^2}{2}} d\alpha} \\
        &= 1 + \sum_{k=1}^{h} {h \choose k} (-1)^k \frac{m}{kh(n - 1)p} (1 - e^{-\frac{kh(n - 1)p}{m}} ),
    \end{align*}
    which completes the proof.
\end{proof}

\section{Proof of Equation~(\ref{equation:analysis:dist-error-gaussian})}
\label{appendix:proof:dist-error-gaussian}

\begin{lemma}
Under the conditions of Corollary~\ref{corollary:analysis:dist-error-gaussian}, the cumulative distribution function of $\overline{Z}_i$ can then be approximated as follows:
\begin{equation*}
    \mathbb{P}[\overline{Z}_i \leq \delta] \approx 1 - \big[ 1 - e^{-\frac{h (n - 1)p}{m} (1 - \Phi(\delta)) } \big]^h,
\end{equation*}
where $\Phi(\cdot)$ is the CDF of a zero-mean Gaussian with standard deviation $\sigma\sqrt{2}$.
\end{lemma}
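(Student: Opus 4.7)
The plan is to specialize Theorem~\ref{theorem:cdf-upperbound-sketch} to the zero-mean Gaussian case with $p_i = p$ for every $i$, and then collapse the resulting integral using the same first-order Taylor technique that appeared in the proof of Equation~(\ref{equation:analysis:prob-of-error}). Writing $\phi_\sigma$ and $\Phi_\sigma$ for the PDF and CDF of $\text{Gaussian}(0, \sigma)$, the theorem directly yields
\begin{equation*}
    \mathbb{P}[\overline{Z}_i \leq \delta;\; X_i \text{ active}] \approx 1 - \int \big[ 1 - e^{-\frac{h(n-1)p}{m}(1-\Phi_\sigma(\alpha+\delta))} \big]^h \phi_\sigma(\alpha)\, d\alpha,
\end{equation*}
so the task reduces to simplifying this $\alpha$-integral.

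The second step is to push the $\alpha$-expectation inside the $[1 - e^{-(\cdot)}]^h$ factor. With $\beta = h(n-1)p/m$, the approximation I would invoke is
\begin{equation*}
    \int \big[ 1 - e^{-\beta(1-\Phi_\sigma(\alpha+\delta))} \big]^h \phi_\sigma(\alpha)\, d\alpha \;\approx\; \big[ 1 - e^{-\beta\, \mathbb{E}_\alpha[1 - \Phi_\sigma(\alpha+\delta)]} \big]^h,
\end{equation*}
which is precisely the first-order Taylor expansion of $1 - e^{-x}$ around $x=0$ used in the proof of Equation~(\ref{equation:analysis:prob-of-error}), now applied to the inner exponential before the outer power is taken. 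The remaining scalar expectation is a classical Gaussian convolution: for independent $\alpha, Y \sim \text{Gaussian}(0, \sigma)$,
\begin{equation*}
    \mathbb{E}_\alpha[1 - \Phi_\sigma(\alpha + \delta)] = \mathbb{P}[Y > \alpha + \delta] = \mathbb{P}[Y - \alpha > \delta] = 1 - \Phi(\delta),
\end{equation*}
since $Y - \alpha \sim \text{Gaussian}(0, \sigma\sqrt{2})$, where $\Phi$ now denotes the CDF of that wider Gaussian, matching the statement's convention. Substituting back into the bracket produces the claimed formula.

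The main obstacle is the mean-field step that pulls $\mathbb{E}_\alpha$ inside the $[1 - e^{-(\cdot)}]^h$ bracket, because $(1 - e^{-\beta x})^h$ is a nonlinear function of $x = 1 - \Phi_\sigma(\alpha+\delta)$. The paper's earlier proofs carry out exactly this kind of exchange under the banner of a first-order Taylor expansion, so adopting that same level of rigor here is consistent with Section~\ref{section:analysis}. A sharper treatment could control the fluctuation of $1 - \Phi_\sigma(\alpha + \delta)$ about its mean and invoke Jensen's inequality on the convex map $x \mapsto (1 - e^{-\beta x})^h$ to obtain a one-sided bound, but this refinement is unnecessary for the stated approximation.
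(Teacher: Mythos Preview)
Your proposal is correct and follows essentially the same route as the paper's proof. Both start from Theorem~\ref{theorem:cdf-upperbound-sketch}, replace $1-\Phi_\sigma(\alpha+\delta)$ by its $\alpha$-average $1-\Phi(\delta)$ via the Gaussian-difference identity $Y-\alpha\sim\text{Gaussian}(0,\sigma\sqrt{2})$, and then observe that the integrand no longer depends on $\alpha$; your write-up is in fact more transparent than the paper's, which simply asserts ``$1-\Phi(\alpha+\delta)=1-\Phi'(\delta)$'' without naming the mean-field step you correctly flag as the main approximation.
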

\begin{proof}

Recall from Equation~(\ref{equation:analysis:dist-error}) the following result:
\begin{equation*}
    \mathbb{P}[\overline{Z}_i \leq \delta] \approx 1 - \int \big[ 1 - e^{-\frac{h}{m} (1 - \Phi(\alpha + \delta)) (n - 1)p } \big]^h d \mathbb{P}(\alpha).
\end{equation*}

When the active values of a vector are drawn from a Gaussian distribution, then the pairwise difference between any two coordinates itself has a Gaussian distribution with standard deviation $\sqrt{\sigma^2 + \sigma^2}=\sigma \sqrt{2}$. As such, we may estimate $1 - \Phi(\alpha + \delta)$ by considering the probability that a pair of coordinates (one of which having value $\alpha$) has a difference greater than $\delta$: $\mathbb{P}[X_i - X_j > \delta]$. With that idea, we may thus write:
\begin{equation*}
    1 - \Phi(\alpha + \delta) = 1 - \Phi^\prime(\delta),
\end{equation*}
where $\Phi^\prime(\cdot)$ is the CDF of a zero-mean Gaussian distribution with standard deviation $\sigma\sqrt{2}$.

Putting everything together, we can write:
\begin{align*}
    \mathbb{P}[\overline{Z}_i \leq \delta] &\approx 1 - \int \big[ 1 - e^{-\frac{h(n - 1)p}{m}(1 - \Phi^\prime(\delta))} \big]^h d \mathbb{P}(\alpha) \\
    &= 1 - \big[ 1 - e^{-\frac{h(n - 1)p}{m}(1 - \Phi^\prime(\delta))} \big]^h,
\end{align*}
proving our claim.
\end{proof}

\section{Analysis on Remaining Vector Datasets}
\label{appendix:analysis-empirical}

\begin{figure}[h]
\begin{center}
\centerline{
\subfloat[Cumulative distribution of sketching error for $h=1$ (left) and $h=2$ (right)]{
\includegraphics[width=0.27\linewidth]{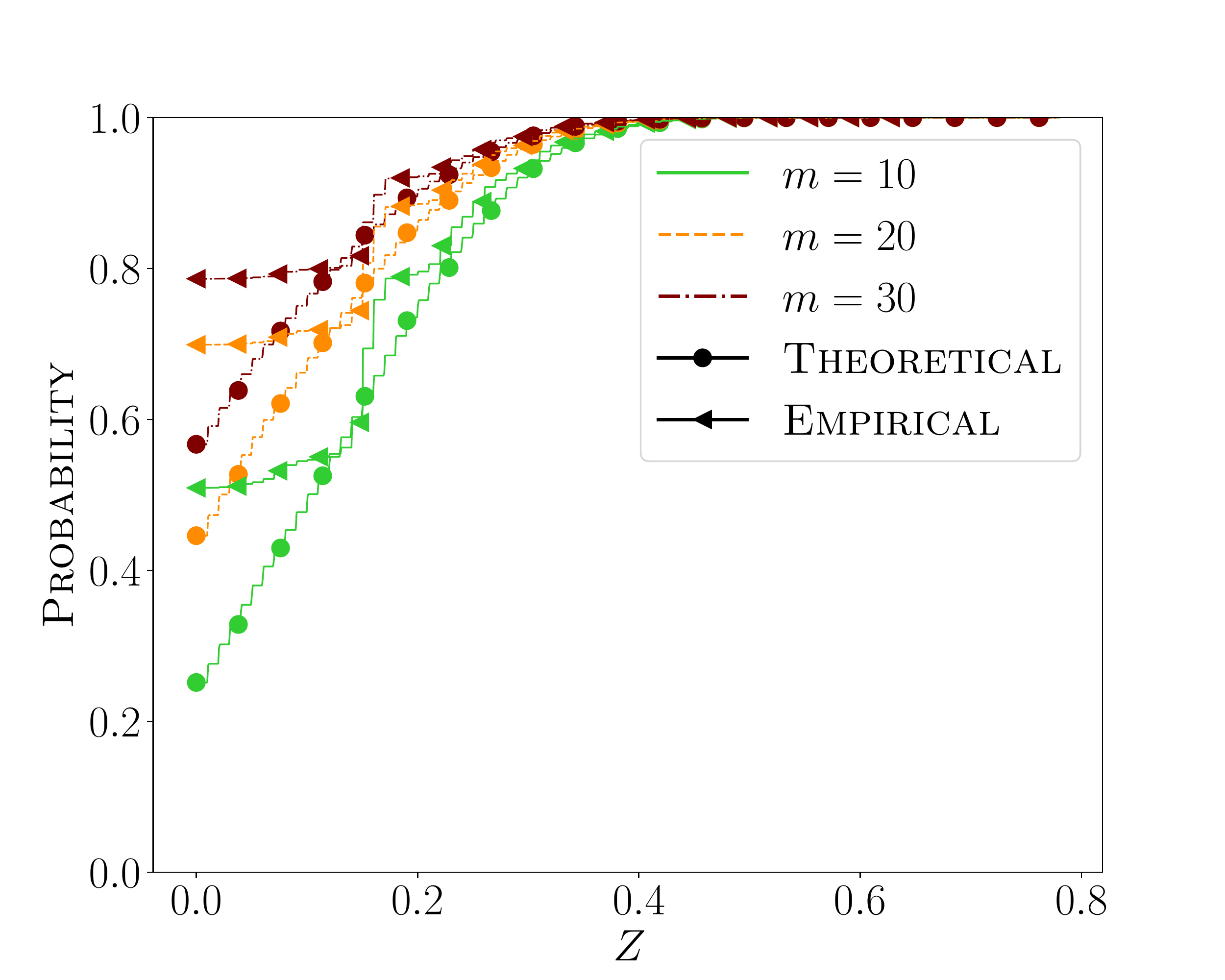}
\includegraphics[width=0.27\linewidth]{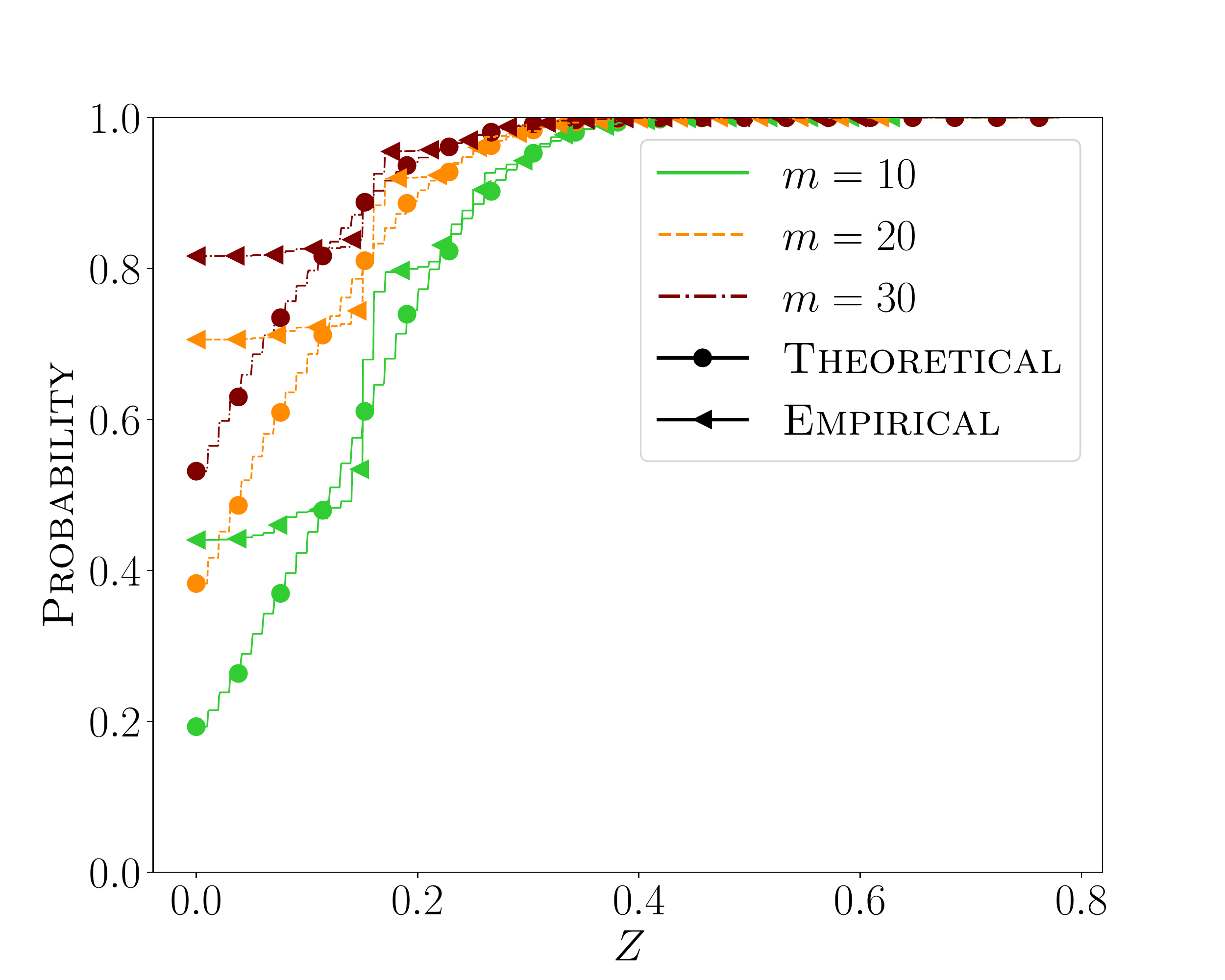}}
}
\centerline{
\subfloat[Inner product error for $h=1$ (left) and $h=2$ (right)]{
\includegraphics[width=0.27\linewidth]{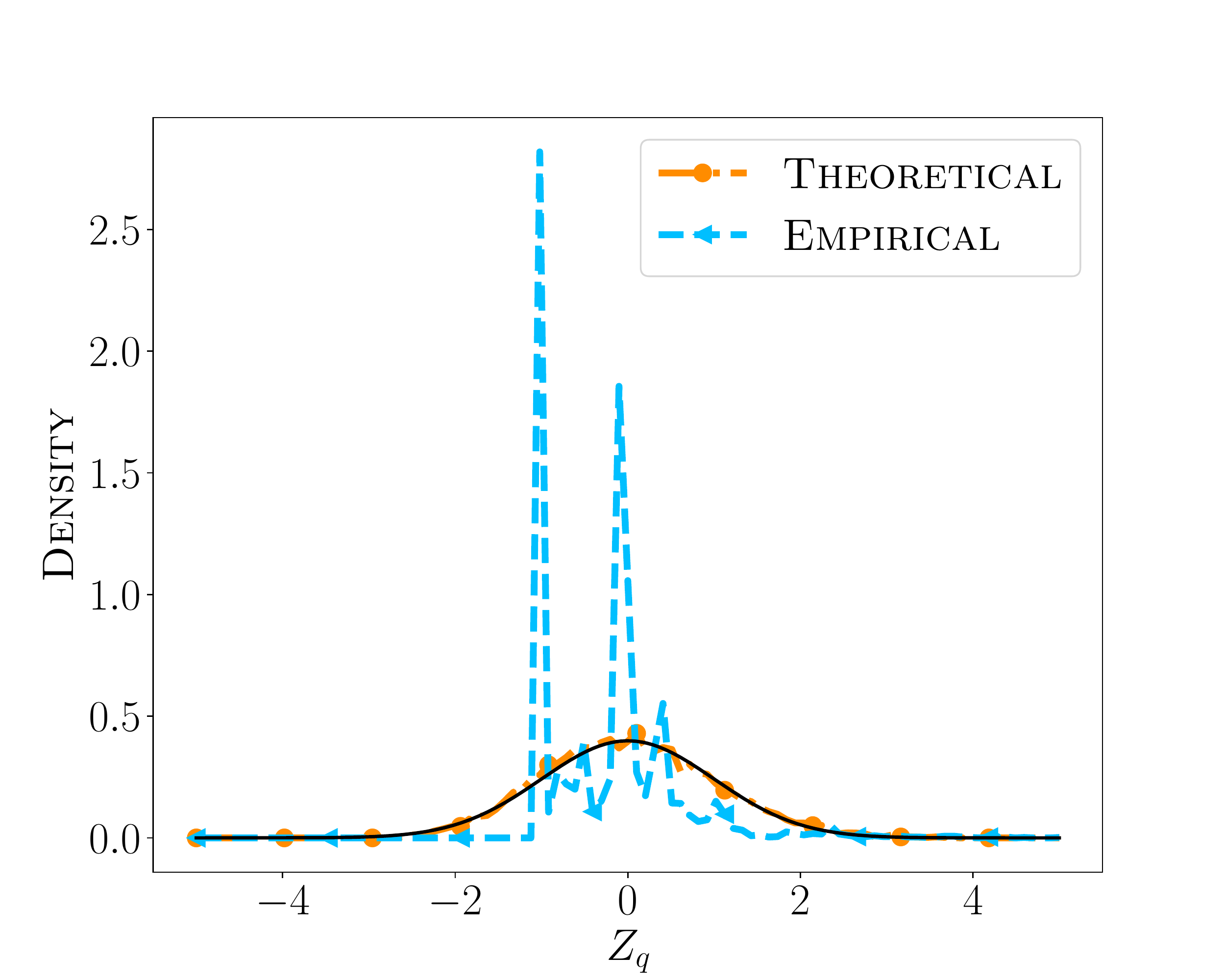}
\includegraphics[width=0.27\linewidth]{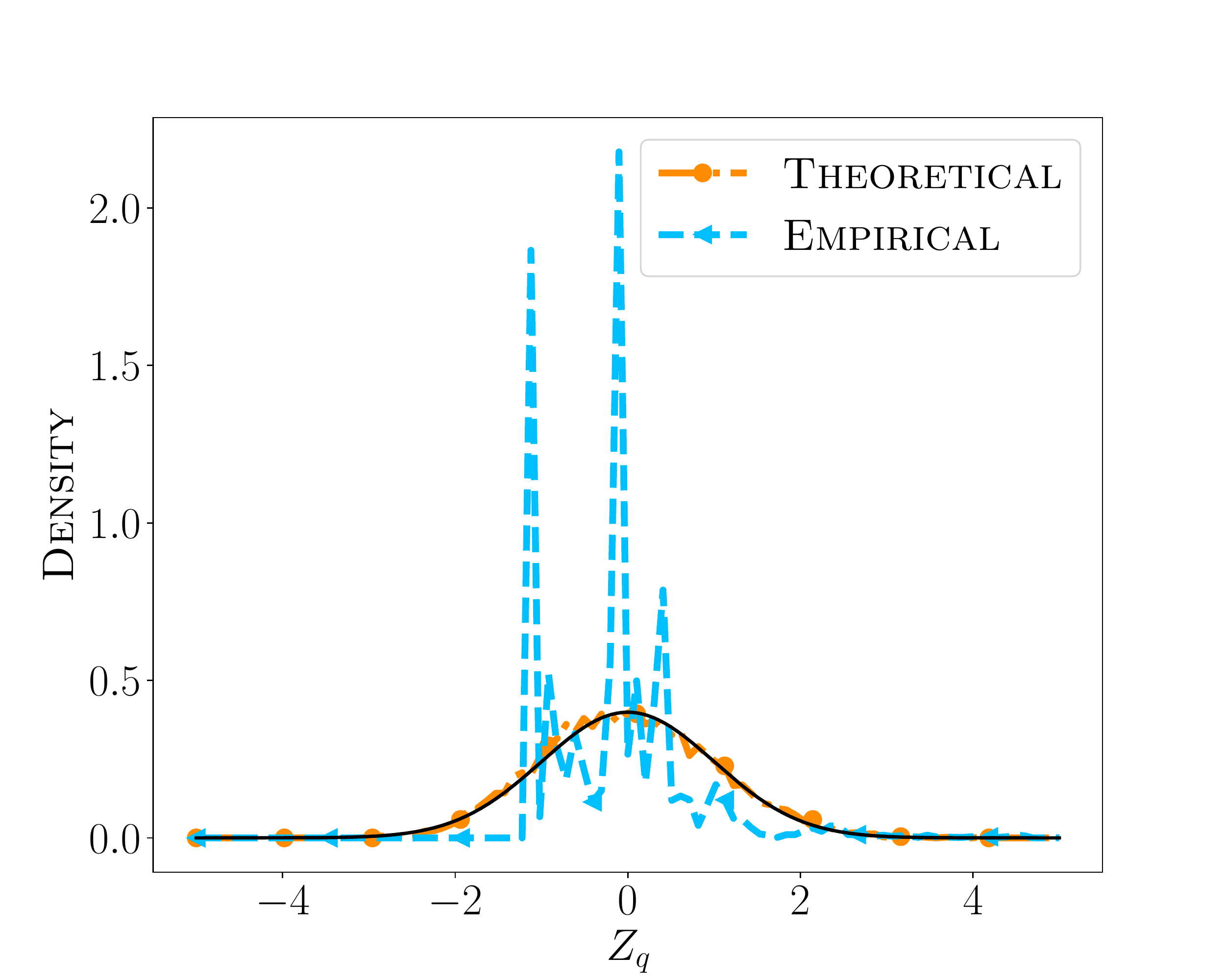}}
}

\caption{Sketching and inner product error distributions for the BM25 dataset.}
\label{figure:evaluation:analysis:bm25}
\end{center}
\end{figure}

\begin{figure}[h]
\begin{center}
\centerline{
\subfloat[Cumulative distribution of sketching error for $h=1$ (left) and $h=2$ (right)]{
\includegraphics[width=0.27\linewidth]{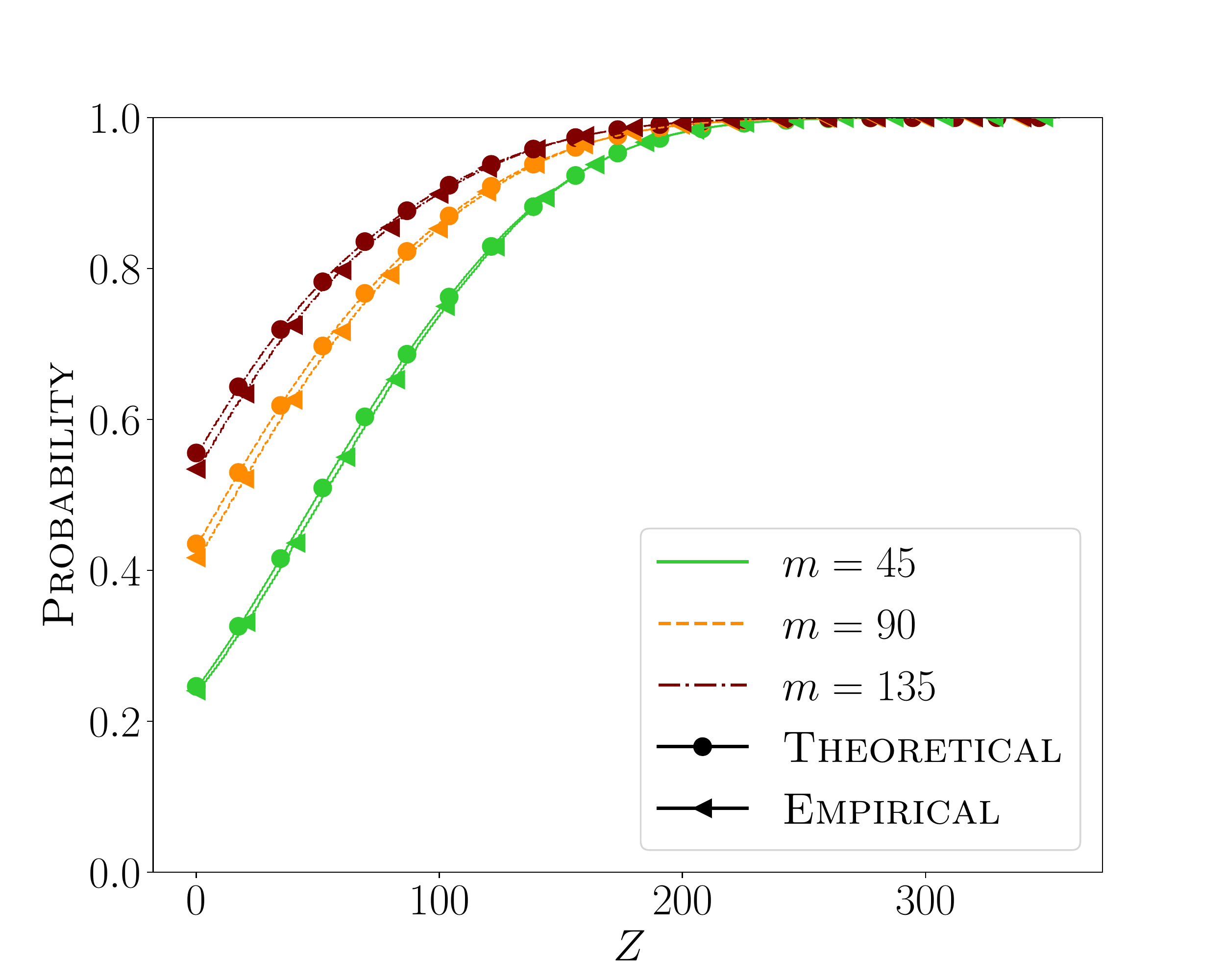}
\includegraphics[width=0.27\linewidth]{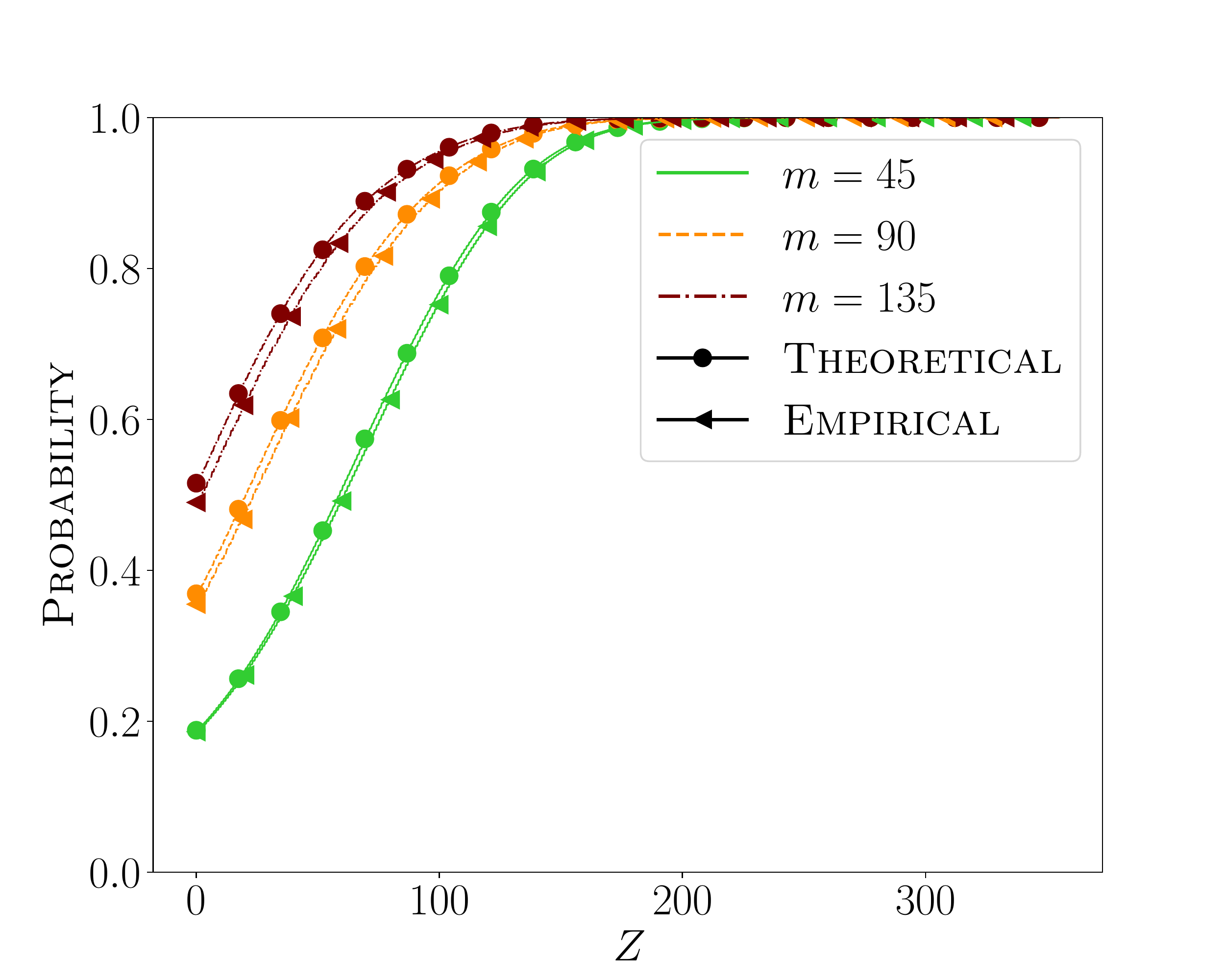}}
}
\centerline{
\subfloat[Inner product error for $h=1$ (left) and $h=2$ (right)]{
\includegraphics[width=0.27\linewidth]{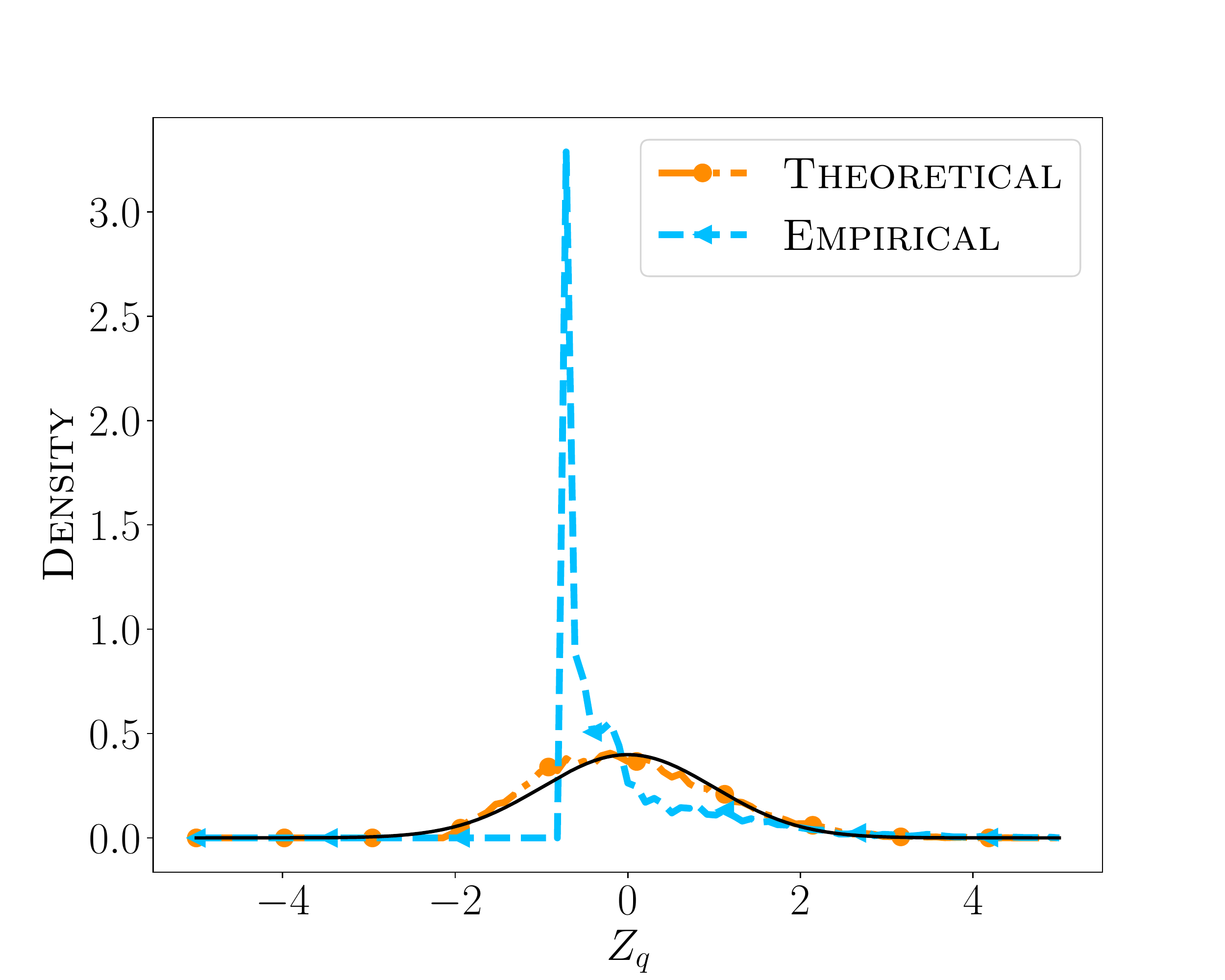}
\includegraphics[width=0.27\linewidth]{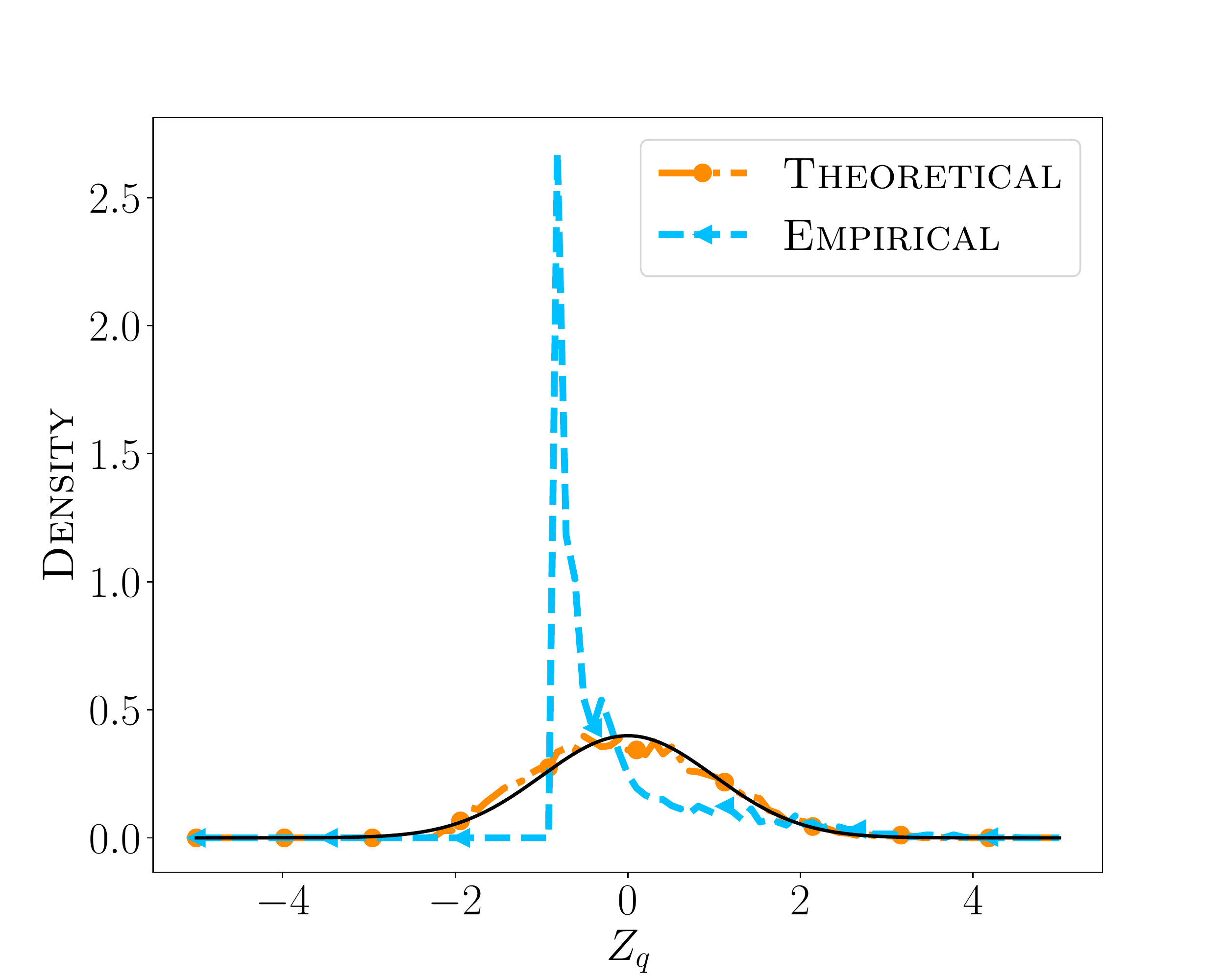}}
}

\caption{Sketching and inner product error distributions for the Efficient SPLADE dataset.}
\label{figure:evaluation:analysis:splade-efficient}
\end{center}
\end{figure}

\begin{figure}[!h]
\begin{center}
\centerline{
\subfloat[Cumulative distribution of sketching error for $h=1$ (left) and $h=2$ (right)]{
\includegraphics[width=0.3\linewidth]{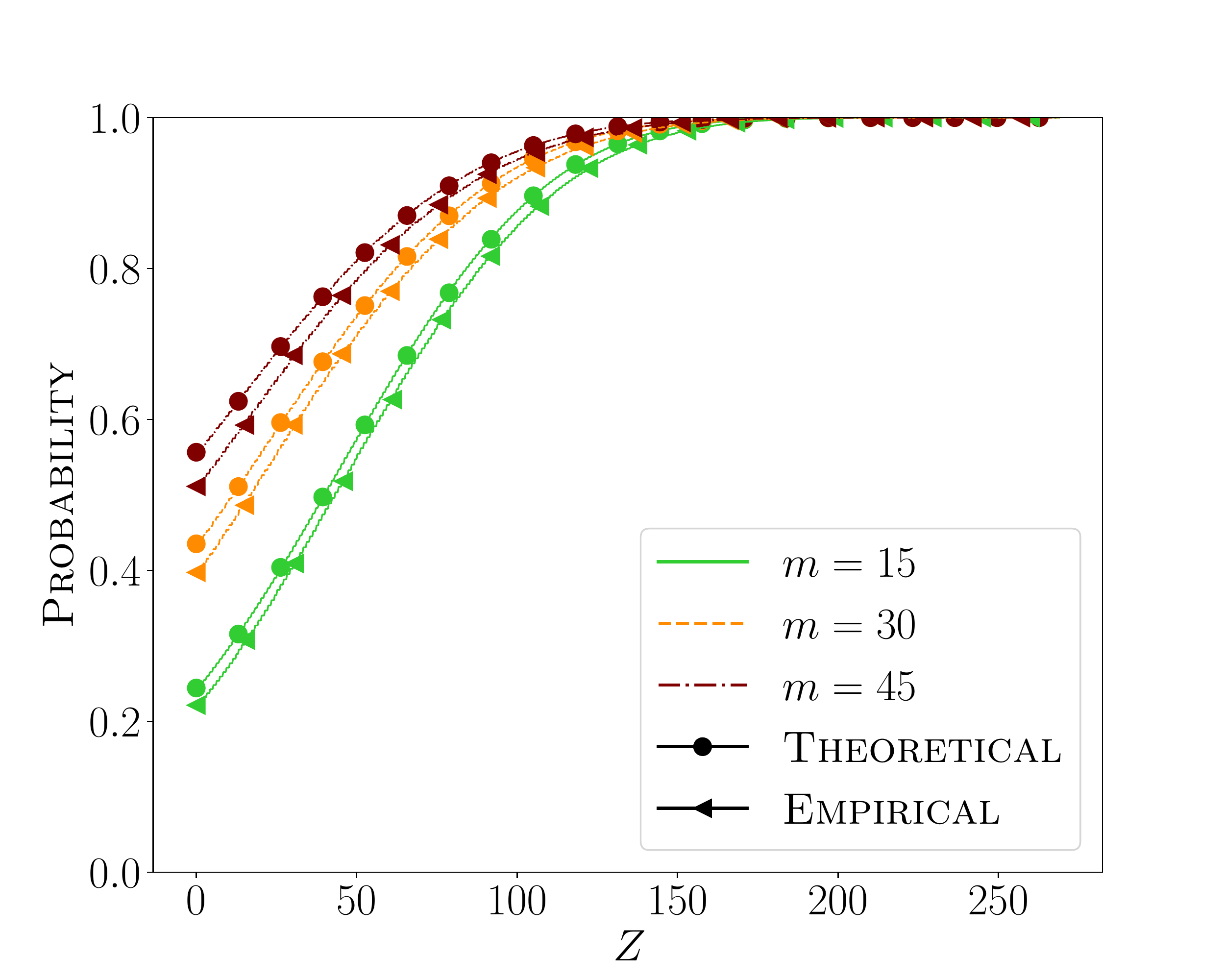}
\includegraphics[width=0.3\linewidth]{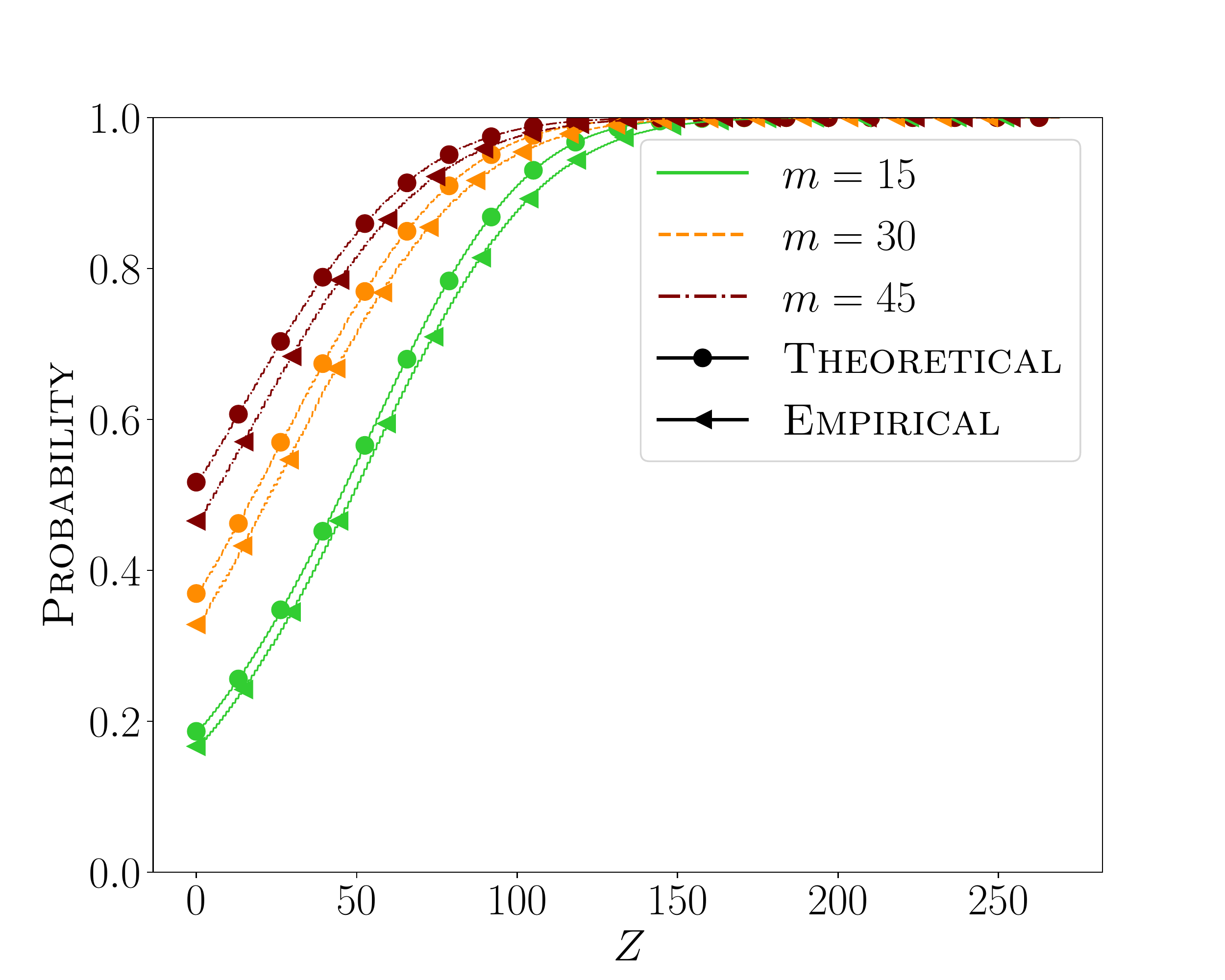}}
}
\centerline{
\subfloat[Inner product error for $h=1$ (left) and $h=2$ (right)]{
\includegraphics[width=0.3\linewidth]{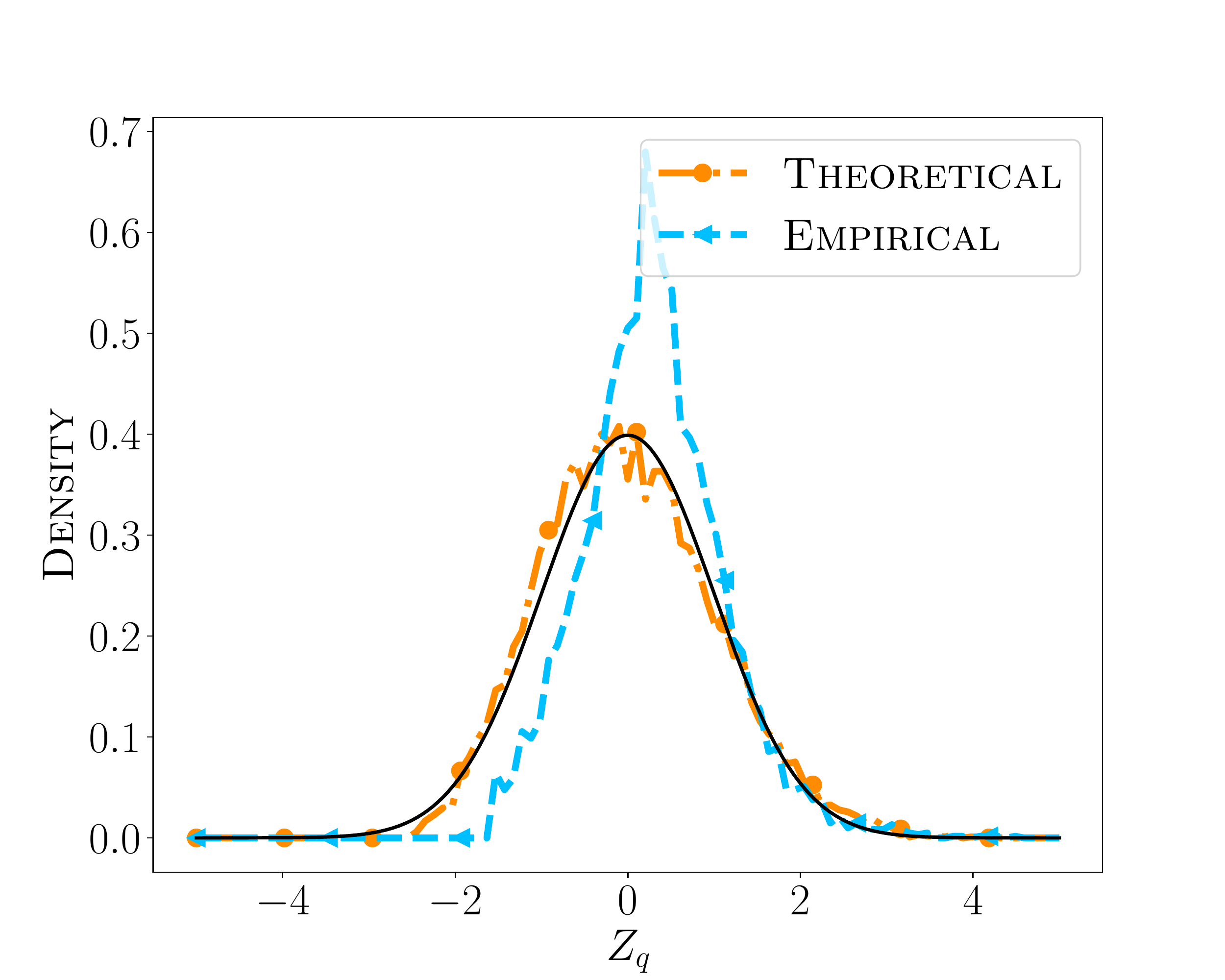}
\includegraphics[width=0.3\linewidth]{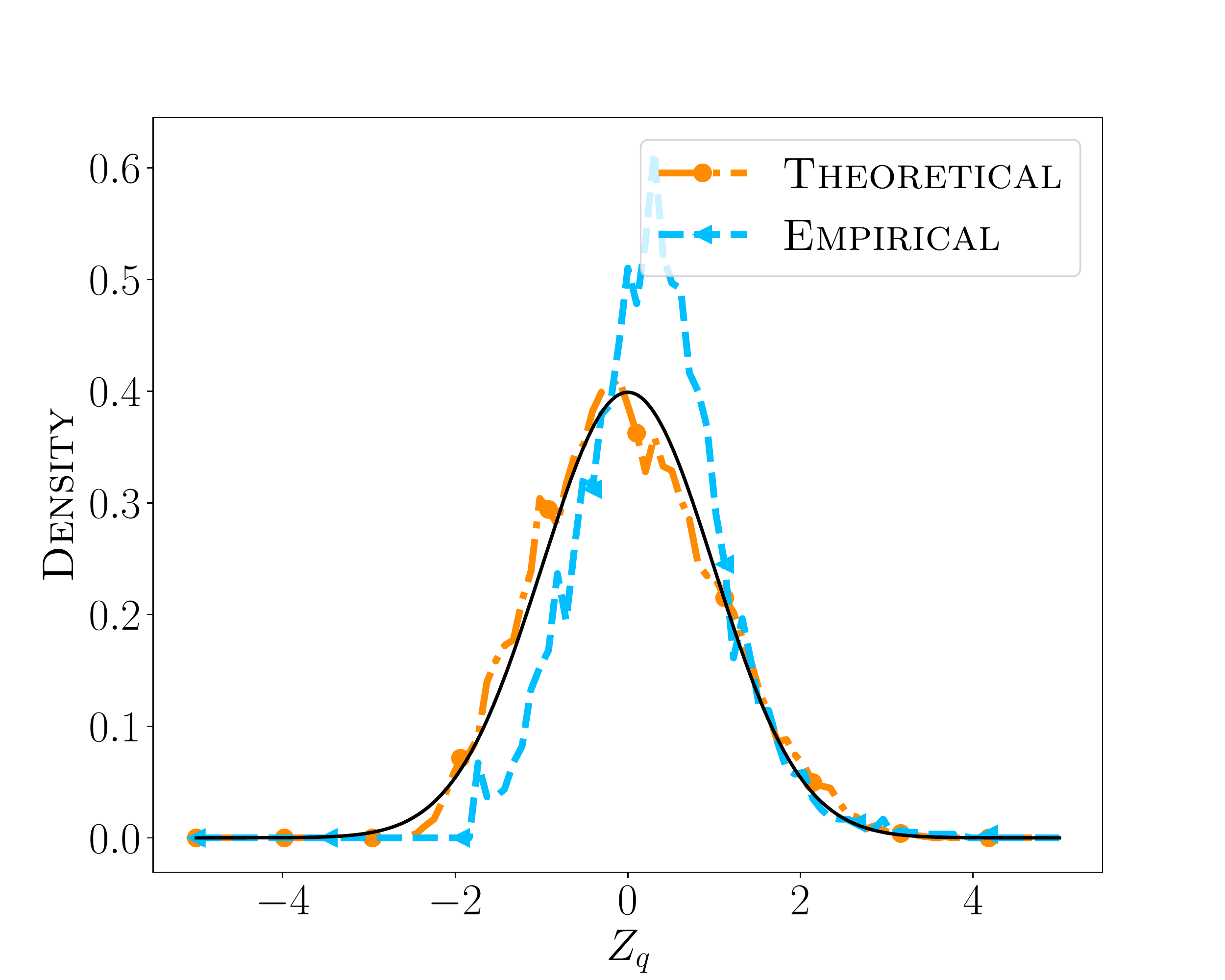}}
}

\caption{Sketching and inner product error distributions for the uniCOIL dataset.}
\label{figure:evaluation:analysis:unicoil}
\end{center}
\end{figure}

\FloatBarrier
\section{Trade-offs in Retrieval on Apple M1}
\label{appendix:retrieval-m1}

\begin{figure}[!ht]
\begin{center}
\centerline{
\subfloat[BM25]{
\includegraphics[width=0.4\linewidth]{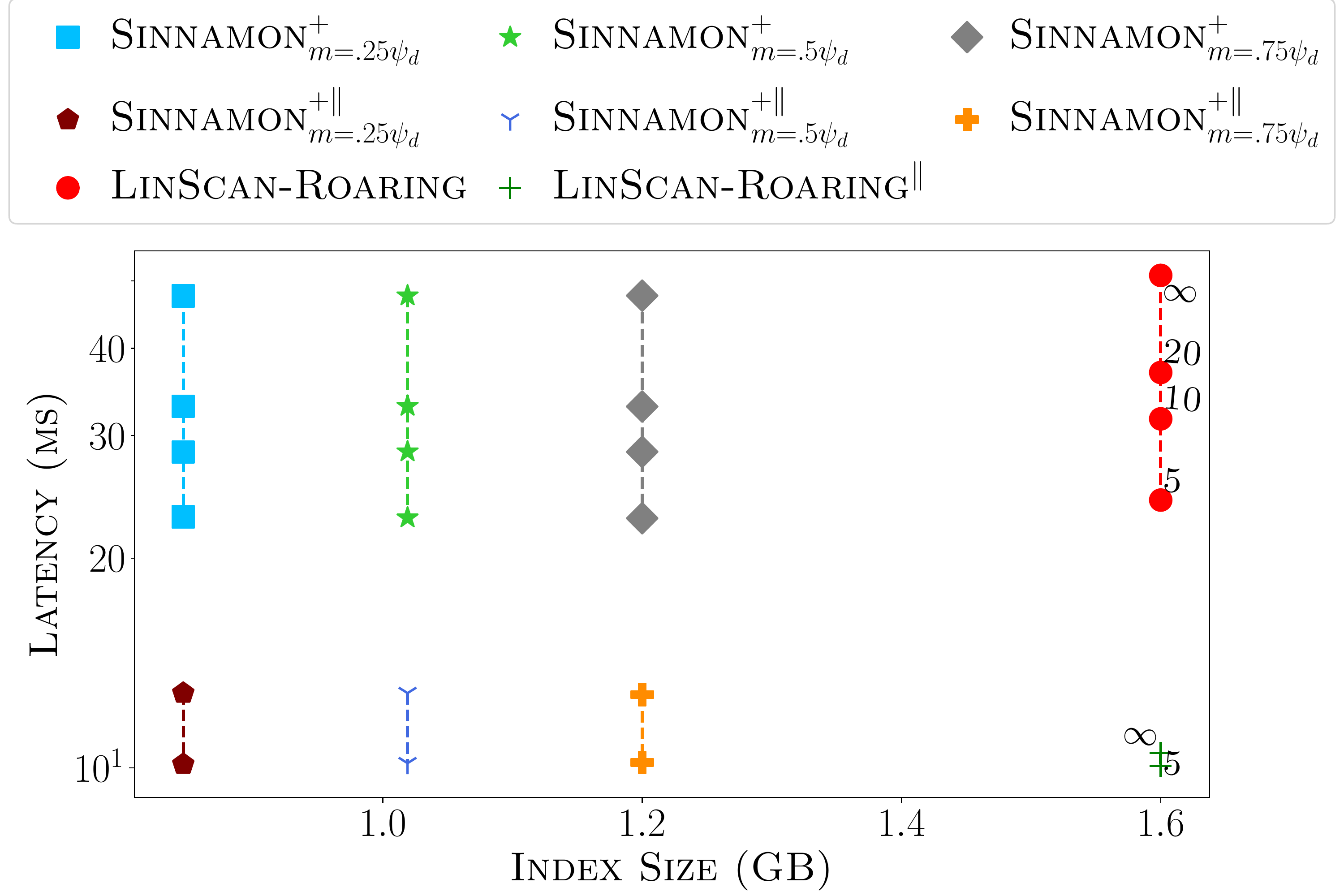}
\includegraphics[width=0.4\linewidth]{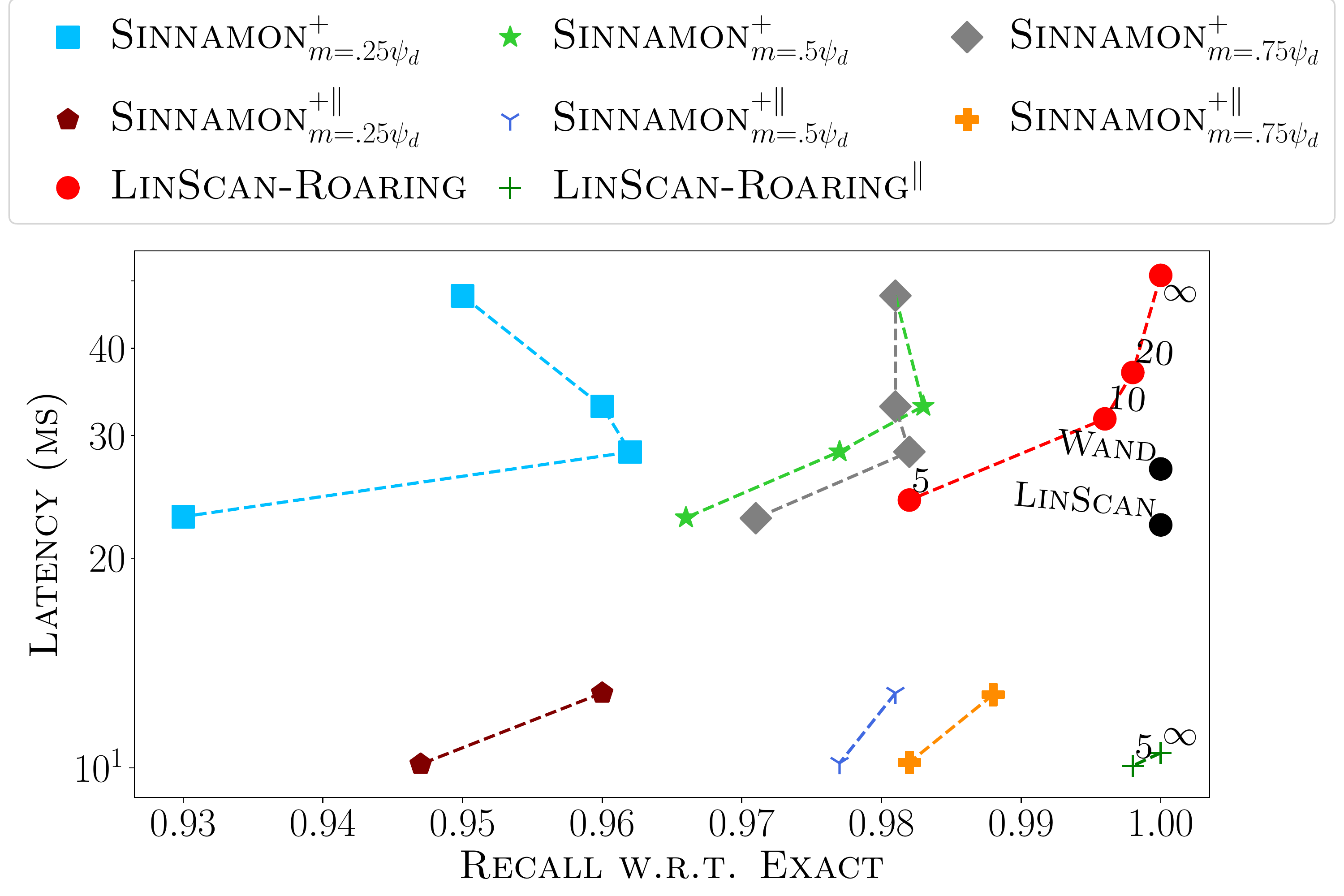}}
}
\centerline{
\subfloat[SPLADE]{
\includegraphics[trim={0 0 0 5.3cm},clip,width=0.4\linewidth]{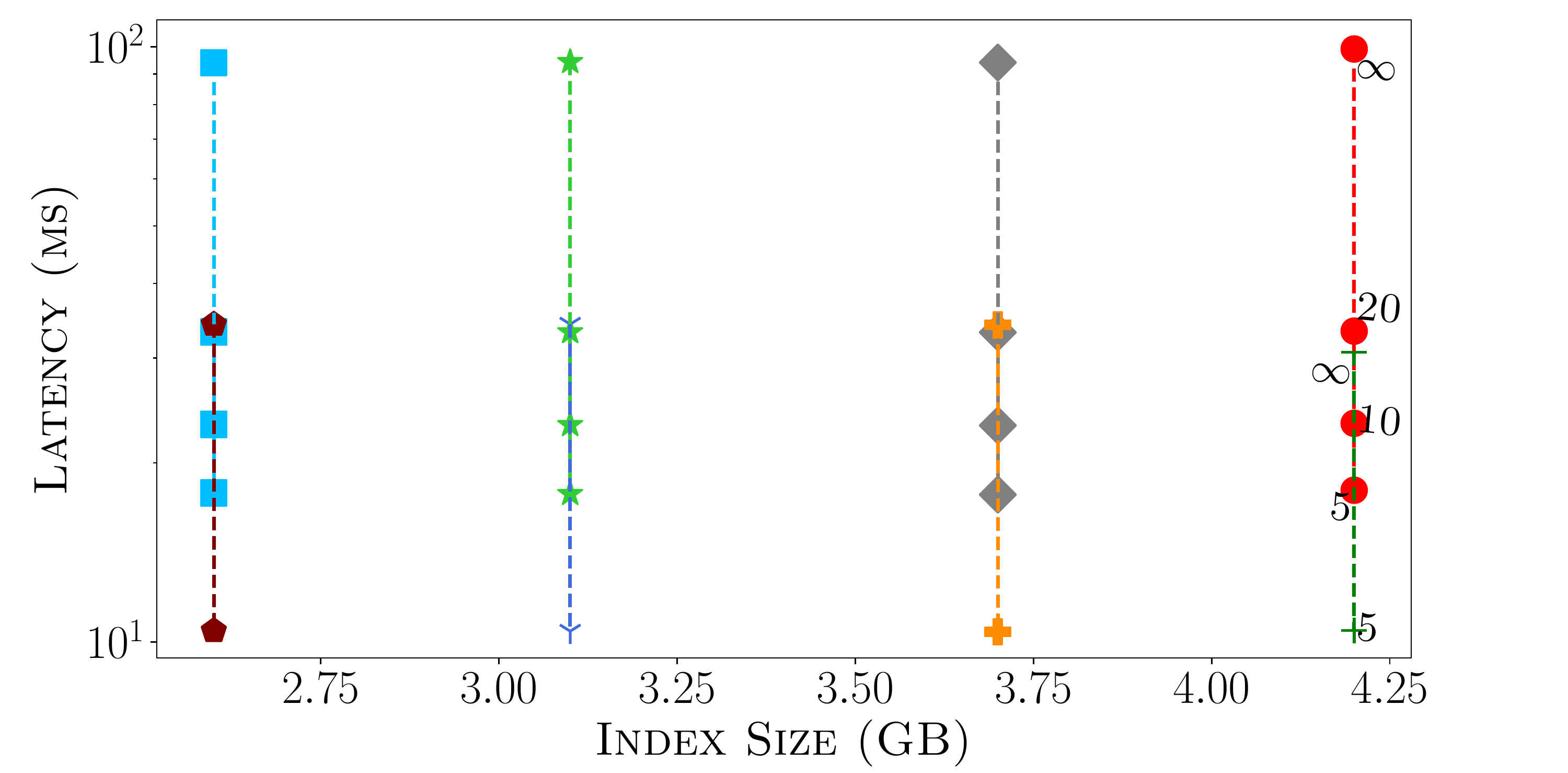}
\includegraphics[trim={0 0 0 5.3cm},clip,width=0.4\linewidth]{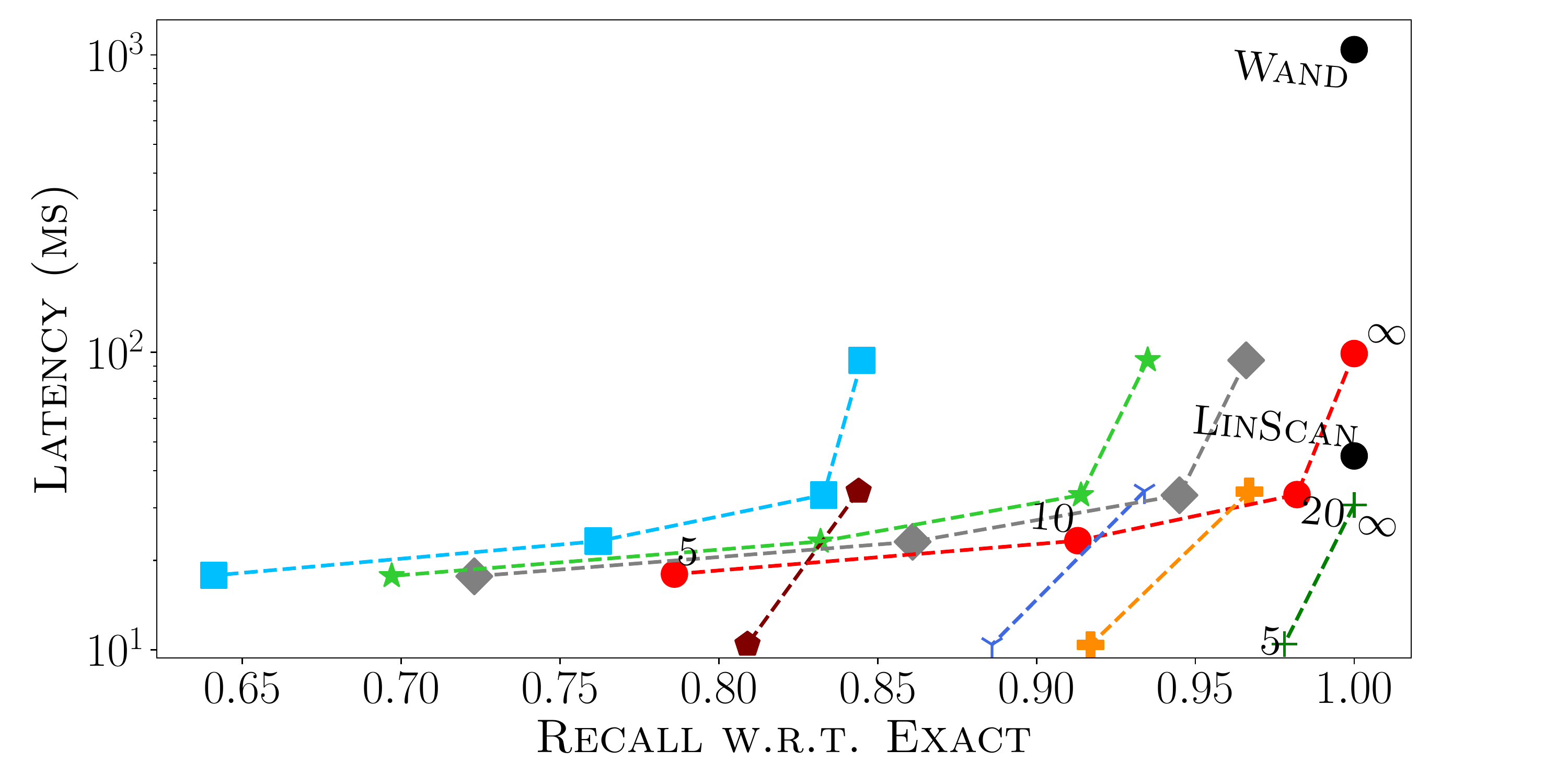}}
}
\centerline{
\subfloat[Efficient SPLADE]{
\includegraphics[trim={0 0 0 5.3cm},clip,width=0.4\linewidth]{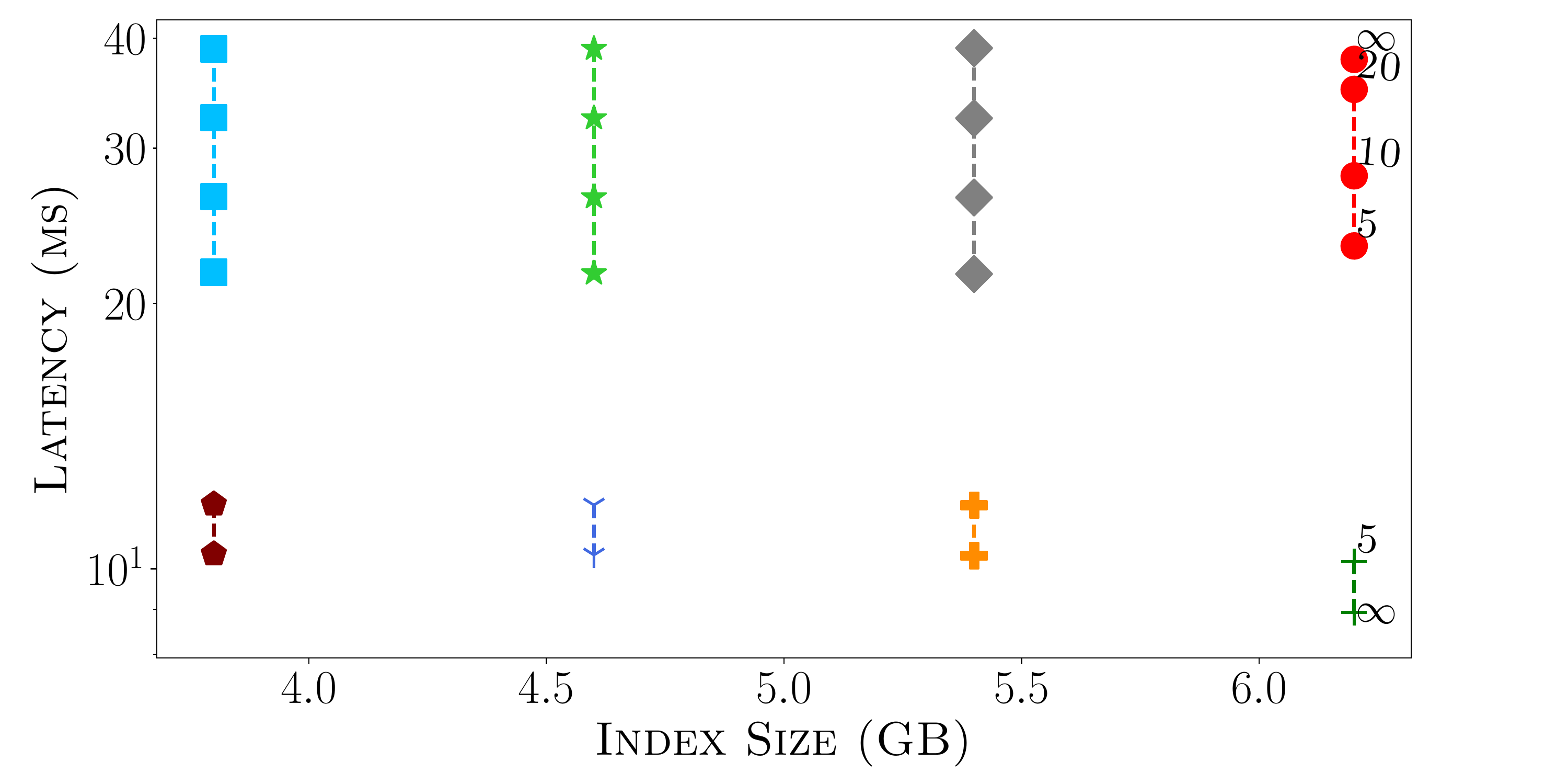}
\includegraphics[trim={0 0 0 5.3cm},clip,width=0.4\linewidth]{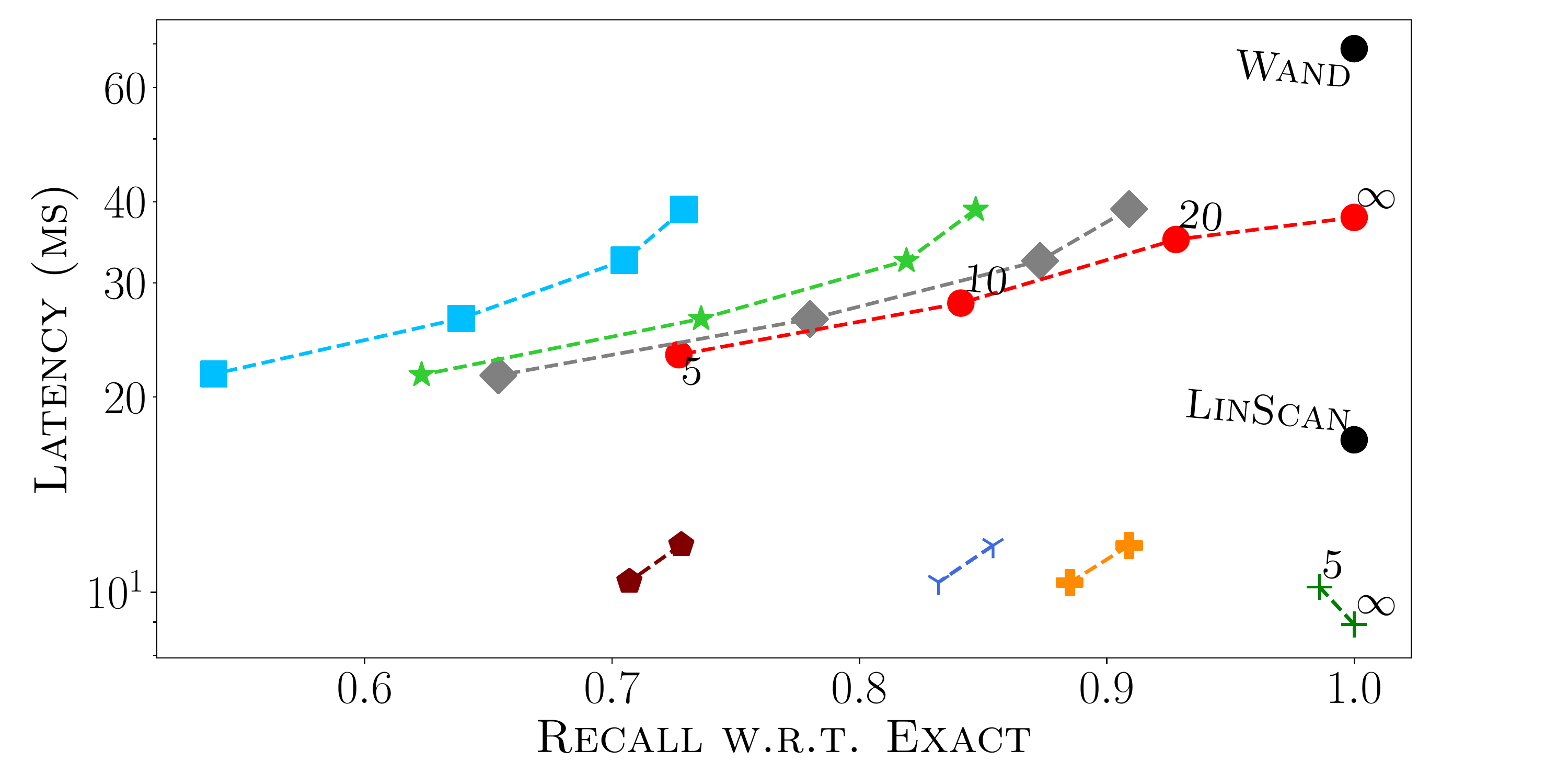}}
}
\centerline{
\subfloat[uniCOIL]{
\includegraphics[trim={0 0 0 5.3cm},clip,width=0.4\linewidth]{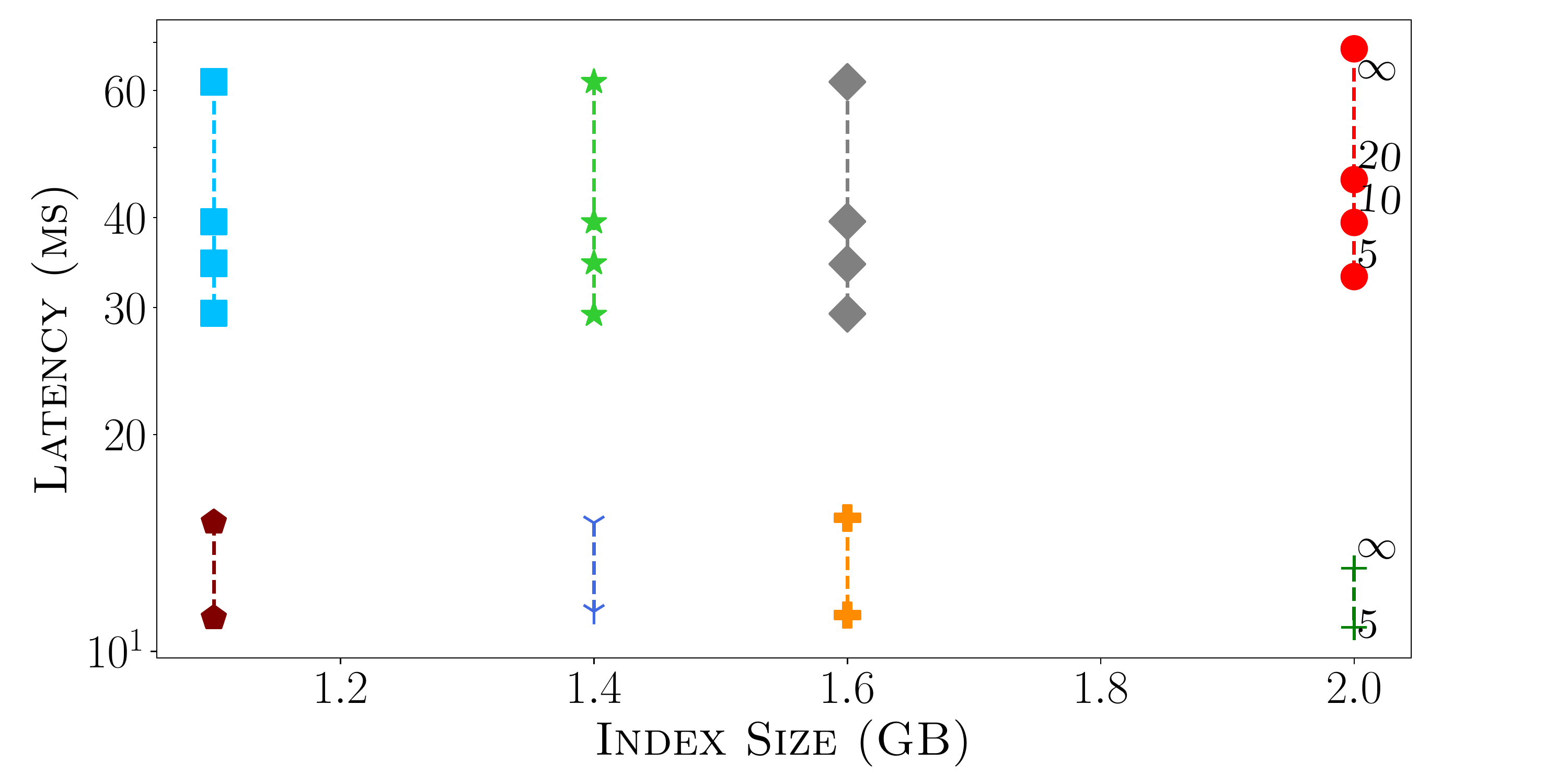}
\includegraphics[trim={0 0 0 5.3cm},clip,width=0.4\linewidth]{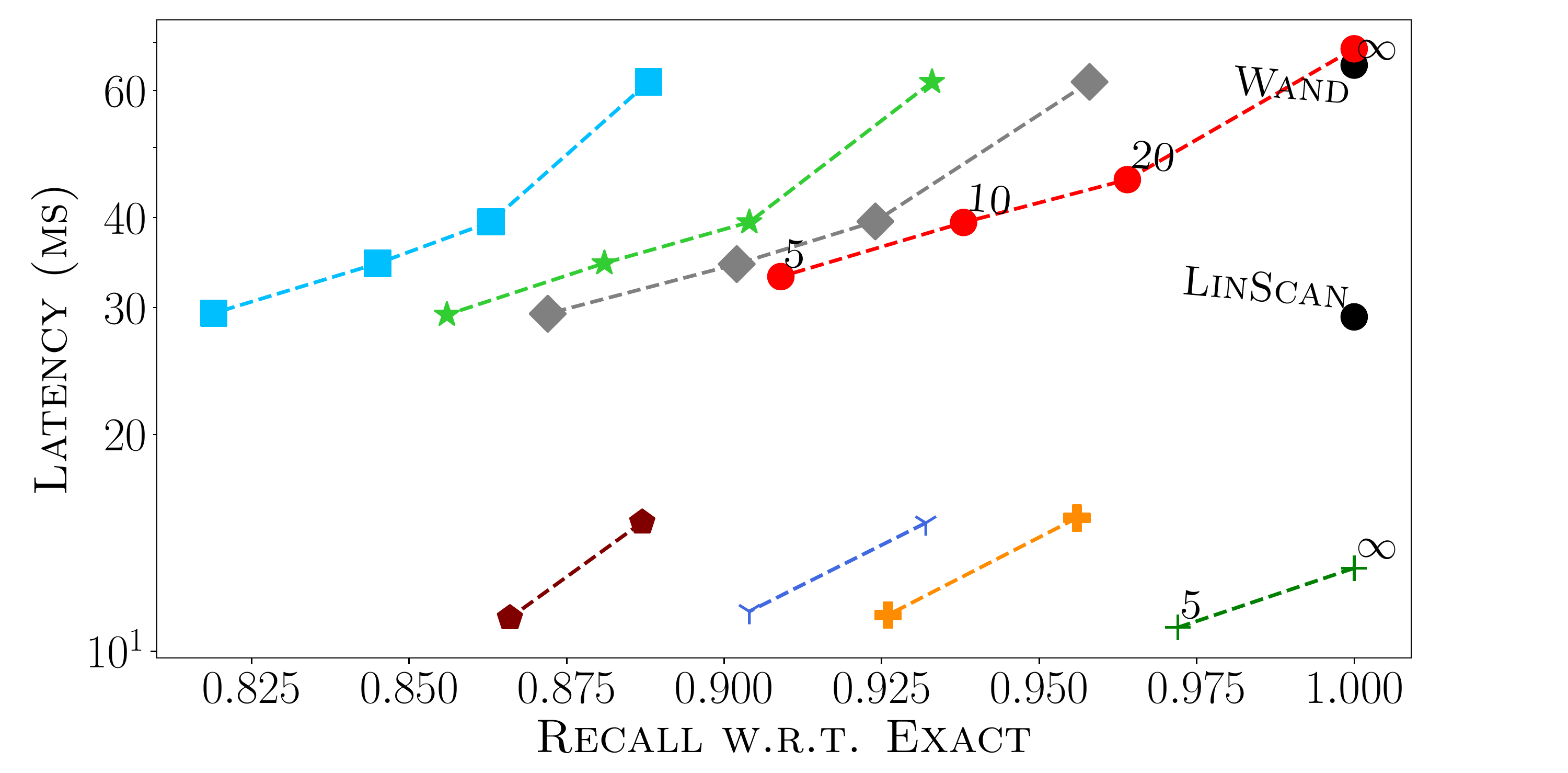}}
}

\caption{Trade-offs on the Apple M1 processor between latency and memory (left column), and latency and accuracy (right column) for various vector collections (rows). Shapes (and colors) distinguish between different configurations of \sinnamon{}, and points on a line represent different time budgets $T$ (in milliseconds).}
\label{figure:evaluation:msmarco-passage-v1-m1}
\end{center}
\end{figure}
\FloatBarrier

\section{Trade-offs Concerning Task Metrics on Apple M1}
\label{appendix:retrieval-end-metric-m1}

\begin{figure}[!ht]
\begin{center}
\centerline{
\subfloat[BM25]{
\includegraphics[width=0.4\linewidth]{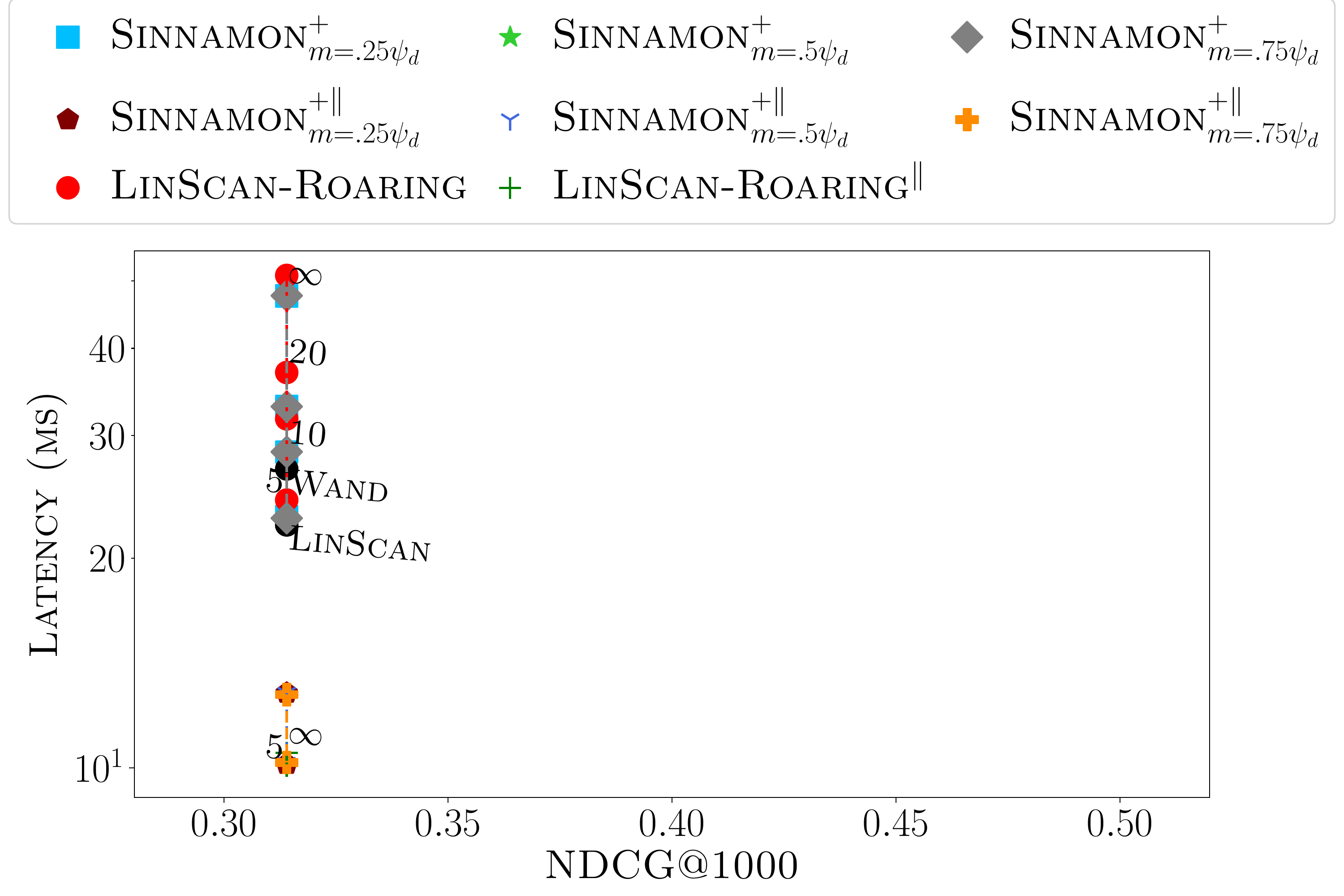}
\includegraphics[width=0.4\linewidth]{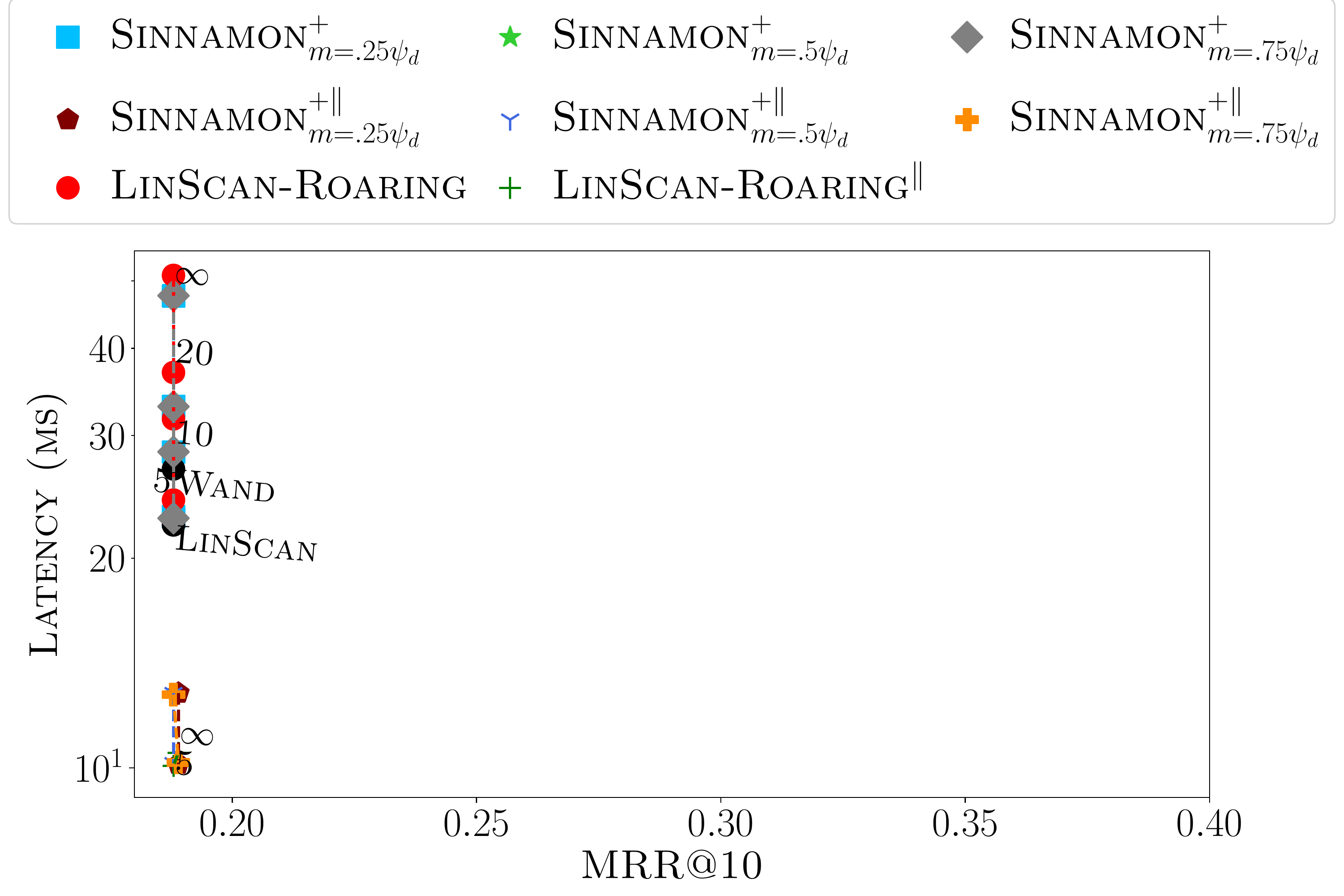}}
}

\centerline{
\subfloat[SPLADE]{
\includegraphics[trim={0 0 0 5.3cm},clip,width=0.4\linewidth]{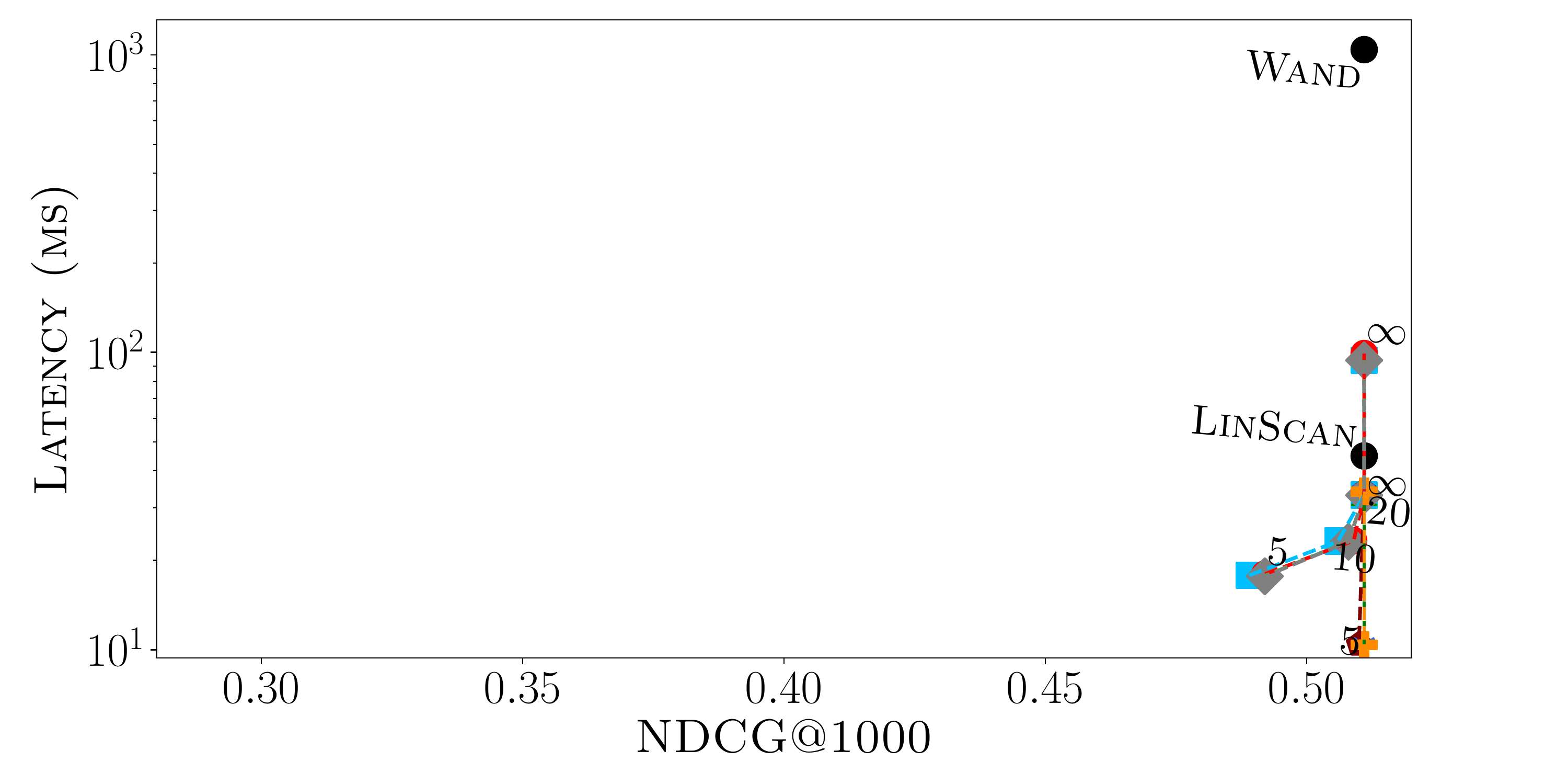}
\includegraphics[trim={0 0 0 5.3cm},clip,width=0.4\linewidth]{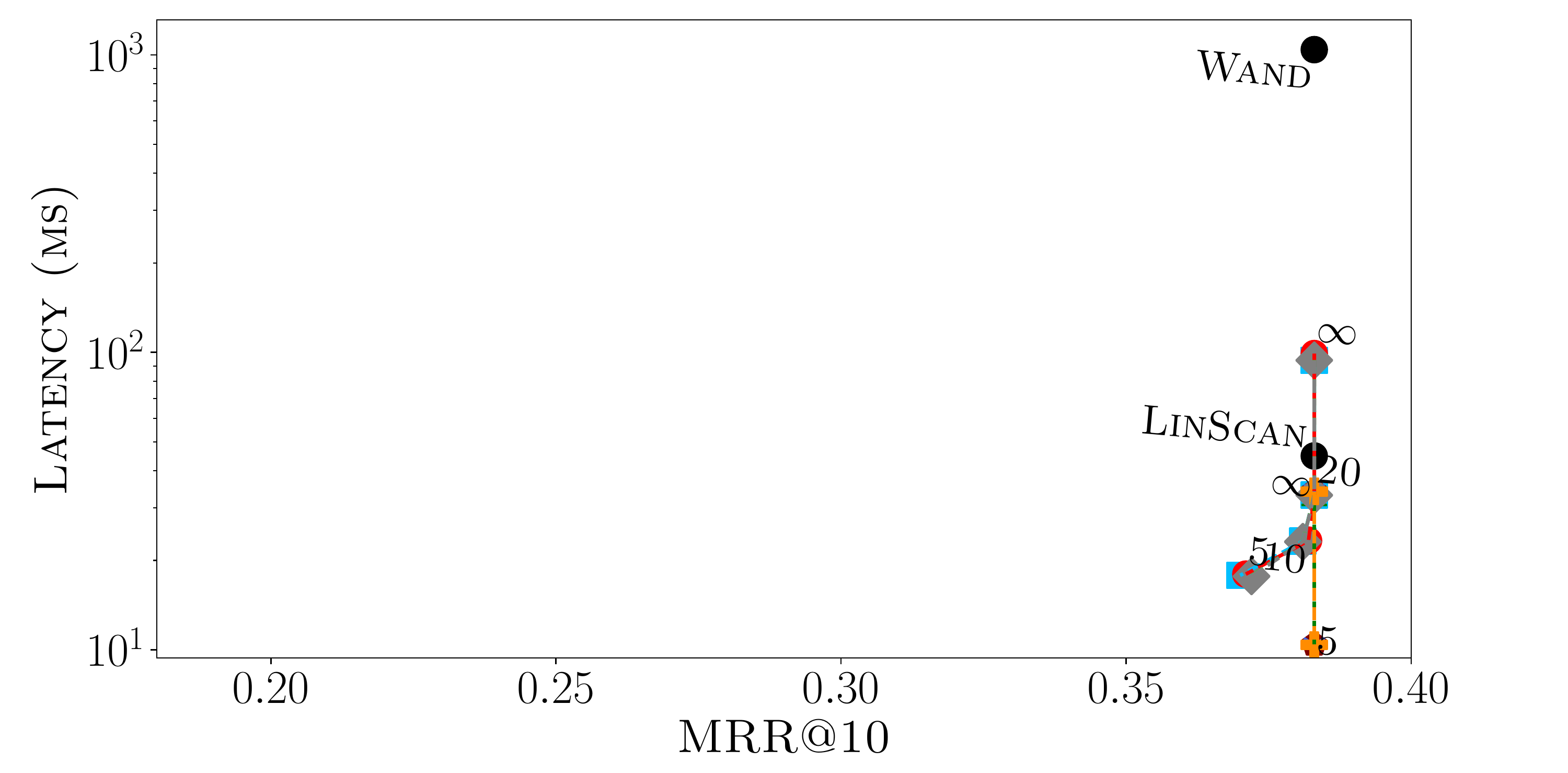}}
}

\centerline{
\subfloat[Efficient SPLADE]{
\includegraphics[trim={0 0 0 5.3cm},clip,width=0.4\linewidth]{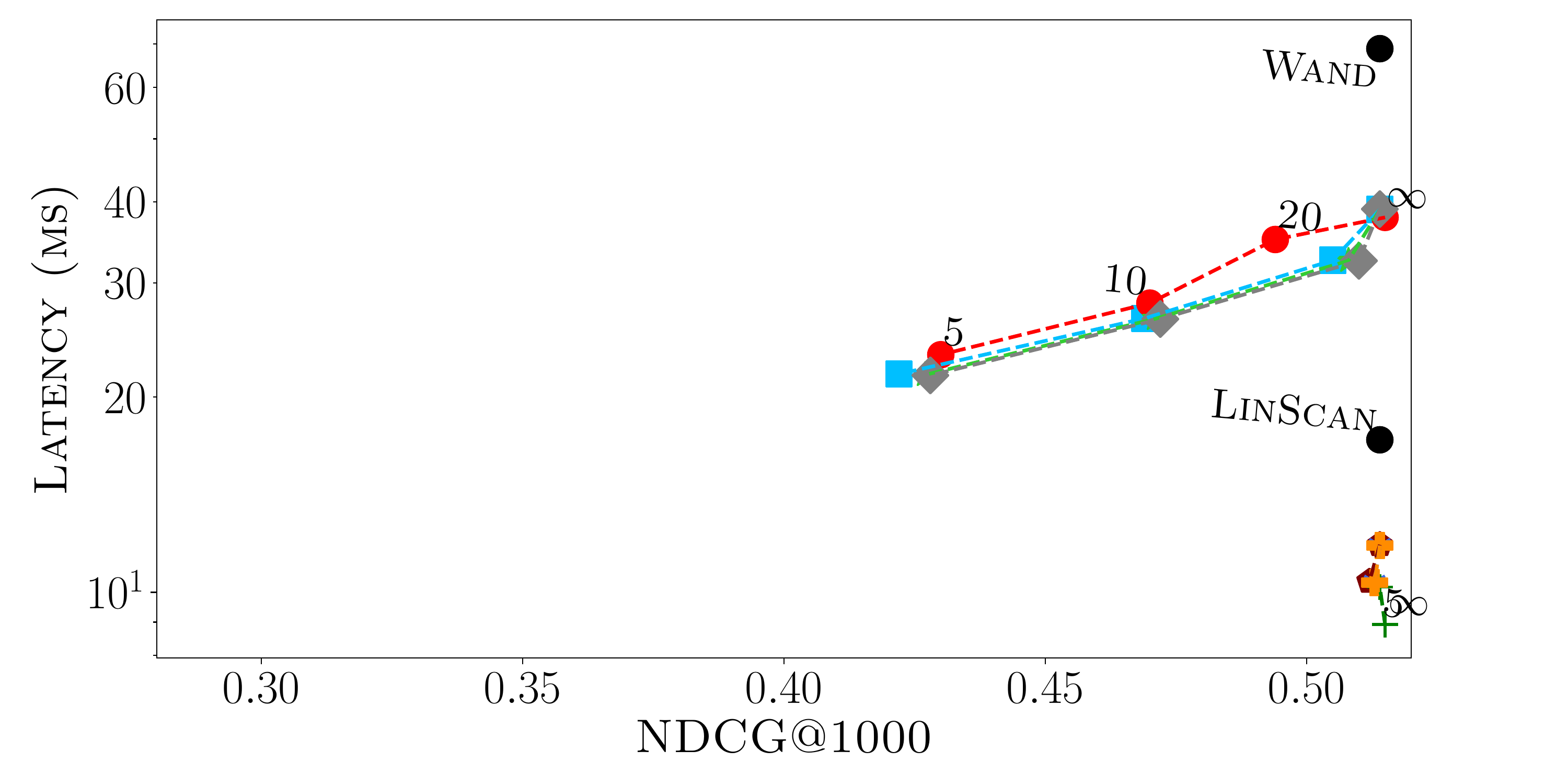}
\includegraphics[trim={0 0 0 5.3cm},clip,width=0.4\linewidth]{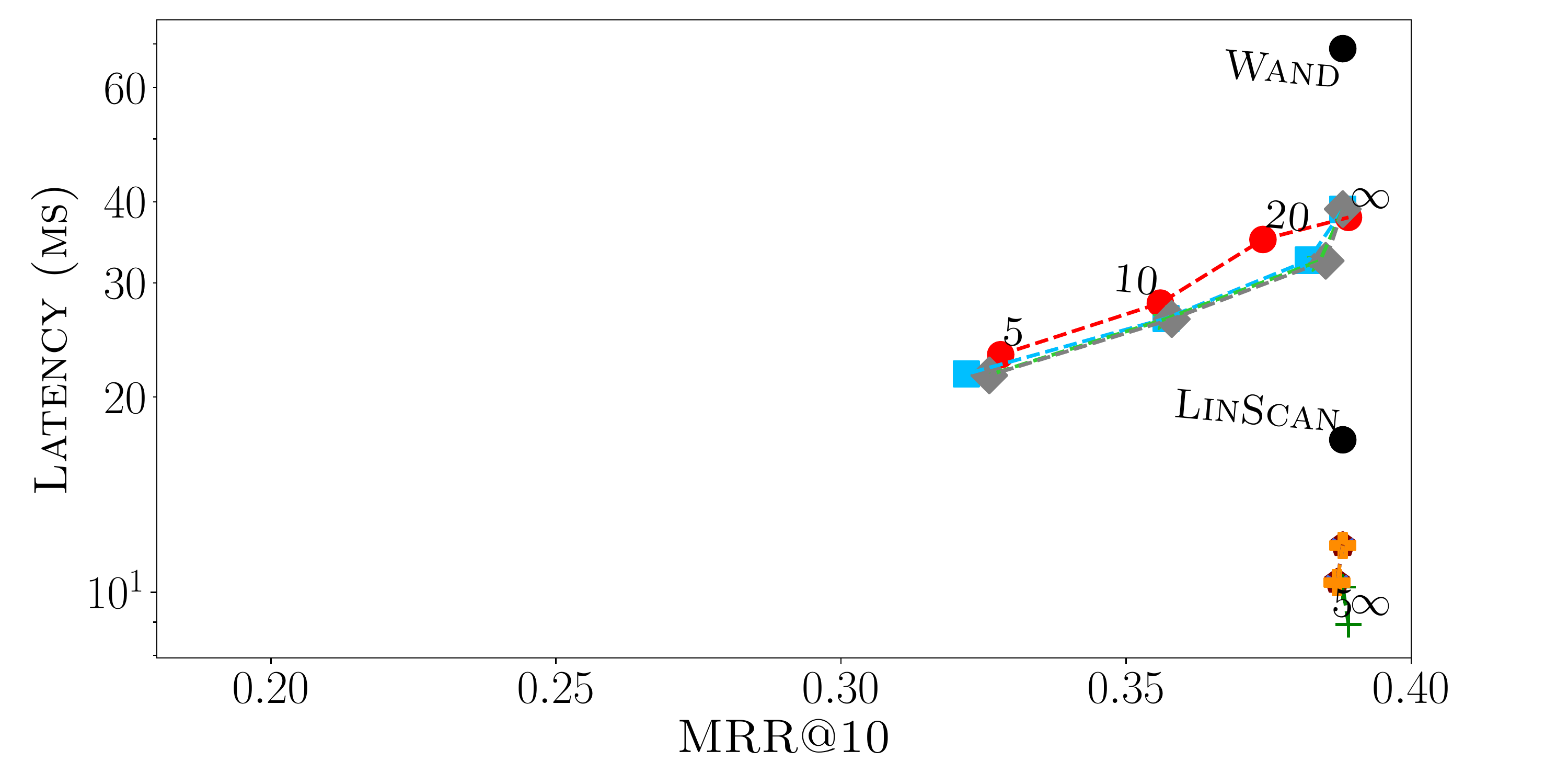}}
}

\centerline{
\subfloat[uniCOIL]{
\includegraphics[trim={0 0 0 5.3cm},clip,width=0.4\linewidth]{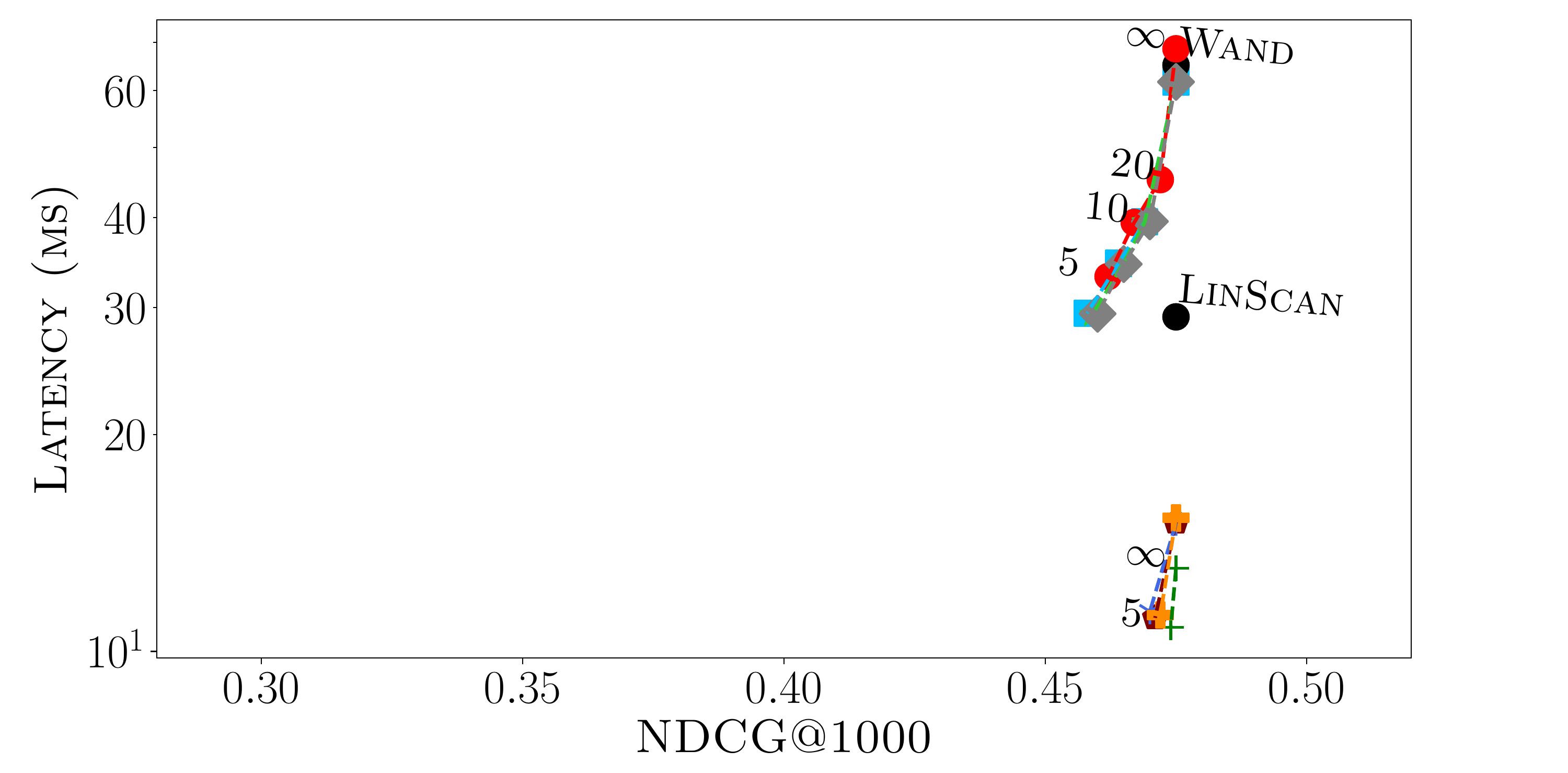}
\includegraphics[trim={0 0 0 5.3cm},clip,width=0.4\linewidth]{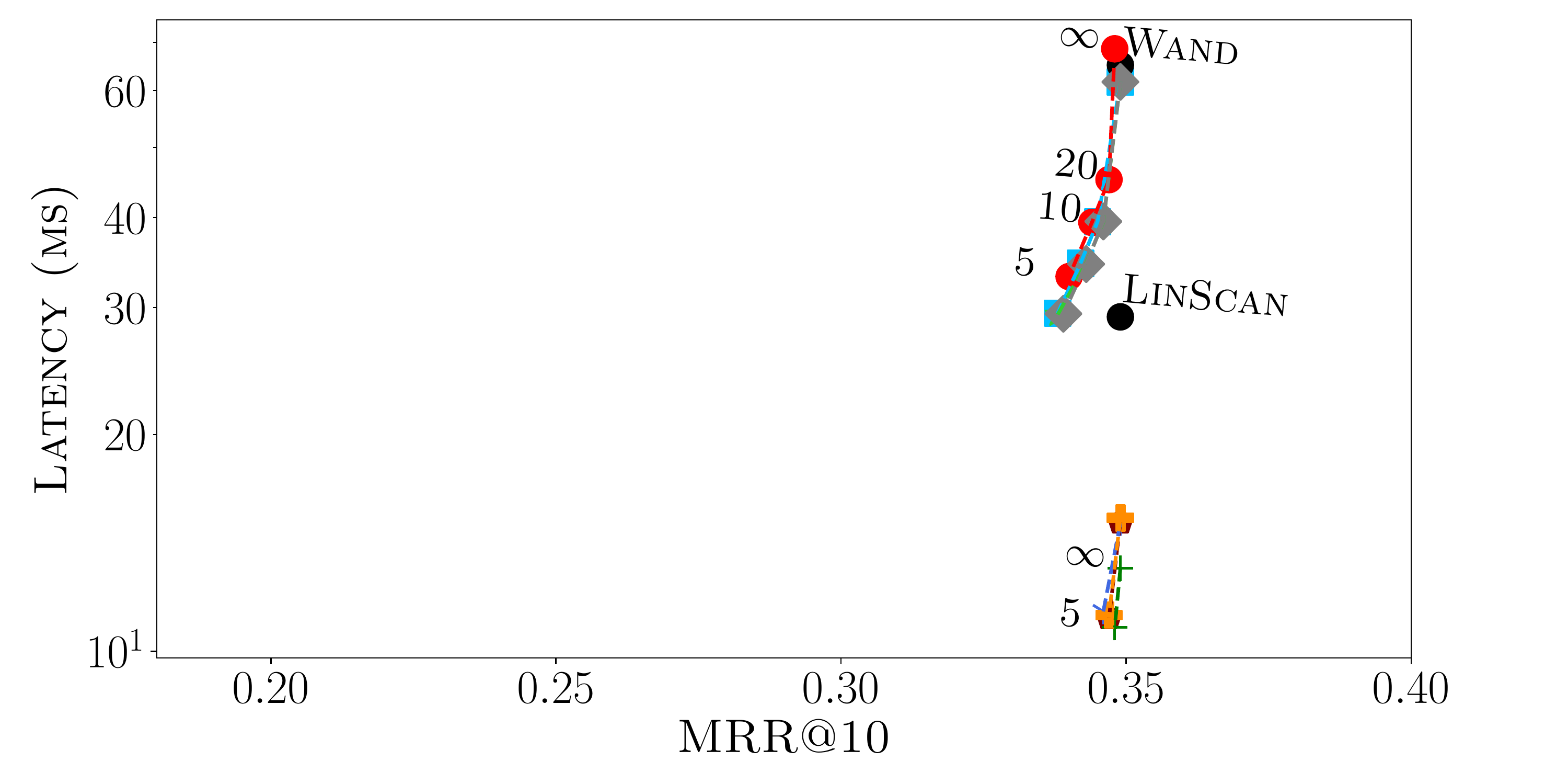}}
}

\caption{Trade-offs on the M1 processor between latency and NDCG$@1000$ (left column), and latency and MRR$@10$ (right column) for various vector collections (rows). As before, shapes (and colors) distinguish between different configurations of \sinnamon{}, and points on a line represent different time budgets $T$ (in milliseconds).}
\label{figure:evaluation:msmarco-passage-v1-m1-end-metric}
\end{center}
\end{figure}
\FloatBarrier
\section{Trade-offs in Retrieval on Apple M1 with $k^\prime=20,000$}
\label{appendix:retrieval-m1-20k}

\begin{figure}[h!t]
\begin{center}
\centerline{
\subfloat[BM25]{
\includegraphics[width=0.4\linewidth]{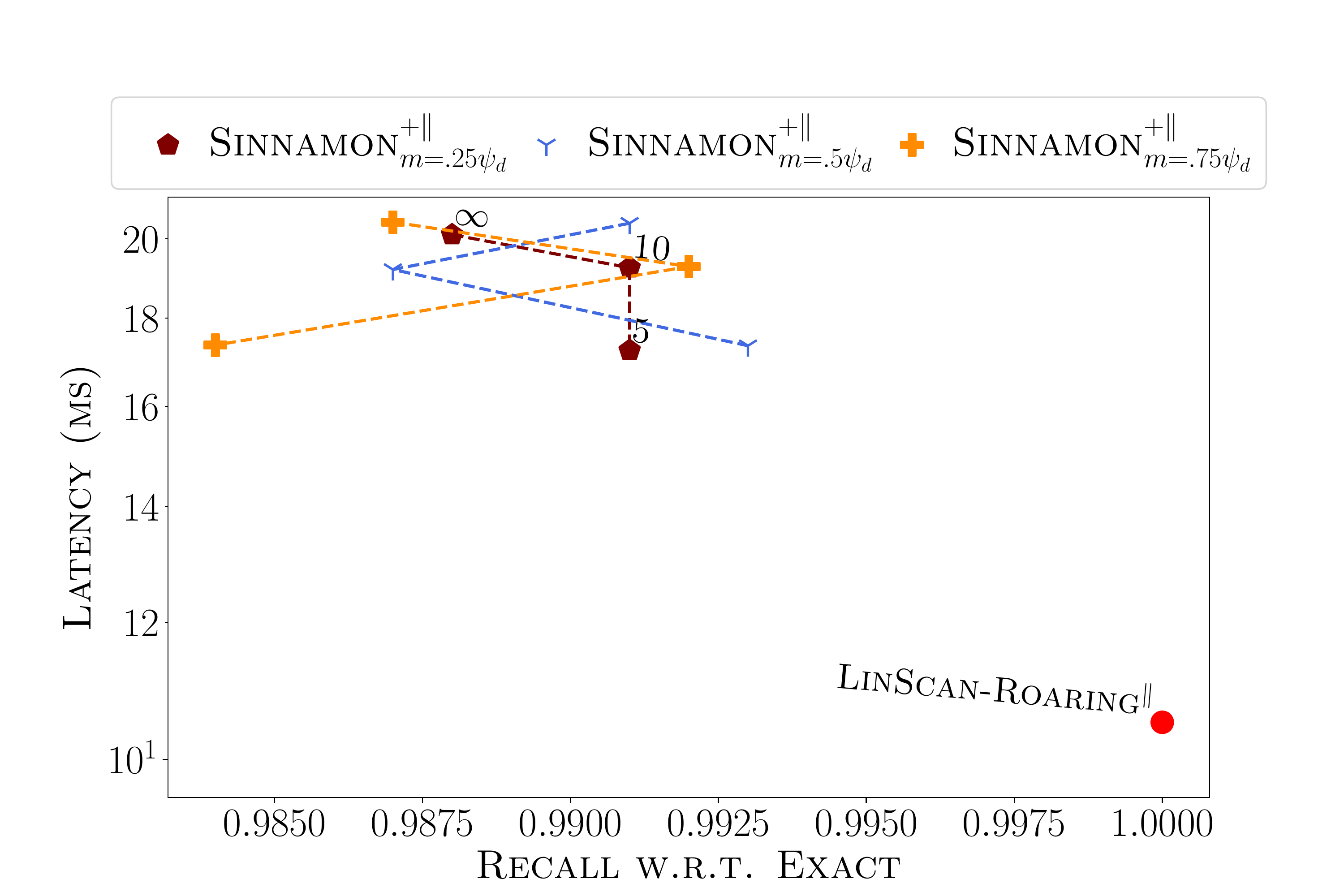}}
\subfloat[SPLADE]{
\includegraphics[width=0.4\linewidth]{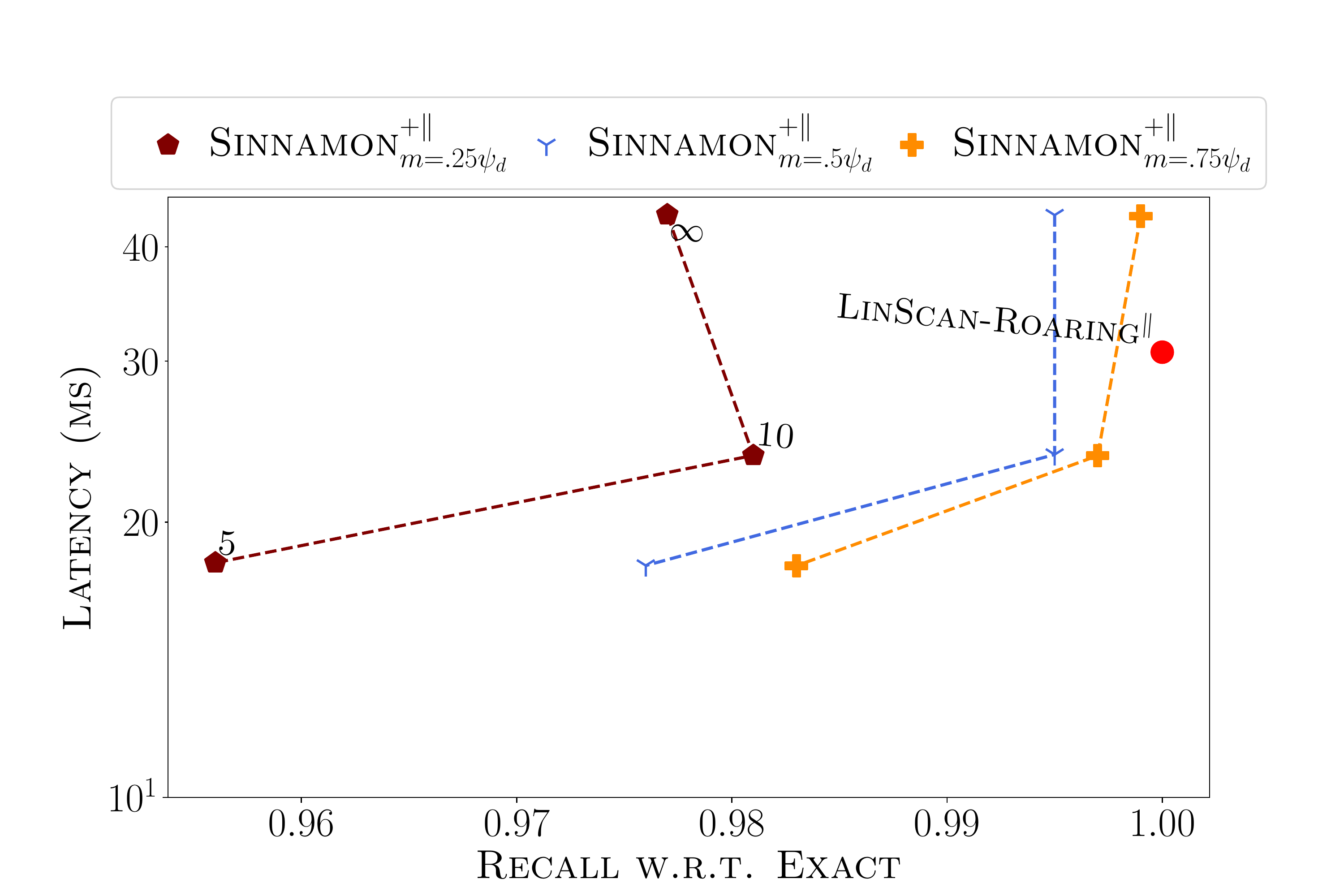}}
}


\centerline{
\subfloat[Efficient SPLADE]{
\includegraphics[width=0.4\linewidth]{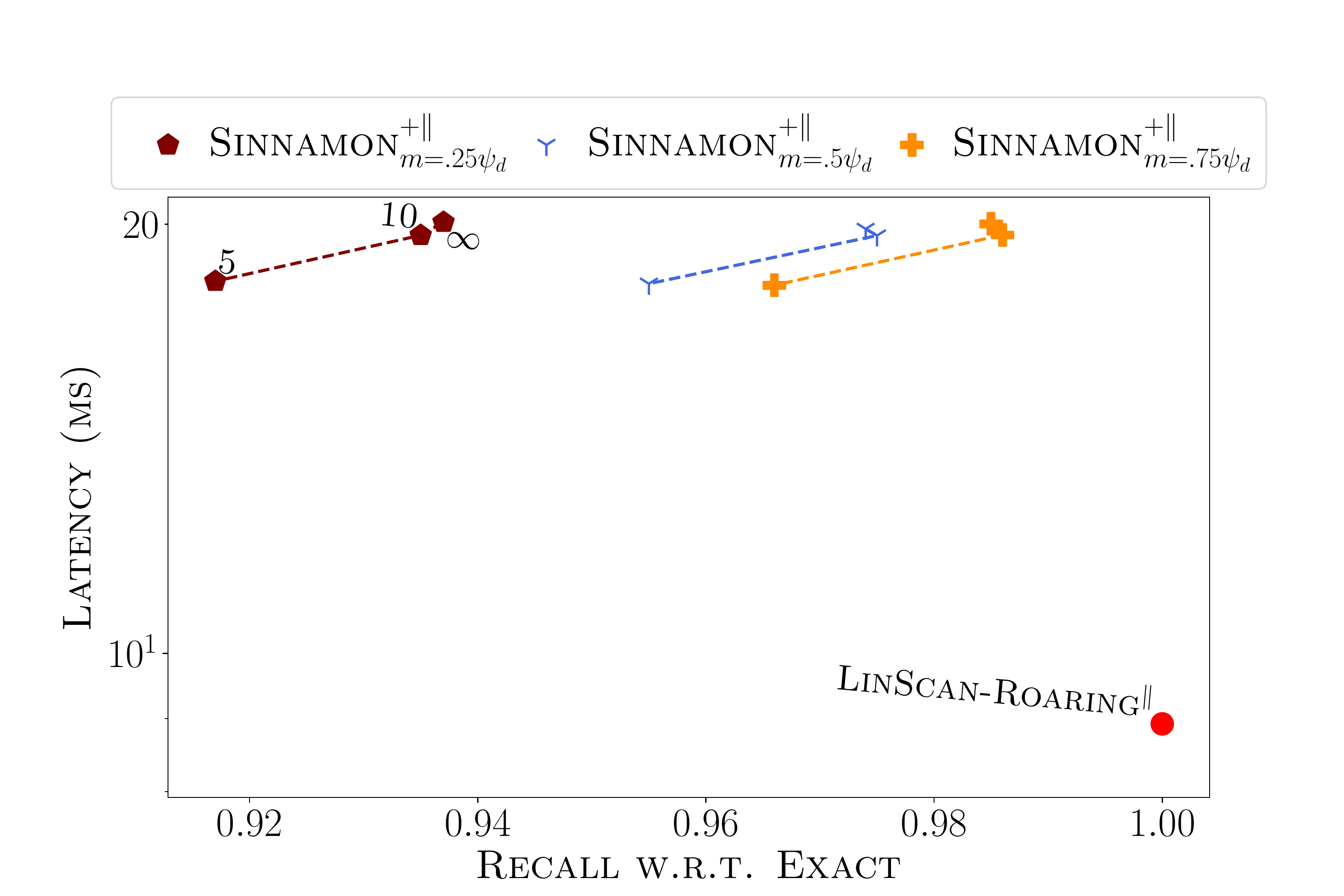}}
\subfloat[uniCOIL]{
\includegraphics[width=0.4\linewidth]{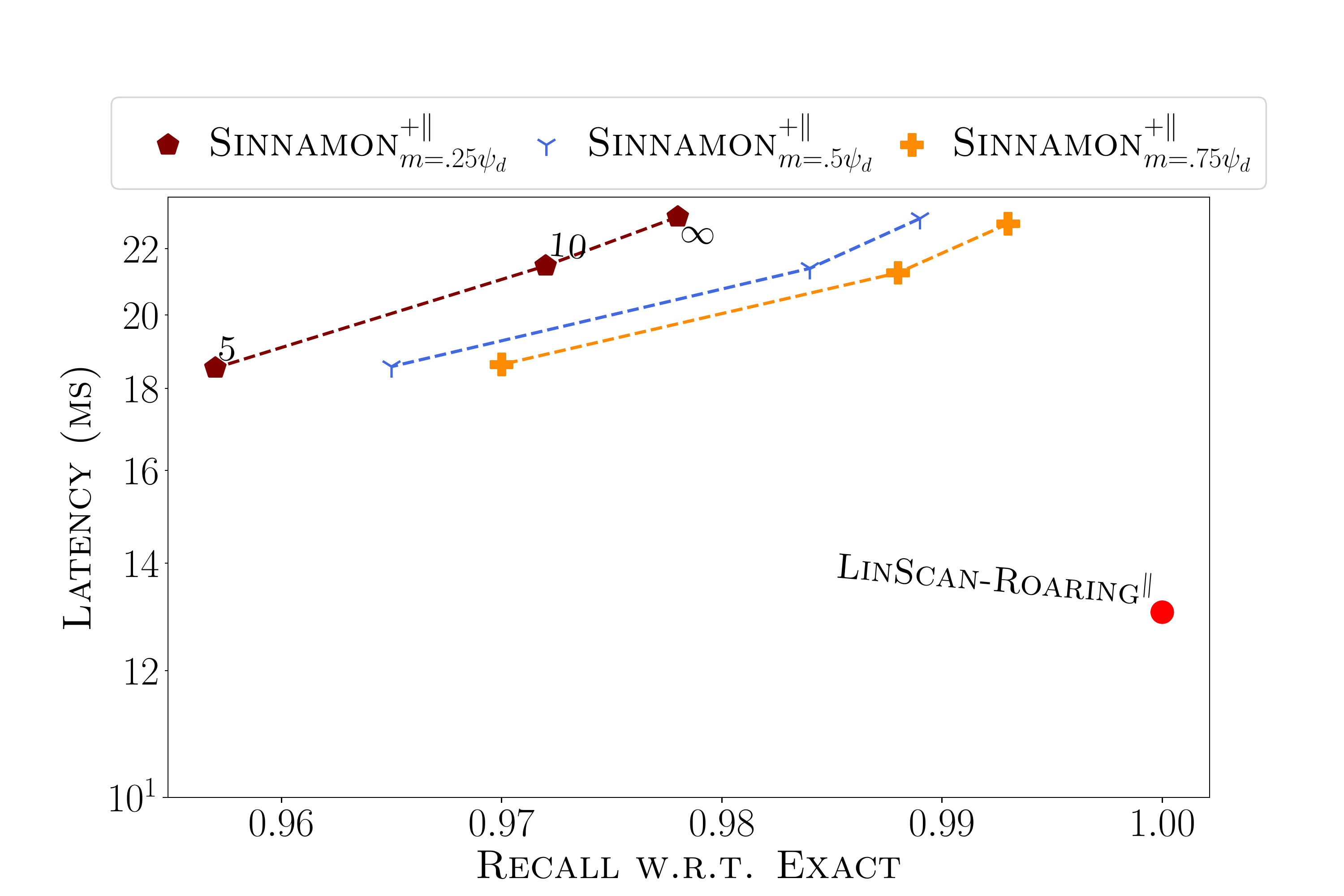}}
}

\caption{Trade-offs on the M1 processor between latency and retrieval accuracy for various vector collections by retrieving $k^\prime=20,000$. Shapes (and colors) distinguish between different configurations of \sinnamon{}, and points on a line represent different time budgets $T$ (in milliseconds).}
\label{figure:evaluation:msmarco-passage-v1-m1-20k}
\end{center}
\end{figure}

\FloatBarrier
\section{Insertions and Deletions on Apple M1}
\label{appendix:benchmark-m1}

\begin{figure}[!ht]
\begin{center}
\centerline{
\subfloat[Insertions]{
\includegraphics[width=0.42\linewidth]{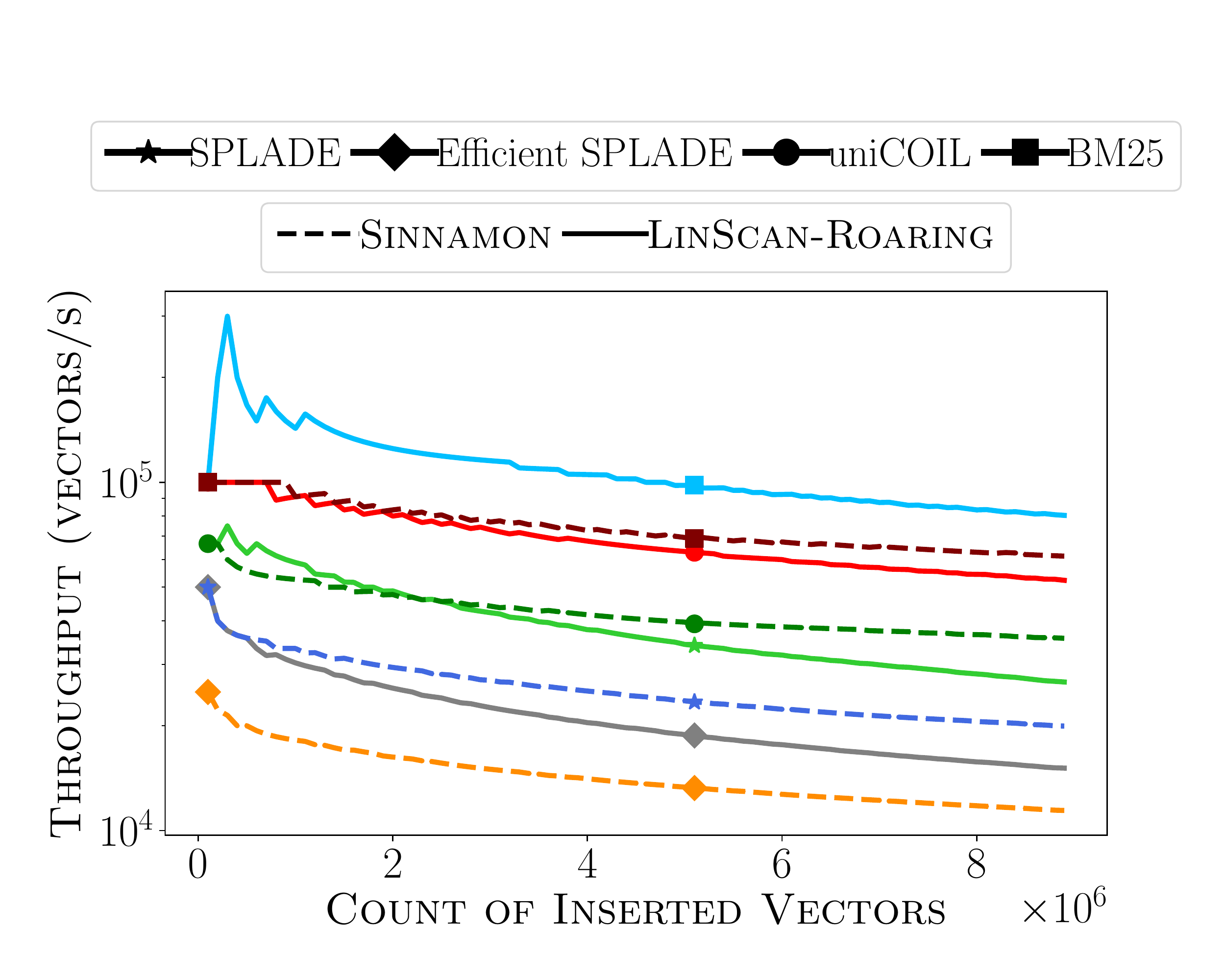}\label{figure:evaluation:msmarco-passage-v1-m1-benchmark:insertions}}
\subfloat[Deletions]{
\includegraphics[width=0.42\linewidth]{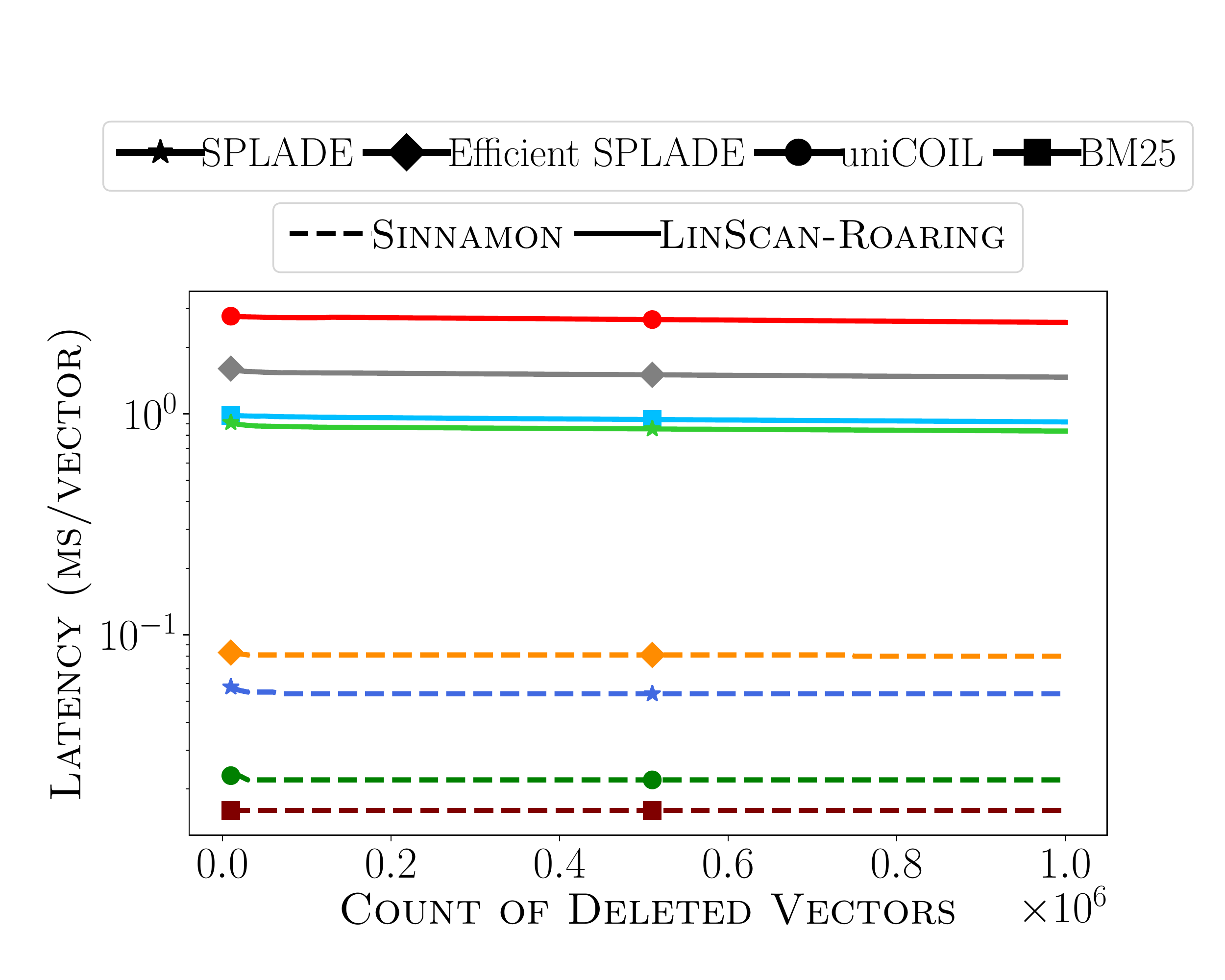}\label{figure:evaluation:msmarco-passage-v1-m1-benchmark:deletions}}
}
\caption{Indexing throughput (vectors per second) and deletion latency (milliseconds per vector) as a function of the size of the index on the M1 processor.}
\label{figure:evaluation:msmarco-passage-v1-m1-benchmark}
\end{center}
\end{figure}

\end{document}